\newtheorem{theorem}{Theorem}[section]
\newtheorem{lemma}[theorem]{Lemma}
\newtheorem{proposition}[theorem]{Proposition}
\newtheorem{corollary}[theorem]{Corollary}
\newtheorem{claim}[theorem]{Claim}
\newtheorem{definition}[theorem]{Definition}
\newcommand{\junk}[1]{}
\DeclareMathOperator{\diam}{diam}
\newcommand{\lev}{L}
\newcommand{\levdiam}{\Delta}
\DeclareMathOperator{\degree}{deg}
\DeclareMathOperator{\emd}{cost}
\newcommand{\alg}{\mathcal{A}}
\renewcommand{\qed}{\nobreak \ifvmode \relax \else
      \ifdim\lastskip<1.5em \hskip-\lastskip
      \hskip1.5em plus0em minus0.5em \fi \nobreak
      \vrule height0.75em width0.5em depth0.25em\fi}
\newcommand{\Z}{\mathbb Z}
\newcommand{\R}{\mathbb R}
\newcommand{\N}{\mathbb N}
\newcommand{\eps}{\epsilon}
\def\E{\mathop{\mathbb{E}}\displaylimits}
\def\poly{\mathop{\rm{poly}}\nolimits}
\newcommand{\tour}{T^\circ}
\newcommand{\topt}{T^*}
\newcommand{\cost}[1]{\rho(#1)}
\newcommand{\boxapprox}{\gamma}
\newcommand{\mycell}{C}
\newcommand{\randpart}{\mathcal P}
\newcommand{\detpart}{P}
\newcommand{\myparagraph}[1]{\smallskip\noindent{\bf #1}}
\newcommand{\aanote}[1]{}
\newcommand{\konote}[1]{}
\newcommand{\snnote}[1]{}
\DeclareMathOperator*{\argmin}{arg\,min}
\title{Parallel Algorithms for Geometric Graph Problems}
\author{Alexandr Andoni
\\
Microsoft Research
\and
Aleksandar Nikolov
\\
Rutgers University
\and
Krzysztof Onak\thanks{Supported by the Simons Postdoctoral Fellowship. Research initiated while at CMU.}
\\
IBM TJ Watson
\and 
Grigory Yaroslavtsev
\\
ICERM, Brown University}
\begin{document}

%\listoftodos

\maketitle

\begin{abstract}
  We give algorithms for geometric graph problems in the modern
  parallel models such as MapReduce \cite{DG04-mapreduce,
    KSV10-mapreduce, GSZ11-sorting, BKS13-communication}. For example,
  for the Minimum Spanning Tree (MST) problem over a set of points in
  the two-dimensional space, our algorithm computes a
  $(1+\eps)$-approximate MST. Our algorithms work in a {\em constant}
  number of rounds of communication, while using total space and
  communication proportional to the size of the data (linear space and
  near linear time algorithms). In contrast, for general graphs,
  achieving the same result for MST (or even connectivity) remains a
  challenging open problem \cite{BKS13-communication}, despite drawing
  significant attention in recent years.

  We develop a general algorithmic framework that, besides MST, also
  applies to Earth-Mover Distance (EMD) and the transportation cost
  problem.  
  Our algorithmic framework has implications beyond the MapReduce
  model. For example it yields a new algorithm for computing EMD cost
  in the plane in near-linear time, $n^{1+o_\eps(1)}$. We note that
  while recently \cite{SA12-emd} have developed a near-linear time
  algorithm for $(1+\eps)$-approximating EMD, our algorithm is
  fundamentally different, and, for example, also solves the
  transportation (cost) problem, raised as an open question in
  \cite{SA12-emd}.
%% Indeed, our algorithm uses very little of the
%%   combinatorial structure of the planar EMD, and, essentially relies
%%   only on an off-the-shelf LP solver. 
Furthermore, our algorithm immediately gives a $(1+\eps)$-approximation
algorithm with $n^{\delta}$ space in the streaming-with-sorting model
with $1/\delta^{O(1)}$ passes. As such, it is tempting to conjecture
that the parallel models may also constitute a concrete playground in
the quest for efficient algorithms for EMD (and other similar
problems) in the vanilla {\em streaming model}, a well-known open
problem \cite{streaming-open, Bertinoro11-open}.

% \aanote{TO DO}
% \begin{itemize}
% \item
% EMD: consequences
% \item
% Solve-And-Sketch implementation!
% S-A-S not having too much replication (but taking min between sketch and input).
% \item
% EMD: info theoretic matching ?
% \item
% pictures
% \end{itemize}

%%   We leverage space
%% partitioning techniques to parallelize the computation.  In addition
%% to MST, our techniques work for closely related problems, such as
%% geometric TSP, EMD, etc. Our approach is inspired by streaming
%% algorithms for estimating the cost of the solutions for such problems
%% [Ind04]. Note that in the streaming model it is impossible to output
%% the entire solution, as our algorithms do.

%% [DG04] Dean, Ghemawat (OSDI’04)
%% [KSV10] Karloff, Suri and Vassilvitskii (SODA’10)
%% [GSZ11] Goodrich, Sitchinava and Zhang (ISAAC’11)
%% [Ind04] Indyk (STOC’04)
\end{abstract}

\thispagestyle{empty}
\newpage
\setcounter{page}{1}

\section{Introduction}
\aanote{Afrati, Ullman, 2010, 2012}
\aanote{model like Mulct party communication, except for bounded fan-in}
Over the past decade a number of parallel systems have become widely
successful in practice.  Examples of such systems include MapReduce
\cite{DG04-mapreduce, DG08-mapreduce}, Hadoop \cite{white2012hadoop},
and Dryad \cite{isard2007dryad}.  Given these developments, it is natural to revisit
algorithmics for parallel systems and ask what new algorithmic or
complexity ideas Theoretical Computer Science can contribute to this
line of research (and engineering) efforts.

Two theoretical questions emerge: 1) What models capture well the capabilities of the
existing systems? 2) What new algorithmic ideas can we develop for
these models?  Addressing Question~1, researchers \cite{FMSSS10-mad,
  KSV10-mapreduce, GSZ11-sorting, BKS13-communication} have proposed a
model which balances simplicity and relevance to practice. We describe
this model later in Section \ref{sec:model}. As for Question~2, while there
already exist a few algorithms adapted or designed for this model (see
Section \ref{sec:comparison}), we feel that many more powerful
algorithmic ideas are still waiting to be developed.

The first natural question to ask is: do we really need new
algorithmics here? After all, we have had a lot of fundamental
research done on parallel algorithms in 1980s and 1990s, most notably in the
PRAM model, which one may hope to leverage to new models. Indeed,
the works of \cite{KSV10-mapreduce, GSZ11-sorting} have shown that one
can simulate PRAM algorithms in MapReduce with minimal slow-down. So
what is new?

The answer is that the parameters of the new models are such that we
can hope for {\em faster} algorithms than those possible in the PRAM
model.  The models allow for interleaving parallel and sequential
computation: in a single step, a machine can perform arbitrary
polynomial time computation on its local input; the time cost of the
algorithm is then measured in the number of \emph{rounds of
  communication} between machines. This makes it possible to achieve
{\em constant} parallel time for interesting problems, while in the
PRAM model functions that depend on the entire input generally require
logarithmic or larger parallel time.  For example, even computing the
XOR of $n$ variables requires near-logarithmic parallel-time on the
most powerful CRCW PRAMs \cite{BH89-CRCWlb}. In contrast, in the new
models, which are similar to a $n^\alpha$-fan-in circuit, one can
trivially solve XOR in $O(1/\alpha)$ parallel time. Indeed, the
MapReduce models rather fall under the blanket of the generic
Bulk Synchronous Parallel (BSP) model \cite{Val90-BSP}, though this
model has a number of parameters, and as such has not been thoroughly
explored. In particular, few solutions to even very fundamental
problems are known in the BSP models (see, e.g.,
\cite{Goodrich99-sort} for a sorting algorithm). The new models
instead focus on a specific range of parameters and tradeoffs, making
analysis more tractable.

The previous work on MapReduce models identifies a
captivating challenge problem: connectivity in a sparse graph. While
this problem has a classic logarithmic time PRAM algorithm
%classic PRAM results 
\cite{SV82-connect} 
%yield a logarithmic time
%algorithm, 
we do not know whether we can solve it faster in the new models 
\cite{KSV10-mapreduce}. For this particular problem, though,
recent results show logarithmic lower bounds for restricted algorithms
\cite{BKS13-communication},  suggesting that the negative answer may
be more plausible.

\paragraph{Synopsis of contributions.} In this work, we focus on basic graph
problems in the {\em geometric} setting, and show we can achieve
$1+\epsilon$ approximation in a constant number of rounds. In fact, we
develop a common algorithmic framework applicable to graph questions
such as Minimum Spanning Tree and Earth-Mover Distance.  Thus, while
it may be hard to speed up standard graph algorithms (without
geometric context) in MapReduce-like models
\cite{BKS13-communication}, our results suggest that speedups can be
obtained if we manage to represent the graph in a geometric fashion
(e.g., in a similarity space).

Our framework turns out to be quite versatile, and, in fact has
implications beyond parallel computing. For example it yields a new
algorithm for computing EMD (cost) in the plane in near-linear time,
$n^{1+o_\eps(1)}$. We note that while recently \cite{SA12-emd} have
developed a near-linear time algorithm for $(1+\eps)$-approximating
EMD, our algorithm is different, and, for example, also solves the
transportation (cost) problem, raised as an open question in
\cite{SA12-emd}. In particular, our algorithm uses little of the combinatorial
structure of EMD, and essentially relies only on an off-the-shelf LP
solver. In contrast, \cite{SA12-emd} intrinsically exploit the
combinatorial structure, together with carefully designed data
structures to obtain a $O_\eps(n\log^{O(1)}n)$ time algorithm. Their approach, however,
seems hard to parallelize.

Another consequence is also a $1+\eps$ approximation algorithm in the
{\em streaming-with-sorting} model, with $n^{\delta}$ space and
$1/\delta^{O(1)}$ passes. Hence, our EMD result suggests the new
parallel models as a concrete playground in the quest for an efficient
streaming algorithm for EMD, a well-known open question
\cite{streaming-open, Bertinoro11-open}.

\subsection{The Model}
\label{sec:model}

We adopt the most restrictive MapReduce-like model among
\cite{KSV10-mapreduce, GSZ11-sorting, BKS13-communication} (and BSP
\cite{Val90-BSP} for a specific setting of parameters). Following
\cite{BKS13-communication}, we call the model {\em Massively Parallel
  Communication} or MPC (although we explicitly consider the local
sequential running times as well).

Suppose we have $m$ machines (processors) each with space $s$,
where $n$ is the size of the input and $m\cdot s=O(n)$. Thus, the
total space in the system is only a constant factor more than the input
size, allowing for minimal replication.

The computation proceeds in rounds. In each round, a machine performs
local computation on its data (of size $s$), and then sends messages
to other machines for the next round. Crucially, the total amount of
communication sent or received by a machine is bounded by $s$, its
space. For example, a machine can send one message of size $s$, or $s$
messages of size $1$. It cannot, however, broadcast a size $s$ message to
every machine. In the next round, each machine treats the received
messages as the input for the round.

The main complexity measure is the number of rounds $R$ required to
solve a problem, which we consider to be the ``parallel time'' of the
algorithm. Some related models, such as BSP, also consider the
sequential running time of a machine in a round. We will de-emphasize this
aspect, as we consider the information-theoretic question of
understanding the round complexity to be the first-order business
here. In particular, the restriction on space alone (i.e., with {\em
  unbounded} computation per machine) already appears to make certain
problems take super-constant number of rounds, including the
connectivity in sparse graphs. Nevertheless it is natural to minimize
the local running time, and indeed our (local) algorithms run in time
polynomial in $s$, leading to $O(ns^{O(1)})$ overall work.

What are good values of $s$ and $R$? As in \cite{KSV10-mapreduce,
  GSZ11-sorting}, we assume that space $s$ is polynomial in $n$, i.e.,
$s=n^\alpha$ for some $\alpha>0$. We consider this a justified choice
since even under the natural assumption that $s\ge m$ (i.e., each
machine has an index of all other machines), we immediately obtain
that $s\ge \sqrt{n}$.\footnote{Furthermore, it is hard to imagine a
  data set where $\sqrt[3]{n}$ is larger than the memory of a commodity
  machine.}

Our goal is to obtain $R=\poly(\log_s n)=O(1)$ rounds. Note that we do not
hope to do better than $O(\log_s n)$ rounds as this is required even
for computing the XOR of $n$ bits.

Finally, note that the total communication is, {\em a fortiori}, $O(n)$
per round and $O(nR)=O(n)$ overall.

\myparagraph{Streaming models.} The above MPC model essentially resides in 
between two streaming models.

First, it is at least as strong as the ``linear streaming'' model,
where one stores a (small) linear sketch of the input:
if one has a linear sketch algorithm using space $s$ and $R$ passes,
then we also have a parallel algorithm with local space $s^2$ (and
$m=O(n/s^2)$ machines) and $O(R \log_s m)$ rounds.

Second, the above model can be simulated in the model of streaming
{\em with a sorting primitive} \cite{ADRR04-streamSort}. The latter
model is similar to the standard multi-pass streaming model, but allows
for both annotating the stream with keys as we go through it and
sorting the entire stream according to these keys.  In particular,
sorting is considered in this model to be just another pass.  Then if
we have a parallel algorithm with $s$ space and $R$ rounds, we also
obtain a streaming-with-sorting algorithm with $O(s)$ space and $O(R)$
passes.
%% Naturally, there is a big gap between the two streaming models (see
%% \cite{}), and it appears that the MapReduce is more similar to the
%% streaming-with-sorting than the linear-streaming one.

%% \subsection{The Problems: Geometric Graphs}
%% \aanote{table?}

%% In this work, we focus on graph problems for geometric graphs. Namely,
%% assume we have $n$ points immersed in a low-dimensional space, such as
%% $\R^2$ or perhaps a bounded doubling dimensional metric. Then consider
%% the complete graph on these points, where the weight of each edge is
%% the distance between the two points. Then we consider the following
%% standard graph questions on this graph:
%% \begin{itemize}
%% \item
%% MST (Minimum Spanning Tree): compute the minimum spanning tree on the
%% nodes. Note that MST is related to the hierarchical agglomerative
%% clustering, a classic (and practical) clustering
%% algorithm. \aanote{prev literature?}
%% \item
%% TSP (Travelling Salesman Problem): compute the minimum cost tour via
%% all the points.
%% \item
%% EMD (Earth-Mover Distance, or min-cost bichromatic matching): given a
%% equi-partition of the points into red points $A$ and blue points $B$,
%% compute the min-cost red-blue matching.
%% \end{itemize}

\subsection{Our Results}

In this work, we focus on graph problems for geometric graphs. We
assume to have $n$ points immersed in a low-dimensional space, such as
$\R^2$ or a bounded doubling dimensional metric. Then we consider
the complete graph on these points, where the weight of each edge is
the distance between its endpoints.\footnote{Since our algorithms work
  similarly for norms such as $\ell_1,\ell_2,\ell_\infty$,
  we are not specific about the norm.}

We give parallel algorithms for the following problems:
\begin{itemize}
\item
Minimum Spanning Tree (MST): compute the minimum spanning tree on the
nodes. Note that MST is related to the hierarchical agglomerative
clustering with single linkage, a classic (and practical) clustering
algorithm \cite{zahn1971graph, kleinberg2006algorithm}.

We show how to compute a $1+\eps$ approximate MST over $\R^d$ in
$O(\log_s n)$ rounds, as long as $(1/\eps)^{O(d)}<s$. Note that the
number of rounds does not depend on $\eps$ or $d$.  We extend the
result to the case of a general point set with doubling dimension
$d$. All our algorithms run in time $s^{O(1)}(1/\eps)^{O(d)}$ per
machine per round. 

We note that our algorithm outputs the actual tree (not just the cost)
of size $\approx n$, meaning in particular that the output is also
stored in a distributed manner. The algorithms appear in Section
\ref{sect:mst} and \ref{sec:doubling}.

\item
Earth-Mover Distance\footnote{Also known as min-cost bichromatic
  matching, transportation distance, Wasserstein distance, and
  Kantorovich distance, among others} (EMD): given an equipartition
of the points into red and blue points, compute the min-cost red-blue
matching. A generalization is the \emph{transportation distance},
in which red and blue points have positive weights of the same total
sum, and the goal is to find a min-cost matching between red and
blue masses.  EMD and its variants are a common
metric in image vision \cite{RTG, GD-kernel}.

We show how to approximate the EMD and transportation cost up to a
factor of $1+\eps$ over $\R^2$ in $(\log_s n)^{O(1)}$ rounds, as long
as $(\log n)^{(\eps^{-1}\log_s n)^{O(1)}}<s$. The running time per machine
per round is polynomial in $s$. Note that, setting $s=2^{\log^{1-c}
  n}$ for small enough $c>0$, we obtain a (standard) algorithm with
overall running time of $n^{1+o(1)}$ for any fixed $\eps>0$. Our algorithm
can also be seen as an algorithm in the streaming-with-sorting model,
achieving $n^{\delta}$ space and $1/\delta^{O(1)}$ rounds by setting
$s=n^\delta$. Our algorithm does not output the actual matching (as
\cite{SA12-emd} do).  The algorithm appears in Section \ref{sec:emd}.
\end{itemize}

All our algorithms fit into a general framework, termed
Solve-And-Sketch, that we propose for such problems. The framework is
naturally ``parallelizable'', and we believe is resilient to minor
changes in the parallel model definition. We describe the general
framework in Section \ref{sec:preliminaries}, and place our algorithms
within this framework. The actual implementation of the framework in
the MPC model is described in Section \ref{sec:implementation}.

It is natural to ask whether our algorithms are optimal. Unfortunately,
we do not know whether both approximation and small
dimension are required for efficient algorithms. However, we show that
if we could solve exact MST (cost) in $l_\infty^{O(\log n)}$, we could
also solve sparse connectivity (in general graphs), for which we have
indications of being impossible~\cite{BKS13-communication}. We also
prove a query-complexity lower bound for MST in spaces with bounded
doubling dimension in the black-box distance oracle model. In this
setting, both approximation and dimension restriction seem necessary.
%% Under the standard
%% assumption that the amount of memory per machine is $n^{1-\Omega(1)}$,
%% the number of rounds necessary to solve the problem in this model
%% becomes at least $n^{\Omega(1)}$.  We also observe that a similar
%% lower bound in the black-box distance oracle model follows from known
%% results for any constant-factor approximation for MST if the input
%% points can form an arbitrary metric (as opposed to a metric of bounded
%% doubling dimension). 
These results appear in Section~\ref{sec:lb}.

\subsection{Motivation and Comparison to Previous Work}
\label{sec:comparison}

\myparagraph{The model perspective.} 
\cite{KSV10-mapreduce} have initiated the study of {\em dense} graph
problems in the MapReduce model they define, showing constant-round
algorithms for connected-components, MST, and other problems. In the
dense setting, the parameters are such that $m \gg s\gg n$,
where $n$ is the number of vertices and $m$ is the number of edges. In this case, the solution (the size of which is $O(n)$) fits on a single
machine.

In this regime, the main technique is filtering (see also
\cite{LMSV11-filtering}), where one iteratively sparsifies the input
until the entire problem fits on one machine, at which moment the
problem is solved by a regular sequential algorithm. For example,
for connected-components, one can just throw out edges locally,
preserving the global connectivity, until the graph has size at most
$s$.

Somewhat departing from this is the work of \cite{EIM11-clustering},
who give algorithms for $k$-median and $k$-center, using
$s=O(k^2n^\delta)$. Instead of filtering, they employ (careful)
sampling to reduce the size of the input until it fits in one machine
and can be solved sequentially. Note that, while the entire
``solution'' is of size $n\gg s$, the solution is actually represented
by $k\ll s$ centers. \cite{KMVV13-greedy} further generalize both the
filtering and sampling approaches for certain greedy problems, again
where the solution , of size $k\ll s$, is computed on a single
machine at the end. 

Also highly relevant are the now-classic results on {\em coresets}
\cite{AHV-survey, feldman2011unified}, which are generic
representation of (subset of) input with the additional property of
being mergeable. These are often implementable in the MapReduce model
(in fact, \cite{EIM11-clustering} can be seen as such an
implementation). However, coresets have been mostly used for {\em
  geometric problems} (not graph problems), which often have a small
solution representation.

We contrast the ``dense'' regime with the ``sparse'' regime where
$s$ is much smaller than the size of the solution.
Most notably, for the problem of computing the
connected components in a sparse graph, we have no better
algorithm than those following from the standard PRAM literature, despite
a lot of attention from researchers.
%% \footnote{While strictly speaking
%%   in connectivity, the output is one bit easy to store on a machine,
%%   the solution concept/certificate is actually a path, potentially of
%%   $\Omega(n)$ length.}. 
In fact, \cite{BKS13-communication} suggest it
may be hard to obtain a constant parallel-time for this problem.

Our algorithms rather fall in the ``sparse'' regime, as the solution
(representation) is larger than the local space $s$. As such, it appears hard
to apply filtering/sampling technique that drops part of the input
from consideration. Indeed, our approach can be rather seen as a
generalization of the notion of coreset.
%indeed we depart from this approach.

We also mention there are other related works in MapReduce-like models,
e.g., \cite{CKT10-maxcover, BPT11-setcover, BKV12-densest}, but which
achieve at least a logarithmic parallel time.

\myparagraph{The problems perspective.}
While we are not aware of a previous study of geometric graph problems
in the MapReduce models, these problems have been studied extensively
in other standard models, including 1) near-linear time algorithms, and 2)
streaming algorithms.

Linear time (approximate) algorithms for MST are now classic results
\cite{Vaidya88-mst, CK93-mst}.
%A98-TSP, Mit99-TSP, RS98-TSP}. 
For EMD, it is only very recently that researchers found a near-linear
time approximation algorithm \cite{SA12-emd} (following a line of work
on the problem \cite{Vai89-EMD, AES00-EMD, VA99-EMD, AV04, I07,
  SA12-logEps}). Our framework naturally leads to near-linear time
algorithms.

In the streaming model, a generic approach to approximate a large
class of geometric graph problems has been introduced in
\cite{I04}. The work of \cite{I04} has generally obtained logarithmic
approximation for many problems and subsequently there has been a lot
of research on improving these algorithms. Most relevantly,
\cite{FIS-mst} have shown how to $1+\eps$ approximate MST {\em
  cost}. We note that their algorithm outputs the cost only and does
not lead to an algorithm for computing the actual tree as we
accomplish here.

Getting $1+\eps$ approximate streaming algorithm for EMD is a
well-known open question \cite{streaming-open, Bertinoro11-open}. The
best known streaming algorithm obtains a $O(1/\delta)$ approximation
in $n^{\delta}$ space for any $\delta>0$ \cite{ADIW-EMD}.

Our algorithmic framework immediately leads to an algorithm for
computing $1+\epsilon$ approximation in $n^{\delta}$ space and
$1/\delta^{O(1)}$ passes in the streaming-with-sorting model.  In general,
our EMD result implies one of the following: either 1) it illustrates
the new parallel models as a concrete mid-point in the quest for an
efficient streaming algorithm for EMD, or 2) it separates the new
parallel models from the linear streaming model, showing them as
practical models for sublinear space computation which are strictly
more powerful than streaming. We do not know which of these cases is
true, but either would be an interesting development in the area of
sublinear algorithms.

%% As such,
%% our EMD result implies one of the following: either 1) it illustrates
%% the MapReduce model as a potential concrete waypoint to obtaining an
%% efficient streaming algorithm for EMD, or 2) it separates the
%% MapReduce model from the linear streaming model.

%% \konote{this is repeated from before}
%% \aanote{yes, that was the intention... is it ``too much''?}
%% \snnote{Seems that the ``problems perspective'' is partially again a
%%   ``model perspective'' but the models compared are streaming
%%   vs. map-reduce. But I guess the point is that these problems have
%%   not been studied in MR, but they have been studied in streaming, so
%%   that's why we talk about it.}

\subsection{Techniques}
\label{sec:techniques}

We now describe the main technical ideas behind our algorithms.
% in the Solve-and-Sketch framework.
Our MST algorithm is simple, but requires some careful analysis, while the EMD algorithm is technically
the most involved.

\myparagraph{MST.}
To illustrate the main ideas involved in the algorithm it suffices to consider the problem over the 2D grid $[0,n]^2$.  The framework
consists of three conceptual parts: partition, local solution, and
sketch. The partition will be a standard quadtree decomposition,
where we impose a hierarchical grid, randomly shifted in the space. In
particular, each cell of the grid is recursively partitioned into
$\sqrt{s}\times \sqrt{s}$ cells, until cell size is $\sqrt{s}\times
\sqrt{s}$. The partition is naturally represented by a tree of arity
$s$.

The other two parts are the crux of the algorithm. Consider first the
following recursive na\"ive algorithm. Going bottom-up from the leaves
at every cell in the quadtree we compute the minimum spanning tree
among the input points (local solution), and then send a sketch of this tree to the
upper-level cell.  The problem is solved recursively in the
upper-level cell by connecting partial trees obtained from the lower
level.

However, such an algorithm does not yield
a~$(1+\epsilon)$-approximation. While constructing minimum spanning
tree in a cell, the limited local view may force us to make an
irrevocably bad decision. In particular, we may connect the nodes in
the current cell, which in the optimum solution are connected by a
path of nodes outside the cell.  Consider an example on
Figure~\ref{fig:mst-bad-example}. If the four points in the MST
instance form the corners of a 2x1 rectangle then with constant
probability (over the choice of the random partition), the edges of
length 1 will be cut by a higher level of the partition than the edges
of length 2.  If an algorithm commits to constructing a minimum
spanning tree for subproblems, then the two solid edges will be
selected before the dashed edges are considered. At the next level the
algorithm will select one of the dashed edges, and the total cost of
the tree will be 5. However, substituting the green edge for the red
edge results in a tree of cost 4.

\begin{figure}[ht]
  \hspace{0.7mm}
\begin{minipage}[b]{0.4\linewidth}
  %\centering
  \includegraphics[width=\textwidth]{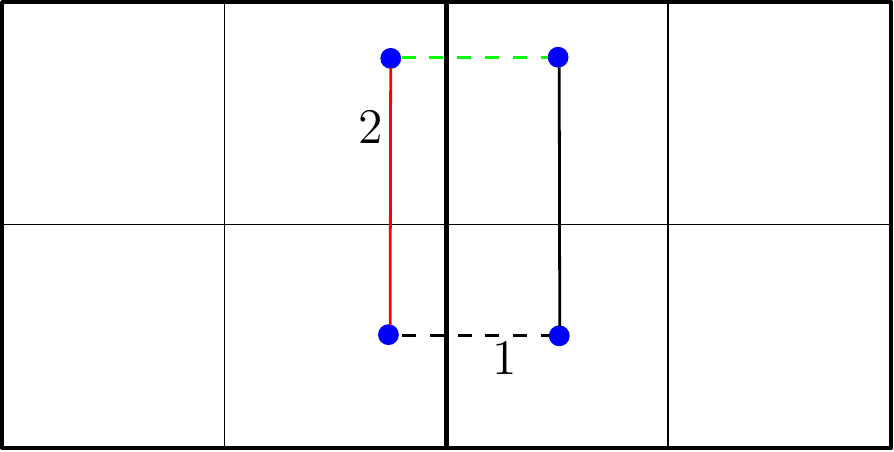}
  \caption{A bad example for the na\"ive MST algorithm.}
  \label{fig:mst-bad-example}
\end{minipage}
\hspace{0.15\linewidth}
\begin{minipage}[b]{0.4\linewidth}
  %\centering
  \includegraphics[width=\textwidth]{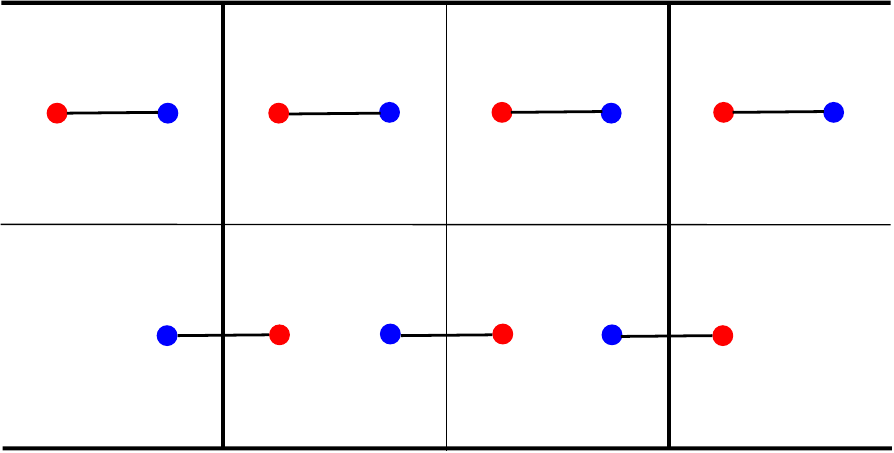}
  \caption{Local view is insufficient for EMD.}
\label{fig:emd-bad-ex}
\end{minipage}
\end{figure}

The challenge is to produce a local solution, without committing to
decisions that may hurt in the future.  To accomplish this, our local
solution at a cell is to find the minimum spanning forest among the
input points, using only {\em short edges}, of length at most $\eps$
times the diameter $\Delta$ of the cell. Note that it is possible that
the local set of points remains disconnected.

Our sketch for each cell consists of an $\eps^2\Delta$-net\footnote{An
  $r$-net of a point set is the maximal subset with pairwise distances at
  least $r$.} of points in the cell together with the information
about connectivity between them in the current partial solution.  Note
that the size of the sketch is bounded by $O(\eps^{-4})$.  This sketch
is sent to the parent cell.  The local solution at the parent node now
will consist of constructing a minimum spanning forest for the connected
components obtained from its children.%, again only using short edges.

In the analysis we argue that our algorithm is equivalent to Kruskal's
algorithm, but run on a graph with modified edge weights. We carefully
construct the edge weights recursively to follow the execution of the
algorithm. To prove the approximation guarantee, we make sure that the
modified edge weights do not differ much from the original weights
distances, in expectation, over the initial random
shift.

We also generalize our algorithm to the case of a point set with
bounded doubling dimension. Here, the new challenge is to construct
a~good hierarchical partition.
%, while the local solution and sketch parts are kept essentially the same.

\myparagraph{EMD.}
Our EMD algorithm adopts the general principle from MST, though the
``solution'' and ``sketch'' concepts become more intricate. Consider
the case of EMD over $[n]^2$.  As in MST, we partition the space using
a hierarchical grid, represented by a tree with arity $s$.

In contrast to the MST algorithm, there are no local ``safe''
decisions one can make whatsoever. Consider
Figure~\ref{fig:emd-bad-ex}. The two rows of points are identical
according to the local view. However, in one case we should match all
points internally, and in the other, we should leave the end points to
be matched outside. As far as the local view is concerned, either
partial solution may be the right one. If we locally commit to the
wrong one, we are not able to achieve a $1+\eps$ approximation no
matter what we do next. This lack of ``partial local solution'' is
perhaps the reason why EMD appears to be a much harder problem than
even non-bipartite Euclidean matching, for which efficient algorithms
have been known for a while now \cite{A98-TSP}.
%\aanote{pic?}

It seems that we need to be able to sketch the {\em entire
  set of local solutions}. In particular, in the above case, we want
to represent the fact that both (partial) matchings are valid
(partial) solutions---as a function of what happens outside the local
view---and include both of them in the sketch. Note that the two
solutions from above have different ``interfaces'' with the outside
world: namely the second one leaves two points to be matched to the
outside. 

And this is what we accomplish: we sketch the set of {\em all}
possible local solutions. While reminiscent of the dynamic programming
philosophy, our case is burdened by the requirement to represent
this in sublinear (local) space. This seems like a daunting task:
there are just too many potentially relevant local solutions. For
example, consider a cell where we have $n/3$ red points close to the
left border and $n/3$ blue points close to the right border. If there
are some $k$ blue points to the left of the cell, and $k$ red ones to
the right, then, inside the cell we should match exactly $n -k $ pairs
of points. Hence, from a local viewpoint of the cell (which does not
know $k$), there are $n/3$ potentially relevant local
solutions. Sketching each one of them already takes space $\Omega(n)\gg s$.

Our algorithm manages to sketch this set of relevant local
solutions approximately. Suppose we define a function $F$ of $d$
coordinates, one for ``each position'' in the local cell. In
particular, $F$ takes as argument a vector $x\in \Z^d$ that specifies,
for each $i\in[d]$, how many points are left unmatched at position
$i$, with the convention that positive $x_i$ signifies red points and
negative $x_i$ signifies blue points. Then we can define $F(x)$ to be
the cost of the optimal matching of the rest of the points (with
points specified by $x$ excluded from the local matching).

It would great to sketch the function $F:\Z^d\to \R_+$. Suppose we 
even reduced $d$ to be effectively $1/\eps^2$ (it suffices to consider
only positions at the $\eps$-net of the cell, similarly to what happens
in MST).

Nevertheless, even if $d=1/\eps^2$, we do not know how to sketch this
function $F$. For example, even for a function $F$ of two parameters
which is guaranteed to be monotone, convex, and Lipschitz, a concise
sketch is generally not possible (in our case, $F$ is in fact not even
monotone).  What we show instead is that we can sketch the function
$F'(x)=F(x)+\|x\|_1\cdot A$, for some convenient factor $A$. It turns
out that the additional term of $\|x\|_1\cdot A$ is tolerable as it
will capture part of the cost of matching {\em at the next level
  up}. Then a sketch of $F'$ can just consist of $F'(x)$ values at
$(\eps^{-1}\log n)^{O(d)}$ well-chosen $x$'s.

These ideas eventually lead to an {\em information-theoretic}
algorithm for EMD, namely with the promised guarantees on space and
rounds. It remains to make the running time of the local step polynomial in
$s$. While computing $F'(x)$ at a leaf is straight-forward (it's
essentially a matching), it is less clear for an internal node, where
we have to compute $F'(x)$ from the approximate sketches of $F'_j$ of
the node's children $j\in J$.
%% It is tempting to ``unravel'' what is an essentially general
%% function $F'_j$ into what it represents, namely a bunch of points,
%% reducing the problem back to a standard matching question. But this
%% appears hard to do: a $F'_j$ may capture information about $\Omega(n)$
%% points, and we do not know of a way to capture possible matchings of
%% these points using few (${\ll}s$) points  only.
%% \snnote{the last two senteces are not very clear to me; maybe we do
%%   not need this?}

To compute $F'$ in polynomial time, we find the largest convex
function which agrees with the sketch of each $F'_j$; this gives a set
of functions that are ``piece-wise linear'' and can be easily absorbed
into a larger LP to compute $F'$ at the internal node. We can do so
because $F'_j$ is a convex function, so the largest convex function
that agrees with its sketch is sandwiched between the sketch itself
and the actual $F'_j$. In the end, the local running time is polynomial in
$s$, because the resulting LPs can be solved with an arbitrary
polynomial time LP algorithm.

We note that the total running time (work) is $n^{1+o(1)}$ (when setting 
$s=\log^{1-O(1)}n$),
%$n\cdot 2^{(\log_sn/\eps)^{O(1)}+O(\log  s)}$, 
but we hope this can be brought down to $\tilde O(n)$ by exploiting
more of the combinatorial structure of the problem and sketching $F$
or $F'$ in space polynomial in $d$ (instead of exponential in $d$ as
we do here).

\subsection{Some Challenges For Future Research}

We note that many research question remain open and may be relevant to
both MapReduce/MPC-like models, as well as more generally to the area of
sublinear algorithms. We list a few:
\begin{itemize}
\item
Can we sketch the EMD partial solution(s) using $(\eps^{-1}\log
n)^{O(1)}$ space? Can we compute the actual matching as well? This may
lead to a $n(\eps^{-1}\log n)^{O(1)}$ overall time (work) for EMD and
transportation problems (assuming the local running time is polynomial).
\item
More ambitiously, can these ideas lead to an efficient streaming
algorithm for EMD, solving the open question of
\cite{streaming-open, Bertinoro11-open}? When do
algorithms in our framework lead to streaming algorithms?
\item
Lower bounds: Is $1+\eps$ approximation or constant dimension required
for geometric graph problem such as MST or EMD? What techniques need to be developed to prove such lower bounds?
% \item
% What about problems such as median in the MapReduce/MPC model?\konote{\cite{GSZ11-sorting}? remove?}
\item
Can we solve data-structure like problems? In particular, some new
systems allow for incremental updates to the input, with
the expectation that the new computation be minimal (see, e.g.,
\cite{murray2013naiad})?
\end{itemize}

%% \subsection{General plan}
%% \begin{itemize}
%% \item 
%% Framework for geometric graph algorithms
%% \item
%% MapReduce: beyond PRAM algorithms. ``Between'' (linear) streaming and regular algorithms. 
%% \item 
%% A way to get actual solutions, not just cost (not possible in any sensible streaming model, even w/sorting ?)
%% \item 
%% EMD: streaming open problem for a while. Algo sits between streaming and regular. Hence 1) perhaps can do also in streaming, and this may be an ``intermediate'' waypoint, or 2) cannot do in streaming, and then this is the ``sweet spot'' of practical/reasonable model, and computational strength.
%% \item
%% EMD: also implies 1) $n^{1+\delta}$ time algorithm for all $\delta > 0$, 2) an algorithm in the streaming w/sorting model.
%% \begin{itemize}
%% \item This is true for any algorithm in the framework, as long as the algorithm executed in each Map-Reduce processor is polynomial time.
%% \item Near-linear time if the algorithm executed in the processor is near-linear time. 
%% \end{itemize}
%% \item
%% EMD: algo does not use too much combinatorics -- hopefully with more combinatorics, can get better algorithm (no exponential dependence on 1/eps and/or \# of levels).

%% \item The results extend to bounded doubling dimension (and in fact to any setting in which a ``nice'' hierarchical partitioning can be constructed). 

%% \item
%% connectivity is a vacuous question, so MST is the next related problem.
%% \end{itemize}

\section{Preliminaries: Solve-And-Sketch Framework}
\label{sec:preliminaries}

We now introduce the framework for our algorithms, termed
Solve-And-Sketch. Its main purpose is to identify and decouple the
crux of the algorithm for the specific problem from the implementation
of the algorithm in the parallel model such as MPC.

The framework requires a ``nice'' hierarchical partition of the
space. We view the hierarchical partition as a tree, where the arity
is upper bounded by $\sqrt{s}$, and the depth is $O(\log_s n)$. The
actual computation is broken down into small ``local computational
chunks'', arranged according to the hierarchical partitioning. The
computation proceeds bottom-up, where at each node, the input (from
below) is processed and the results are compressed into a small {\em
  sketch} that is sent up to the parent. Each level of the tree will
be processed in parallel, with each node assigned to a machine.

In particular, the local computation at a node, termed ``unit step'',
consists of two steps:

\begin{enumerate}
\item[{\bf Solve:}] Given the local inputs, we compute the set of
  partial or potential solutions. For leaves, the local information
  consists of the points in that part, and for internal nodes, it is
  the information obtained from the children.
\item[{\bf Sketch:}] Sketch the partial solution(s), using total space
  at most $p_u\le\sqrt{s}$, and send this up the tree to the parent as a
  representation of solution(s) in this part.
\end{enumerate}

%% At each node, we carry out the following two tasks. First, the
%% computation 
%% . Second, the
%% result of the local unit step can be compressed into a small sketch to
%% be used/combined with other chunks to solve the unit step at the next
%% level up. In particular, in the parallel implementation, every tree
%% node is processed by one machine, and sends the sketch of the
%% computation to its parent. In the next round, the parent collects the
%% sketches of all its ${\ll}s$ children to perform the next round of
%% Solve-And-Sketch.

The main challenges will be how to: 1) compute the partition, 2) define the right
concept of ``local solutions'' in a part, and 3) sketch this concept
as a sufficient representation of all potential solutions in this
part. Often the na\"ive choice of the a local solution cannot be used,
because it either ignores global information in a way that can damage
the optimality of the algorithm, or it cannot be represented in
sublinear space.

We now define more formally the notions of partition and of a unit
step.

\myparagraph{Hierarchical Partition.}  We use a hierarchical partition
for inputs in $(\R^d, \ell_2)$ that is an analogue of a
randomly-shifted quad-tree but a {\em high} branching factor (rather
than the usual $2^d$). We denote the branching factor by $c$.  We
describe this partitioning scheme next (see
Section~\ref{sec:eucl-part} for additional details). The
partitions we use to compute MST in a low doubling dimension metric space
are more involved; see Section~\ref{sec:doubling}.

We assume that the points have integer coordinates in the range $[0,
  \Delta]$, where $\Delta=n^{O(1)}$. We show how to remove this
bounded aspect ratio assumption in Section \ref{sec:aspectRatio}.  Let
$v \in \mathbb R^d$ be a vector chosen uniformly at random from
$(-\Delta, 0]^d$. We construct a hierarchical partition, denoted
  $\detpart = (\detpart_0, \ldots, \detpart_\lev)$. The top level
  $\detpart_\lev$ has a single part containing the whole input, and is
  identified with the cube $\{x: \forall i\ v_i \leq x_i \leq v_i +
  2\Delta\}$. Then we construct $\detpart_{\ell-1}$ from
  $\detpart_\ell$ by subdividing each cube associated with a part in
  $\detpart_{\ell}$ into the $c$ equal sized cubes (via a grid with
  side-length $c^{1/d}$), thus creating a part associated with each
  smaller cube. In the final level $\detpart_0$, each part is a
  singleton, i.e.~all associated cubes contain at most a single point
  from the input. Since we assumed all points have integer coordinates
  in $[0, \Delta]$, it is enough to take $\lev = d\log_c
  \Delta=O(d\log_c n)$.  For each level partition $\detpart_\ell$, we
  call its parts as ``cells''. For a cell $C \in \detpart_\ell$, we
  can consider the subdivisions of the cell $C$ into next-level cells,
  i.e., all $C' \in \detpart_{\ell-1}$ such that $C' \subseteq C$,
  which we call the ``child cells'' of $C$. For our implementation, we
  also need to label each child cell of $C$ with an integer in
  $[c]$. We can do this in a number of ways, for example, by
  lexicographic order on coordinates of the center of each cube
  associated with a child cell.

\myparagraph{Unit Step.}  The other important component of the sketch
and solve framework is the unit step, which is an algorithm $\alg_u$
that is applied to each cell $C \in \detpart_\ell$ for $\ell = 1,
\ldots, \lev$. At level $1$, $\alg_u$ takes as input the points in
$C$, and at level $\ell > 1$, $\alg_u$ takes as input the union of
outputs of the unit steps applied to the children of $C$. The output
of $\alg_u$ on the top-most cell $P_L$ is the output of the problem
(perhaps after some post-processing). We define functions $p_u, t_u,s_u$
as follows: on input of size $n_u$, $\alg_u$ produces an output of
size at most $p_u(n_u)$, runs in time at most $t_u(n_u)$, and uses a
total space $s_u(n_u)$. We require that, on empty input, $\alg_u$
produces empty output. We call the algorithm that applies $\alg_u$ to
each cell of the partition in the above fashion the
\emph{Solve-And-Sketch} algorithm.

We prove that once we have a unit step algorithm for a problem, we
also obtain a complete parallel algorithm for the said problem. Hence
designing the unit step for a problem is the crux for obtaining a
parallel algorithm and is decoupled from the actual implementation
specifics in the considered parallel model.

\begin{theorem}[Solve-And-Sketch]
\label{thm:unitstep}
Fix space parameter $s=(\log n)^{\Omega(d)}$ of the MPC model. Suppose
there is a unit step algorithm using local time $t_u(n_u)$, space
$s_u(n_u)$, and output size $p_u(n_u)$ on input of size $n_u$. Assume
the functions $t_u,s_u,p_u$ are non-decreasing, and also satisfy:
$s_u(p_u(s))\le s^{1/3}$ and $p_u(s)\le s^{1/3}$. Then we can set
$c=s^{\Theta(1)}$ and $\lev=O(\log_s n)$ in the partitioning from
above, and we can implement the resulting Solve-And-Sketch algorithm
in the MPC model in $(\log_s n)^{O(1)}$ rounds. Local runtime is $s\cdot
t_u(s)\cdot (\log n)^{O(1)}$ (per machine per round).
\end{theorem}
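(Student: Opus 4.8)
The plan is to implement the Solve-And-Sketch algorithm level by level in the MPC model, showing that each level of the partition tree can be processed in $O(1)$ rounds and that the tree has depth $O(\log_s n)$, giving the claimed round bound. First I would observe that with branching factor $c = s^{\Theta(1)}$ chosen appropriately (say $c = \sqrt{s}$ or $s^{1/4}$), the depth is $L = d\log_c\Delta = O(d \log_s n) = O(\log_s n)$ since $\Delta = n^{O(1)}$, and each non-leaf cell has at most $c \le \sqrt{s}$ children. The main structural issue is that the partition tree is not balanced with respect to the \emph{input} — some cells contain many points, others are empty — so we cannot naively assign one machine per cell. The key idea is that because $\alg_u$ produces empty output on empty input, only cells containing input points (or, at higher levels, cells whose descendants contain input points) ever do nontrivial work; there are at most $n$ input points, so at each level at most $n$ cells are ``active'', and the total number of active cells across all levels is $O(n \log_s n) = O(n)$ (absorbing the $\log_s n$ factor, or treating it as part of the $O(nR)$ total communication bound). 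So we can afford to store, for each active cell, a constant-sized identifier (its location in the grid, which is $O(\log n)$ bits) and route messages by cell-id.

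The core of the argument is the per-level implementation. At level $\ell$, each active cell $C$ needs to gather the outputs of its (at most $c$) active children, each of size at most $p_u(\cdot)$, run $\alg_u$, and emit an output of size $p_u(\cdot)$ tagged with the id of $C$'s parent. The crucial quantitative point is bounding the input size to $\alg_u$ at an internal cell: it receives at most $c$ children's outputs, each of size at most $p_u(s)$ by induction (using monotonicity of $p_u$ and the fact that leaf inputs have size at most $s$ — here we need that a single machine's worth of input points, $s$ of them, is what a leaf sees, which we arrange by first bucketing points into leaf cells), so the internal input size is at most $c \cdot p_u(s) \le s^{1/2}\cdot s^{1/3} \le s$; then $\alg_u$'s output has size $p_u(c\cdot p_u(s))$, and we need this $\le \sqrt{s}$ to fit in a message and to feed the next level — this is exactly where the hypotheses $p_u(s)\le s^{1/3}$ and $s_u(p_u(s))\le s^{1/3}$ are used (the latter controlling local space, the former output size, with the composition handled by monotonicity). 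The local runtime per cell is $t_u(s)$, and a machine handles up to $s/(\text{sketch size}) \approx \sqrt{s}$ cells, giving $\sqrt{s}\cdot t_u(s) \le s\cdot t_u(s)$ local time, with an extra $(\log n)^{O(1)}$ for sorting/routing overhead.

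To route messages — collecting each cell's children's sketches onto one machine — I would use the standard MPC primitives (sorting, prefix sums, and the fact that a near-linear number of key-value pairs can be sorted by key in $O(1)$ rounds when $s = n^{\Omega(1)}$, or more carefully $O(\log_s n)$ rounds which we fold into the overall bound; this is the content of the cited sorting results \cite{GSZ11-sorting, KSV10-mapreduce}). The one subtlety is a cell having more ``active descendants'' data than fits on one machine — but this cannot happen because the total incoming data to any cell is at most $c\cdot p_u(s) \le s$ by the bound above, so it fits. Summing over the $O(\log_s n)$ levels, each costing $O(1)$ rounds (plus the sorting overhead), gives $(\log_s n)^{O(1)}$ rounds total. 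I expect the main obstacle to be the bookkeeping around \emph{load balancing and message routing}: making precise that active cells can be enumerated, that each machine's total send/receive stays within its space budget $s$ (this is where knowing only $O(n)$ total active-cell data exists, spread over $m = O(n/s)$ machines, matters), and handling the initial step of distributing the raw input points to the correct leaf cells — which itself requires a sort by cell-id. These are routine in the MPC literature but need to be stated carefully; the genuinely algorithm-specific content is just the arithmetic with $p_u, s_u, t_u$ showing everything fits, which the hypotheses are precisely designed to make work.
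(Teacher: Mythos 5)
Your route is a level-synchronous simulation (gather each cell's children's sketches by a sort on parent id, apply $\alg_u$, repeat for $L=O(\log_s n)$ levels, each level costing one sort), which is genuinely different from the paper's implementation: the paper (Algorithm~\ref{alg:load-balancing} and Theorem~\ref{thm:generalSAS}) instead sorts \emph{all currently unprocessed cells} under a hierarchical ``good ordering'', hands each machine a contiguous interval of total output size $O(s)$, and lets each machine run the unit step through \emph{many levels at once} on every cell whose subtree lies entirely inside its interval; the key combinatorial ingredient is Lemma~\ref{lm:nodes-interval}, bounding the unfinished (boundary) cells per interval by $2(c-1)\lev$, which yields the geometric decrease $n_{r+1}\le n_r/\sqrt{s}$ and hence $O(\log_s n)$ outer iterations. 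Your per-level arithmetic ($c\cdot p_u(s)\le s$, output $p_u(c\,p_u(s))\le p_u(s)$ by monotonicity, sorting overhead folded into $(\log_s n)^{O(1)}$) is sound as far as it goes, modulo the minor point that bounding the local space $s_u(c\,p_u(s))$ needs more than monotonicity plus $s_u(p_u(s))\le s^{1/3}$ (the paper's own derivation of Theorem~\ref{thm:unitstep} from Theorem~\ref{thm:generalSAS} additionally assumes $s_u,t_u$ are polynomially bounded).

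There is, however, a genuine gap: nothing in your argument keeps the \emph{total} volume of sketches within the system's memory $m\cdot s=O(n)$. The framework only promises output size at most $p_u(n_u)$ with $p_u$ non-decreasing and empty output on empty input; it does not promise $p_u(n_u)\le n_u$. So a level-$1$ cell containing a single point may legitimately emit a sketch of size up to $p_u(s)$ (e.g.\ the EMD sketch has size $(\log n)^{\eps^{-O(1)}}$ regardless of how few points the cell holds), and with up to $n$ nonempty cells at level $1$ the aggregate output after one level can be $n\cdot s^{1/3}\gg n$, which cannot be stored or communicated at all; your remark that the number of active cells is $O(n\log_s n)=O(n)$ both conflates cell count with data volume and fails when $s$ is polylogarithmic, as the theorem explicitly allows. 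The paper closes exactly this hole with the lazy-evaluation rule in Algorithm~\ref{alg:load-balancing}: whenever the unit step's output would exceed its input, the machine forwards the (marked) raw input instead and defers the evaluation, so per-machine and total data never grow; combined with the interval assignment this is what makes the accounting in Theorem~\ref{thm:generalSAS} go through. Your proposal needs this (or an explicit hypothesis such as $p_u(n_u)=O(n_u)$) to be correct; with that fix the level-by-level scheme would also meet the stated round and time bounds, at the cost of always paying a sort per level where the paper's interval scheme lets sparse regions collapse through many levels in a single round.
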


The proof of the theorem is sensitive to the actual parallel
model hence we defer it, along with other details of implementation
in the MPC model, to Section \ref{sec:implementation}. In the sections
that follow, we describe how to implement the unit step algorithm for
the two considered problems, which, by Theorem \ref{thm:unitstep}, will
imply efficient MPC algorithm.

\section{Minimum Spanning Tree}\label{sect:mst}

In this section we prove the existence of an efficient MPC algorithm
that computes a spanning tree of a given point set in Euclidean space
of approximately minimal cost. 

\begin{theorem}
\label{thm:mstMain}
Let $\eps > 0$, and $s\geq (\eps^{-1}\log_sn)^{O(1)}$. Then there
exists an MPC algorithm that, on input a set $S$ in $\R^d$ runs in
$(\log_s n)^{O(1)}$ rounds and outputs a spanning tree of cost (under
the Euclidean distance metric $\ell_2^d$ for $d = O(1)$) at most $1 +
\eps$ factor larger than the optimal. Moreover, the running time per
machine is near linear in the input size $n_u$, namely
$O(n_u\eps^{-d}\log^{O(1)}n_u)$.
\end{theorem}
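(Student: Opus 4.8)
The plan is to invoke Theorem~\ref{thm:unitstep} and reduce the problem to designing a single unit step algorithm $\alg_u$ for MST, together with a proof that the resulting global Solve-And-Sketch algorithm is a $(1+\eps)$-approximation. Concretely, I would fix the branching factor $c = s^{\Theta(1)}$ and the randomly-shifted hierarchical grid $\detpart$ as in the preliminaries, so that the partition has depth $\lev = O(\log_s n)$ and each cell $C\in\detpart_\ell$ has diameter $\levdiam_\ell$ with $\levdiam_{\ell+1}/\levdiam_\ell = c^{1/d}$. The unit step at a cell $C$ of diameter $\levdiam$ will do the following: (Solve) take the points of $C$ (at a leaf) or the union of sketches from children (at an internal node), together with the connectivity partition they carry, and compute a minimum spanning forest using \emph{only edges of length at most $\eps\levdiam$}, merging components accordingly; (Sketch) compute an $(\eps^2\levdiam)$-net $N_C$ of the surviving representative points, label each net point with the id of its connected component in the current forest, and emit $N_C$ together with this component labeling. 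Since an $(\eps^2\levdiam)$-net inside a cube of diameter $\levdiam$ in $\R^d$ has $O(\eps^{-2d})$ points, we get $p_u(n_u) = O(\eps^{-2d}) \le s^{1/3}$ and $s_u(n_u) = n_u^{O(1)} \le s^{1/3}$ whenever $s \ge (\eps^{-1}\log_s n)^{O(1)}$, so the hypotheses of Theorem~\ref{thm:unitstep} are met; its conclusion then gives an MPC algorithm in $(\log_s n)^{O(1)}$ rounds with local time $s\cdot t_u(s)\cdot(\log n)^{O(1)}$, and since a single unit step is dominated by near-neighbor/net computations running in $O(n_u\eps^{-d}\log^{O(1)}n_u)$, the stated running time follows. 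Post-processing on the root reconstructs an actual spanning tree of size $\approx n$ from the recorded component merges, stored distributedly.

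The heart of the argument is the approximation guarantee, and here I would follow the sketch outlined in the Techniques section: show that the Solve-And-Sketch execution is exactly Kruskal's algorithm run on the same point set but with \emph{modified} edge weights $w'(e)$, where the modification is defined recursively to mirror the algorithm's behavior. The key observation is that the algorithm processes edges in rounds of increasing scale: at level $\ell$ it only ever adds an edge if its length is at most $\eps\levdiam_\ell$, and it only ``sees'' a point up to net-rounding of $\eps^2\levdiam_\ell$. So I would set $w'(e)$ to be, roughly, the true length $\|u-v\|$ rounded/charged to the scale of the first level $\ell$ at which both endpoints' representatives fall in the same cell and the edge becomes short enough to be considered; the net-rounding at intermediate levels contributes an additive error, and each level-$\ell$ net has granularity $\eps^2\levdiam_\ell$, which telescopes (geometric series in the scales) to at most $O(\eps)\cdot\|u-v\|$ in expectation. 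Two facts must be checked: (i) running Kruskal with weights $w'$ produces exactly the forest the algorithm produces — this is an induction on the partition tree, using that within a cell the unit step literally runs a minimum spanning forest computation (i.e., Kruskal restricted to short edges and current components), and that edges not yet considered at level $\ell$ are exactly those deferred to level $\ell+1$; and (ii) $\E_v[w'(e)] \le (1+O(\eps))\|u-v\|$, where the expectation is over the random shift $v$, because the only source of inflation is (a) the probability that a ``short'' edge of length $L$ gets cut by a grid line of a level whose cell diameter is only $O(L/\eps)$ — which is $O(\eps)$ by the standard randomly-shifted-grid cutting-probability bound — forcing it to be charged at a coarser scale, and (b) the cumulative net-rounding, both of which are $O(\eps)$ relative fractions of the edge length. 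Combining, $\E[\emd(\trour)] = \E[\sum_{e\in\text{MSF}'}w'(e)] \le \sum_{e\in\text{MST w.r.t.\ }w'}w'(e) \le \sum_{e\in\text{MST w.r.t.\ }w} w'(e) \le (1+O(\eps))\,\emd(\topt)$, where the middle inequality uses optimality of the MST under $w'$; rescaling $\eps$ by a constant finishes it, and a Markov/repetition argument converts the expectation bound into a high-probability (or deterministic, by derandomizing the shift over a polynomial-size grid of offsets) guarantee.

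The main obstacle I anticipate is proving (ii) cleanly — in particular, defining $w'$ so that it is simultaneously (a) a legitimate fixed weight function on the original edge set (not something that depends on the algorithm's choices in a circular way), (b) genuinely the function Kruskal is run on, and (c) provably within $(1+O(\eps))$ of $w$ in expectation. The circularity is the subtle point: the level at which an edge ``becomes available'' depends on when its endpoints' representatives land in a common cell, which depends on the net-rounding at lower levels, which in principle could depend on which components formed — so one has to argue that the relevant quantity (the cell of a point at each level) is actually a deterministic function of the point and the random shift $v$ alone, independent of the forest built so far, and that the ``short edge'' threshold $\eps\levdiam_\ell$ combined with the net granularity $\eps^2\levdiam_\ell$ is calibrated so that no edge is ever \emph{forced} upward by more than a constant number of levels (hence the geometric sum converges). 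A secondary obstacle is handling the bounded-aspect-ratio assumption and the extension to doubling dimension $d$, but both are explicitly deferred in the excerpt to Sections~\ref{sec:aspectRatio} and~\ref{sec:doubling}, so for this theorem I would cite those reductions and keep the proof in $[0,\levdiam]^d$ with $\levdiam = n^{O(1)}$.
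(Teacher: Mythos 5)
Your overall architecture matches the paper's: the same unit step (minimum spanning forest restricted to edges of length at most $\eps\levdiam_\ell$, plus an $\eps^2\levdiam_\ell$-covering annotated with component ids), the reduction to Theorem~\ref{thm:unitstep} with $p_u=\eps^{-O(d)}$, the aspect-ratio and doubling-dimension deferrals, and the announced ``Kruskal on modified weights'' strategy. But the heart of the argument is exactly the part you leave open, and the obstacle you flag (circularity of $w'$) is resolved in the paper by \emph{not} doing what you propose. The paper never defines $w'$ ``recursively to mirror the algorithm'' and never claims the algorithm's output forest equals Kruskal's forest under the modified weights. Instead, Definition~\ref{def:weights} sets $\weight(u,v)=\rho(u,v)+\eps\levdiam_{\lc(u,v)}$, a fixed function of the point pair and the sampled partition only (the crossing level $\lc(u,v)$ depends on the random shift, not on nets or on the forest built so far), which kills the circularity outright. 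The comparison to Kruskal is then only a per-index cost domination: Lemma~\ref{lem:weights-apx} shows $\rho(e_i)\le(1+O(\eps))\weight(\ekru{i})$ for each $i$, proved via the approximate-Kruskal observation (Proposition~\ref{prop:kruskal-seq}, after Indyk) together with a three-way case analysis on $\lc(\emin{i})$ versus $\ell_p(e_i)$, which in turn needs Lemma~\ref{lem:ccomp} (a point and its net representative are already in the same component) and Propositions~\ref{prop:min-edge} and~\ref{prop:active} to control the net-rounding and the ``short vs.\ long'' threshold. None of this machinery appears in your proposal; your item (i) (exact equivalence of the two forests) is both stronger than needed and not established, and without a concrete non-circular definition of $w'$ your item (ii) has nothing to attach to.

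A second, quantitative gap: your claim that the expected distortion telescopes to $O(\eps)\,\rho(u,v)$ ``because no edge is forced upward by more than a constant number of levels'' is not correct as reasoned. The expected surcharge is $\sum_{\ell}\Pr[\mycell_\ell(u)\neq\mycell_\ell(v)]\cdot\eps\levdiam_\ell\le \sum_\ell \eps b\,\rho(u,v)=\eps L b\,\rho(u,v)$: each level contributes the same order, there is no geometric decay, and the paper's Lemma~\ref{lem:embedding} accordingly gives $(1+\eps Lb)$, not $(1+O(\eps))$. The theorem is rescued exactly as in the paper's proof of Theorem~\ref{thm:mstMain}: run the algorithm with approximation parameter $\Theta(\eps/(Lb))$, which is affordable since $L=O(\log_s n)$ and $s\geq(\eps^{-1}\log_s n)^{O(1)}$ keeps the covering size $\eps^{-O(d)}$ within the $p_u(s)\le s^{1/3}$ budget. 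You should make that rescaling explicit rather than asserting an $O(\eps)$ bound per edge; also note that the paper's guarantee is in expectation over the shift (Theorem~\ref{thm:mst}), and your proposed derandomization over a polynomial grid of offsets is an extra claim the paper does not make or need.
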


We prove the theorem above by exhibiting a unit step algorithm within
the Solve-And-Sketch framework from Section
\ref{sec:preliminaries}. In fact our unit step algorithm will work
with partitions more general than the quadtree-based partition
described in Section~\ref{sec:preliminaries}. This will allows us to
apply the unit step to point sets in low doubling dimension as well,
once we have constructed an appropriate hierarchical
partition. Details for the doubling dimension case appear in
Section~\ref{sec:doubling}.

\subsection{Hierarchical Partitions}

Let us first define some metric geometry preliminaries and then define
the general notion of partitions that our unit step algorithm uses. 

%\subsection{Metric spaces}
We denote a metric space on a ground set $S$ with a distance function $\rho(\cdot,\cdot)$ as $M(S, \rho)$. For $S' \subseteq S$ we denote the \textit{diameter} of $S'$ as $\levdiam(S') = \sup_{x,y \in S'} \rho(x,y)$.
A \textit{ball} in $M$ centered in $x \in S$ with radius $r$ is denoted as $B_M(x,r) = \{y \in S | \rho(x,y) \le r\}$. If the metric is clear from the context, we omit the subscript and write simply $B(x,r)$.

\begin{definition}[Coverings, packings, and nets]
Let $M=(S,\rho)$ be a metric space and let $\delta$ and $\delta'$ be positive reals.
A set $S' \subseteq S$ is a:
\begin{enumerate}
\item \emph{$\delta$-cover} if for any point $x \in S$, there is a point $y \in S'$ such that $\rho(x,y) \le \delta$,
\item \emph{$\delta'$-packing} if for any two points $x,y \in S'$, it holds that $\rho(x,y) \ge \delta'$,
\item \emph{$(\delta,\delta')$-net} if it is both a $\delta$-covering and a $\delta'$-packing,
\item \emph{$\delta$-net} if it is a $(\delta,\delta)$-net.
\end{enumerate}
\end{definition}

\begin{definition}[Doubling dimension]
The \emph{doubling dimension} of a metric space is the smallest $d$ such that for all $r \ge 0$, any ball of radius $r$ can be covered with at most $2^d$ balls of radius $r/2$.
\end{definition}

\begin{lemma}[Dimension of restricted space]\label{lem:ddim_restricted} %\todo{Might be a standard fact.}
Let $M_1=(S_1,\rho)$ be a metric space of doubling dimension $d$. Let $M_2 = (S_2,\rho)$ be a metric space such that $S_2 \subseteq S_1$. The doubling dimension of $M_2$ is at most $2d$.
\end{lemma}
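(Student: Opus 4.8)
The plan is to show that a ball in $M_2$ of radius $r$ can be covered by at most $2^{2d}=4^d$ balls of radius $r/2$ that are themselves centered at points of $S_2$, which is exactly what it means for $M_2$ to have doubling dimension at most $2d$. The subtlety, and the reason the bound is $2d$ rather than $d$, is that the doubling property of $M_1$ gives us covering balls centered at arbitrary points of $S_1$, but the definition of doubling dimension for $M_2$ requires the covering balls to be centered at points of $S_2$; re-centering a ball at a nearby point of $S_2$ forces us to blow up the radius, which we then have to compensate for by a second application of the doubling property.

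\begin{proof}
Let $B_{M_2}(x,r)$ be an arbitrary ball in $M_2$, so $x \in S_2 \subseteq S_1$. Since $M_1$ has doubling dimension $d$, applying the doubling property twice shows that the ball $B_{M_1}(x,r)$ can be covered by at most $2^{2d}$ balls of radius $r/4$ in $M_1$, say with centers $z_1, \ldots, z_k \in S_1$ where $k \le 2^{2d}$. We may discard any of these balls that do not intersect $S_2$. For each remaining $z_i$, pick a point $y_i \in S_2$ with $\rho(z_i, y_i) \le r/4$. Then $B_{M_1}(z_i, r/4) \cap S_2 \subseteq B_{M_2}(y_i, r/2)$ by the triangle inequality, since any $w \in S_2$ with $\rho(w, z_i) \le r/4$ satisfies $\rho(w, y_i) \le \rho(w, z_i) + \rho(z_i, y_i) \le r/2$ and $w \in S_2$. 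Hence
\[
B_{M_2}(x,r) = B_{M_1}(x,r) \cap S_2 \subseteq \bigcup_i \big(B_{M_1}(z_i, r/4) \cap S_2\big) \subseteq \bigcup_i B_{M_2}(y_i, r/2),
\]
a union of at most $2^{2d}$ balls of radius $r/2$ in $M_2$. Since $r$ was arbitrary, the doubling dimension of $M_2$ is at most $2d$.
\end{proof}

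The only place any care is needed is the re-centering argument: one application of doubling in $M_1$ gets balls of radius $r/2$, but after moving the center into $S_2$ those become radius-$r$ balls, which is useless; that is why we run the doubling property twice to reach radius $r/4$ first, leaving room for the $r/4$ re-centering shift. Everything else is the triangle inequality and bookkeeping, so this is the entire proof rather than a sketch.
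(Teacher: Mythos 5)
Your proof is correct and follows essentially the same route as the paper: apply the doubling property of $M_1$ twice to cover the ball with $2^{2d}$ balls of radius $r/4$, pick a representative point of $S_2$ from each ball that meets $S_2$, and use the triangle inequality to re-center at radius $r/2$. No gaps.
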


\begin{proof}
Consider an arbitrary ball $B_{M_2}(p,r)$ in $M_2$. Clearly, $B_{M_2}(p,r) \subseteq B_{M_1}(p,r)$. By definition, $B_{M_1}(p,r)$ can be covered with at most $(2^d)^2$ balls $B_1$, \ldots, $B_k$ of radius $r/4$ in $M_1$. Now, for each ball $B_i$ such that $B_i \cap S_2 \ne \emptyset$, pick an arbitrary point $p_i \in B_i \cap S_2$.

We claim that the collection of balls $B_{M_2}(p_i,r/2)$ for all such $i$ covers $B_{M_2}(p,r)$. Consider a point $p'$ in $B_{M_2}(p,r)$. It belongs to a ball $B_i$. By the triangle inequality it holds $\rho(p',p_i) \le r/2$, and therefore, $p'$ belongs to
$B_{M_2}(p_i,r/2)$.

Summarizing, every ball $B_{M_2}(p,r)$ can be covered with at most $2^{2d}$ balls of radius $r/2$ in $M_2$, which implies that the doubling dimension of $M_2$ is at most $2d$.
\end{proof}

%\subsection{Randomized hierarchical partition}
We use the following terminology to abstract out the common ideas behind our algorithms for Euclidean and bounded doubling dimension spaces.
The common component of both algorithms is randomized hierarchical partition of the input space $M(S,\rho)$ (see, e.g.~\cite{Talwar04}).
 A \textit{deterministic hierarchical partition} $\detpart$ with $\lev$ \textit{levels} is defined as a sequence $\detpart = (\detpart_0, \dots, \detpart_{\lev})$,
where $\detpart_L = \{S\}$ and each level $\detpart_{\ell}$ is a subdivision of $\detpart_{\ell + 1}$.
For a partition $\detpart_i$ we call its parts \textit{cells}.
The \textit{diameter} at level $i$ is  $\levdiam(\detpart_i) = \max_{C \in \detpart_i} \levdiam(C)$.
%For $i < \lev$ and a cell $C$ of a partition $\detpart_i$  we denote the set of its \textit{subcells} as $\degcluster(x) = \{y \in \detpart_{i + 1} | y \subseteq x\}$.
The degree of a cell $C \in \detpart_\ell$ is $\degree(C) = |\{C' \in \detpart_{\ell-1}: C' \subseteq C\}|$. 
The \textit{degree} of a hierarchical partition is the maximum degree of any of its cells.
We denote the unique cell at level $\ell$ containing a point $x$ as
$\mycell_\ell(x)$, i.e. $\mycell_\ell(x)$ is defined by $x \in
\mycell_\ell(x)$ and $\mycell_\ell(x) \in \detpart_\ell$. 
For $\ell' \le \ell$ and $\mycell \in P_{\ell'}$ we define $\mycell_\ell(\mycell)$ analogously as the unique cell in the level $\ell$ containing $C$.
We will also work with \textit{randomized hierarchical partitions} which we treat as distributions over deterministic hierarchical partitions.
We will denote such distributions as $\randpart$ to distinguish them from deterministic partitions. 

\begin{comment}
The first property that we require of the deterministic hierarchical partitions is that the cluster diameters decrease exponentially between the levels.
The second property is that for every $x,y \in \detpart_\ell$ the probability that $x$ and $y$ belong to different clusters in $\detpart_\ell$ is $O(\rho(x,y)/\levdiam(\detpart_\ell))$.
\end{comment}
\begin{definition}[Distance-preserving hierarchical partition]\label{def:partition}
For $a, b \in (0,1)$, $\gamma > 1$ a randomized hierarchical partition $\randpart$ of a metric space $M(S,\rho)$ with $L$ levels is $(a,b)$-\textit{distance-preserving} with approximation $\boxapprox$ if every deterministic partition $P = (P_0, \dots, P_L)$ in its support satisfies the following properties for $\levdiam_\ell = \boxapprox a^{L - \ell} \levdiam(S)$:
%\aanote{is $\alpha=\gamma$ ?}
\begin{enumerate}
\item (Bounded diameter) For all $\ell \in \{0, \dots, L\}$: 
$$\levdiam(\detpart_{\ell}) \le \levdiam_\ell.$$
\item (Probability of cutting an edge) For every $x,y \in S$:
$$\Pr[\mycell_\ell(x) \neq \mycell_\ell(y)] \le b\frac{\rho(x,y)}{\levdiam_{\ell}}.$$
\end{enumerate}
\end{definition}

To simplify the presentation we will refer to an $(a,b)$-distance preserving hierarchical partition as just $(a,b)$-partition.
The parameter $\boxapprox$ plays a less important role in our proofs so we omit it to simplify presentation.
Moreover, if an $(a,b)$-partition has degree $c$ we will call it just an $(a,b,c)$-partition. \junk{We also assume that for every cell $C$ in an $(a,b,c)$-partition its children are numbered by consecutive integers, starting from $1$. For a child $C'$ of $C$ we denote its index in such numbering as $h(C')$. }
An example of a randomized $(a,b,c)$-partition is a randomly shifted and rotated quadtree in the Euclidean space $(\mathbb R^d, \ell_2)$, which is a $(1/2, O(d) , 2^d)$-partition
The  Euclidean hierarchical partition described in
Section~\ref{sec:preliminaries} is an $(c^{-1/d}, O(d),
c)$-partition: see Section~\ref{sec:eucl-part} for a detailed
discussion.

\newcommand{\ccomp}{Q}
\newcommand{\supp}{supp}
\newcommand{\weight}{w_P}
\newcommand{\weighti}[1]{\weight^{#1}}

\subsection{The Unit Step Algorithm}

Our Solve-and-Sketch (SAS) algorithm for computing an approximate
minimum spanning tree (MST) works with a partition $\detpart =
(\detpart_0, \ldots, \detpart_\lev)$ of the input $M(S,\rho)$, sampled
from a randomized $(a, b, c)$-partition $\randpart$. Recall that a SAS
algorithm proceeds through $\lev$ levels, and in level $\ell$ a
\emph{unit step} algorithm is executed in each cell $\mycell$ of the
partition $\detpart_\ell$, with input the union of the outputs of the
unit steps applied to the children of $\mycell$. Our MST unit step
also outputs a subset of the edges of a spanning tree in addition to
the input for the next level. In particular, the unit step computes a
minimum spanning forest of the (possibly disconnected) subgraph
consisting of edges between points in the cell of length at most an
$\epsilon \levdiam_\ell$. By not including longer edges we ensure that
ignoring the edges that cross cell boundaries does not cost us a
constant factor in the quality of the approximation (see
Figure~\ref{fig:mst-bad-example}). The edges of the computed minimum
spanning forest are output as a part of the constructed spanning
tree. For the next level we output an $\epsilon^2
\levdiam_\ell$-covering of points in the cell, annotated by the
connected components of the minimum spanning forest. In a space of
constant dimension we can construct such covering of size
$\epsilon^{-O(1)}$. The reason why the distance information given by
the covering is accurate enough for our approximation is that all
edges between different connected components in the spanning forest
constructed so far are either long or have been crossing in the
previous level.

We describe the unit step as Algorithm~\ref{alg:unit-step}. Then 
Theorem~\ref{thm:mstMain} will follow from Theorem~\ref{thm:unitstep}
and the guarantees on space and time complexity, as well as the
approximation guarantees, for Algorithm~\ref{alg:unit-step}. 

%\snnote{Since we already have the ``solve and sketch algorithm''
%  concept defined in the prelims, do we need
%  Algorithm~\ref{alg:recursive}?}
\begin{algorithm}
  \caption{Unit Step at Level $\ell$}\label{alg:unit-step}  
  \SetKwInOut{Input}{input}\SetKwInOut{Output}{output} 

  \Input{Cell $\mycell \in \detpart_\ell$; a collection $V(\mycell)$
    of points in $\mycell$, and a partition $\ccomp = \{\ccomp_1,
    \ldots \ccomp_k\}$ of $V(\mycell)$ into previously computed
    connected components.}

  \Repeat{$\theta > \epsilon \levdiam_\ell$} {
    Let $\tau = \min_{\substack{i, j\\i\ne j}} \min_{u \in \ccomp_i, v \in \ccomp_j} \rho(u,v)$.\\
    Find $u \in \ccomp_i$ and $v \in \ccomp_j$ for some $i, j: i \ne
    j$ such that $\rho(u,v) \leq
    (1+\epsilon)\tau$.\\
    Let $\theta = \rho(u,v)$.\\
    \If{$\theta \le \epsilon \levdiam_\ell$}{
      Output tree edge $(u,v)$.\\
      Merge $\ccomp_i$ and $\ccomp_j$ and update $\ccomp$. \\
    } }\Output{$V' \subseteq V$, an $\epsilon^2
    \levdiam_\ell$-covering for $\mycell$, the partition $\ccomp(V')$
    induced by $\ccomp$ on $V'$. }
\end{algorithm}

Notice that Algorithm~\ref{alg:unit-step} implements a variant of
Kruskal's algorithm, with the caveats that we ignore edges longer than
$\epsilon\levdiam_\ell$ as well as edges crossing the boundary of
$\mycell$, and that we also join only the approximately closest pair
of connected components, rather than the closest pair. This last
choice is made in order to allow us to use algorithms for approximate
nearest neighbor search~\cite{Indyk-thesis,Har-PeledIM12} in order to identify
which connected components to connect, and thus achieve near-linear
total running time.

\junk{
\begin{algorithm}
  \caption{Recursive Application of the Unit
    Step}\label{alg:recursive}
    \SetKwInOut{Input}{input}\SetKwInOut{Output}{output}
  \Input{A metric space $M(S, \rho)$; a hierarchical partition
    $\detpart= (\detpart_0,\ldots, \detpart_\lev)$ sampled from an $(a,
    b, c)$-distance preserving partition, such that $\detpart_0$
    is a partition into singletons.}
  
  $\tour_0 = \emptyset$\\
  \ForEach{$\mycell \in \detpart_1$}{
    Apply the unit step (Algorithm~\ref{alg:unit-step}) to all points
    in $\mycell$ with $\ccomp$ a partition into singletons.\\
  }
  \For{levels $\ell = 2, \ldots, \lev$}{
    \ForEach{$\mycell \in \detpart_\ell$}{
      Let $\mycell_1, \ldots, \mycell_c$ be the children of
      $\mycell$\\
      For each $i \in [c]$, let $V'(\mycell_i)$ be the output of the unit step
      (Algorithm~\ref{alg:unit-step}) applied to $\mycell_i$, with
      corresponding partition into components $\ccomp(\mycell_i)$.
      Let $V(\mycell) = \bigcup_{i = 1}^c{V(\mycell_i)}$ and $\ccomp(\mycell) =
      \bigcup_i{\ccomp(\mycell_i)}$.\\
      Apply the unit step (Algorithm~\ref{alg:unit-step}) to
      $V(\mycell)$ with partition $\ccomp(\mycell)$.\\
    }
    Let $\tour_\ell$ be the union of $\tour_{\ell-1}$ and all tree
    edges output by all executions of unit step. \\
  }
  \Output{ $\tour = \tour_L$}
\end{algorithm}
}

Let $\topt$ be some optimum minimum spanning tree.  For a tree $T$,
let $\cost{T}$ denote the cost of the tree $\cost{ T} = \sum_{(u,v)
  \in T} \rho(u,v)$. The following theorem is our main approximation
result for the SAS algorithm with unit step Algorithm~\ref{alg:unit-step}.

\begin{theorem}\label{thm:mst}
  Let $\randpart$ be a randomized $(a,b)$-partition of $M(S, \rho)$
  with $L$ levels.  If $a \leq \frac{1}{2}$ and $\epsilon \leq
  \frac{1}{4}$ then the spanning tree $\tour$ output by the
  Solve-and-Sketch algorithm with partition $\detpart$ sampled from
  $\randpart$ and unit step Algorithm~\ref{alg:unit-step} satisfies:
  \begin{align*}
  \E_{\detpart \sim \randpart} \left[\cost{\tour}\right] \le (1 +
  \epsilon O(\lev b) )\cost{\topt}. 
  \end{align*}
\end{theorem}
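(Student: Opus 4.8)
The plan is to reduce the analysis to a single run of Kruskal's algorithm on the complete graph on $S$ under a carefully modified weight function $\weight(\cdot,\cdot)$ (extended to trees by $\weight(T)=\sum_{e\in T}\weight(e)$), for which I would prove three facts. \textbf{(a)} Glued together over all cells and levels, the Solve-and-Sketch algorithm runs Kruskal's algorithm with respect to $\weight$ in which every merge joins two components whose $\weight$-distance is within a $(1+\epsilon)$ factor of the minimum over all pairs of current components; hence $\tour$ is a $(1+\epsilon)$-approximate minimum spanning tree for $\weight$, so $\weight(\tour)\le(1+\epsilon)\,\weight(\topt)$. \textbf{(b)} $\weight(u,v)\ge\rho(u,v)$ for every pair, hence $\cost{\tour}\le\weight(\tour)$. \textbf{(c)} $\E_{\detpart\sim\randpart}[\weight(u,v)]\le\big(1+\epsilon\,O(\lev b)\big)\rho(u,v)$ for every pair. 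Since $\topt$ is fixed independently of $\detpart$, linearity of expectation then gives
\begin{align*}
\E\big[\cost{\tour}\big]\;\le\;\E\big[\weight(\tour)\big]\;\le\;(1+\epsilon)\sum_{e\in\topt}\E[\weight(e)]\;\le\;(1+\epsilon)\big(1+\epsilon\,O(\lev b)\big)\cost{\topt},
\end{align*}
which is $\big(1+\epsilon\,O(\lev b)\big)\cost{\topt}$, absorbing the lower-order $\epsilon$ term since $\lev b=\Omega(1)$ in every instantiation we use.

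\textbf{Constructing $\weight$ and proving (a), (b).} I would define $\weight$ recursively, sweeping levels $1,\dots,\lev$ in lockstep with the algorithm, so the definition mirrors which pair gets merged in which cell. Fix a pair $u,v$, set $r=\rho(u,v)$, let $\ell_0$ be the least level with $r\le\epsilon\levdiam_{\ell_0}$ (from then on the edge is short enough to be eligible), and let $\ell_1\ge\ell_0$ be the least level with $\mycell_{\ell_1}(u)=\mycell_{\ell_1}(v)$. The algorithm can first act on this pair only at level $\ell_1$, in the cell $\mycell_{\ell_1}(u)$; there $u$ and $v$ are seen only through their level-$(\ell_1-1)$ covering representatives, and the components they lie in were built at lower levels from coarsened distances, introducing a discrepancy $\eta(u,v)=O(\epsilon^2\levdiam_{\ell_1})$ that telescopes geometrically because the net radii $\epsilon^2\levdiam_\ell$ shrink by the factor $a\le\frac12$ between levels. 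If $\ell_1=\ell_0$, set $\weight(u,v)=r+\eta(u,v)$. If $\ell_1>\ell_0$ --- a \emph{delayed} pair, short enough to be eligible already at level $\ell_0$ but kept in distinct cells through level $\ell_1-1$ --- set $\weight(u,v)=\max\{r,\ c_0\,\epsilon\levdiam_{\ell_1}\}$ for a constant $c_0\in(\frac12,1)$; the inflation charges for the fact that, while this pair was separated, the limited local view may have been forced into the suboptimal merge of Figure~\ref{fig:mst-bad-example}, whose per-edge overhead is $O(\epsilon\levdiam_\ell)$ because only edges of $\rho$-length at most $\epsilon\levdiam_\ell$ are joined at level $\ell$. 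Using $a\le\frac12$ and $\epsilon\le\frac14$, one checks that the $\weight$-weights of edges joined at level $\ell$ all lie strictly below those of edges first eligible-and-intra-cell at level $\ell+1$, so an induction on $\ell$ shows the edges output by the unit steps, in output order, are exactly those Kruskal adds under $\weight$, up to the $(1+\epsilon)$ slack of the approximate-nearest-neighbor merges; this is (a), and (b) is immediate from the definition.

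\textbf{Proving (c).} Here $\weight(u,v)=r$ unless the covering or delay term is present. If $\ell_1=\ell_0$ then, since $r>\epsilon\levdiam_{\ell_0-1}=\epsilon a\levdiam_{\ell_0}$, we get $\eta(u,v)=O(\epsilon^2\levdiam_{\ell_0})=O(\epsilon)\cdot r$ deterministically. If $\ell_1>\ell_0$ then $\weight(u,v)\le r+O(\epsilon\levdiam_{\ell_1})$, and for each $\ell$ the event $\{\ell_1\ge\ell\}$ is contained in $\{\mycell_{\ell-1}(u)\ne\mycell_{\ell-1}(v)\}$, which by Definition~\ref{def:partition} and $\levdiam_{\ell-1}=a\levdiam_\ell$, $a\le\frac12$, has probability at most $b\,r/\levdiam_{\ell-1}=O(b\,r/\levdiam_\ell)$. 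Hence
\begin{align*}
\E[\weight(u,v)]\;\le\;(1+O(\epsilon))\,r\;+\;\sum_{\ell=\ell_0+1}^{\lev}\Pr[\ell_1\ge\ell]\cdot O(\epsilon\levdiam_\ell)\;\le\;(1+O(\epsilon))\,r\;+\;\sum_{\ell=\ell_0+1}^{\lev}O\!\Big(\tfrac{b\,r}{\levdiam_\ell}\Big)\cdot O(\epsilon\levdiam_\ell)\;=\;\big(1+\epsilon\,O(\lev b)\big)\,r.
\end{align*}

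\textbf{Main obstacle.} The delicate part is the bookkeeping behind (a): defining $\weight$ so that the per-cell, short-edge-restricted, approximate-greedy unit steps, glued together, become indistinguishable from Kruskal on $\weight$. This must simultaneously (i) make each edge \emph{invisible} to the algorithm before level $\max(\ell_0,\ell_1)$, which is exactly why a delayed edge's weight must exceed all weights processed earlier (pinning down the constraints on $c_0$, $a$, $\epsilon$); (ii) absorb the covering discrepancy $\eta(u,v)$, which is itself a function of the algorithm's partial forests below level $\ell_1$, forcing the definition of $\weight$ to be driven by the execution; and (iii) carry the $(1+\epsilon)$ loss from approximate nearest-neighbor queries through the Kruskal equivalence without compounding across the $O(\log_s n)$ levels. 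The remaining ingredients --- the geometric telescoping of the net radii and the cutting-probability bound of Definition~\ref{def:partition} fed into a geometric sum --- are routine. (A minor separate point: to ensure $\tour$ is connected at all, the top-level unit step, or a postprocessing step, joins all surviving components regardless of length; these extra edges are precisely the ``delayed to level $\lev$'' edges, already covered by the bound above.)
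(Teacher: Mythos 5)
Your overall skeleton coincides with the paper's: introduce modified weights $w_P\ge\rho$ that approximate $\rho$ in expectation over the random partition (your (b) and (c) are essentially Definition~\ref{def:weights} and Lemma~\ref{lem:embedding}), and then compare the algorithm's output to Kruskal's algorithm run with weights $w_P$. But the entire content of the theorem lives in your step (a), and you have not proved it --- you have only named it as the ``delicate part.'' Note that the paper never establishes the statement you assert, namely that the glued unit steps form a $(1+\epsilon)$-approximate Kruskal run for $w_P$; it proves only the weaker index-by-index bound $\rho(e_i)\le(1+O(\epsilon))\,w_P(e^{w_P}_i)$ (Lemma~\ref{lem:weights-apx}), obtained via Proposition~\ref{prop:kruskal-seq} and a three-case analysis against the minimum-weight intercluster edge. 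That analysis rests on two ingredients entirely absent from your sketch: (i) Lemma~\ref{lem:ccomp}, the inductive claim that when level $\ell$ is processed every point is already in the same component as its net representative $N_\ell(u)$ --- this is what licenses treating the coarsened view as a mere $O(\epsilon^2\levdiam_{\ell-1})$ discrepancy, and its proof needs the termination condition $\theta>\epsilon\levdiam_\ell$ together with $\epsilon(1+\epsilon)(1+a)<1$; and (ii) the argument (Proposition~\ref{prop:active}, Case~III of Lemma~\ref{lem:weights-apx}) that an intercluster pair which is non-crossing at the previous level must be \emph{long}, which is what allows the extra weight on crossing/``delayed'' pairs to be charged. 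Your $\eta(u,v)$ and your $c_0$-inflation are placeholders for exactly these arguments, not substitutes for them.

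Moreover, the specific construction you sketch has concrete defects. Your $w_P$ is execution-dependent through $\eta(u,v)$ (you say the definition is ``driven by the execution''), which makes the claimed Kruskal equivalence circular and also complicates your step (c); the paper's weight $w_P(u,v)=\rho(u,v)+\epsilon\levdiam_{\ell_c(u,v)}$ depends only on the sampled partition, so the distortion bound is a clean computation. And the hinge of your induction --- that the weights of edges joined at level $\ell$ lie \emph{strictly below} those of edges first eligible-and-intra-cell at level $\ell+1$ --- fails as stated: an edge output at level $\ell$ may have $\rho$-length up to $\epsilon\levdiam_\ell$ and hence weight $\epsilon\levdiam_\ell+O(\epsilon^2\levdiam_\ell)$ after adding $\eta$, while a pair with $\ell_1=\ell_0=\ell+1$ has $\rho$-length only barely above $\epsilon\levdiam_\ell$ and weight correspondingly close to that threshold, so there is no strict separation, and the $(1+\epsilon)$ slack from the approximate closest-pair queries erodes it further. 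The paper's route avoids needing any such global ordering precisely because it never matches the algorithm's merges to Kruskal's merges; it only compares costs at equal indices. So, as written, the proposal has a genuine gap: (b) and (c) mirror Lemma~\ref{lem:embedding} and are fine, but (a) --- the heart of the theorem --- is missing, and the plan you outline for it would have to be replaced by something along the lines of Lemmas~\ref{lem:net}, \ref{lem:ccomp} and \ref{lem:weights-apx}.
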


It is natural to attempt to prove Theorem~\ref{thm:mst} by relating
the SAS algorithm with unit step Algorithm~\ref{alg:unit-step} to a
known MST algorithm, e.g.~Kruskal's algorithm (which our algorithm
most closely resembles). There are several difficulties, arising from
approximations that we use in order to achieve efficiency in terms of
communication, running time, and space. For example, our algorithm
only keeps progressively coarser coverings of the input between
phases, and thus does not have exact information about distances
between connected components. Nevertheless, it is known that an
approximate implementation of Kruskal's algorithm still outputs an
approximate MST\cite[Section 3.3.1]{Indyk-thesis}. In particular, an
algorithm that keeps a spanning forest, initially the empty graph, and
at each time step connects any two connected components of the current
forest that are at most a factor of $1+\epsilon$ further apart than
the closest pair of connected components, computes a spanning tree of
cost at most a factor $1+\epsilon$ larger than the cost of the
MST. However, our setting presents a further difficulty: because we
work in a parallel environment, Algorithm~\ref{alg:unit-step}
completely ignores any edges crossing the boundary of the cell it is
currently applied to. Such edges could have small length, which makes
it generally impossible to show that our algorithm implements
Kruskal's algorithm even approximately for the complete graph with
edge weights given by the metric $\rho$. Instead, we are able to
relate our algorithm to a run of Kruskal's algorithm on the complete
graph with \emph{modified edge weights} $\weight:S \times S\rightarrow
\R_+$. These weights are a function of the hierarchical partition
$\detpart$; they are always an upper bound on the metric $\rho$, and
give larger weight to edges that cross the boundaries of
$\detpart_\ell$ for larger $\ell$ (see
Definition~\ref{def:weights}). We are able to show
(Lemma~\ref{lem:weights-apx}) that the length (under $\rho$) of the
$i$-th edge output by (a sequential simulation) of our algorithm is at
most a factor $1+\epsilon$ larger than the weight (under $\weight$) of
the $i$-the edge output by Kruskal's algorithm, when run on the
complete graph with edge weights $\weight$. The proof is then
completed by arguing that for each $u, v \in S$, $\weight(u,v)$
approximates $\rho(u,v)$ in expectation when $\detpart$ is sampled
from a distance preserving partition (Lemma~\ref{lem:embedding}).

%\snnote{Reference for the approximate kruskal argument (Piotr's thesis?).}
%\konote{Cite \cite[Section 3.3.1]{Indyk-thesis} or \cite{Har-PeledIM12}?}

\junk{Our strategy to prove Theorem~\ref{thm:mst} is to relate the cost
$\tour$ to the cost of an MST for the complete graph with edge weights
$\weight:S \times S \rightarrow \mathbb{R}$, which are a function of
of the hierarchical partition $\detpart$. We consider the sequence of
edges $(e_1, \ldots, e_{n-1})$ output by a sequential simulation of
the SAS algorithm with unit step Algorithm~\ref{alg:unit-step}. We
compare this sequence with the sequence of edges $(e^*_1, \ldots,
e^*_{n-1})$ output by Kruskal's MST algorithm when run on the complete
graph with edge weights given by $\weight$. We show that $\rho(e_i) 
\leq (1+O(\epsilon))\weight(e^*_i)$ for each $i$, and, therefore, the
cost of $\tour$ under $\rho$ is not much larger than the cost of the
MST for the weights $\weight$. This suffices to prove
Theorem~\ref{thm:mst}, since we choose the weights $\weight$ so that,
when $\detpart$ is sampled from a distance-preserving partition,
$\weight(u,v)$ approximates $\rho(u,v)$ in expectation for each $u$
and $v$. }

We define the following types of edges based on the position of their
endpoints with respect to the space partition used by the algorithm.

\newcommand{\lc}{{\ell_{c}}}
\newcommand{\lna}{{\ell_{na}}}
\newcommand{\proc}{P^*}

\begin{definition}[Crossing and non-crossing edges]
An edge $(u,v)$ is \emph{crossing} in level $\ell$ if $\mycell_\ell(u) \neq \mycell_\ell(v)$ and \emph{non-crossing} otherwise.
\end{definition}

Also for each edge we define the \emph{crossing} level, which will be useful in the analysis:
\begin{definition}[Crossing level]\label{def:crossing-level}
  For an edge $(u,v)$ let its \textit{crossing level} $\lc(u,v)$ be
  the largest integer such that $\mycell_{\lc (u,v)}(u) \neq
  \mycell_{\lc(u,v)}(v)$. 
\end{definition}

The modified weights $w$, which we use in the analysis, are defined for each pair $(u,v)$ using its crossing level as follows:

\begin{definition}[Modified weights]\label{def:weights}
Let $\randpart$ be a randomized $(a, b)$-partition of $M(S, \rho)$
  with $L$ levels. 
  For every deterministic partition $P$ in the support of $\randpart$
  we define $\weight(u, v) =
  \rho(u, v) + \epsilon \levdiam_{\lc(u,v)}$. 
\end{definition}

We show that the modified weights $\weight(u,v)$ approximate the
original distances $\rho(u,v)$ in expected value. This lemma and its
proof are similar to arguments used in recent work on approximating
the Earth-Mover Distance in near-linear time~\cite{SA12-emd}, and
dating back to Arora's work on approximation algorithms for the
Euclidean Traveling Salesman Problem~\cite{A98-TSP}.

\begin{lemma}\label{lem:embedding}
  For all $u, v \in S$: 
  \begin{equation*}
    \rho(u,v) \le \E_{\detpart \sim \randpart}\left[ \weight(u, v) \right]\leq
    (1+\epsilon \lev b)\rho(u, v).
  \end{equation*}
\end{lemma}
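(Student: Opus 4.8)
The lower bound is immediate and holds deterministically: for every $P$ in the support of $\randpart$ we have $\weight(u,v) = \rho(u,v) + \epsilon\,\levdiam_{\lc(u,v)} \ge \rho(u,v)$, since $\epsilon > 0$ and $\levdiam_{\lc(u,v)} = \gamma a^{L-\lc(u,v)}\levdiam(S) \ge 0$; averaging over $\detpart\sim\randpart$ preserves this. So the whole content is the upper bound, and by linearity of expectation it reduces to showing
\[
  \E_{\detpart\sim\randpart}\bigl[\levdiam_{\lc(u,v)}\bigr] \le L\,b\,\rho(u,v),
\]
after which $\E[\weight(u,v)] = \rho(u,v) + \epsilon\,\E[\levdiam_{\lc(u,v)}] \le (1+\epsilon L b)\rho(u,v)$.

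The plan for this is the standard ``charge the cut to the level where it happens'' argument. First I would observe that since each $\detpart_\ell$ refines $\detpart_{\ell+1}$, the event $\{\mycell_\ell(u)\ne\mycell_\ell(v)\}$ is monotone in $\ell$: if $u,v$ are separated at some level $\ell$ they are separated at all levels $\ell'\le\ell$, and by Definition~\ref{def:crossing-level} they are separated at exactly the levels $1,\dots,\lc(u,v)$. In particular $\mathbf{1}[\mycell_{\lc(u,v)}(u)\ne\mycell_{\lc(u,v)}(v)] = 1$, so pointwise (for every deterministic $P$)
\[
  \levdiam_{\lc(u,v)} \;\le\; \sum_{\ell=1}^{L} \levdiam_\ell \cdot \mathbf{1}\bigl[\mycell_\ell(u)\ne\mycell_\ell(v)\bigr],
\]
because the right-hand side is a sum of nonnegative terms that includes the term $\levdiam_{\lc(u,v)}$. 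Taking expectations and using the ``probability of cutting an edge'' property of Definition~\ref{def:partition} (item 2), namely $\Pr[\mycell_\ell(u)\ne\mycell_\ell(v)] \le b\,\rho(u,v)/\levdiam_\ell$, the $\levdiam_\ell$ factors cancel:
\[
  \E_{\detpart\sim\randpart}\bigl[\levdiam_{\lc(u,v)}\bigr] \;\le\; \sum_{\ell=1}^{L} \levdiam_\ell \cdot \frac{b\,\rho(u,v)}{\levdiam_\ell} \;=\; \sum_{\ell=1}^{L} b\,\rho(u,v) \;=\; L\,b\,\rho(u,v).
\]
Combining with the first paragraph gives the claim.

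There is no real obstacle here; the lemma is a one-line consequence of the two defining properties of a distance-preserving partition, and the only thing to be careful about is the indexing of levels. The mild subtlety is the bottom level: if $u\ne v$ one could in principle have $\lc(u,v)=0$, in which case the sum should run from $\ell=0$ and one loses an additive $b\,\rho(u,v)$, yielding $(1+\epsilon(L+1)b)$ rather than $(1+\epsilon Lb)$. In the Euclidean instance this is a non-issue because points have integer coordinates, so $\rho(u,v)\ge 1$ while $\levdiam_0 = \gamma a^L\levdiam(S)$ can be taken below $b$ by choosing $L$ large enough, forcing $\lc(u,v)\ge 1$; more to the point, the downstream statement (Theorem~\ref{thm:mst}) only uses the bound up to a constant factor, $\epsilon\,O(Lb)$, so this off-by-one is irrelevant. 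The cancellation $\levdiam_\ell \cdot (b\rho/\levdiam_\ell) = b\rho$ is exactly why the final bound is a clean $Lb\rho(u,v)$ independent of the geometric decay rate $a$ and the scale $\levdiam(S)$.
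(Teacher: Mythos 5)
Your proof is correct and follows essentially the same route as the paper's: both decompose $\E[\levdiam_{\lc(u,v)}]$ over the levels at which $u,v$ are cut and then apply the edge-cutting probability of Definition~\ref{def:partition} so that the $\levdiam_\ell$ factors cancel, giving the $\epsilon L b\,\rho(u,v)$ excess (your pointwise domination by $\sum_\ell \levdiam_\ell\mathbf{1}[\mycell_\ell(u)\ne\mycell_\ell(v)]$ is just the paper's bound $\Pr[\lc(u,v)=\ell]\le\Pr[\mycell_\ell(u)\ne\mycell_\ell(v)]$ stated before taking expectations). The off-by-one at level $0$ that you flag is glossed over in the paper as well and, as you note, is immaterial for the $\epsilon\,O(Lb)$ bound used downstream.
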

\begin{proof}
  The lower bound $\E_{\detpart \sim \randpart}\left[ \weight(u,
    v)\right] \ge \rho(u,v)$ follows from Definition~\ref{def:weights} since $\levdiam_{\lc(u,v)} \geq
  0$. For the upper bound we use the properties of $(a,
  b)$-partitions.  We have:
  \begin{align*}
    &\E_{\detpart \sim \randpart}\left[ \weight(u, v)\right] = \E_{\detpart \sim \randpart}\left[ \rho(u, v) + \epsilon \levdiam_{\lc(u,v)} \right]\\
    &= \rho(u, v) + \sum_{\ell =
      1}^{\lev}{\Pr_{\detpart \sim \randpart}[\lc(u, v) = t]  \epsilon \levdiam_\ell}\\
      & = \rho(u,v) + \sum_{\ell = 1}^{\lev} \Pr_{\detpart \sim \randpart}\left[\mycell_{\ell}(u) \neq \mycell_{\ell}(v),  \mycell_{\ell + 1}(u) = \mycell_{\ell + 1}(v)\right] \epsilon \levdiam_\ell\\
    &\leq \rho(u, v) + \sum_{\ell = 1}^\lev{\Pr_{\detpart \sim \randpart}[\mycell_\ell(u) \neq \mycell_\ell(v)] \epsilon\levdiam_\ell} \\
     &\leq \rho(u, v) + \sum_{\ell = 1}^\lev{\epsilon b \rho(u,v)} \\
    &= \rho(u, v)(1 + \epsilon \lev b),
  \end{align*}
  where the first equality follows from Definition~\ref{def:weights}, the second equality is an expansion of the expectation, the third equality follows from Definition~\ref{def:crossing-level}, the fourth inequality follows from a term-by-term estimation of the probability of a joint event by the probability of one of its sub-events, the fifth inequality follows from the bound on the probability of cutting an edge for an $(a,b)$-partition (Definition~\ref{def:partition}) and the last one is a direct calculation. 
\end{proof}

Recall that in Algorithm~\ref{alg:unit-step} for a cell $C \in P_\ell$ the set $V(C)$ is a subset of points of $C$ considered at level $\ell$. 
Also recall that $C_\ell(u)$ is the cell containing $u$ at level $\ell$.
We use the following notation to denote the closest neighbor of $u$ considered at level $\ell$.
\begin{definition} [Nearest neighbor at level $\ell$]
  For $u \in S$ let $N_\ell(u)$ be the nearest neighbor to $u$ in
  $V(\mycell_\ell(u)) \cap \mycell_{\ell-1}(u)$, i.e. $N_\ell(u) = \arg \min_{v \in
    V(\mycell_\ell(u))\cap \mycell_{\ell-1}(u)} \rho(u,v)$.
\end{definition}

For two points $u,v$ we will use the following distance measure $\rho_\ell(u,v)$ in the analysis.

\begin{definition}[Distance between nearest neighbors at level $\ell$]\label{def:nn-distance}
 For an edge $(u,v)$ we define $\rho_\ell(u,v) = \rho(N_\ell(u), N_\ell(v))$ to be the distance between the nearest neighbors of $u$ and $v$ at level $\ell$.
\end{definition}

 The next lemma shows that  $\rho_\ell$ is an approximation to $\rho$. 
\begin{lemma}\label{lem:net}
  %Let $V$ be the input to the unit step for some cell $\mycell \in
  %\detpart_{\ell}$ in level $\ell$. For $u \in
  %\mycell$, let $N_\ell(u)$ be the closest neighbor to $u$ in
  %$V$. 
  For every $C \in P_\ell$  and $u, v \in \mycell$ it holds that $|\rho_\ell(u, v)
  - \rho(u, v)| \le 2 \epsilon^2\levdiam_{\ell - 1}$.
\end{lemma}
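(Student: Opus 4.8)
The plan is to deduce the estimate from the triangle inequality together with a bound on how far $N_\ell(w)$ lies from $w$. Since $\rho_\ell(u,v)=\rho(N_\ell(u),N_\ell(v))$, applying the triangle inequality twice gives
\[
|\rho_\ell(u,v)-\rho(u,v)|\le \rho(u,N_\ell(u))+\rho(v,N_\ell(v)),
\]
so it suffices to control $\rho(w,N_\ell(w))$ for a single point $w\in C$. The point to exploit is that the set over which $N_\ell(w)$ is defined, namely $V(\mycell_\ell(w))\cap \mycell_{\ell-1}(w)$, is precisely the $\epsilon^2\levdiam_{\ell-1}$-covering that the unit step output in the child cell $\mycell_{\ell-1}(w)$ at level $\ell-1$, and that this covering $\epsilon^2\levdiam_{\ell-1}$-covers the input $V(\mycell_{\ell-1}(w))$ of that child (indeed $V(\mycell_\ell(w))=\bigcup V(\mycell_{\ell-1}(\cdot))$ over the children, and intersecting with the one child $\mycell_{\ell-1}(w)$ isolates that child's output).

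I would prove by induction on $\ell$ the invariant $\rho(w,N_\ell(w))\le \epsilon^2\sum_{j=1}^{\ell-1}\levdiam_j$ for every $w\in S$. The base case $\ell=1$ is trivial: the level-$0$ cell $\mycell_0(w)$ contains only $w$ from the input, so $N_1(w)=w$ and the (empty) sum is $0$. For the inductive step, observe that $N_{\ell-1}(w)$, being the nearest point to $w$ in $V(\mycell_{\ell-1}(w))\cap\mycell_{\ell-2}(w)\subseteq V(\mycell_{\ell-1}(w))$, lies in $V(\mycell_{\ell-1}(w))$; hence the covering $V(\mycell_\ell(w))\cap\mycell_{\ell-1}(w)$ contains some $z$ with $\rho(N_{\ell-1}(w),z)\le \epsilon^2\levdiam_{\ell-1}$. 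Since $z$ is a feasible candidate in the minimization defining $N_\ell(w)$, we get
\[
\rho(w,N_\ell(w))\le \rho(w,z)\le \rho(w,N_{\ell-1}(w))+\epsilon^2\levdiam_{\ell-1},
\]
and the inductive hypothesis closes the step. (One should also note $N_\ell(w)$ is well defined, i.e.\ the relevant set is nonempty: $w$ retains a nonempty set of representatives in its cell at every level, starting from $\{w\}$ at level $0$, using that the unit step sends nonempty inputs to nonempty outputs.)

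To finish, recall $\levdiam_j=\boxapprox a^{L-j}\levdiam(S)$ with $a\le\tfrac12$, so $\sum_{j=1}^{\ell-1}\levdiam_j$ is a geometric sum with ratio $a$ and top term $\levdiam_{\ell-1}$, hence at most $\levdiam_{\ell-1}/(1-a)$, which is $O(\levdiam_{\ell-1})$ (and tends to exactly $\levdiam_{\ell-1}$ as $a$ shrinks, the regime relevant to our partitions). Combining with the triangle-inequality bound yields $|\rho_\ell(u,v)-\rho(u,v)|\le 2\epsilon^2\sum_{j=1}^{\ell-1}\levdiam_j=O(\epsilon^2\levdiam_{\ell-1})$, matching the claimed $2\epsilon^2\levdiam_{\ell-1}$ up to the absolute constant coming from the geometric series (using $a\le\tfrac12$ and $\epsilon\le\tfrac14$). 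The only place that needs care is the bookkeeping that identifies $V(\mycell_\ell(w))\cap\mycell_{\ell-1}(w)$ with the child's output covering and verifies that the per-level covering errors accumulate geometrically rather than linearly in $\ell$; beyond that I do not anticipate a real obstacle.
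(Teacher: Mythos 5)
Your core reasoning is sound and follows the same basic route as the paper---reduce to bounding $\rho(w,N_\ell(w))$ and finish with two applications of the triangle inequality---but the two proofs differ in where the covering guarantee is invoked, and as a result you establish a slightly weaker bound than the lemma states. The paper's proof simply uses the unit step's output specification: the output of Algorithm~\ref{alg:unit-step} in a child cell $\mycell'\in\detpart_{\ell-1}$ is an $\eps^2\levdiam_{\ell-1}$-covering \emph{of the whole cell} $\mycell'$ (not merely of that step's input $V(\mycell')$), so $V(\mycell)$ is an $\eps^2\levdiam_{\ell-1}$-covering of $\mycell$, giving $\rho(u,N_\ell(u)),\rho(v,N_\ell(v))\le\eps^2\levdiam_{\ell-1}$ in one shot and hence exactly the stated $2\eps^2\levdiam_{\ell-1}$. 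You instead trust each output only to cover that step's input and recover coverage of the original points by an induction over levels; this is correct, but it costs you the geometric factor: you end with $2\eps^2\sum_{j\le\ell-1}\levdiam_j\le 2\eps^2\levdiam_{\ell-1}/(1-a)$, i.e.\ up to $4\eps^2\levdiam_{\ell-1}$, and you need the hypothesis $a\le\tfrac12$, which the lemma statement does not carry (it appears only in Theorem~\ref{thm:mst}). The weaker constant is harmless downstream, since Lemma~\ref{lem:net} only feeds $O(\eps)$ terms in Lemma~\ref{lem:weights-apx}, but strictly speaking your argument does not give the lemma as written. The clean fix is to move your accumulation argument into the maintained invariant rather than into this lemma: prove, by the same induction, that for every cell $C$ at level $\ell$ the output $V'$ is an $\eps^2\levdiam_\ell$-covering of \emph{all} input points of $C$; this is exactly how the paper maintains it, by sparsifying at radius $\eps^2\levdiam_\ell/2$ and using $a\le\tfrac12$ so that $\eps^2\levdiam_{\ell-1}+\eps^2\levdiam_\ell/2\le\eps^2\levdiam_\ell$ (spelled out in the doubling-dimension construction of Section~\ref{sec:doubling}). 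With that invariant in hand, the two triangle inequalities give the constant $2$ on the nose. (Minor notational point: the input at level $\ell$ is the union of the children's \emph{outputs}, not of their inputs $V(\mycell_{\ell-1}(\cdot))$ as you wrote, though your subsequent use of this fact is the intended one.)
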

\begin{proof}
  For each $\mycell' \in \detpart_{\ell - 1}$ such that $\mycell'
  \subseteq \mycell$ by construction of the algorithm it holds that $V(C)
  \cap \mycell'$ is an $\epsilon^2 \levdiam_{\ell - 1}$-covering for
  $\mycell'$.  Thus, $V(C)$ is an $\epsilon^2 \levdiam_{\ell -
    1}$-covering for $\mycell$.  We have $\rho(N_\ell(u),u) \le
  \epsilon^2 \levdiam_{\ell - 1}$ and $\rho(N_\ell(v),v) \le 
  \epsilon^2 \levdiam_{\ell - 1}$.  By the triangle inequality we get
  $\rho(N_\ell(u),N_\ell(v)) \le \rho(N_\ell(u),u) + \rho(u,v) +
  \rho(N_\ell(v),v) \le \rho(u, v) + 2 \epsilon^2 \levdiam_{\ell -
    1}$.  By another application of triangle inequality we have
  $\rho(N_\ell(u),N_\ell(v)) \ge \rho(u,v) - \rho(N_\ell(v),v) -
  \rho(N_\ell(u),u) \ge \rho(u, v) - 2 \epsilon^2 \levdiam_{\ell -
    1}$.
\end{proof}

To complete our analysis we need to further characterize edges
according to their status during the execution of the algorithm.
\begin{definition}[Short and long edges]\label{def:short-long}
  An edge $(u, v)$ is \emph{short} in level $\ell$ if $\rho_\ell(u,v)
  \le \frac{\eps}{1+\eps} \levdiam_\ell$, and long otherwise.
\end{definition}  

\begin{definition}[Processing level and sequence]\label{defn:proc-seq}
  For an edge $e$ in $\tour$, the \emph{processing level} $\ell_p(e)$
  is the integer $\ell$, such that $e$ is output by the unit step
  applied to some $\mycell \in \detpart_\ell$. Consider a sequential
  simulation of the SAS algorithm with unit step
  Algorithm~\ref{alg:unit-step}, in which at each level $\ell$, the
  unit step is applied to each cell $\mycell \in \detpart_\ell$
  sequentially in an arbitrary order. The \emph{processing sequence}
  $(e_1, \ldots, e_{n-1})$ consists of the edges of $\tour$ in the
  order in which they are output by the above sequential simulation.
\end{definition}

\newcommand{\inter}[1]{I_{#1}}

\begin{definition}[Intercluster edges]
 The
  \emph{forest at step $i$}, denoted $\tour_i$, is defined as the
  forest $\{e_1, \ldots, e_i\}$. An edge $e=(u,v)$ is
  \emph{intercluster at step $i$} if $u$ and $v$ lie in different
  connected components of $\tour_{i-1}$. We denote the set of all intercluster edges at step i as $\inter{i}$.
\end{definition}

\begin{lemma}\label{lem:ccomp}
  Let $\epsilon \le \frac{1}{4}$ and $a \le \frac{1}{2}$. For every vertex $u$, level $\ell$ and step $i > 1$ such that $\ell_p(e_i) \geq
    \ell$ the vertices $u$ and $N_{\ell}(u)$ are in the same connected
  component of $\tour_{i-1}$.
\end{lemma}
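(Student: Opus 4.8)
The plan is to prove Lemma~\ref{lem:ccomp} by induction on the level $\ell$, going from the top level $\lev$ down to level $1$, and to show that once the unit step has been applied to the cell $C = \mycell_\ell(u)$, the point $u$ and its level-$\ell$ nearest neighbor $N_\ell(u)$ are already in the same connected component of the partial forest. Since $\ell_p(e_i) \ge \ell$ means that $e_i$ was output while processing some cell at level at least $\ell$, and the unit step at level $\ell$ in cell $C$ is fully completed before any level-$(\ell'>\ell)$ processing that could produce $e_i$, it will suffice to show that $u$ and $N_\ell(u)$ are merged into a single component during (or before) the execution of the unit step at cell $C$. For the base case of the induction (the leaves / level $1$, or more precisely the first level at which $u$ and $N_\ell(u)$ both lie in $V(C_\ell(u))$) I would unwind the definitions.

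The heart of the argument is the following: $N_\ell(u)$ lies in $V(\mycell_\ell(u)) \cap \mycell_{\ell-1}(u)$, so in particular $u$ and $N_\ell(u)$ are in the \emph{same child cell} $C' = \mycell_{\ell-1}(u)$ of $C$. By Lemma~\ref{lem:net} applied at level $\ell-1$ (or directly by the covering property of $V(C')$), the distance $\rho(u, N_\ell(u))$ is at most $\epsilon^2 \levdiam_{\ell-1}$, since $N_\ell(u)$ is the nearest point of the $\epsilon^2\levdiam_{\ell-1}$-covering $V(C')$ to $u$. Now I want to argue that when the unit step runs in $C'$ at level $\ell-1$, it connects $u$ and $N_\ell(u)$: the unit step (Algorithm~\ref{alg:unit-step}) is a variant of Kruskal that adds every edge of length at most $\epsilon\levdiam_{\ell-1}$ (joining approximately-closest components), and it only stops once the closest intercluster distance exceeds $\epsilon\levdiam_{\ell-1}$. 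Since $\rho(u, N_\ell(u)) \le \epsilon^2 \levdiam_{\ell-1} \le \epsilon\levdiam_{\ell-1}$ (using $\epsilon \le 1/4 < 1$), as long as $u$ and $N_\ell(u)$ are in distinct components at that point, there is an intercluster pair at distance $\le \epsilon\levdiam_{\ell-1}$, so the unit step cannot have terminated and must make another merge. A standard argument (the components containing $u$ and $N_\ell(u)$ have diameter $O(\epsilon\levdiam_{\ell-1})$, so any approximately-closest merge keeps these components "small") then shows that by the time the unit step in $C'$ halts, $u$ and $N_\ell(u)$ lie in a common component. This must happen at a processing level $\le \ell - 1 < \ell$, so certainly before step $i$ (since $\ell_p(e_i) \ge \ell$ and, within a fixed level, all unit-step processing precedes steps with processing level strictly larger). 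The covering/annotation passed up to $C$ at level $\ell$ then records them as connected.

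The delicate technical step — and the main obstacle — is controlling the diameters of the connected components produced by the approximate Kruskal unit step, to guarantee that the merge connecting $u$ to $N_\ell(u)$ actually occurs at level $\ell-1$ rather than being "deferred": I need that when the unit step considers intercluster pairs at distances up to $\epsilon\levdiam_{\ell-1}$, the component of $u$ and the component of $N_\ell(u)$ do not grow so large that the algorithm never picks an edge bridging them, and that $N_\ell(u)$ itself is not absorbed into some other far-away component. This is where the hypotheses $\epsilon \le 1/4$ and $a \le 1/2$ get used: they ensure $\levdiam_{\ell-1} = a^{\lev - \ell + 1}\boxapprox\levdiam(S)$ shrinks geometrically, so that "short at level $\ell-1$" edges are genuinely short relative to the cell diameter and the inductive bound $\levdiam(\detpart_{\ell-1}) \le \levdiam_{\ell-1}$ keeps everything within a single child cell. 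I would also need to carefully handle the edge case where $u$ is the only point of $C'$ in its component because all its neighbors are farther than $\epsilon\levdiam_{\ell-1}$ — but then $\rho(u, N_\ell(u)) \le \epsilon^2\levdiam_{\ell-1}$ forces $N_\ell(u)$ to be that close, so it cannot be in a different cell, and the only way it is in a different component of $\tour_{i-1}$ is if the level-$(\ell-1)$ unit step in $C'$ terminated prematurely, contradicting the stopping condition. Assembling these pieces into a clean induction, with the right statement of the inductive hypothesis (something like: "after processing cell $C$ at level $\ell$, for every $u \in V(C)$ the points $u$ and $N_{\ell'}(u)$ are co-component for all $\ell' \le \ell$"), is the bulk of the work.
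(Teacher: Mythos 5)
Your outline has the right ingredients (induction over levels, the covering bound $\rho(u,N_\ell(u))\le \epsilon^2\levdiam_{\ell-1}$, and the termination condition of Algorithm~\ref{alg:unit-step}), but the central step is argued for the wrong pair of points, and this is a genuine gap. You want the unit step executed in $C'=\mycell_{\ell-1}(u)$ to connect $u$ to $N_\ell(u)$ directly. However, the unit step at level $\ell-1$ only operates on $V(C')$, and the original input point $u$ is in general \emph{not} an element of $V(C')$: only covering representatives of $u$ survive past level $1$. So the level-$(\ell-1)$ unit step cannot ``merge $u$ with $N_\ell(u)$''; it can only merge points of $V(C')$. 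The paper's proof resolves this by taking the smallest level $\ell$ at which the claim fails (equivalently, your induction) and routing through $N_{\ell-1}(u)$: by minimality, $u$ and $N_{\ell-1}(u)$ are already in the same component of $\tour_{i-1}$; both $N_{\ell-1}(u)$ and $N_\ell(u)$ \emph{do} lie in $V(C')$, and by the two covering bounds and the triangle inequality $\rho(N_{\ell-1}(u),N_\ell(u))\le(1+a)\epsilon^2\levdiam_{\ell-1}$; hence if they were in different components of $\ccomp$ at the end of the unit step in $C'$, the algorithm would have found a pair at distance $\theta\le(1+\epsilon)(1+a)\epsilon^2\levdiam_{\ell-1}<\epsilon\levdiam_{\ell-1}$ (here $\epsilon\le\tfrac14$, $a\le\tfrac12$ enter), contradicting the loop's termination condition. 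Your inductive hypothesis, stated only ``for every $u\in V(C)$'', is also too weak for the lemma, which must hold for arbitrary input vertices $u$ (it is later applied to endpoints of arbitrary intercluster edges).

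The second problem is the piece you flag as the ``delicate technical step'': controlling the diameters of connected components so that the merge bridging $u$'s and $N_\ell(u)$'s components is not ``deferred''. That proposed argument does not work — components built by the unit step can be long chains of short edges and have diameter comparable to the whole cell, so no $O(\epsilon\levdiam_{\ell-1})$ diameter bound holds — but it is also unnecessary. One does not need to track which merges the algorithm performs at intermediate times; it suffices to examine the state at \emph{termination} of the unit step in $C'$: the loop can only have exited with $\theta>\epsilon\levdiam_{\ell-1}$, hence with minimum intercluster distance $\tau>\frac{\epsilon}{1+\epsilon}\levdiam_{\ell-1}$, which is incompatible with the close pair $(N_{\ell-1}(u),N_\ell(u))$ still being intercluster since $\epsilon(1+\epsilon)(1+a)<1$ under the stated hypotheses. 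Replacing your diameter argument by this termination-time contradiction, and rerouting the connection through $N_{\ell-1}(u)$ as above, yields the paper's proof.
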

\begin{proof}
  Note that it suffices to prove the claim for the smallest $i$ such
  that $\ell_p(e_i) \ge \ell$.  Fix such $i$.  Assume for
  contradiction that for some $\ell$ the vertices $u$ and
  $N_{\ell}(u)$ are in different connected components of
  $\tour_{i-1}$. Fix the smallest such $\ell$. If $\ell = 1$, then $N_\ell(u) = u$, so we may assume $\ell
  \ge 2$. Let $\mycell = \mycell_{\ell}(u)$ and $\mycell' =
  \mycell_{\ell-1}(u)$. At the end of the execution of
  Algorithm~\ref{alg:unit-step} in cell $\mycell'$, the partition
  $\ccomp$ of $V(\mycell')$ into connected components is a subdivision
  of the connected components of $\tour_{i-1}$ restricted to
  $V(\mycell')$. By the choice of $\ell$, $u$ and $N_{\ell-1}(u)$ are
  in the same connected component of $\tour_{i-1}$, and, since we
  assumed that $u$ and $N_\ell(u)$ are in different connected
  components of $\tour_{i-1}$, it must be the case that
  $N_{\ell-1}(u)$ and $N_{\ell}(u)$ are in different connected
  components in $\ccomp$, i.e.~$N_{\ell-1}(u) \in \ccomp_k$ and
  $N_{\ell}(u) \in \ccomp_{k'}$ for $k \neq k'$. Since $V(\mycell)
  \cap \mycell'$ is a $\epsilon^2\levdiam_{\ell-1}$-covering of
  $\mycell'$ and $V(\mycell')$ is a
  $\epsilon^2\levdiam_{\ell-2}$-covering of $\mycell'$, we have
  $\rho(u, N_{\ell}(u)) \le \epsilon^2\levdiam_{\ell-1}$ and $\rho(u,
  N_{\ell-1}(u)) \le \epsilon^2\levdiam_{\ell-2}$. Then, by the
  triangle inequality, $\tau \leq \rho(N_{\ell-1}(u), N_{\ell}(u)) \le
  (1 + a)\epsilon^2\levdiam_{\ell-1}$, and the algorithm will find $u'
  \in \ccomp_k, v' \in \ccomp_{k'}$ such that $\theta = \rho(u', v')
  \leq (1+\epsilon)\tau \leq
  \epsilon^2(1+\epsilon)(1+a)\levdiam_{\ell-1}$. Since for $\epsilon
  \le \frac{1}{4}$ and $a \leq \frac{1}{2}$,
  $\epsilon(1+\epsilon)(1+a) < 1$, this contradicts the termination
  condition for the main loop of Algorithm~\ref{alg:unit-step}.
\end{proof}

%The next lemma characterizes $e_i$, as well as the intercluster edges
%at step $i$, in terms of whether they are crossing, and their
%length. 

%We argue that for the weights $\weight$  defined in
%Lemma~\ref{lem:embedding}, the edges of $\tour$ ordered by the
%processing order correspond to an approximate implementation of
%Kruskal's algorithm. Lemma~\ref{lem:weights-apx} and Lemma~\ref{lem:kruskal-seq} make this statement precise. 

\newcommand{\ekru}[1]{e^{w_P}_{#1}}

Lemma~\ref{lem:weights-apx} is the key part of the proof of Theorem~\ref{thm:mst}.
It shows that the cost of the $i$-th edge output by our algorithm is bounded in terms of the cost of $i$-th edge output by Kruskal's algorithm.

\begin{definition}[Kruskal's edge at step $i$]
Let $\ekru{i}$ be the $i$-th edge output by
  Kruskal's algorithm when run on the complete graph on $S$ with edge
  weights $\weight:S\times S \rightarrow \R$.
\end{definition}

\begin{lemma}\label{lem:weights-apx}
  If $\epsilon \leq \frac{1}{4}$ and $a \leq \frac{1}{2}$, then
  for each $i$ it holds that $\rho(e_i) \leq (1+O(\epsilon)) \weight(\ekru{i}).$
\end{lemma}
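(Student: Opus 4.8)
The plan is to compare our processing sequence with Kruskal's run on the complete graph with weights $\weight$ through the rank function of the graphic matroid. Recall the standard fact that $\weight(\ekru{i})$ is the least value $t$ for which the subgraph $G_{\le t}:=\{(u,v):\weight(u,v)\le t\}$ has rank at least $i$, i.e.\ at most $n-i$ connected components. So it suffices to prove, for a suitable absolute constant $C$, the following \emph{Connectivity Claim}: if $\rho(e_i)>(1+C\epsilon)\weight(\ekru{i})$ and $t:=\weight(\ekru{i})$, then every edge $(u,v)$ with $\weight(u,v)\le t$ has $u$ and $v$ in one connected component of $\tour_{i-1}$. Granting this, the partition of $S$ into components of $\tour_{i-1}$ coarsens the partition into components of $G_{\le t}$, so $G_{\le t}$ has at least $n-i+1$ components, hence rank at most $i-1<i$, contradicting $t=\weight(\ekru{i})$; therefore $\rho(e_i)\le(1+C\epsilon)\weight(\ekru{i})=(1+O(\epsilon))\weight(\ekru{i})$, as desired. (For $i=1$ the Claim degenerates to ``there is no edge of $\weight$-cost at most $t$'', which is also excluded by the argument below; alternatively one checks $i=1$ directly.)

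To prove the Claim, fix $(u,v)$ with $\weight(u,v)=\rho(u,v)+\epsilon\levdiam_{\lc(u,v)}\le t$, so $\rho(u,v)\le t$ and $\levdiam_{\lc(u,v)}\le t/\epsilon$. Since $e_i$ is emitted by the unit step in a cell of level $\ell_p(e_i)$, we have $\rho(e_i)\le\epsilon\levdiam_{\ell_p(e_i)}$, hence $\levdiam_{\ell_p(e_i)}>(1+C\epsilon)t/\epsilon>\levdiam_{\lc(u,v)}$ and $\ell_p(e_i)\ge\lc(u,v)+1$. Because the level diameters grow geometrically (at least doubling per level, as $a\le\tfrac12$), I take $\ell^\dagger:=\max\{\lc(u,v)+1,\ \text{smallest }\ell\text{ with }\levdiam_\ell\ge M\rho(u,v)/\epsilon\}$, where $M:=\tfrac{1+\epsilon}{1-\epsilon-\epsilon^2}=1+O(\epsilon)$. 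Using Lemma~\ref{lem:net} to pass between $\rho$ and $\rho_{\ell^\dagger}$, this choice makes $(u,v)$ \emph{short at level $\ell^\dagger$} in the sense of Definition~\ref{def:short-long}, i.e.\ $\rho_{\ell^\dagger}(u,v)\le\tfrac{\epsilon}{1+\epsilon}\levdiam_{\ell^\dagger}$, and also gives $\rho_{\ell^\dagger}(u,v)\le(1+O(\epsilon))t$; and one checks $\ell^\dagger\le\ell_p(e_i)$ --- this is exactly where $C$ has to be a large enough constant, absorbing $M$ and the $\tfrac{\epsilon}{1+\epsilon}$ factor, and $C=O(1)$ suffices for $\epsilon\le\tfrac14$. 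At level $\ell^\dagger$ the points $u$ and $v$ share a cell $\mycell$, and $N_{\ell^\dagger}(u),N_{\ell^\dagger}(v)\in V(\mycell)$ are within $\tfrac{\epsilon}{1+\epsilon}\levdiam_{\ell^\dagger}$ of each other.

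By Lemma~\ref{lem:ccomp} applied at level $\ell^\dagger\le\ell_p(e_i)$ and step $i$, the vertex $u$ is in the same component of $\tour_{i-1}$ as $N_{\ell^\dagger}(u)$, and $v$ as $N_{\ell^\dagger}(v)$; so it remains to show $N_{\ell^\dagger}(u)$ and $N_{\ell^\dagger}(v)$ are in the same component of $\tour_{i-1}$, which I read off from the run of Algorithm~\ref{alg:unit-step} in $\mycell$. If $\ell^\dagger<\ell_p(e_i)$, that run has finished before $e_i$ is emitted, and by its termination rule any two distinct components inside $V(\mycell)$ are then more than $\tfrac{\epsilon}{1+\epsilon}\levdiam_{\ell^\dagger}$ apart; since $N_{\ell^\dagger}(u)$ and $N_{\ell^\dagger}(v)$ are closer, they have been merged, and components never split. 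If $\ell^\dagger=\ell_p(e_i)$, observe that as long as $N_{\ell^\dagger}(u)$ and $N_{\ell^\dagger}(v)$ lie in different current components of that run, the closest-pair distance is at most $\rho_{\ell^\dagger}(u,v)\le(1+O(\epsilon))t$, so every edge the run emits in such an iteration has $\rho$-length at most $(1+\epsilon)(1+O(\epsilon))t<(1+C\epsilon)t<\rho(e_i)$; since the final tree cost $\sum_e\rho(e)$ and the Kruskal cost do not depend on the order in which the unit-step iterations are interleaved, we may fix the sequential simulation so that within each level edges are emitted in nondecreasing $\rho$-length, and then all these edges precede $e_i$, so again $N_{\ell^\dagger}(u)$ and $N_{\ell^\dagger}(v)$ are merged in $\tour_{i-1}$. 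This proves the Claim.

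The delicate point, and the one I expect to be the main obstacle, is precisely this reconciliation between the \emph{processing level} of $e_i$ and the cell that hosts $u$ and $v$ at that level: under an arbitrary intra-level order it can happen that $e_i$ is emitted by one level-$\ell^\dagger$ cell while $\mycell$ is visited only afterwards, in which case $\mycell$'s run contributes nothing to $\tour_{i-1}$ and the Claim genuinely fails for that edge; imposing the nondecreasing-$\rho$ emission order within each level removes this case while leaving every downstream quantity unchanged, and one must verify that this order is legitimate and that Lemma~\ref{lem:ccomp} still applies under it --- it does, since whole levels are still completed bottom-up. Everything else is routine: combine Lemmas~\ref{lem:ccomp} and~\ref{lem:net} with the termination rule of Algorithm~\ref{alg:unit-step}, the geometric growth of $\levdiam_\ell$, and elementary constant-chasing in $\epsilon$.
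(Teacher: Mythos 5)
Your overall architecture is essentially the paper's proof in different packaging: the rank/threshold characterization of Kruskal together with your ``Connectivity Claim'' is exactly the contrapositive of Proposition~\ref{prop:kruskal-seq} (at step $i$ there is an intercluster edge of $w_P$-weight at most $w_P(e^{w_P}_i)$, and one must bound $\rho(e_i)$ against it), and your level $\ell^\dagger$ is a repackaging of the paper's case analysis via Lemma~\ref{lem:ccomp}, Lemma~\ref{lem:net} and the termination rule of Algorithm~\ref{alg:unit-step}; in particular your case $\ell^\dagger<\ell_p(e_i)$ is the argument of Proposition~\ref{prop:active} and is fine, and you are right that the intra-level processing order is the genuinely delicate point (the paper handles it through Proposition~\ref{prop:min-edge}, which implicitly needs the relevant net points to lie in the cell emitting $e_i$ or in a cell whose run has already finished).

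The gap is in your repair of that delicate case. The normalization ``within each level edges are emitted in nondecreasing $\rho$-length'' is not realizable: the processing sequence must be an interleaving of the per-cell emission sequences that preserves each cell's internal order, and inside a single cell Algorithm~\ref{alg:unit-step} merges only a $(1+\eps)$-approximately closest pair, so one cell can emit a longer edge before a shorter one (lengths $\theta_1$ then $\theta_2$ with $\tau\le\theta_2<\theta_1\le(1+\eps)\tau$ are possible, since $\tau$ need not increase past the unmerged closest pair). Hence no scheduling of the cells produces a globally sorted level, and the last step of your case $\ell^\dagger=\ell_p(e_i)$ --- when $e_i$ is emitted by a cell different from $\mycell$ that is scheduled earlier --- is unsupported as written. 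The argument can be saved, but by a different device than the one you state: for instance, interleave the (fixed, per-cell) runs of a level by always advancing the cell whose next emitted edge is shortest; this is a legitimate interleaving, and it gives precisely what you use, namely that while $N_{\ell^\dagger}(u)$ and $N_{\ell^\dagger}(v)$ are still separated in $\mycell$'s run, every pending edge of $\mycell$ is cheaper than $\rho(e_i)$ and therefore precedes $e_i$. You should also state explicitly that you are then proving the inequality for one particular processing sequence rather than for the arbitrary-order sequence of Definition~\ref{defn:proc-seq}; this suffices for Theorem~\ref{thm:mst}, since the multiset of output edges does not depend on the interleaving and Kruskal's sequence does not depend on it at all, but it is a weakening of the literal statement, and it is exactly at this order-sensitive point that your write-up currently leans on an ordering that does not exist.
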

\begin{proof}

\newcommand{\emin}[1]{e^+_{#1}}

%\begin{definition}[Shortest intercluster edge at step $i$]\label{def:intercluster}
We denote the shortest intercluster edge at step $i$ as
$\emin{i} = \argmin_{e \in \inter{i}} \weight(e).$

First we show that the weight of $\ekru{i}$ is bounded by the weight
of $\emin{i}$ in Proposition~\ref{prop:kruskal-seq}. This argument is
due to Indyk~\cite[Section 3.3.1, Lemma 11]{Indyk-thesis}. 

\begin{proposition}\label{prop:kruskal-seq}
  For each $i$ it holds that $\weight(\emin{i}) \le \weight(\ekru{i})$.
\end{proposition}
\begin{proof}
Note that $\tour_{i - 1}$ has $n - i + 1$ connected components.
Because $\{\ekru{1}, \dots, \ekru{i}\}$ is a forest, there exists $j \le i$ such the endpoints of $\ekru{j}$ lie in different connected components of $\tour_{i - 1}$.
Thus, by definition of $\emin{}$ we have $\weight(\emin{i}) \le \weight(\ekru{j})$.
Because the edges output by Kruskal's algorithm satisfy that $\weight(\ekru{j}) \le \weight(\ekru{i})$ for $j \le i$ the lemma follows.
\end{proof}

Using Proposition~\ref{prop:kruskal-seq} it suffices to show that $\rho(e_i) \le (1 + O(\epsilon)) \weight(\emin{i})$ to complete the proof.
 We consider three cases:
 \paragraph{Case I: $\ell_{c}(\emin{i}) \ge \ell_p(e_{i})$.} 
 In this case we have: 
\begin{align*}
  \rho(e_i) \leq \epsilon\levdiam_{\ell_p(e_{i})} \le \rho(\emin{i}) +
  \epsilon\levdiam_{\ell_p(e_{i})} \le \rho(\emin{i}) + \epsilon \levdiam_{\lc(\emin{i})} = \weight(\emin{i}),
\end{align*}
where the first inequality follows from the condition for outputting the edges in Algorithm~\ref{alg:unit-step},
the second one is since $\rho(\emin{i}) \ge 0$,
the third one is because $\ell_p(e_{i}) \le \lc(\emin{i})$ by assumption and the last one is by Definition~\ref{def:weights}. This completes the analysis of the first case.

The following proposition will be crucial for the analysis of the remaining two cases. It shows that the $i$-th  edge $e_i$ output by (the
sequential simulation) of the SAS algorithm is approximately the
shortest non-crossing intercluster edge. 
\begin{proposition}\label{prop:min-edge}
  Let $\epsilon \le \frac{1}{4}$ and $a \le \frac{1}{2}$. If $e \in I_i$ is
  non-crossing at level $\ell_p(e_i)$ then
  $\rho(e_i) \leq (1+\epsilon)\rho_{\ell_p(e_i)}(e)$.
\end{proposition}
\begin{proof}
Fix $e = (u,v)$ and consider an edge $(u', v')$ where $u' = N_{\ell_p(e_i)} (u)$ and $v' = N_{\ell_p(e_i)} (v)$.
By Lemma~\ref{lem:ccomp} we have that $u$ and $u'$ are in the same connected component of $\tour_{i - 1}$. Similarly $v$ and $v'$ are also in the same connected component.
By assumption $(u,v) \in I_i$ so these two connected components are different. Because the edge $(u,v)$ is non-crossing at level $\ell_p(e_i)$ the edge $(u', v')$ is also non-crossing at this level. Thus,
\begin{align*}
\rho(e_i) \le (1 + \epsilon) \tau \le (1 + \epsilon) \rho(N_{\ell_p(e_{i})}(u), N_{\ell_p(e_{i})}(v)) = (1 + \epsilon)\rho_{\ell_p(e_i)}(e),
\end{align*}
where the first inequality is by construction used in Algorithm~\ref{alg:unit-step}, the second is by definition of $\tau$ together with the fact that $(u',v')$ is a non-crossing edge at level $\ell_p(e_i)$ between two different connected components and the last is by Definition~\ref{def:nn-distance}.
\end{proof}

  \paragraph{Case II: $\ell_c(\emin{i}) = \ell_p(e_{i})-1$.} 
 In this case we have:
  \begin{align*}
    \rho_{\ell_p(e_{i})}(\emin{i}) &\leq \rho(\emin{i}) + 2\epsilon^2\levdiam_{\ell_p(e_{i})-1}  \\
    &< \rho(\emin{i})  + \epsilon \levdiam_{\ell_p(e_{i})-1} + \epsilon^2\levdiam_{\ell_p(e_{i})-1} \\
    &\le (1 + \epsilon)(\rho(\emin{i}) + \epsilon \levdiam_{\ell_p(e_{i}) - 1}) \\
    &= (1 + \epsilon) \weight(\emin{i}),
  \end{align*}
  where the first line follows by Lemma~\ref{lem:net}, the second uses
  the assumption that $\epsilon < 1$, the third follows
  since $\rho(\emin{i}) \ge 0$ and the last one holds by
  Definition~\ref{def:weights} together with the assumption that
  $\ell_c(\emin{i}) = \ell_p(e_{i}) - 1$.  By assumption $\emin{i}$ is non-crossing at
  level $\ell_p(e_{i})$, and therefore Proposition~\ref{prop:min-edge}
  and the assumption $\epsilon \le \frac{1}{2}$ imply 
  \[
  \rho(e_i) \leq (1+\epsilon)\rho_{\ell_p(e_{i})}(\emin{i}) \leq
  (1+\epsilon)^2\weight(\emin{i}) \leq (1+\frac{5}{2}\epsilon)\weight(\emin{i}).
  \]

  \paragraph{Case III: $\ell_c(\emin{i}) < \ell_p(e_{i}) - 1$.}
  
  First, we prove the following auxiliary statement.
  
  \begin{proposition}\label{prop:active} 
    Let $\epsilon \le \frac{1}{4}$ and $a \le \frac{1}{2}$. 
  %  For each $i \leq n-1$, the edge
  %  $e_i$ is non-crossing in level $\ell_p(e_i)$ and satisfies
  %  $\rho(e_i) \le \epsilon\levdiam_{\ell_p(e_i)}$.   
  Every $e \in \inter{i}$ is either crossing or
    long in level $\ell_p(e_i) - 1$.
  \end{proposition}
  \begin{proof}
  %  Algorithm~\ref{alg:unit-step} never outputs crossing edges;
  %  therefore, $e_i$ is non-crossing in level $\ell_p(e_i)$, and the
  %  first statement of the lemma follows from the condition for
  %  outputting edges in Algorithm~\ref{alg:unit-step}. 
    Let the edge $e=(u,v)$ be 
    short and non-crossing  in level $\ell_p(e_i)-1$. We will show that this
    implies that $u$ and $v$ are in the same connected component of
    $\tour_{i-1}$ and hence $e \notin I_i$. By Lemma~\ref{lem:ccomp}, $u$ and
      $N_{\ell_p(e_{i}) - 1}(u)$ are in the same connected component of
      $\tour_{i-1}$. The same is true for  $v$ and $N_{\ell_p(e_{i}) - 1}(v)$. Thus, it suffices to show that $N_{\ell_p(e_{i}) - 1}(u)$ and $N_{\ell_p(e_{i}) - 1}(v)$ are in the same connected component of $\tour_{i - 1}$.
    
   First, note that because the edge $(u,v)$ is non-crossing at level $\ell_p(e_i) - 1$, the edge $(N_{\ell_p(e_{i}) - 1}(u), N_{\ell_p(e_{i}) - 1})$ is also non-crossing at this level.
    Suppose for the sake of contradiction that $N_{\ell_p(e_{i}) - 1}(u)$ and $N_{\ell_p(e_{i}) - 1}(v)$ were in different connected components 
    when Algorithm~\ref{alg:unit-step} finishes processing cells at level $\ell_p(e_{i}) - 1$. 
    Then Algorithm~\ref{alg:unit-step} would have found vertices $u'$ and $v'$ in these components such that:
    \begin{align*}
    \rho(u', v') \le (1 + \epsilon) \tau \le  (1 + \epsilon) \rho(N_{\ell_p(e_{i}) -
        1}(u), N_{\ell_p(e_{i}) - 1}(v)) = (1 + \epsilon) \rho_{\ell_p(e_i) - 1}(u,v) \le \eps\levdiam_{\ell_p(e_{i}) - 1},
    \end{align*}
    where the first inequality is by construction used in Algorithm~\ref{alg:unit-step},
    the second is by definition of $\tau$ together with the fact that
    $(N_{\ell_p(e_{i}) - 1}(u), N_{\ell_p(e_{i} - 1})$ is
    non-crossing, the third equality is by
    Definition~\ref{def:nn-distance} and the fourth inequality is by
    assumption that $(u,v)$ is short and
    Definition~\ref{def:short-long}. This contradicts the termination
    condition for Algorithm~\ref{alg:unit-step}.
    
  %  Let $\mycell = \mycell_{\ell_p(e_{i}) - 1}(u) =
  %  \mycell_{\ell_p(e_{i}) - 1}(v)$; and let $\ccomp$ be the partition
  %  of $V(\mycell)$ into connected components stored by
  %  Algorithm~\ref{alg:unit-step} at the end of its execution in
  %  $\mycell$. Since $e$ is short, $\rho_{\ell_p(e_{i}) - 1}(e) \leq
  %  \frac{\epsilon}{1+\epsilon} \levdiam_{\ell_p(e_{i}) - 1}$. We claim that
  %  this implies that $N_{\ell_p(e_{i}) - 1}(u)$ and $N_{\ell_p(e_{i}) -
  %    1}(v)$ are in the same connected component in $\ccomp$. Since
  %  $\max\{j: \ell_p(e_j) \leq \ell_p(e_i) - 1\} \leq i-1$, $\ccomp$ is a
  %  subdivision of the connected components of $\tour_{i-1}$ restricted
  %  to $V(\mycell)$, and this claim implies that $N_{\ell_p(e_{i}) -
  %    1}(u)$ and $N_{\ell_p(e_{i}) - 1}(v)$ are in the same connected
  %  component of $\tour_{i-1}$.  
  %
  %  It remains to prove that $N_{\ell_p(e_{i}) - 1}(u)$ and
  %  $N_{\ell_p(e_{i}) - 1}(v)$ are in the same connected component in
  %  $\ccomp$ at the end of the execution of
  %  Algorithm~\ref{alg:unit-step} in $\mycell$. Indeed, if this were not
  %  the case, and there existed $j$ and $k$, $j \ne k$, such that
  %  $N_{\ell_p(e_{i}) - 1}(u) \in\ccomp_j$ and $N_{\ell_p(e_{i}) - 1}(v)
  %  \in \ccomp_k$, then we would have $\tau \leq \rho(N_{\ell_p(e_{i}) -
  %    1}(u), N_{\ell_p(e_{i}) - 1}(v)) \le
  %  \frac{\eps}{1+\eps}\levdiam_{\ell_p(e_{i}) - 1}$. Then, the
  %  algorithm would find $u'$ and $v'$ such that $\theta = \rho(u',v') \le
  %  (1+\epsilon)\tau \le \eps\levdiam_{\ell_p(e_{i}) - 1}$,
  %  contradicting the termination condition of the main loop of
  %  Algorithm~\ref{alg:unit-step}.
  \end{proof}
  
  In this case, by Proposition~\ref{prop:active}, since $\emin{i}$ was not
  crossing in level $\ell_p(e_{i})-1$, then $\emin{i}$ must have been long. Thus,
  \begin{align*}
  \rho(\emin{i}) &\ge \rho_{\ell_p(e_{i} - 1)}(\emin{i}) -  \epsilon^2\levdiam_{\ell_p(e_{i})-2} \\
  &>  \frac{\epsilon}{1+\epsilon} \levdiam_{\ell_p(e_{i})-1} - \epsilon^2\levdiam_{\ell_p(e_{i})-2}\\
  & = \left(\frac{1}{1 + \epsilon} - a \epsilon\right) \epsilon\levdiam_{\ell_p(e_{i})-1}\\
  &> (1- (1+a)\epsilon)\epsilon\levdiam_{\ell_p(e_{i})-1},
  \end{align*}
  where the first inequality is by Lemma~\ref{lem:net}, the second inequality is by definition of a long edge at level $\ell_p(e_{i})-1$ (Definition~\ref{def:short-long}) and the third equality is because for an $(a,b)$-partition it holds that $\levdiam_{\ell_p(e_i) - 2} = a \levdiam_{\ell_p(e_i) - 1}$.
    
  Then we have:
  \begin{align*}
    \rho(e_i) &\leq (1+\epsilon)\rho_{\ell_p(e_{i})}(\emin{i})  \\ 
    &\le (1+\epsilon)\rho(\emin{i}) + (1+\epsilon)\epsilon^2\levdiam_{\ell_p(e_{i})-1} \\
    &\le \left(1 + \left(1 + \frac{1+\epsilon}{1 - (1+a)\epsilon
          }\right)\epsilon\right) \rho(\emin{i}) \\
     &\le (1 + O(\epsilon))\weight(\emin{i}),
  \end{align*}
  where the first inequality is by Proposition~\ref{prop:min-edge},
  the second is by~\ref{lem:net}, the third is using the calculation
  above and the last one is a direct calculation.
\end{proof}

\newcommand{\eour}[1]{e_{#1}}

\begin{proof}[Proof of Theorem~\ref{thm:mst}]
We have:
\begin{align*}
\E_{\detpart \sim \randpart}\left[\rho(\tour)\right]
& = \E_{\detpart \sim \randpart} \left[\sum_{i = 1}^{n - 1} \rho(\eour{i})\right] \\
&\le (1 + O(\epsilon)) \E_{\detpart \sim \randpart} \left[ \sum_{i=1}^{n - 1} \weight(\ekru{i})\right]\\
& \le (1 + O(\epsilon)) \E_{\detpart \sim \randpart} \left[ \weight(\topt)\right] \\
& \le (1 + O(\epsilon)) (1 + \epsilon L b)  \rho(\topt) \\
& = (1 + \epsilon O(L b)) \rho(\topt). 
\end{align*}
Here the first equality is by linearity of expectation. The second
inequality is by Lemma~\ref{lem:weights-apx}. The third inequality holds because $\{\ekru{i}\}_{i = 1}^{n - 1}$ is an optimum minimum spanning tree for the weights $\weight$, while $\topt$ is some minimum spanning tree. The fourth inequality
is by Lemma~\ref{lem:embedding}. This completes the proof of
Theorem~\ref{thm:mst}.

%\snnote{There is a small issue above, and I think it comes from
%  using slightly confusing notation. $\{\ekru{i}\}_{i = 1}^{n-1}$ is
%  the MST for edge weights $\weight$, but $\topt$ is the MST for
%  $\rho$. We need a step between the second and third line above that
%  says that $\sum_{i}\weight(\ekru{i}) \leq \weight(\topt)$ because
%  $\{\ekru{i}\}_{i = 1}^{n-1}$ is MST for $\weight$. Maybe we should
%  redefine $\ekru{i}$ to be $e^K$ or $e^\detpart$ or $e^{\weight}$, to avoid this confusion?}
\end{proof}

\subsection{Proof of Theorem~\ref{thm:mstMain}}

Theorem \ref{thm:mstMain} follows from Theorem \ref{thm:unitstep},
Theorem~\ref{thm:mst}, and the following lemma, which gives guarantees
on the time and space complexity of Algorithm~\ref{alg:unit-step}.
\begin{lemma}\label{lm:unit-step-compl}
  When the input metric space $M$ is a subset of $\ell_2^d$, the unit
  step Algorithm~\ref{alg:unit-step} has space complexity $s_u(n_u) =
  n_u\log^{O(1)} n_u$ words, and time complexity $t_u(n_u) = \eps^{-d}n_u\log^{O(1)}
  n_u$. Moreover, when we run with the Euclidean space partition
  described in Section~\ref{sec:preliminaries}, the output size is
  $p_u = O(\eps^{-d})$ words. 
\end{lemma}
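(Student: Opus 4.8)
The plan is to bound separately the three quantities that Theorem~\ref{thm:unitstep} cares about for Algorithm~\ref{alg:unit-step}: the local space $s_u(n_u)$, the local running time $t_u(n_u)$, and the output size $p_u$. All three bounds follow from two observations: (i) the main loop of the unit step is essentially an approximate version of Kruskal's algorithm on at most $n_u$ points, which makes at most $n_u-1$ successful merges (plus possibly a few iterations that fail the threshold test and terminate the loop), and (ii) the only geometric subroutine we need inside the loop is an approximate closest-pair-of-components query, which in $\ell_2^d$ can be supported by a dynamic approximate nearest neighbor data structure of the type in~\cite{Indyk-thesis,Har-PeledIM12} with $\poly\log$ update/query time and near-linear space.

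First I would argue the \emph{space} bound. The input to the unit step at level $\ell$ is $V(\mycell)$ together with the component partition $\ccomp$, of total size $n_u$ words. The algorithm maintains a union-find structure over $V(\mycell)$ and the ANN data structure on the same point set; both are $n_u\log^{O(1)}n_u$ words. Hence $s_u(n_u)=n_u\log^{O(1)}n_u$. Then I would do the \emph{output} bound. The output is an $\epsilon^2\levdiam_\ell$-covering of the points in $\mycell$, annotated by components. Since $\mycell$ has diameter at most $\levdiam_\ell$ (property of the $(a,b,c)$-partition, Definition~\ref{def:partition}) and we are in $\ell_2^d$ with $d=O(1)$, a standard volume/packing argument shows that any $r$-net of a set of diameter $O(r/\epsilon^2)$ has size $(\diam/r)^{O(d)} = \epsilon^{-O(d)}$; greedily picking net points from $V(\mycell)$ gives a covering of that size. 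The component labels add only a constant factor. Thus $p_u=O(\epsilon^{-d})$ words. (Here I use that $a\le 1/2$ so the ratio of consecutive diameters is bounded, and that the Euclidean partition of Section~\ref{sec:preliminaries} is an $(c^{-1/d},O(d),c)$-partition.)

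The \emph{time} bound is where the real work is, though it is still routine. The main loop runs at most $n_u$ times (each iteration either performs a merge, of which there are at most $n_u-1$, or triggers termination). In each iteration we need $\tau=\min_{i\ne j}\min_{u\in\ccomp_i,v\in\ccomp_j}\rho(u,v)$ up to a $(1+\epsilon)$ factor, together with a witnessing pair. I would maintain, for each active component, a ``candidate nearest foreign neighbor'' by querying the ANN structure (built on all of $V(\mycell)$) and filtering out same-component answers; the global minimum over components is maintained in a priority queue. A merge of two components updates only the two affected entries, each via $O(1)$ ANN queries, so the per-iteration cost is $\epsilon^{-d}\log^{O(1)}n_u$ (the $\epsilon^{-d}$ absorbing the usual dependence of approximate NN query time on the approximation factor and dimension). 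Summing over $\le n_u$ iterations gives $t_u(n_u)=\epsilon^{-d}n_u\log^{O(1)}n_u$.

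I expect the main obstacle to be making the closest-pair-of-components step genuinely near-linear while only being allowed a $(1+\epsilon)$ approximation: naive recomputation of $\tau$ after each merge is $\Theta(n_u^2)$ per level, so one must be careful that the ANN-based scheme correctly handles the ``skip same-component answers'' issue (an ANN query might keep returning points inside the querying component) — this is handled by the standard trick of querying for the $k$ approximate nearest neighbors for a slowly growing $k$, or by deleting a component's own points from a per-query copy, and arguing the amortized number of such refinements is $O(n_u\log n_u)$. Everything else — union-find, the priority queue, the net construction — is standard. Finally, the qualifier ``Moreover, when we run with the Euclidean space partition'' in the statement is exactly what lets me invoke the diameter bound $\levdiam(\mycell)\le\levdiam_\ell$ and the $O(1)$ dimension to get the clean $O(\epsilon^{-d})$ output size rather than a bound depending on the doubling dimension of an arbitrary partition.
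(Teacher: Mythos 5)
Your space and output-size bounds are fine and essentially match the paper's (the paper uses a grid of subcubes of side $\epsilon\levdiam_\ell/\sqrt{d}$ instead of your packing argument, but either yields $\eps^{-O(d)}$ points). The gap is in the time bound, and it is exactly at the place you flag but then wave away. Your scheme maintains, per component, a ``candidate nearest foreign neighbor'' and claims that a merge ``updates only the two affected entries, each via $O(1)$ ANN queries.'' This step fails: after merging $\ccomp_i$ and $\ccomp_j$, their old candidates may now be \emph{internal} to the merged component (this is the typical case, since the merge happened precisely because the two components were mutually closest), so the merged component's entry must be recomputed; and recomputing it with a color-blind ANN structure is not $O(1)$ queries, because a query from a point of a large component can keep returning points of that same component --- in the worst case $\Theta(n_u)$ same-color answers per query --- so the per-iteration cost is not $\eps^{-d}\log^{O(1)}n_u$ and the total is not near-linear. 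Your proposed fixes (growing-$k$ nearest-neighbor queries, or per-query deletion of the component's own points) are exactly the hard part, and the amortization claim of ``$O(n_u\log n_u)$ refinements'' is asserted without an argument; naively, deleting and re-inserting a component's points per query is $\Theta(n_u^2)$ again.

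The paper closes this gap with two specific ingredients that your proof needs and does not supply. First, it does not work with per-component candidates at all, but reduces to the dynamic $\epsilon$-\emph{chromatic} closest pair problem (Definition~\ref{def:ccp}) and invokes Eppstein's black-box reduction (Theorem~\ref{thm:ccp}) from dynamic $\epsilon$-ANNS (Theorem~\ref{thm:eucl-anns}) to dynamic $\epsilon$-CCP, which already handles the ``skip same-color answers'' issue with only polylogarithmic overhead per update. Second, merges are implemented by recoloring the \emph{smaller} component to the color of the larger one via deletions and re-insertions; the small-to-large argument then shows each point changes color at most $\log_2 n_u$ times, so the total number of data-structure operations is $O(n_u\log n_u)$, each costing $O(\eps^{-d}\log^{O(1)}n_u)$, which gives $t_u(n_u)=\eps^{-d}n_u\log^{O(1)}n_u$. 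Without these (or an equivalent worked-out amortization), your time bound does not follow.
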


Before we prove Lemma~\ref{lm:unit-step-compl}, we need to state a
couple of useful results for approximate nearest neighbor search
algorithms. 

\begin{definition}\label{def:ccp}
  In the (dynamic) \emph{$\epsilon$-chromatic closest pair problem}
  ($\epsilon$-CCP), the input is a metric space $M=(S, \rho)$ and a
  \emph{$k$-coloring} function $f:S \rightarrow [k]$. We are required
  to maintain under insertions and deletions of points an approximate
  chromatic closest pair, i.e.~a pair $u,v$ of
  points such that $f(u) \neq f(v)$ and
  $\rho(u,v) \leq (1+\epsilon)\min_{\substack{u, v\\f(u) \neq
      f(v)}}{\rho(u,v)}$. 
\end{definition}

We also need to define the classical approximate nearest neighbor
problem. 
\begin{definition}\label{def:dyn-anns}
  In the (dynamic) \emph{$\epsilon$-approximate nearest neighbor
   search problem} ($\epsilon$-ANNS), the input is a metric space $M = (S,
  \rho)$. We are required to maintain a data structure $\mathcal{D}$
  under insertions and deletions of points, so that on query specified
  by a point $q \in S$, $\mathcal{D}$ allows us to compute  a
  point $u \in S$ such that $\rho(u, q) \leq (1+\epsilon)\min_{v \in
    S}{\rho(v, q)}$.
\end{definition}

The following general reduction was proved by
Eppstein~\cite{Eppstein95-ccp}. 
\begin{theorem}[\cite{Eppstein95-ccp}]\label{thm:ccp}
  Let $T(n)$ ($n = |S|$ is the input size) be a (monotonic, bounded by
  $O(n)$) upper bound on the time required by any operation
  (insertion, deletion, or query) for a data structure solving the
  dynamic $\epsilon$-ANNS problem, and let $S(n)$ be an upper bound on
  the space of the data structure. Then one can construct a data
  structure for the dynamic $\epsilon$-CCP problem with $O(T(n)\log n
  \log k)$ amortized time per insertion, and $O(T(n)\log^2 n \log k)$
  time per deletion. The space complexity of the $\epsilon$-CCP data
  structure is $O((n + T(n))\log n)$
\end{theorem}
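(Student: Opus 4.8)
The plan is to chain two reductions: first from the $k$-colour problem to the bichromatic ($k=2$) case, and then from bichromatic dynamic closest pair to the dynamic $\epsilon$-ANNS primitive. For the first step, write each colour of $[k]$ in binary using $\lceil\log k\rceil$ bits, and for each bit position $i$ split $S$ into $S_i^0$ and $S_i^1$ according to the $i$-th bit of $f(\cdot)$. Any two points with different colours differ in some bit $i$, so they lie on opposite sides of the split $(S_i^0,S_i^1)$, while every cross pair of such a split is chromatic; hence the $\epsilon$-chromatic closest pair is the best of the $\lceil\log k\rceil$ bichromatic $\epsilon$-closest pairs, kept in a min-priority-queue over $i$. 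Each point of $S$ participates in exactly $\lceil\log k\rceil$ bichromatic instances (on one side of each), so inserting or deleting a point of $S$ is one insertion/deletion in each of $\lceil\log k\rceil$ bichromatic structures. This accounts for the $\log k$ factor in both bounds, and it remains to solve the bichromatic problem in $O(T(n)\log n)$ amortized time per insertion and $O(T(n)\log^2 n)$ per deletion, with space $O((n+S(n))\log n)$.

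For disjoint colour classes $A,B$ I would exploit $\mathrm{BCP}(A,B)=\min_{a\in A}\rho(a,\mathrm{NN}_B(a))=\min_{b\in B}\rho(b,\mathrm{NN}_A(b))$: maintain dynamic $\epsilon$-ANNS structures $D_A$ on $A$ and $D_B$ on $B$; for each $a\in A$ store a pointer $\nu(a)$ to an $\epsilon$-ANN of $a$ in the copy of $B$ present when $\nu(a)$ was last (re)computed, with key $\rho(a,\nu(a))$ in a min-priority-queue $Q_A$, and symmetrically $\mu(b),Q_B$. Report $\min(\min Q_A,\min Q_B)$ after a pop-and-refresh step: if the current minimum of $Q_A$ points to a $\nu(a)$ already deleted from $B$, re-query $D_B$, reinsert into $Q_A$, and repeat. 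For correctness, the report is never too small since every present cross pair has distance $\ge\mathrm{BCP}$; for the upper bound, if $(a^\ast,b^\ast)$ realizes $\mathrm{BCP}$ then whichever of the two was (re)inserted later had its pointer computed while the other was already present, so that pointer's key is $\le(1+\epsilon)\mathrm{BCP}$ — and if it has since gone stale, the pop-and-refresh restores it, again with key $\le(1+\epsilon)\mathrm{BCP}$ because the partner is still present. The obvious cost issue — and the crux — is that inserting $a$ can make $a$ the new neighbour of many points of $B$, and deleting a point of $B$ can invalidate the pointers of many points of $A$; recomputing these eagerly would cost $\Omega(nT(n))$.

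To amortize this I would not keep $B$ as a single dynamic structure but as a Bentley–Saxe logarithmic-method decomposition into $O(\log n)$ static $\epsilon$-ANNS structures on disjoint blocks of distinct power-of-two sizes, with $\nu(a)$ maintained as the tournament winner over blocks. An insertion into $B$ triggers the usual block merges; rebuilding a block of size $m$ requires re-querying $\nu$ only for the $O(m)$ points of $A$ whose winner lay in that block, and since each point is rebuilt into $O(\log n)$ blocks over its lifetime and $T$ is monotone and $O(n)$, the amortized cost is $O(T(n)\log n)$ per insertion (symmetrically for $A$). Deletions from $B$ use weak (tombstone) deletion inside the relevant block plus a global rebuild of the block once a constant fraction of its points are tombstoned; the rebuild refreshes every pointer of $A$ into that block, costs an extra $O(\log n)$ amortized factor (hence $O(T(n)\log^2 n)$ per deletion), and also pays for the pop-and-refresh work, since a pointer into $B$ becomes stale only through a deletion, which is charged to the rebuild that eventually removes the tombstone. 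Multiplying by the $\log k$ factor from the first reduction gives the stated time bounds, and the space is $O(\log n)$ blocks, each an $\epsilon$-ANNS structure of space $\le S(n)$, plus priority queues of total size $O(n)$, i.e.\ $O((n+S(n))\log n)$ as claimed. I expect the delicate step to be exactly this last amortization: coupling the cost of pointer invalidations caused by deletions to the logarithmic-method rebuild schedule so that insertions pay only one extra $\log n$ while deletions pay two; the rest is bookkeeping on top of the assumed dynamic $\epsilon$-ANNS primitive.
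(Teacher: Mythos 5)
The colour-to-bichromatic reduction via the binary encoding of $[k]$ is fine and correctly accounts for the $\log k$ factor. The genuine gap is exactly where you suspect it: the deletion amortization for the bichromatic structure, and the Bentley--Saxe/tombstone charging you propose does not close it. In your scheme each $a\in A$ keeps a single pointer $\nu(a)$ to an approximate nearest neighbour in $B$, and nothing bounds the in-degree of a point of $B$ under these pointers: a single deleted $b$ may be the stored target of $\Theta(n)$ points of $A$, each of which can later surface as the queue minimum and force a $T(n)$ re-query, and a fixed $a$ can suffer this repeatedly, once for every successive target of its pointer that gets deleted. Your charging rule (``the rebuild refreshes every pointer of $A$ into that block,'' paid for by the $\geq m/2$ tombstoned deletions) fails because the number of $A$-pointers into a block of size $m$ is not $O(m)$ --- it can be $\Theta(|A|)$ independently of $m$ --- so each of those deletions may be charged $\Theta((n/m)\,T(n))$, not $O(T(n)\log^2 n)$. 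The same misestimate appears on the insertion side: ``rebuilding a block of size $m$ requires re-querying $\nu$ only for the $O(m)$ points of $A$ whose winner lay in that block'' is unjustified for the identical reason; in fact, by the ``later-inserted partner is responsible'' argument that your own correctness proof uses, insertions into $B$ (and block merges, which delete nothing) require touching no $A$-pointers at all, so the block decomposition does no work where it is cheap and does not help where it is needed.

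What is missing is a mechanism that caps the number of pointers invalidated by one deletion, and that is the actual content of the cited result \cite{Eppstein95-ccp}, which the paper uses as a black box rather than reproving. Eppstein keeps the points in $O(\log n)$ groups and, within each group, maintains a nearest-neighbour chain (``conga line''): a directed path produced by repeated (approximate) nearest-neighbour queries whose edge set is guaranteed to contain a $(1+\epsilon)$-closest pair. Because every point has in-degree at most one in each of the $O(\log n)$ chains, a deletion breaks only $O(\log n)$ edges; the affected points are moved into a freshly built group, and the group-merging schedule keeps the number of groups logarithmic, yielding $O(T(n)\log n)$ amortized per insertion and $O(T(n)\log^2 n)$ per deletion, with the extra $\log k$ from the colour reduction. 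Your pointer-plus-priority-queue skeleton and lazy pop-and-refresh are compatible with that design, but without the in-degree-one chain invariant (or some substitute bounding the repair work per deletion) the claimed amortized bounds do not follow. A smaller mismatch: the theorem states space $O((n+T(n))\log n)$, whereas your accounting gives $O((n+S(n))\log n)$.
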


In our algorithm, we use the approximate nearest neighbor data
structure for Euclidean space $\ell_2^d$ from \cite{AMNSW, eppstein2008skip}:

\begin{theorem}[\cite{AMNSW, eppstein2008skip}]\label{thm:eucl-anns}
  When $M = (S, \ell_2^d)$ is a subset of $d$-dimensional Euclidean
  space, there exists a data structure for the dynamic $\epsilon$-ANNS
  problem with $O(nd\log n)$ space and preprocessing, and
  $O(\eps^{-d}\log n)$ query and update time.
  %% When $M = (S, \ell_2^d)$ is a subset of $d$-dimensional Euclidean
  %% space, there exists a data structure for the dynamic $\epsilon$-ANNS
  %% problem with space complexity $O(\epsilon^{-d}n\log^{O(1)} n)$, and
  %% update and query complexity $O(d \log^{O(1)} n)$. 
\end{theorem}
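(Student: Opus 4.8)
The plan is to realize the data structure as a \emph{dynamic compressed quadtree augmented with a skip-list hierarchy} (a ``skip quadtree'') and to answer queries by priority (best-first) search over its cells; this is exactly the combination of the static BBD-tree query algorithm of \cite{AMNSW} with the dynamization of \cite{eppstein2008skip}. A compressed quadtree on $S\subseteq\R^d$ is an ordinary $2^d$-ary quadtree in which every maximal chain of cells having a single non-empty child is contracted to one ``compressed'' edge; this keeps the number of cells $O(n)$ independently of the spread of $S$, while every cell remains a dyadic box of aspect ratio $1$. Superimposing a skip list — promoting each cell to the next level independently with probability $1/2$, with level $i+1$ being the compressed quadtree of the promoted cells — yields, with high probability, $O(\log n)$ levels and $O(n)$ total cells, so the coordinate data occupies $O(nd\log n)$ words and the structure is built in $O(nd\log n)$ time. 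Point location of a point $q$ (finding the smallest cell containing it) takes $O(\log n)$ expected time by descending the levels, matching the space and preprocessing bounds claimed in Theorem~\ref{thm:eucl-anns}.

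For a query $q$ (Definition~\ref{def:dyn-anns}) I would first point-locate $q$ and then run a best-first search over the cells of the bottom-level compressed quadtree: maintain a priority queue of cells keyed by the distance from $q$ to the cell's box, and maintain the distance $r$ to the closest point of $S$ encountered so far; repeatedly expand the queue's minimum, inserting its children and updating $r$ from any points it contains, and halt as soon as the minimum key exceeds $r/(1+\epsilon)$, returning the best point found. Correctness is immediate: if every unexpanded cell lies at box-distance $> r/(1+\epsilon)$ from $q$, then either the true nearest neighbor was already examined (so $r$ is exact) or it lies in an unexpanded cell at distance $> r/(1+\epsilon)$, in which case $r<(1+\epsilon)\cdot\mathrm{OPT}$ anyway. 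The running time is $O(\epsilon^{-d}\log n)$: a packing argument shows only $O(\epsilon^{-d})$ cells are ever expanded, and each expansion (together with any level descent it triggers) costs $O(\log n)$.

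Updates preserve all of this. To insert $p$: point-locate it in $O(\log n)$, then either place it in the leaf containing it or split one compressed edge into a constant-size gadget of new dyadic cells that separates $p$ from the old contents, and finally promote the new cell(s) through a geometrically distributed number of skip levels, repairing each affected level's compressed quadtree with $O(1)$ local work — $O(\log n)$ in expectation. Deletion is the reverse: remove $p$, contract any cell that becomes empty or single-child, and propagate the removal up the levels, again $O(\log n)$ expected. Each update touches only $O(1)$ cells per level and restores the invariants (dyadic boxes, $O(n)$ cells, $O(\log n)$ depth w.h.p.), so the query analysis of the previous paragraph continues to hold throughout an adversarial update sequence, since the skip-list promotion coins are drawn independently of the points.

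I expect the main obstacle to be the $O(\epsilon^{-d})$ bound on the number of cells the best-first search expands. One must combine the halting rule $r/(1+\epsilon)$ with a charging/packing argument: a cell expanded at a ``coarse'' scale relative to $r$ would already have produced a much closer point, contradicting that it is still being expanded, so effectively all expanded cells have side length $\Theta(\epsilon r)$ and box-distance $O(r)$ to $q$, and disjoint dyadic boxes of side $\Theta(\epsilon r)$ inside a ball of radius $O(r)$ number only $O(\epsilon^{-d})$; the compression of long single-child chains is what keeps such a chain from contributing a spread-dependent term rather than $O(1)$ to the count. A secondary technicality is that the $\log n$ (rather than spread-dependent) factors and the expected update times hold uniformly over worst-case update sequences, which follows from the standard skip-list analysis precisely because the promotion coins are independent of the input.
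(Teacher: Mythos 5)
First, note that the paper does not prove Theorem~\ref{thm:eucl-anns} at all: it is imported verbatim from the cited references (the BBD-tree of \cite{AMNSW} and the skip quadtree of \cite{eppstein2008skip}), so the only meaningful comparison is with those constructions, which your proposal is indeed attempting to reconstruct. The space/preprocessing bounds and the update mechanism you sketch (compressed quadtree plus independent skip-level promotions, $O(1)$ local repair per level, $O(\log n)$ expected levels) are faithful to \cite{eppstein2008skip} and fine.

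The genuine gap is in the query analysis. You run the best-first search ``over the cells of the bottom-level compressed quadtree'' and claim only $O(\eps^{-d})$ cells are ever expanded. But every ancestor of the leaf containing $q$ has box-distance $0$ to $q$, so the halting rule (stop when the minimum key exceeds $r/(1+\eps)$) never prunes them: the search must pop all of them, and a compressed quadtree is not balanced --- its depth is only bounded by $n$, not $O(\log n)$ (compression removes single-child chains, but a ``staircase'' point set at exponentially shrinking scales gives a spine of depth $\Theta(n)$ in which every node has two nonempty children). So your packing argument bounds the cells expanded at scale $\Theta(\eps r)$ near $q$, but not the $\Theta(\mathrm{depth})$ distance-zero ancestors, and the claimed $O(\eps^{-d}\log n)$ query time fails on such inputs. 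This is precisely the role of the structure you relegated to point location and updates: in \cite{AMNSW} the BBD tree interleaves splits with ``shrink'' operations exactly so that the decomposition tree has depth $O(\log n)$ and the donut-shaped cells still satisfy the packing lemma; in \cite{eppstein2008skip} the approximate query descends through the $O(\log n)$ skip levels, using the coarser levels to localize the search so that only $\eps^{-O(d)}$ critical cells are processed per level. Either fix (a balanced BBD-style decomposition, or making the query itself exploit the skip hierarchy rather than just the bottom level) recovers the stated $O(\eps^{-d}\log n)$ bound; as written, your search does not. A smaller point: your charging claim that all expanded cells have side length $\Theta(\eps r)$ is also too strong --- coarser cells intersecting the ball of radius $r/(1+\eps)$ are expanded too --- but in the balanced structures this is absorbed into the $O(\log n)$ depth factor, which is exactly the factor your unbalanced bottom-level tree lacks.
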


\begin{proof}[Proof of Lemma~\ref{lm:unit-step-compl}]
  The proof resembles the arguments in \cite[Section 3.3.1, Lemma
  11]{Indyk-thesis}. 
%\aanote{check the runtime/space requirements}

  At the start of the execution of Algorithm~\ref{alg:unit-step} in a
  cell $\mycell$, we insert all points in $V(\mycell)$ into a data
  structure $\mathcal{D}$ for the $\epsilon$-CCP problem. Moreover,
  each point $u$ is given a color corresponding to the connected
  component in the initial partition $\ccomp$ of $V(\mycell)$. Then
  for the approximate chromatic closest pair $(u, v)$, check if
  $\rho(u,v) \leq \epsilon\Delta_\ell$. Assume without loss of
  generality that the current connected component of $u$ has
  cardinality no larger than that of the current connected component
  of $v$. Then we change the color of all points in the connected
  component of $u$ to the color of the points in the connected
  component of $v$. This step can be implemented by deleting all the
  points whose color needs to be changed, and re-inserting them with
  the new color. Then we move to the next iteration of the algorithm.

  Since each the color of each connected component is always changed
  to the color of a component of size at least as large, any time a
  point changes its color the number of points of the same color at
  least doubles. Since the total number of points is $n_u$, it follows
  that each point changes color at most $\log_2 n_u$ times. By
  Theorems~\ref{thm:ccp} and~\ref{thm:eucl-anns}, each update can be
  done in $O(\eps^{-d}\log^{O(1)} n_u)$ time. Therefore, the total running
  time is $O(n_u\eps^{-d}\log^{O(1)}n_u)$. The total space complexity is
  $O(n_ud\log^{O(1)} n_u)$ by Theorems~\ref{thm:ccp}
  and~\ref{thm:eucl-anns}.

  To compute the covering $V'$, recall that the cell $\mycell \in
  \detpart_\ell$ for the partition of Euclidean space discussed in
  Section~\ref{sec:preliminaries} (and
  Section~\ref{sec:eucl-part}) is a subset of a cube of side
  length $\Delta_\ell/\sqrt{d}$. We subdivide the cube into subcubes
  of side length $\epsilon\Delta_\ell/\sqrt{d}$ and we keep a single
  arbitrary point from $V(\mycell)$ for each subcube; the resulting
  set is $V'$. The size of $V'$ is bounded by the number of subcubes,
  which is $\epsilon^{-d}$. The set $V'$ is an $\epsilon\Delta_\ell$
  covering of $\mycell$ because each subcube has diameter
  $\epsilon\Delta_\ell$. 
\end{proof}

\begin{proof}[Proof of Theorem~\ref{thm:mstMain}]
For the proof of Theorem~\ref{thm:mstMain}, we first transform the
input so that it has polynomially bounded aspect ratio, as shown in
Section~\ref{sec:aspectRatio}. Then we construct a $(s^{-\Theta(1/d)},
d, s^{\Theta(1)})$-distance preserving partition $\randpart$ using
Lemma~\ref{lm:eucl-part}. Since we modified the input so that it has
polynomially bounded aspect ratio, setting $\lev = \Theta(\log_s n)$
for $s = n^{\gamma}$ ($\gamma < 1$ is a constant) suffices to make
sure that any hierarchical partition $\detpart$ from the support of
$\randpart$ is such that $\detpart_0$ is a partition into
singletons. Then, Theorem~\ref{thm:mstMain} follows from Theorem
\ref{thm:unitstep}, Theorem~\ref{thm:mst} (with approximation
parameter set to $\epsilon \leq \delta \epsilon/Lb$ for a small enough
constant $\delta$), and the following Lemma~\ref{lm:unit-step-compl}. 
\end{proof}

\newcommand{\dg}{{ \rho_g}}
\newcommand{\net}{\mathcal{N}}
\newcommand{\numargs}{d}

\section{EMD and Transportation Problem Cost}
\label{sec:emd}

%\konote{We use $d$ for distance in this section. Should we use $\rho$ as in other places?}
%\konote{We actually reuse $d$ for the dimension of the function $F$. So we DO have to change it.}

In this section we show the parallel algorithms for computing the cost
of the Earth-Mover Distance and Transportation problems.

In the Transportation problem we are given two sets of
points $A, B$ in a metric space $(M, \rho)$, and a demand function
$\psi: A \cup B \rightarrow \N$ such that $\sum_{u \in A}{\psi(u)} = \sum_{v
  \in B}{\psi(v)}$. The Transportation cost between $A$ and $B$ given demands
$\psi$ is the value of the minimum cost flow from $A$ to $B$ such that
the demands are satisfied and the costs are given by the metric
$\rho$. Formally, $\emd_\rho(A, B, \psi)$ is the value of the following
linear program in variables $x_{uv}$:
\begin{eqnarray}
  \min \sum_{u \in A, v \in B}{x_{uv}\rho(u, v)} \label{eq:tcost-obj}\\
  \text{subject to}\\
  \sum_{v \in B}{x_{uv}} = \psi(u)
  &&\forall u \in A\\
  \sum_{u \in A}{x_{uv}} = \psi(v)
  &&\forall v \in B\\
  x_{uv} \geq 0
  &&\forall u \in A, v \in B\label{eq:tcost-nonneg}
\end{eqnarray}
When for all $u \in A, v \in B$, $\psi(u) = \psi(v) = 1$, the program
above has an optimal solution which is a matching, and, therefore, its
value is just the minimum cost of a perfect bichromatic matching. In
this case $\emd_\rho(A,B,\psi)$ is the Earth-Mover Distance between $A$ and
$B$, and we denote it simply by $\emd_\rho(A,B)$.

In this paper we are concerned with the Euclidean Transportation cost
problem, in which we assume that $A$ and $B$ are sets of points in the
plain $\R^2$, and $\rho$ is the usuall Euclidean distance
$\ell_2^2$. Therefore, for the rest of the section we assume that
$\rho$ is the Euclidean distance metric, and we write $\emd(A,B,\psi)$
for $\emd_\rho(A,B,\psi$). Our results generalize to any norm in
$\R^d$ for $d = O(1)$, but we focus on the Euclidean case for
simplicity.

The main result of this section is the following theorem.
\begin{theorem}[Transportation cost problem]
\label{thm:emdGeneral}
Let $\eps > 0$, space $s \geq (\log n)^{(\eps^{-1} \log_s
  n)^{\Omega(1)}}$, and max demand be $U=n^{O(1)}$.  Then there
exists an MPC algorithm with space parameter $s$ that, on input sets
$A, B \subseteq \R^2$, $|A| + |B| = n$, and demand function $\psi: A
\cup B \rightarrow [0,U]$ such that $\sum_{u\in A} \psi(u)=\sum_{b\in
  B} \psi(v)$, runs in $(\log_s n)^{O(1)}$ rounds and outputs a $1 +
\eps$ approximation to $\emd(A,B,\psi)$. Moreover, the local running time
per machine (per round) is polynomial in $s$.
\end{theorem}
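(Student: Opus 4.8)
The plan is to instantiate the Solve-And-Sketch framework of Theorem~\ref{thm:unitstep}: design a unit step whose output size $p_u$, local space $s_u$, and local time $t_u$ obey its hypotheses, and then establish a $(1+\eps)$ approximation in expectation over the random partition, in the spirit of Lemma~\ref{lem:embedding}. As the partition I would use the randomly-shifted high-arity quadtree of Section~\ref{sec:preliminaries} with branching factor $c=s^{\Theta(1)}$; by Section~\ref{sec:aspectRatio} we may assume polynomially bounded aspect ratio, so $\lev=O(\log_s n)$ levels suffice and $\levdiam_\ell$ decreases geometrically by a factor $\sqrt c$ per level. The construction is run with an internal accuracy parameter $\eps'=\Theta(\eps/(\lev b))=\Theta(\eps/\log_s n)$ (recall $b=O(1)$ for $\R^2$), chosen so that the error accumulated over all levels is $1+O(\eps'\lev b)=1+O(\eps)$. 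Inside each cell $C\in\detpart_\ell$ I fix an $\eps'^2\levdiam_\ell$-net $\net_C$ of size $\eps'^{-O(1)}$ and ``snap'' every point of $C$ to its nearest net point; exactly as in the MST analysis, each relocated unit of demand incurs additive error $O(\eps'\levdiam_\ell)$, and summing the cutting probabilities of the distance-preserving partition bounds the total relocation error by $O(\eps')\cdot\emd(A,B,\psi)$ in expectation.

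The heart of the argument is the local solution and its sketch. Write $d=|\net_C|=\eps'^{-O(1)}$, and define $F_C:\Z^d\to\R_+$ by letting $F_C(x)$ be the minimum transportation cost inside $C$ (under $\rho$, with demands snapped to $\net_C$) of a \emph{partial} plan that serves all of $C$'s demand except a residual of $x_i$ units left at net point $i$ (positive red, negative blue), with $F_C(x)=+\infty$ on non-realizable $x$. As the optimal value of the transportation LP~\eqref{eq:tcost-obj}--\eqref{eq:tcost-nonneg} with the residual folded into the right-hand side, $F_C$ is a nonnegative, convex, piecewise-linear function on a polyhedral domain, and it is $O(\levdiam_\ell)$-Lipschitz in $\|\cdot\|_1$. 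The object actually transmitted is a sketch of $F_C'(x):=F_C(x)+A_\ell\|x\|_1$, where $A_\ell=\Theta(\levdiam_\ell)$ lower-bounds the least distance any residual unit must travel before it can be matched above level $\ell$. The term $A_\ell\|x\|_1$ plays two roles: first, by a Lemma~\ref{lem:embedding}-style argument it is, in expectation, a lower bound on the true cost of moving that residual out of $C$ that over-counts by at most a $1+O(\eps')$ factor, so it can be safely pre-charged; second, $F_C'\ge A_\ell\|x\|_1$, and \emph{this floor is exactly what makes a purely multiplicative-error sketch possible}. Concretely, store $F_C'(x)$ only at the lattice points whose coordinates lie on the geometric grid $\{0,\pm1,\pm\lceil1+\eps'\rceil,\ldots,\pm n^{O(1)}\}$ (the range $n^{O(1)}$ suffices since total demand is $nU=n^{O(1)}$), which is $(\eps'^{-1}\log n)^{O(d)}$ points; convexity, the $O(\levdiam_\ell)$-Lipschitz bound, and the floor $F_C'\ge A_\ell\|x\|_1$ together show these stored values pin down $F_C'$ to a factor $1+O(\eps')$ everywhere on $\Z^d$. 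With $d=(\eps^{-1}\log_s n)^{O(1)}$, the sketch size is $(\log n)^{(\eps^{-1}\log_s n)^{O(1)}}$, which is $\le s^{1/3}$ exactly when $s\ge(\log n)^{(\eps^{-1}\log_s n)^{\Omega(1)}}$ --- the hypothesis of the theorem.

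It remains to implement the unit step in $\poly(s)$ time and compose the errors. At a leaf, $V(C)$ is the raw point set, and I just solve $(\eps'^{-1}\log n)^{O(d)}$ transportation LPs of size $\poly(s)$, one per stored grid point, to obtain $F_C'$. At an internal cell $C$ with children $C_1,\dots,C_c$ we receive only approximate sketches of the $F_{C_j}'$; following Section~\ref{sec:techniques}, I replace each child by the \emph{largest convex function that agrees with its stored values}. Since $F_{C_j}'$ is convex and its sketch lies between its stored values and $F_{C_j}'$, this replacement stays between $F_{C_j}'$ and $(1+O(\eps'))F_{C_j}'$, and it is piecewise linear with $\poly(\text{sketch size})$ pieces, hence expressible by linear constraints. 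Combining these with flow-conservation constraints between the children's residual ports and $C$'s net (and the snapping) yields a single LP of size $\poly(s)$ whose optimum, evaluated at each target grid point of $C$, gives $F_C'$ up to $1+O(\eps')$; solving all of them is $\poly(s)$ local time. A per-level audit --- snapping error $O(\eps'\levdiam_\ell)$ per relocated unit, geometric-grid interpolation $(1+O(\eps'))$, convexification $(1+O(\eps'))$, and the $A_\ell$ pre-charge within $(1+O(\eps'))$ of the true next-level transport cost in expectation --- shows each level multiplies the error by $1+O(\eps')$, so after $\lev=O(\log_s n)$ levels the accumulated factor is $(1+O(\eps'))^{O(\lev)}(1+O(\eps'\lev b))=1+O(\eps)$. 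The root reports $F_{\mathrm{root}}'(0)=F_{\mathrm{root}}(0)$, which by the above is within $1+\eps$ of $\emd(A,B,\psi)$ in expectation over the random shift; rescaling $\eps$ by a constant and applying Theorem~\ref{thm:unitstep} with these $p_u,s_u,t_u$ (all $\poly(s)$, with $p_u(s)\le s^{1/3}$ and $s_u(p_u(s))\le s^{1/3}$, $c=s^{\Theta(1)}$, $\lev=O(\log_s n)$) gives the claimed $(\log_s n)^{O(1)}$-round MPC algorithm with $\poly(s)$ local time.

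I expect the main obstacle to be the two linked facts underlying the sketch: (i) that adding $A_\ell\|x\|_1$ genuinely turns $F_C$ into a function admitting a multiplicative $1+O(\eps')$ sketch --- one must bound the interpolation error between consecutive geometric-grid points by $\eps'\cdot F_C'$ uniformly over all of $\Z^d$, which requires marrying convexity and the $O(\levdiam_\ell)$-Lipschitz bound with the floor $F_C'\ge A_\ell\|x\|_1$, and care is needed since $F_C$ is piecewise linear with exponentially many pieces and is not monotone; and (ii) that the ``largest convex function agreeing with the sketch'' used to recombine children neither breaks the sandwich $F_{C_j}'\le(\cdot)\le(1+O(\eps'))F_{C_j}'$ nor blows the recombination LP past $\poly(s)$ size. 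Finally, verifying that the one-level errors --- snapping, grid interpolation, convexification, and $A_\ell$-precharge versus true next-level cost --- compose to only $1+O(\eps)$ over $O(\log_s n)$ levels is the delicate bookkeeping that makes EMD ``technically the most involved'' of the three results.
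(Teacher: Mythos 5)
Your route is essentially the paper's own: a randomly shifted high-arity quadtree with a grid/net distance, a per-cell cost function over residuals at net points carrying a $\|x\|_1$ penalty, a geometric-grid sketch justified by a floor-plus-Lipschitz argument (Lemmas~\ref{lm:round} and~\ref{lem:emdSketch}), convexification of the children's sketches absorbed into an LP at internal cells (Lemmas~\ref{lm:convex} and~\ref{lm:convex-apx}), per-level $1\pm O(\eps')$ composition with $\eps'=\Theta(\eps/\lev)$, and Theorem~\ref{thm:unitstep} for the MPC implementation. The one genuine gap is your choice of the pre-charge, $A_\ell=\Theta(\levdiam_\ell)$, and the justification you give for it. A residual unit can sit at a net point adjacent to the cell boundary and be matched just across it, so the least distance it must travel is governed by the inter-cell separation of net points, which is the net spacing $\delta\levdiam_\ell$ (with $\delta=\eps^{O(1)}$), not $\Theta(\levdiam_\ell)$. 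The paper accordingly charges $\delta\levdiam_\ell/2$ per residual unit, built directly into the flow network defining $F$ (Definition~\ref{defn:F}), and this magnitude is load-bearing in two places. First, the parent must subtract the children's pre-charges, which produces the ``extra costs'' $\rho(r,r')-\delta\levdiam_{\ell-1}$ and $\rho(r,R)-\delta\levdiam_{\ell-1}/2$ in the recursive characterization (Lemma~\ref{lm:recursive-char}); these are non-negative precisely because the pre-charge is at most half the guaranteed separation between net points of distinct cells, and non-negativity of every term is what allows a $1\pm O(\eps')$ relative error on each child's sketched value to compose into a $1\pm O(\ell\eps')$ error at level $\ell$ (proof of Theorem~\ref{thm:emdinfo}). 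With $A_\ell=\Theta(\levdiam_\ell)$ the corrected costs become negative for residuals near a shared boundary, the parent's value is a difference of large sketched quantities, and the multiplicative error control collapses.

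Second, your fallback claim that the pre-charge is ``within $1+O(\eps')$ of the true next-level transport cost in expectation'' is false at that magnitude: the true detour caused by routing a crossing unit through net points is $O(\delta\levdiam_\ell)$, so charging $\Theta(\levdiam_\ell)$ per crossing gives an expected overhead per matched pair of $\sum_\ell \Pr[\mbox{cut at }\ell]\cdot\Theta(\levdiam_\ell)=\Theta(\lev)\cdot\rho(u,v)$, i.e.\ a $\Theta(\lev)$-factor loss rather than $1+O(\eps)$; with the correct $A_\ell=\Theta(\delta\levdiam_\ell)$ the pre-charges cancel exactly against the extra costs and the only stochastic loss is the grid-distance distortion $1+O(\delta\lev)$ of Lemma~\ref{lem:distortion}. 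Note also that once $A_\ell$ is reduced to $\delta\levdiam_\ell/2$, your sketchability step still goes through but needs the quantitative form of Lemma~\ref{lm:round}: the floor $F(x)\ge\|x\|_1\delta\levdiam_\ell/2$ must be played against the $O(\levdiam_\ell)$ Lipschitz constant, which forces the geometric-grid granularity down to $\eps'=O(\eps\delta)$ rather than $O(\eps)$; this still yields sketch size $(\log n)^{\eps^{-O(1)}}$, so the space hypothesis of the theorem is unaffected. With that correction, the remainder of your outline coincides with the paper's proof.
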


Our methods also imply a {\em near-linear time sequential} algorithm for the
Transportation cost problem, answering an open problem from~\cite{SA12-emd}.
\begin{theorem}[Near-Linear Time]\label{thm:nearLinear}
  Let $\eps > 0$ and $U=n^{O(1)}$. There exists an algorithm with
  running time $n^{1 + o_\eps(1)}$ that, on input sets $A, B \subseteq
  \R^2$, $|A| + |B| = n$, and demand function $\psi: A \cup B
  \rightarrow [0,U]$ such that $\sum_{a\in A_\psi} \psi(a)=\sum_{b\in
    B_\psi} \psi(b)$, outputs a $1 + \eps$ approximation to $\emd(A,
  B, \psi)$.
\end{theorem}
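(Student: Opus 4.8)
The plan is to obtain the sequential algorithm by instantiating the MPC algorithm of Theorem~\ref{thm:emdGeneral} at a carefully chosen, \emph{subpolynomial} value of the space parameter $s$, and then simulating that parallel algorithm on a single machine. First I would recall that the MPC algorithm with space $s$ runs on $m = O(n/s)$ machines, uses $R = (\log_s n)^{O(1)}$ rounds, and in each round every machine performs $\poly(s)$ local work and sends or receives at most $s$ words. A sequential simulation stores the states of all $m$ machines and, round by round, runs each machine's local step and then delivers the $O(n)$ words of inter-machine messages to their destinations (say, by sorting on destination index). One simulated round thus costs $m\cdot\poly(s) + O(n\log n) = n\cdot\poly(s)$ time, and the whole simulation costs $R\cdot n\cdot\poly(s)$. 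The hypotheses $|A|+|B| = n$ and $U = n^{O(1)}$ of Theorem~\ref{thm:nearLinear} are exactly those required by Theorem~\ref{thm:emdGeneral}; the bounded-aspect-ratio assumption built into the hierarchical partition is removed by the same preprocessing used for MST (cf.\ Section~\ref{sec:aspectRatio}), at the cost of an $n^{o(1)}$ overhead and a $(1+\eps)$ factor in the approximation.

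It remains to pick $s$ so that $R$, $\poly(s)$, and the feasibility constraint $s \ge (\log n)^{(\eps^{-1}\log_s n)^{\Omega(1)}}$ of Theorem~\ref{thm:emdGeneral} are all under control. I would take $s = \exp((\log n)^{1-c})$ for a small constant $c > 0$, so that $\log_s n = \Theta((\log n)^{c})$. Writing $K = O(1)$ for the constant hidden in the exponent $\Omega(1)$, the right-hand side of the constraint becomes $(\log n)^{(\eps^{-1}(\log n)^c)^{K}} = \exp\!\left(O(\log\log n)\cdot\eps^{-O(1)}(\log n)^{cK}\right)$; choosing $c$ small enough that $cK < 1-c$ --- a choice that does not depend on $\eps$ --- makes this at most $\exp((\log n)^{1-c}) = s$ for all sufficiently large $n$. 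For this $s$ we then have $R = (\log_s n)^{O(1)} = (\log n)^{O(1)} = n^{o(1)}$ and $\poly(s) = \exp(O((\log n)^{1-c})) = n^{o(1)}$ since $(\log n)^{1-c} = o(\log n)$, hence a total running time $R\cdot n\cdot\poly(s) = n^{1+o_\eps(1)}$, the $o(1)$ depending on $\eps$ only through how large $n$ must be for the estimates to kick in. This is the instantiation $s = 2^{\log^{1-c} n}$ anticipated in the introduction.

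The only genuinely delicate point I expect is the bookkeeping of the second paragraph: checking that a single constant $c$ can simultaneously keep both $\poly(s)$ and $R$ at $n^{o(1)}$ while still satisfying the feasibility constraint, uniformly in $\eps$. Everything else --- the sequential simulation of the MPC algorithm and the aspect-ratio reduction --- is routine and costs only $n^{o(1)}$ extra time and a $(1+\eps)$ factor in approximation.
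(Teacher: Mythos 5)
Your proposal is correct and is essentially the paper's own approach: the paper's proof likewise runs the efficient EMD unit step sequentially with a subpolynomially large branching factor ($c = (\log n)^{\omega(1)}$, equivalently the choice $s = 2^{\log^{1-c} n}$ you make), so that the per-cell sketch size and work are $n^{o(1)}$, the number of levels is $o(\log n)$, and the total time over the $O(nL)$ non-empty cells is $n^{1+o_\eps(1)}$. The only cosmetic difference is that you invoke Theorem~\ref{thm:emdGeneral} as a black box and simulate the MPC execution round by round (incurring the harmless sorting/load-balancing overhead of Section~\ref{sec:implementation}), whereas the paper simulates the Solve-And-Sketch unit steps directly cell by cell; your check that $s \ge (\log n)^{(\eps^{-1}\log_s n)^{\Omega(1)}}$ holds for a single constant $c$ independent of $\eps$ is exactly the bookkeeping the paper needs to conclude that the sketch size is $n^{o(1)}$.
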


We develop the proof in four steps. First, we impose a hierarchical
partition that will also define a modified distance metric $\dg$ that
approximates $\rho$ in expectation (our only step similar to one
from \cite{SA12-emd}). The new metric has a tree structure that allows
us to develop a recursive optimality condition for the Transportation
problem.
% that uses only local information and can be sketched in small space.
% -- such a characterization is crucial for the success of an algorithm in the
%Solve-and-Sketch framework. 
Second, we define a generalized cost function $F$, which captures all
the local solutions of a corresponding part/node in the tree.  Third,
we develop an ``information theoretic'' parallel algorithm; the
algorithm does not run in polynomial time but obeys the space and
communication constraints of our model. Finally, we modify the
information theoretic algorithm to obtain a time-efficient algorithm.

\junk{
In our proof, we expand the sets $A_\psi$ and $B_\psi$ to sets $A$ and
$B$, by having each point $a\in A_\psi$ ($b\in B_\psi$) repeated
$\psi(a)$ ($-\psi(b)$) times. Obviously
$\emd(A,B)=\emd(A_\psi,B_\psi)$. This expansion is necessary for
\emph{analysis only} (we do not in fact increase the input size): the
distinction between $A,B$ and $A_\psi, B_\psi$ is only apparent in the
first level of the hierarchy.}

\subsection{New distance}

For the remainder of this section, we assume that $A \cup B \subseteq
[\Delta/2]^2$ (i.e. $A$ and $B$ are sets of integer points) and
$\Delta = n^{O(1)}$. This assumption is without loss of generality ---
we show an MPC algorithm that transforms any input into this form in
Section \ref{sec:aspectRatio}.

%% Suppose $A$ and $B$ are sets of
%% integer points of total cardinality $n$ in a grid of side-length
%% $\Delta/2$ and $\Delta = n^{O(1)}$.\aanote{dropped transformation}
% (transforming the general case to this special case is a well known reduction). 
We define a new \emph{grid distance}. For this, we construct a
randomized hierarchical partition $\detpart = (\detpart_1, \ldots,
\detpart_\lev)$ of $A \cup B$ using a quad-tree with branching factor
$c$, as described in Section~\ref{sec:preliminaries}, and in more
detail in Section~\ref{sec:eucl-part}. For each level
$\ell$ of the grid, define $\Delta_l$ to be the side-length of cells
at that level: $\Delta_\ell=\Delta/(\sqrt{c})^{L-\ell}$. At level $\ell$,
we also consider a subgrid of squares of side length $\delta
\Delta_{\ell}$ imposed over $\detpart_\ell$, where $\delta = k^{-1}$
for an integer $k$, so that any square of the subgrid is entirely
contained in a square of $\detpart_\ell$. Let the set of centers of
the subgrid be denoted $\net_\ell$: we call this set the ``net at
level $\ell$''. We denote the closest point to $u$ in $\net_\ell$ by
$N_\ell(u)$.

\begin{definition}
  The \emph{grid distance} $\dg: [\Delta/2]^2 \rightarrow \mathbb{R}$
  is defined as follows. Let $u, v \in [\Delta/2]^2$, let
  $\mycell_\ell(u)$ (resp.\ $\mycell_\ell(v)$) be the unique level
  $\ell$ grid cell containing $u$ (resp.\ $v$), and let $\ell_c(u,v)$ be
  the largest integer $\ell$ so that $\mycell_\ell(u) \neq
  \mycell_\ell(v)$.\ If $\ell_c(u,v) = 0$, then $\dg(u,v) =
  \rho(u,v)$. Otherwise, $\dg(u,v)=
  \dg(u,N_{\ell_c(u,v)}(u))+\rho(N_{\ell_c(u,v)}(u),N_{\ell_c(u,v)}(v))+\dg(N_{\ell_c(u,v)}(v),v)$.
\end{definition}
While this is a recursive definition, note that
$\ell_c(u,N_{\ell_c(u,v)}(u)), \ell(v,N_{\ell_c(u,v)}(v)) \leq \ell_c(u,v) -
1$, and hence the definition is not circular.

The following lemma shows that, for approximating Transportation/EMD,
$\dg$ approximates $\rho$ to within a multiplicative factor
arbitrarily close to $1$, with constant probability.
\begin{lemma}\label{lem:distortion}
  For any two sets $A,B \subseteq [\Delta/2]^2$, and demand function
  $\psi:A \cup B \rightarrow \N$ such that $\sum_{u \in A}{\psi(u)}
  = \sum_{v \in B}{\psi(v)}$, we have that $\emd_{\dg}(A,B,\psi)$ is a
  $1+O(\delta \lev)$ approximation of $\emd(A,B, \psi)$ with
  probability 9/10. In particular,
  $$
  \emd(A,B,\psi) \le \emd_{\dg}(A,B,\psi),
  $$
  and
  $$
  \E[\emd_{\dg}(A,B,\psi)-\emd(A,B,\psi)]\le O(\delta \lev)\cdot \emd(A,B,\psi).
  $$
\end{lemma}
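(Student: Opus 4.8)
The plan is to prove the two displayed ``in particular'' inequalities and then deduce the $1+O(\delta\lev)$-approximation with probability $9/10$ by Markov's inequality applied to the nonnegative random variable $\emd_{\dg}(A,B,\psi)-\emd(A,B,\psi)$. Throughout, all expectations are over the random shift that defines the quad-tree $\detpart$, and everything reduces to a pointwise comparison of the metrics $\dg$ and $\rho$.

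First I would show, by induction on the crossing level $\ell_c(u,v)$ together with the triangle inequality, that $\dg(u,v)\ge\rho(u,v)$ for all $u,v$: unrolling the definition, $\dg(u,v)$ is the $\rho$-length of an explicit $u$--$v$ path through net points, so it dominates $\rho(u,v)$. This gives the lower bound $\emd(A,B,\psi)\le\emd_{\dg}(A,B,\psi)$, since an optimal flow for the $\dg$-transportation LP is feasible for the $\rho$-LP (the constraints are the same) and has $\rho$-cost no larger than its $\dg$-cost. For the upper bound I would take an optimal flow $x^*$ for $\rho$; it is feasible for the $\dg$-LP, so $\emd_{\dg}(A,B,\psi)\le\sum_{u,v}x^*_{uv}\dg(u,v)$, and it suffices to prove, for each fixed pair $(u,v)$,
\[
  \E[\dg(u,v)]\le(1+O(\delta\lev))\,\rho(u,v),
\]
since averaging against $x^*$ and using $\sum_{u,v}x^*_{uv}\rho(u,v)=\emd(A,B,\psi)$ then yields $\E[\emd_{\dg}(A,B,\psi)]\le(1+O(\delta\lev))\emd(A,B,\psi)$, which is the second ``in particular'' bound.

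The heart of the argument is a deterministic estimate: for every $u,v$ and an absolute constant $C$,
\[
  \dg(u,v)-\rho(u,v)\le C\delta\sum_{\ell=1}^{\ell_c(u,v)}\Delta_\ell ,
\]
which I would prove by induction on $j=\ell_c(u,v)$. Expanding $\dg(u,v)=\dg(u,N_j(u))+\rho(N_j(u),N_j(v))+\dg(N_j(v),v)$, the middle term is at most $\rho(u,v)+\rho(u,N_j(u))+\rho(v,N_j(v))\le\rho(u,v)+O(\delta\Delta_j)$, since a point lies within $O(\delta\Delta_j)$ of its level-$j$ net point, while each of the two outer terms has crossing level at most $j-1$ (the observation that makes the recursive definition of $\dg$ well-founded), so the inductive hypothesis bounds each by $O(\delta\Delta_j)+C\delta\sum_{\ell\le j-1}\Delta_\ell$. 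The recursion formally branches in two, so a crude count over the recursion tree would be exponential in $j$; what makes the induction close with $C=O(1)$ is that $\Delta_\ell=\Delta_j(\sqrt c)^{-(j-\ell)}$ decays geometrically, so $\sum_{\ell\le j-1}\Delta_\ell\le\tfrac12\Delta_j$ once $c$ exceeds a small constant, as it does in our setting. Given this, the event $\{\ell_c(u,v)\ge\ell\}$ coincides with $\{\mycell_\ell(u)\ne\mycell_\ell(v)\}$, whose probability is $O(\rho(u,v)/\Delta_\ell)$ because the quad-tree is an $(\,\cdot\,,O(1),c)$-distance-preserving partition in the plane (Definition~\ref{def:partition}), so
\[
  \E[\dg(u,v)]-\rho(u,v)\le C\delta\sum_{\ell=1}^{\lev}\Delta_\ell\cdot O\big(\rho(u,v)/\Delta_\ell\big)=O(\delta\lev)\,\rho(u,v),
\]
which is the key inequality.

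Combining the pieces, $\emd(A,B,\psi)\le\emd_{\dg}(A,B,\psi)$ holds always and $\E[\emd_{\dg}(A,B,\psi)-\emd(A,B,\psi)]\le O(\delta\lev)\emd(A,B,\psi)$; applying Markov's inequality to the nonnegative variable $\emd_{\dg}(A,B,\psi)-\emd(A,B,\psi)$ then shows that with probability at least $9/10$ it is at most $O(\delta\lev)\emd(A,B,\psi)$, i.e.\ $\emd_{\dg}$ is a $1+O(\delta\lev)$ approximation. I expect the only delicate point to be the deterministic telescoping estimate: the definition of $\dg$ recurses into two subproblems, and one must exploit the geometric shrinkage of the $\Delta_\ell$ to keep the accumulated snapping error at $O\!\left(\delta\sum_\ell\Delta_\ell\right)$ rather than something exponential in the depth; the grid-cutting probability bound and the LP-feasibility observations are routine.
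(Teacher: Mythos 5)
Your proposal is correct and follows essentially the same route as the paper: lower bound via $\dg\ge\rho$ and LP feasibility, a pointwise bound $\E[\dg(u,v)-\rho(u,v)]\le O(\delta\lev)\rho(u,v)$ obtained by combining the deterministic estimate that the extra cost at crossing level $\ell$ is $O(\delta)\Delta_\ell$ (geometric decay of the net spacings) with the $O(\rho(u,v)/\Delta_\ell)$ cutting probability, then averaging against an optimal $\rho$-flow and finishing with Markov. The only difference is presentational: you establish the deterministic telescoping estimate by induction on the crossing level, while the paper unrolls the recursion into a sum of net-snapping distances directly.
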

\begin{proof}
  For the first part, we just note that $\dg(u,v)\ge \rho(u,v)$, which
  follows by the triangle inequality, and hence
  $\emd_{\dg}(A,B,\psi)\ge \emd(A,B,\psi)$.
  
  For the second part, consider an optimal solution $x$ to
  \eqref{eq:tcost-obj}--\eqref{eq:tcost-nonneg} for the original
  (Euclidean) metric $\rho$. We shall prove that
  for any $u\in A$ and $v= B$,
  \begin{equation}\label{eq:distortion}
  \E[\dg(u,v)-\rho(u,v)]\le O(\delta\lev)\cdot \rho(u,v).
  \end{equation}
  We first observe that \eqref{eq:distortion} suffices to prove the
  second part of the lemma and the main claim.  Since $x$ is a
  feasible solution for $\emd_{\dg}(A,B,\psi)$, we have
  \begin{align*}
  \E[\emd_{\dg}(A,B,\psi)-\emd(A,B,\psi)] &\leq \E\left[\sum_{u \in A, v \in
    B}{x_{uv}(\dg(u,v) - \rho(u, v))}\right] \\
  &\le O(\delta\lev)\sum_{u \in A, v \in B}{x_{uv}\rho(u, v)} \\
  &=
  O(\delta \lev)\cdot \emd(A,B,\psi). 
  \end{align*}
  This gives the second part of the lemma, and the main claim follows
  from Markov's inequality. We proceed to prove \eqref{eq:distortion}.

  Note that for any $\ell$ the probability that $\mycell_\ell(u) \neq
  \mycell_\ell(v)$ is at most
  $2\rho(u,v)/\Delta_{\ell}$: for a proof, see Lemma~\ref{lm:eucl-part}
 Furthermore, if this happens, then the
  extra distance, i.e., $\dg(u,v)-\rho(u,v)$, is at most 
  $$
  2\sum_{k = 2}^\ell{\rho(N_k(u), N_{k-1}(u))}\leq \sum_{k =
    2}^\ell{\delta\Delta_k + \delta\Delta_{k-1}} = O(\delta)
  \Delta_\ell. 
  $$
  Hence, the expected extra distance is:
  $$
  \E[\dg(u,v)-\rho(u,v)]\le \sum_{\ell=0}^{\lev} O(\delta) \Delta_\ell\cdot 2\rho(u,v)/\Delta_\ell= O(\delta \lev) \rho(u,v).
  $$
  This completes the proof. 
\end{proof}

\subsection{Cost Function}\label{sec:emdcost}

%% Fix sets $A$, $B$, and $\psi:A \cup B \rightarrow
%% \N$ such that $\sum_{u \in A}{\psi(u)} = \sum_{v \in
%%   B}{\psi(v)}$. 
For a given grid cell $C$ at level $\ell$, we define a multi-argument
function $F$ that will represent the cost of solutions of the cell
$C$.  For the rest of this subsection, we use $\net$ to denote
$\net_\ell \cap C$. The number of arguments of $F$ is
$\numargs=1/\delta^2-1$.  In particular, there is one argument, $x_r$,
corresponding to each point $r$ from from the grid $\net$, except
exactly one (arbitrarily chosen) called $r^*\in \net$. Sometimes we
will also have variable $x_{r^*}$ for $r^*$, but which will be
entirely determined by the values of the other arguments $x_r$ (and
points inside $C$).  Let $\net^\circ=\net\setminus \{r^*\}$ be the
\emph{restricted net for $C$}. Our function $F$ has domain
$\R^\numargs$ and range of $\R_+$.

\begin{figure}[t]
  \centering
  \includegraphics{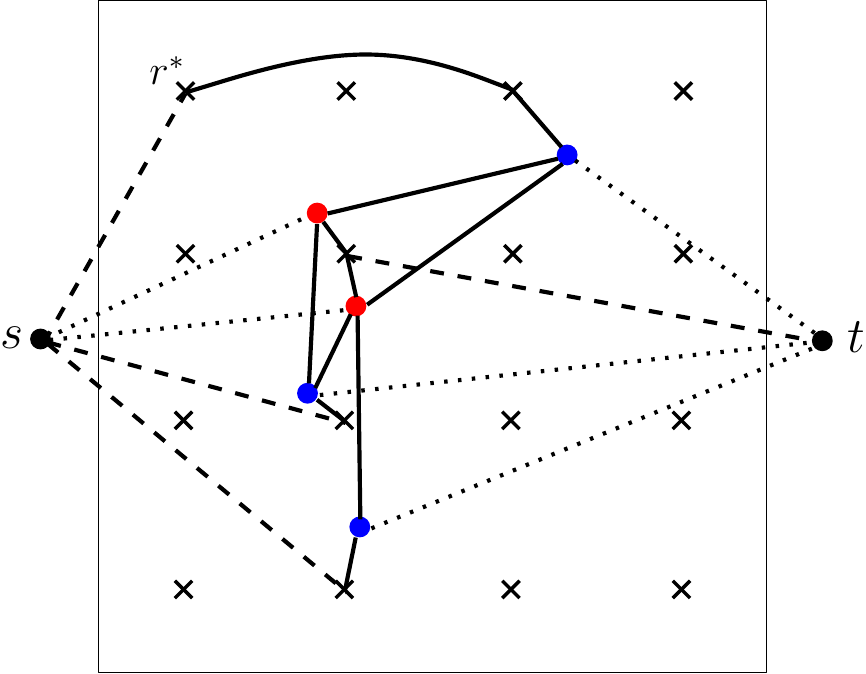}
  \caption{An example flow network for $F$. Crosses signify points in $\net$. Not all edges between $r^*$ and $\net^\circ$ are shown. Dotted edges have cost 0; dashed edges have cost $\delta\Delta_\ell/2$; solid edges have costs given by $\rho_g$. }
  \label{fig:Fcost}
\end{figure}

The function $F$ on a vector $x\in \R^\numargs$ is the value of a
solution of a min-cost flow problem. Let $A_C=A\cap C$ and $B_C=B\cap
C$ be the restriction of input to $C$. For a set of points $A'
\subseteq A$ we define $\psi(A') = \sum_{u \in A'}{\psi(u)}$, and we
define $\psi(B')$ for $B' \subseteq B$ analogously.  Define $x_{r^*} =
\psi(A_C) - \psi(B_C) - \sum_{r \in \net^\circ}{x_r}$. For each point
$r\in \net$, let $A_r = \{u \in A_C: N_\ell(u) = r\}$ and $B_r = \{v
\in B_C: N_\ell(v) = r\}$.  We set up an undirected flow network
$G_C(x)$ as follows. The nodes of the network are $s$, $t$, $\net$,
and $A_C \cup B_C$. $G_C$ includes the following edges with
corresponding costs:
\begin{itemize}
\item $s$ is connected to vertices $A_C$ with costs 0, and those
  vertices $r \in \net$ where $x_r < 0$, with costs $\delta\Delta_\ell/2$;
\item $t$ is connected to vertices $B_C$ with costs 0, and those
  vertices $r \in \net$ where $x_r > 0$, with costs $\delta\Delta_\ell/2$;
\item any two vertices $u \in A_C$ and $v \in B_C$ are
  connected, with cost $\dg(u,v)$; 
\item any $r \in \net$ is connected to $A_r \cup B_r$, and the cost of
  each edge $(r,u)$ is $\dg(r,u)$;
\item $r^*$ is connected to all $r \in \net^\circ$, with corresponding
  costs $\dg(r^*,r)$.
\end{itemize}
An example flow network is shown in Figure~\ref{fig:Fcost}. 

In the following, we use the standard convention $f(u,v) = -f(v,u)$
for any two vertices $u,v$ of $G_C(x)$. 

\begin{definition}\label{defn:F}
  The cost function $F(x)$ is defined as the cost of the minimum cost
  $s$-$t$ flow $f$ from $s$ to $t$ in $G_C(x)$, under the constraints:
  \begin{itemize}
  \item $f(s, u) = \psi(u)$ for all $u \in A_C$;
  \item $f(s,r) = -x_r$ for all $r \in \net$ for which $x_r < 0$;
  \item $f(v,t) = \psi(v)$ for all $v \in B_C$;
  \item $f(r,t) = x_r$ for all $r \in \net$ for which $x_r > 0$;
  \end{itemize}
  Above $x_{r^*} = x_{r^*} = \psi(A_C) - \psi(B_C) - \sum_{r \in
    \net^\circ}{x_r}$.  The function is defined for all $x \in
  \R^{\net^\circ}$
\end{definition}

Note that when $x_r = 0$ for all $r$, and $\psi(A_C) = \psi(B_C)$,
this flow problem corresponds exactly to the transportation cost
problem \eqref{eq:tcost-obj}--\eqref{eq:tcost-nonneg} with costs given
by $\dg$.  The interpretation of the arguments is that $x_r>0$ means
$x_r$ demand from $A_r$ has to flow to points outside $C$ through $r$
and $x_r<0$ means that $-x_r$ demand from $B_r$ has to flow in from
from points outside $C$ through $r$. Having specified all values of
$x_r$ for $r \in \net^\circ$, this leaves an imbalance of $\psi(A_C) -
\psi(B_C) - \sum_{r \in \net^\circ}{x_r}$ to flow through
$r^*$. 

\junk{\paragraph{LP with virtual points}
More precisely, for any fixed $x$, the value $F(x)$ is defined as the
value of the following optimization problem. For $u\in A, v\in B$, 
$\pi_{u,v}$ signifies whether $u,v$ are matched.
%% For a point $u\in A_C$ we denote $\phi(u)$
%% to be the closest point $r\in N$. 
Also $\pi_{u,r}$ signifies whether $u$ matches to a virtual point at
$r \in \net$. We enforce that $\pi_{u,r} = 0$ whenever $r \neq
N_\ell(u)$. We take the minimum over all feasible values for
$\pi_{u,v}$ and $\pi_{u,r}$.
\begin{equation}
F(x) = \min \sum_{u \in A_C \cup B_C}{\pi_{u,N_\ell(u)} \cdot \dg(u,N_\ell(u)} 
+ \sum_{u\in A_C,v\in B_C}{\pi_{u,v}\cdot \dg(u,v)} + \sum_{r\in \net} (|x_r|\cdot \delta
\Delta_\ell/2 + |\eta_r|\cdot\dg(r,r^*)),
\label{eqn:FLPcost}
\end{equation}
subject to:
\begin{eqnarray}
\pi_{u,v}, \pi_{u,r} \ge 0
&&
\forall u\in A_C,v\in B_C, r \in \net
\label{eq:FLPfirst}
\\
\pi_{u,r}=0
&&
\forall u\in A_C\cup B_C, r \neq N_\ell(u)
\\
\pi_{u, N_\ell(u)} + \sum_{v\in B_C}\pi_{u,v} =1
&&
\forall u\in A_C,
\label{eq:degreeA}
\\
\pi_{v,N_\ell(v)} + \sum_{u\in A_C}\pi_{u,v}=1
&&
\forall v\in B_C,
\label{eq:degreeB}
\\
\eta_r = x_r - \left(\sum_{u\in A_C} \pi_{u,r}-\sum_{v\in   B_C}
  \pi_{v,r} \right)
&&
\forall r\in \net
\label{eq:balance-restr}
\\
x_{r^*}=|A_C|-|B_C|-\sum_{r\in \net^\circ}x_{r}\\
\label{eq:balance-ostr1}
\eta_{r^*} = -\sum_{r \in \net^\circ}{\eta_r} 
\label{eq:balance-ostr2}
\end{eqnarray}

The feasible region of $F$ is determined by:
\begin{eqnarray}
-|B_r|\le x_r\le |A_r|
&&
\forall r\in \net
\\
x_{r^*}=\sum_{r\in \net^\circ} |A_r|-|B_r|-x_r
\end{eqnarray}

}

We first argue that the flow problem is indeed feasible for all values
of $x\in \R^{\net^\circ}$.
\begin{lemma}\label{lemma:feasiblity}
  For any $x \in \R^{\net^\circ}$ and $x_{r^*} = \psi(A_C) - \psi(B_C) -
  \sum_{r \in \net^\circ}{x_r}$, there exists a feasible flow in
  $G_C(x)$ satisfying the constraints of Definition~\ref{defn:F}.
\end{lemma}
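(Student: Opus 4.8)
The plan is to peel off the edges incident to $s$ and $t$ — on which the flow is already pinned down by the constraints of Definition~\ref{defn:F} — and reduce feasibility to routing a prescribed ``net outflow'' vector on the remaining \emph{internal} edges of $G_C(x)$: the $A_C$--$B_C$ edges, the edges joining each $r\in\net$ to $A_r\cup B_r$, and the star edges joining $r^*$ to $\net^\circ$. Call this internal subgraph $H$; note it carries no capacity constraints, only costs. A flow meeting the constraints of Definition~\ref{defn:F} exists iff there is a flow on $H$ whose divergence (net outflow) at each node equals the value forced by conservation, namely $b_u=\psi(u)$ for $u\in A_C$, $b_v=-\psi(v)$ for $v\in B_C$, and $b_r=-x_r$ for $r\in\net$. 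The last identity is the one bit of bookkeeping to do carefully: each net point $r$ has at most one boundary edge, carrying flow $-x_r$ from $s$ if $x_r<0$, flow $x_r$ to $t$ if $x_r>0$, and nothing if $x_r=0$; in every case conservation at $r$ forces the flow it pushes through internal edges to be $-x_r$.

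The next step is the one-line check that $b$ is a legitimate divergence, i.e.\ $\sum_w b_w=0$. Since $x_{r^*}=\psi(A_C)-\psi(B_C)-\sum_{r\in\net^\circ}x_r$, we get $\sum_{r\in\net}x_r=\psi(A_C)-\psi(B_C)$, hence $\sum_w b_w=\psi(A_C)-\psi(B_C)-\sum_{r\in\net}x_r=0$.

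Finally I would observe that $H$ is connected: the star edges make $\net$ connected through $r^*$, and every point $u\in A_C\cup B_C$ is joined to $N_\ell(u)$, which lies in $\net$ (the subgrid refines $\detpart_\ell$, so the net point nearest to a point of $C$ lies in $C$, and $u\in A_{N_\ell(u)}\cup B_{N_\ell(u)}$). On a connected undirected graph, every real divergence vector summing to zero is realized by some flow — take a spanning tree, route $b$ along it (uniquely), and set the flow on all non-tree edges to zero. The resulting flow on $H$, extended by the prescribed values on the boundary edges, is a feasible flow for $G_C(x)$, which proves the lemma. I do not anticipate a genuine obstacle here; this is a warm-up, and the only points needing attention are the sign bookkeeping that the residual divergence at each net point is exactly $-x_r$, and checking that connectivity of $H$ survives the degenerate cases $A_C=\emptyset$ or $B_C=\emptyset$ (it does, since connectivity relies only on the $\net$-star and the edges $u$--$N_\ell(u)$). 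If an explicit feasible flow is wanted later (e.g.\ to upper bound $F(x)$), the same argument becomes constructive by first pushing each $A_r\cup B_r$ onto $r$ and then rebalancing $\net$ through the $r^*$-star.
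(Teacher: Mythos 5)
Your proof is correct, and it takes a somewhat different route from the paper's. Both arguments rest on the same underlying fact---$G_C(x)$ has no capacities, so only conservation and the pinned boundary values matter---but the paper proceeds by writing down an explicit flow in stages: it fixes the edges at $s$ and $t$ as in Definition~\ref{defn:F}, sends $x_r$ from $r^*$ to each $r \in \net^\circ$, routes $\min\{\psi(A_C),\psi(B_C)\}$ units from $A_C$ to $B_C$, and then cancels the residual imbalance $g(u)$ at each point by routing it through its net point $r$ and pushing $g(A_r)$ from $r^*$ to $r$ (leaving the conservation check largely implicit). You instead reduce feasibility to realizing a prescribed divergence vector on the uncapacitated internal subgraph $H$, verify it sums to zero (which is exactly the identity $\sum_{r\in\net}x_r = \psi(A_C)-\psi(B_C)$ forced by the definition of $x_{r^*}$), check that $H$ is connected via the $r^*$-star and the edges $u$--$N_\ell(u)$ (with $N_\ell(u)\in\net$ because the subgrid refines the cell grid), and invoke spanning-tree routing. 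Your sign bookkeeping at the net points ($b_r=-x_r$ in all three cases $x_r<0$, $x_r>0$, $x_r=0$, including $r=r^*$) is right, and since flows on the undirected edges may take either sign under the convention $f(u,v)=-f(v,u)$, the spanning-tree flow is indeed feasible. What the paper's explicit construction buys is a concrete flow of the kind that is reused in spirit later (e.g.\ the perturbation argument in Lemma~\ref{lm:round} modifies a flow along the $r$--$r^*$ and boundary edges); what your argument buys is brevity and robustness, since only connectivity and the zero-sum condition are used, which disposes of the degenerate cases $A_C=\emptyset$ or $B_C=\emptyset$ automatically, and, as you note, it can be made constructive in essentially the paper's way if an explicit flow is needed.
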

\begin{proof}
  We construct a feasible flow $f$ as follows. We set $f(s, u)$, $f(v,
  t)$ for all $u \in A_C$, $v \in B_C$ as in
  Definition~\ref{defn:F}. For each $r \in \net^\circ$ we send flow
  $x_r$ from $r^*$ to $r$. Then we send $\min\{\psi(A_C), \psi(B_C)\}$
  units of flow from $A_C$ to $B_C$. For each $u \in A \cup B$, let
  $g(u)$ be equal to the flow going into $u$ minus the flow going
  out. Moreover, let $g(S) = \sum_{u \in S}{g(u)}$ for any $S
  \subseteq A_C \cup B_C$. For each $r \in \net$, send additional flow
  $g(u)$ from $r$ to each $u \in A_r \cup B_r$, and flow $g(A_r)$ from
  $r^*$ to $r$.
\end{proof}

The next argument is the crucial step towards sketching the function $F$. We show
that rounding each argument of $F$ to within a small enough
multiplicative factor does not change the value of $F$
significantly. The basic reason is that we can lower bound $F(x)$ in terms
of $x$, and $F$ is Lipschitz. 
\begin{lemma}
  \label{lm:round}
  Fix $\eps > 0$ and suppose $\delta=\eps^{O(1)}$. There exists an
  $\eps' = \eps^{O(1)}$ such that for all feasible $x, x' \in
  \R^{\net^\circ}$ satisfying $\forall i: x'_i/x_i \in [1 - \eps',
  1+\eps']$, we have
  %\begin{equation*}
    $|F(x) - F(x')| \leq \eps F(x)$.
  %\end{equation*}
\end{lemma}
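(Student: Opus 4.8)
The plan is to derive the lemma from two properties of $F$ on its domain $\R^{\net^\circ}$: a \emph{lower bound} $F(x)\ge\frac{\delta\Delta_\ell}{2}\|x\|_1$ and an \emph{$\ell_1$-Lipschitz bound} $|F(x)-F(x')|\le O(\Delta_\ell)\,\|x-x'\|_1$, where $\|\cdot\|_1$ denotes the $\ell_1$ norm on $\R^{\net^\circ}$. Granting both, the hypothesis $x_i'/x_i\in[1-\eps',1+\eps']$ gives $|x_i-x_i'|\le\eps'|x_i|$ coordinatewise, hence $\|x-x'\|_1\le\eps'\|x\|_1$, and therefore
\[
|F(x)-F(x')|\le O(\Delta_\ell)\cdot\eps'\cdot\|x\|_1\le O(\Delta_\ell)\cdot\eps'\cdot\frac{2F(x)}{\delta\Delta_\ell}=O(\eps'/\delta)\cdot F(x).
\]
Taking $\eps'=\Theta(\eps\delta)$ then makes the right-hand side at most $\eps F(x)$, and $\eps'=\eps^{O(1)}$ because $\delta=\eps^{O(1)}$ (if $F(x)=0$ the claim is immediate, since then $x=x'=0$).

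For the lower bound, observe that in any flow feasible for $G_C(x)$ (such flows exist by Lemma~\ref{lemma:feasiblity}), the constraints of Definition~\ref{defn:F} force the edge $(r,t)$ to carry $x_r$ units whenever $x_r>0$ and the edge $(s,r)$ to carry $|x_r|$ units whenever $x_r<0$, each contributing $\delta\Delta_\ell/2$ per unit to the cost, while every other edge has nonnegative cost contribution. Summing over $r\in\net^\circ$ (and discarding the nonnegative contribution coming from $r^*$) yields $F(x)\ge\frac{\delta\Delta_\ell}{2}\sum_{r\in\net^\circ}|x_r|=\frac{\delta\Delta_\ell}{2}\|x\|_1$.

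For the Lipschitz bound, I would take an optimal flow $f$ for $G_C(x)$ and turn it into a feasible flow $f'$ for $G_C(x')$ by a local rerouting. Set $\eta_r=x_r'-x_r$ for $r\in\net^\circ$ and $\eta_{r^*}=-\sum_{r\in\net^\circ}\eta_r$, so $|\eta_{r^*}|\le\|x-x'\|_1$. For each $r\in\net^\circ$ I reroute $|\eta_r|$ units along the two-edge path joining $r^*$, $r$, and the relevant terminal ($s$ or $t$), using the direct edge $(r^*,r)$ together with $r$'s terminal edge, and adjust the forced flow on that terminal edge by $\eta_r$; since $r$ and $r^*$ sit in a common level-$\ell$ cell we have $\dg(r^*,r)=O(\Delta_\ell)$ (cf.\ the proof of Lemma~\ref{lem:distortion}), so this costs $|\eta_r|\cdot O(\Delta_\ell)$. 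These moves change the net flow at $r^*$ by exactly $\eta_{r^*}$, which I absorb by adjusting $r^*$'s own terminal edge at an extra cost of $O(\Delta_\ell)\,|\eta_{r^*}|$. Since the edges of $G_C$ are uncapacitated, $f'$ is feasible, giving $F(x')\le F(x)+O(\Delta_\ell)\bigl(\|x-x'\|_1+|\eta_{r^*}|\bigr)\le F(x)+O(\Delta_\ell)\,\|x-x'\|_1$; exchanging the roles of $x$ and $x'$ yields the two-sided bound. I expect the one genuinely fiddly point to be the bookkeeping at $r^*$ (the single net point not represented by a coordinate of $F$) when $x_{r^*}$ and $x_{r^*}'$ have opposite signs: there the source edge at $r^*$ must be traded for the sink edge, which still fits the $O(\Delta_\ell)\,|\eta_{r^*}|$ budget because in that case $|x_{r^*}|,|x_{r^*}'|\le|\eta_{r^*}|$.
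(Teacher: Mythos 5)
Your proposal is correct and follows essentially the same route as the paper's proof: both establish the lower bound $F(x)\ge \tfrac{\delta\Delta_\ell}{2}\|x\|_1$ from the forced terminal-edge flows and an $O(\Delta_\ell)$-per-unit Lipschitz bound by rerouting flow through $r^*$, then combine them and set $\eps'=\Theta(\eps\delta)$. The only difference is cosmetic: the paper perturbs one coordinate at a time ($x+\alpha e_r$) and sums, while you perturb all coordinates at once with explicit bookkeeping at $r^*$, which the paper handles implicitly.
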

\begin{proof}
  Observe the following two properties:
  \begin{itemize}
  \item \emph{(Lower bound)} $F(x)\ge \|x\|_1 \delta \Delta_\ell/2$, because the cost of
    any edge between $s,t$ and any $r \in \net$ is
    $\delta\Delta_\ell/2$, and each point $r$ in $\net$ either sends
    flow $x_r \geq 0$ to $t$, or receives flow $-x_r \geq 0$ from $s$.
  \item \emph{(Lipschitz continuity)} for any coordinate $r \in \net$,
    we have $|F(x+\alpha e_r)-F(x)|\le O(\alpha \Delta_\ell)$, where
    $e_r$ is standard basis vector; for each feasible flow $f$ for
    $x$, we can create a flow $f'$ feasible for $x+\alpha e_r$ by
    setting $f'(r,r^*) = f(r,r^*) + \alpha$ and $f'(s, r), f'(t,r),
    f'(s, r^*), f'(t, r^*)$ are set as in Definition~\ref{defn:F};
    this flow is feasible because in $f'$ we have, by
    Definition~\ref{defn:F} 
    \begin{align*}
    f'(r^*,s) + f'(r^*, t) &= \psi(A_C) - \psi(B_C) - \sum_{r \in
      \net^\circ}{x_r} - \alpha = f(r^*,s) + f(r^*, t) - \alpha\\
    f'(r,s) + f'(r, t) &= f(r^*,s) + f(r^*, t) + \alpha
    \end{align*}
    Furthermore the difference $f - f'$ is nonzero only for edges $(r,
    r^*)$, $(r^*,s)$ and/or $(r^*,t)$, and $(r, s)$ and/or $(r,t)$;
    the cost on each of these edges is at most $\Delta_\ell/2$ and the
    change in flow along each edge is at most $\alpha$, which means
    that the cost of $f'$ is at most $O(\alpha\Delta_\ell)$ larger
    than the cost of $f$.
  \end{itemize}
  Then it follows from the above properties that
  $$ |F(x')-F(x)|\le \sum_j O(\Delta_\ell) \cdot \eps' |x_j|=O(\eps'/\delta)\cdot
  \delta \Delta_\ell/2 \|x\|_1\le \eps F(x),
  $$ as long as $\eps' < \gamma\eps\delta$ for a sufficiently small
  constant $\gamma$.
\end{proof}

\begin{lemma}
\label{lem:emdSketch}
Fix $\eps>0$ and suppose $\delta=\eps^{O(1)}$. For any nonempty cell
$C$, there is a sketch $\hat{F}$ for the function $F$ such that for
all $x$, $|F(x) - \hat{F}(x)| \leq \epsilon F(x)$, and $\hat{F}$ can
be described by a data structure of size $(\log n)^{1/\eps^{O(1)}}$.
\end{lemma}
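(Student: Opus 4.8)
The plan is to let $\hat F$ be the function $F$ tabulated on a small per‑coordinate geometric grid in $\R^{\net^\circ}$, with queries answered by rounding each coordinate to the grid. Recall $F$ has $d=1/\delta^2-1=\eps^{-O(1)}$ coordinates, one $x_r$ per $r\in\net^\circ$. For each such $r$ set
\[
\mathcal G_r=\{0\}\cup\{\,\pm(1+\eps')^j:\ j\in\Z,\ b_{\min}\le(1+\eps')^j\le b_{\max}\,\},
\]
where $\eps'=\eps^{O(1)}$ is the constant from Lemma~\ref{lm:round}, $b_{\max}=n^{O(1)}$ is an a‑priori bound on $|x_r|$ over the relevant (feasible) region — any $|x_r|$ exceeding the total demand $\psi(A_C)+\psi(B_C)\le nU=n^{O(1)}$ can be clamped — and $b_{\min}=n^{-\Omega(1)}$ is a resolution fixed below. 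Then $|\mathcal G_r|=O(\eps'^{-1}\log(b_{\max}/b_{\min}))=(\log n)/\eps^{O(1)}$, so the product grid $\mathcal G=\prod_{r\in\net^\circ}\mathcal G_r$ has $|\mathcal G|=((\log n)/\eps^{O(1)})^{\eps^{-O(1)}}=(\log n)^{1/\eps^{O(1)}}$ points, and the data structure simply stores $F(x)$ for each $x\in\mathcal G$ (keyed by the vector of exponents), occupying the claimed space. On a query $x$, round each coordinate independently — $x_r\mapsto 0$ if $|x_r|<b_{\min}$, otherwise $x_r\mapsto$ the nearest element of $\{\pm(1+\eps')^j\}$ clamped to $[b_{\min},b_{\max}]$ — to obtain $\tilde x\in\mathcal G$, and output $\hat F(x):=F(\tilde x)$.

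For correctness I would route the rounding $x\mapsto\tilde x$ through the intermediate point $x'$ obtained by multiplicatively rounding only the ``large'' coordinates (those with $|x_r|\ge b_{\min}$) and leaving the small ones fixed. By construction $x'_i/x_i\in[1-\eps',1+\eps']$ for every $i$, so Lemma~\ref{lm:round} gives $|F(x)-F(x')|\le\eps F(x)$. It remains to pass from $x'$ to $\tilde x$ by zeroing out the $\le d$ remaining coordinates, each of absolute value $<b_{\min}$; iterating the Lipschitz estimate $|F(y+\alpha e_r)-F(y)|\le O(\alpha\Delta_\ell)$ proved inside Lemma~\ref{lm:round} bounds the total change by $O(d\,b_{\min}\Delta_\ell)$. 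Since $\Delta_\ell\le\Delta=n^{O(1)}$ and $d=\eps^{-O(1)}$, choosing the exponent in $b_{\min}$ large enough makes this at most $n^{-\Omega(1)}$; and because every positive value of $F$ satisfies both $F(x)\ge\delta\Delta_\ell\|x\|_1/2$ and $F(x)\ge$ (smallest positive edge cost in $G_C(x)$), a further choice of that exponent forces $O(d\,b_{\min}\Delta_\ell)\le\eps F(x)$. Summing the two contributions and re‑absorbing constants into $\eps'=\eps^{O(1)}$ yields $|F(x)-\hat F(x)|\le O(\eps)F(x)$, i.e.\ the claim after rescaling $\eps$.

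\textbf{Main obstacle.} The delicate step is precisely the last one: coordinates that are positive but below the grid resolution $b_{\min}$ are outside the scope of the multiplicative guarantee of Lemma~\ref{lm:round}, so they must be absorbed by absolute Lipschitz continuity, and this is harmless only if one can certify an absolute $n^{-O(1)}$ lower bound on $F(x)$ whenever $F(x)>0$. Proving such a bound uniformly — in particular in the degenerate regime where the internal matching cost $F(0)$ vanishes and $F(x)$ is governed entirely by the (possibly tiny) cost of routing the imbalances $x$ through the net — is where care is needed: one either exploits that the points have integer coordinates and the side lengths $\Delta_\ell$ are polynomially bounded (and bounded below) in $n$, so all relevant quantities live on a polynomially fine scale, or one treats the near‑origin region separately using the explicit convex piecewise‑linear structure of the min‑cost flow value function $F$ (its breakpoints along each coordinate axis occur at integer imbalances, by integrality of the underlying transportation problem), storing the finitely many linear pieces active near the origin alongside the grid.
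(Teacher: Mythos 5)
Your construction is the same as the paper's: the paper's sketch also stores $F$ at all points whose coordinates have the form $\pm(1+\eps')^i$ (with $\eps'$ from Lemma~\ref{lm:round}) over a range whose floor is of order $\eps^{O(1)}/\Delta_\lev$, together with the imbalance $\psi(A_C)-\psi(B_C)$, answers a query by rounding the coordinates to the grid, invokes Lemma~\ref{lm:round} for the coordinates above the grid floor, and uses the Lipschitz estimate for the ones below it. The space count is also the same. The genuine gap is exactly the step you flag as the main obstacle, and your proposed certificates for it do not work. The claim that every positive value of $F$ is at least the smallest positive edge cost of $G_C(x)$ is false, because the flow is fractional: a tiny imbalance can force a tiny amount of flow across a positive-cost edge, e.g.\ a cell whose internal matching is free while a coordinate $x_r=\epsilon_0$ must be routed through the net and $r^*$ gives $F(x)=\Theta(\epsilon_0\Delta_\ell)$, which is positive but far below any edge cost. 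The other bound, $F(x)\ge \delta\Delta_\ell\|x\|_1/2$, is useless precisely in the problematic regime, where every coordinate of $x$ lies below $b_{\min}$ and hence $\|x\|_1$ is itself of order $d\,b_{\min}$, comparable to the very error you are trying to absorb. So neither bound, nor their combination, yields the uniform $F(x)\ge n^{-\Omega(1)}$ that your error budget requires.

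What closes the gap, and what the paper proves, is the absolute bound $F(x)\ge 1$ for every nonempty cell: the input points have integer coordinates, so distinct points of $A_C\cup B_C$ are at distance at least $1$, and the flow constraints force each point to route demand at least $1$, so any feasible flow pays at least $1$. With this in hand the small coordinates are handled by taking the rounding threshold to be $\eps\delta^2/\Delta_\ell$, so that each of the at most $\delta^{-2}$ rounded coordinates changes $F$ by $O(\eps\delta^2)$ by Lipschitzness, for a total additive error $O(\eps)\le O(\eps) F(x)$; no global $n^{-\Omega(1)}$ resolution or near-origin case analysis is needed. You name the integer-coordinate route as one of two possible fixes but do not carry it out, and it is the one statement that must actually be proved for the lemma to hold; your alternative fallback (storing the linear pieces of $F$ near the origin) is not developed and is not needed once $F(x)\ge 1$ is established.
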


\begin{proof}
  The sketch just remembers $F$ at all points $x$ such that each
  coordinate $x_j$ is of the form $x_j = \pm(1+\eps')^i$ for the value
  of $\eps'$ guaranteed in Lemma~\ref{lm:round} and for $i \in
  [-\log_{1+\eps'}\eps^{-O(1)}\Delta_\lev,\log_{1+\eps'}n]$ . The
  sketch also remembers the ``imbalance'', i.e.,
  $\psi(A_C)-\psi(B_C)$.  Then, to compute $F(x)$, we construct $x'$
  by rounding up each coordinate of $x$, and outputing $F(x')$. The
  approximation guarantee follows from Lemma~\ref{lm:round} if all
  coordinates $x_j$ are larger than $\eps\delta^2/\Delta_\ell$. To finish
  the proof, we claim that rounding up coordinates that are smaller
  cannot hurt the approximation factor too much. Each such coordinate
  can change the value of $F$ by at most $\eps\delta^2$, and there are
  less than $\delta^{-2}$ coordinates total, so the total change of
  value will be at most $\eps$. We claim that $F(x) \geq 1$. Indeed,
  by assumption $C$ is not empty, and the distance between any two
  points in $A_C \cup B_C$ is at least one (recall that all points in
  the input have integer coordinates). It follows that for any
  arbitrary point $u \in A_C \cup B_C$, all edges incident to $u$ have
  cost at least 1, and by the constraints of the flow defining $F$,
  the total flow outgoing (for $u \in A_C$) or incoming (for $u \in
  B_C$) is 1. Therefore, $F(x) \geq 1$, and this finishes the proof.
\end{proof}

\subsection{Information Theoretic Algorithm}

Suppose we want to compute $F$ for some cell $\mycell$ at level
$\ell$\junk{, with side-length $\Delta_\ell=\Delta/c^{(\lev-\ell)/2}$}. In
this section we give a recursive characterization of $F$ in terms of
the cost functions of the children of $C$, and we use the
characterization to approximate transportation cost with an
inefficient algorithm that still satisfies the communication
constraints of our model. In the next subsection we will modify this
algorithm so that it runs polynomial time. 

Let $\{\mycell_j\}_{j \in J}$ be the children of $C$, indexed by the
set $J$: namely, level $\ell-1$ cells contained in $C$. Let
$\{F_j\}_{j \in J}$ be the respective cost functions and $\{\hat
F_j\}_{j \in J}$ the corresponding sketches obtained from
Lemma~\ref{lem:emdSketch}; let $\tau_j$ be the imbalance of cell
$C_j$, i.e.$\tau_j = \psi(A_{C_j}) - \psi(B_{C_j})$ (stored in the
sketch). Let $\net(C_j)$ be the full net $\net_{\ell - 1} \cap C_j$ of
$C_j$ (of size $1/\delta^{2}$), and let $\net^\circ(C_j)$ be the
restricted net. Finally, let $\net$ be the net $C \cap \net_\ell$ of
$C$ and $\net^\circ = \net \setminus R^*$ the restricted version. Note
that we can pick $c$ and $\delta$ so that for any $R \in \net$, the
cluster $\{u: N_\ell(u) = R\}$ of points whose closest neighbor is $R$
is a collection of cells $C_j$, i.e. $\exists J' \subseteq J$
s.t. $\{u: N_\ell(u) = R\} \subseteq \bigcup_{j \in J'}{C_j}$. Then,
it follows that for each $C_j$ and each $u \in C_j$, $N_\ell(u) =
N_\ell(N_{\ell-1}(u))$.

We now define an optimization program used to compute $F$ on some
input $z$ using only net points in $\net$, $\net(C_j)$ for all $j \in
J$, and the functions $F_j$. For each pair $r\in \net(C_j),r'\in
\net(C_{j'})$, where $j,j'\in J$, we have a variable $x_{j,r,j',r'}$,
for all $j\neq j'$ and $r\in \net(C_j)$, $r'\in \net(C_{j'})$. The
variable $x_{j,r,j',r'}$ has the meaning that flow of value
$x_{j,r,j',r'}$ from $A_{C_j}\cup B_{C_j}$ are is routed to
$A_{C_{j'}}\cup B_{C_{j'}}$, and through $r$ and $r'$. By convention,
$x_{j,r,j',r'}=0$ whenever $j=j'$. Let $x_{j,r}=\sum_{j',r'}
x_{j,r,j',r'}$, and $x_{j, \net^\circ(C_j)}$ be the vector
$(x_{j,r})_{r \in \net^\circ(C_j)}$.

Another set of variables is $y_{j,r,R}$ for $j\in J$ and $r\in
\net(C_j)$ and $R\in \net$, where $y_{j,r,R}$ stands for the amount of
flow from $C_j$ routed via $r$ to outside $C$ via the net point
$R$. We force that $y_{j,r,R} = 0$ except when $R = N_\ell(r)$. Also
we denote $y_{j,r}=\sum_{R\in \net} y_{j,r,R}$, and
let $y_{j,\net^\circ(C_j)}$ be the vector $(y_{j,r})_{r \in
  \net^\circ(C_j)}$.

Define also the ``extra cost'' $e_{j,r,j',r'}$ to be $\rho(r,r')-
\delta\Delta_{\ell-1}$, and, similarly
$e_{j,r,R} = \rho(r,R) - \delta\Delta_{\ell-1}/2$.  Note that, because
$r,r'$ are net points at level $\ell-1$, the value of $e_{j,r,j',r'}$
and $e_{j,r,R}$ is always non-negative (for $j\neq j'$). This is the
cost that was unaccounted for by the cost functions $F_j$. 

The problem is now to minimize the objective
\begin{align}
\sum_{j} F_j(x_{j,\net^\circ(C_j)}+y_{j,\net^\circ(C_j)})+&\sum_{j,j'\in J,r\in
  \net(C_j),r'\in \net(C_{j'})} |x_{j,r,j',r'}|\cdot e_{j,r,j',r'}/2\notag\\
  + &\sum_{j\in J,r\in \net(C_j), R\in \net} |y_{j,r,R}|\cdot
  e_{j,r,R}%\notag\\
 + \sum_{R \in \net^\circ}{(z_R\Delta_\ell/2 + \eta_R\dg(R, R^*))}
  \label{eqn:optProblem}
\end{align}
subject to the constraints
\begin{eqnarray}
\sum_{r\in \net(C_j)}(y_{j,r}+x_{j,r})= \tau_j, 
&&
\forall j\in J
\label{eqn:cellBalance}
\\
x_{j,r,j',r'}=-x_{j',r',j,r},
&&
\forall j\in J,j'\in J,r\in \net(C_j),r'\in \net(C_{j'})
\label{eqn:antisymmetry}
\\
y_{j,r,R} = 0
&&
\forall R \neq N_\ell(r)\\
\eta_R = z_R - \sum_{j\in J,r\in \net(C_j)} y_{j,r,R}
&&
\forall R\in \net
\label{eqn:outside}
\\
z_{R^*} = \sum_{j \in J}{\tau_j} - \sum_{R \in \net^\circ}{z_R}
\label{eqn:ostracized}\\
\eta_{R^*} = -\sum_{R \in \net^\circ}{\eta_R}
\label{eqn:totalBalance}
\end{eqnarray}
%We emphasize that this linear program remains unchanged in the
%transportation problem setting. 

\begin{lemma}\label{lm:recursive-char}
For any $z\in \R^{\net^\circ}$, the minimum value of
\eqref{eqn:optProblem} subject to constraints
\eqref{eqn:cellBalance}--\eqref{eqn:totalBalance} is equal to $F(z)$. 
\end{lemma}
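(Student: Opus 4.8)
\smallskip\noindent\textbf{Proof plan.} The plan is to prove the two inequalities $\mathrm{val}\le F(z)$ and $\mathrm{val}\ge F(z)$ separately, where $\mathrm{val}$ denotes the optimum of \eqref{eqn:optProblem} subject to \eqref{eqn:cellBalance}--\eqref{eqn:totalBalance}; in each direction one transports an optimal flow of one side into a feasible solution of the other side whose cost is no larger. Everything hinges on one structural fact: the grid distance $\dg$ and the coefficients $e_{j,r,j',r'}=\rho(r,r')-\delta\Delta_{\ell-1}$, $e_{j,r,R}=\rho(r,R)-\delta\Delta_{\ell-1}/2$ are rigged so that rerouting flow through net points is cost-neutral. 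Indeed, for $u\in C_j$, $v\in C_{j'}$ with $j\ne j'$ we have $\ell_c(u,v)=\ell-1$, so by the definition of $\dg$, $\dg(u,v)=\dg(u,N_{\ell-1}(u))+\rho(N_{\ell-1}(u),N_{\ell-1}(v))+\dg(N_{\ell-1}(v),v)$ with $N_{\ell-1}(u)\in\net(C_j)$ and $N_{\ell-1}(v)\in\net(C_{j'})$; since $F_j$ and $F_{j'}$ each already charge $\delta\Delta_{\ell-1}/2$ for using the respective net point, the residual hop cost $\rho(N_{\ell-1}(u),N_{\ell-1}(v))-\delta\Delta_{\ell-1}=e_{j,N_{\ell-1}(u),j',N_{\ell-1}(v)}$ is exactly what \eqref{eqn:optProblem} adds, split $1/2$--$1/2$ over the two orientations through the antisymmetry \eqref{eqn:antisymmetry}. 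The same reasoning for flow leaving $C$ through $R=N_\ell(r)\in\net$, using $N_\ell(u)=N_\ell(N_{\ell-1}(u))$, dictates the form of $e_{j,r,R}$ and of the level-$\ell$ source/sink charge.

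\smallskip\noindent\textbf{Upper bound $\mathrm{val}\le F(z)$.} Let $f^*$ be a min-cost flow in $G_C(z)$, so its cost is $F(z)$. Flow-decompose $f^*$ into $s$--$t$ paths. A path carrying mass $\mu$ from $u\in A_{C_j}$ to $v\in B_{C_{j'}}$ with $j\ne j'$ is rerouted through $N_{\ell-1}(u),N_{\ell-1}(v)$ and its mass added to $x_{j,N_{\ell-1}(u),j',N_{\ell-1}(v)}$; a path carrying mass $\mu$ from $u\in C_j$ out of $C$ through $R\in\net$ is rerouted through $N_{\ell-1}(u)$ and its mass added to $y_{j,N_{\ell-1}(u),R}$ (here $R=N_\ell(N_{\ell-1}(u))$); paths internal to a single child are left alone. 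This rerouting is cost non-increasing by the identities above. Setting $\eta$ by \eqref{eqn:outside} and $z_{R^*},\eta_{R^*}$ by \eqref{eqn:ostracized}--\eqref{eqn:totalBalance}, the constraints \eqref{eqn:cellBalance}--\eqref{eqn:totalBalance} hold because they are nothing but flow conservation of the rerouted flow at the child cells, at the newly used level-$(\ell-1)$ net points, and at $\net\cup\{R^*\}$; moreover, for each $j$ the restriction of the rerouted flow to $C_j$ is feasible for $G_{C_j}(w_j)$ with $w_j=x_{j,\net^\circ(C_j)}+y_{j,\net^\circ(C_j)}$, hence costs at least $F_j(w_j)$. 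Summing these, the extra-cost terms, and the level-$\ell$ and $R^*$ terms, and matching them path by path to $f^*$ via the $\dg$-identities, yields a feasible $(x,y,\eta)$ of objective value at most $F(z)$.

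\smallskip\noindent\textbf{Lower bound $\mathrm{val}\ge F(z)$.} Conversely, fix any feasible $(x,y,\eta)$ and, pushing the minimizations over the $F_j$ inside, a min-cost flow $f_j$ in $G_{C_j}(w_j)$ realizing $F_j(w_j)$. Assemble a flow $f$ in $G_C(z)$: flow-decompose each $f_j$ into paths, contract every level-$(\ell-1)$ net point interior to a $C_j$-internal path (replacing a sub-path $u\to r\to w$ by the direct edge $u\to w$ of $G_C(z)$, which by the $\dg$-identity costs no more), and for a path reaching a level-$(\ell-1)$ net point $r$ that $F_j$ treats as sending/receiving, splice in the hop $r\to r'$ into $f_{j'}$ (consistent by \eqref{eqn:antisymmetry}) or the hop $r\to R=N_\ell(r)\in\net$; from $R$, constraint \eqref{eqn:outside} makes the net flow at $R$ equal $z_R$, \eqref{eqn:ostracized}--\eqref{eqn:totalBalance} route the $R^*$ part, and \eqref{eqn:cellBalance} with $\sum_j\tau_j=\psi(A_C)-\psi(B_C)$ makes the total $s$--$t$ flow correct, so $f$ is feasible for $G_C(z)$. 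Its cost is $\sum_jF_j(w_j)$, minus the level-$(\ell-1)$ source/sink charges that the $f_j$ paid on rerouted net points but $f$ does not use, plus the hop costs $\rho(r,r')$ and $\rho(r,R)$, plus the level-$\ell$ and $R^*$ terms; by the definitions of $e_{j,r,j',r'}$ and $e_{j,r,R}$ this equals the objective value of $(x,y,\eta)$ (contracting only lowers it). Hence $F(z)\le\mathrm{cost}(f)\le\mathrm{val}$, and combined with the upper bound this proves the lemma.

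\smallskip\noindent The conceptual content is the recursive identity for $\dg$ that makes rerouting through net points free. The step I expect to be the main obstacle is the bookkeeping: matching each $\delta\Delta_{\ell-1}/2$ per-net-point charge against the corresponding $e$-coefficient, and --- above all --- showing that the ostracized coordinates ($r^*$ inside each $F_j$, and $R^*$ at level $\ell$) together with the variables $\eta_R$ encode exactly flow conservation at the level-$\ell$ net points of $C$, so that the value of $z_{R^*}$ forced by \eqref{eqn:ostracized} coincides with the value of $x_{r^*}$ prescribed in Definition~\ref{defn:F}.
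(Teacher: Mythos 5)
Your proposal is correct and follows essentially the same route as the paper's proof: both directions translate an optimal solution of one formulation into a feasible solution of the other via path decomposition, the recursive identity for $\dg$, and the bookkeeping that matches the coefficients $e_{j,r,j',r'}$ and $e_{j,r,R}$ against the $\delta\Delta_{\ell-1}/2$ source/sink charges inside the child functions $F_j$, with feasibility coming from flow conservation at the child cells and at $\net\cup\{R^*\}$. The paper merely organizes the translation slightly differently (an explicit auxiliary network that is then path-decomposed for one direction, and direct aggregation formulas such as $f^j(u,r)=f(u,R)+\sum_{v\in C_{j'},\,j'\neq j}f(u,v)$ for the other), but the content is the same.
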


\begin{proof}
Let $\alpha$ be the optimum achieved by the linear program
\eqref{eqn:optProblem}-\eqref{eqn:totalBalance}. We will prove first
that $\alpha \ge F(z)$, and then that $\alpha \le F(z)$. 

\paragraph{Clam I:  $\alpha\ge F(z)$.}

Let $x_*,y_*$ be an optimal solution to the linear program. Also, for
each $j\in J$, let $f^j$ be the flow achieving optimal cost for
$F_j(x_{j,\net^\circ(C_j)}+y_{j,\net^\circ(C_j)})$. We will construct
a flow $f$ for $F(z)$ achieving cost at most $\alpha$.

To construct $f$, we first construct a flow $f'$ in an auxiliary
network, defined as follows.  The network has vertices $s,t$, $A_C\cup
B_C$ together with $\net\cup \left(\bigcup_j \net(C_j)\right)$, and we will route
flow from $s$ to $t$. The edges in the network are defined below:
\begin{itemize}
\item $s$ is connected to $A_C$, and $t$ is connected to $B_C$; all
  these edges have cost 0;
\item $s$ is connected to vertices $R$ in $\net$ for which $z_R < 0$,
  and $t$ is connected to vertices $R$ in $\net$ for which $z_R > 0$;
  each of these edges has cost $\delta\Delta_\ell/2$
\item the network induced on $A_{C_j} \cup B_{C_j} \cup \net(C_j)$ is
  identical to the network defining $F_j$, (with $s$, $t$ and their incident
  edges removed); the costs are also identical as in the definition of
  $F_j$;
\item we have the complete graph on $\bigcup_j{\net(C_j)}$; edges between points $r \in
\net(C_j)$ and $r' \in \net(C_{j'})$ have cost $\rho(r,r')$
\item each $r \in \net(C_j)$ for some $j$ is connected to $N_\ell(r)
  \in \net$ at cost $\dg(r, N_\ell(r))$; also each $R \in \net^\circ$
  is connected to $R^*$ at cost $\dg(R,R^*)$.
\end{itemize}

The flow $f'$ between $s$, $t$, and their neighbors is defined
similarly to the flow problem defining $F$. We route flow $f'(s,u) =
1$ from $s$ to each $u \in A_C$ and flow $f(v,t) = 1$ from each $v \in
B_C$ to $t$. We also route flow $f(R,t) = z_R$ from each $R \in \net$
for which $z_R > 0$ to $t$; similarly, we route flow $f(s,R) = -z_R$
for each $R \in \net$ for which $z_R < 0$.

We define the rest of $f'$ based on $x_*,y_*, f^j$. The flow restricted
to the network induced by $A_{C_j}, B_{C_j}, \net(C_j)$ for each $j$
is exactly as in $f^j$. For each $u \in A_C$, $f'(s, u) = \psi(u)$,
and for each $v \in B_C$, $f'(v,t) = \psi(v)$. For each
$r\in \net(C_j), r'\in \net(C_{j'})$ for $j\neq j'$, we set
$f'(r,r')=x_{j,r,j',r'}$. For each $r\in \net(C_j), R\in \net$, we set
$f'(r,R)=y_{j,r,R}$. For each $R \in \net^\circ$, we set $f'(R^*,R) =
\eta_R$.

We claim that flow conservation is satisfied for $f'$. This is
immediate for each $u \in A_C$ and each $v \in B_C$, by the feasibility of
each $f^j$. By the constraints of $f^j$, each $r \in \net^\circ(C_j)$
has incoming flow $x_{j,r} + y_{j,r}$, and by the construction of $f'$
and \eqref{eqn:antisymmetry} this is also the outgoing flow. By
\eqref{eqn:cellBalance}, the total flow leaving $r_j^*$ is $\tau_j -
\sum_{r \in \net^\circ(C_j)}{(x_{j,r} + y_{j,r})}$, which, by the
constraints of $f^j$ is the incoming flow as well. Flow conservation
can be verified for $R \in \net$ using
\eqref{eqn:outside}-\eqref{eqn:totalBalance}. 

Next we claim that the cost of $f'$ is $\alpha$. For any $j$, the flow and flow
costs on edges in the network induced on $A_{C_j}$, $B_{C_j}$, $\net(C_j)$ are
exactly as in $f^j$; the total cost on these edge is
\[F_j(x_{j,\net^\circ(C_j)}+y_{j,\net^\circ(C_j)}) - \sum_{r \in
  \net(C_j)}{(x_{j,r} + y_{j,r})\delta\Delta_{\ell-1}/2}.\] The cost
of flow $f'$ along edges $(r,r')$, $r \in \net(C_j)$, $r' \in
\net(C_{j'})$ as well as that along edges $(r, R)$, $r \in \net(C_j),
R \in \net$ contribute $\sum_{r \in \net(C_j)}{(x_{j,r} +
  y_{j,r})\delta\Delta_{\ell-1}/2}$ plus the second and third term of
\eqref{eqn:optProblem}. The fourth term of \eqref{eqn:optProblem} is
given by the cost of the flow between $s$, $t$, and $\net$ together
with the cost of the flow between $R^*$ and $R \in \net^\circ$.

We are now ready to define $f$ based on $f'$. Since $f'$ is a feasible
flow, it can be decomposed into flows $f'_1, \ldots, f'_k$, where each
$f'_i$ is supported on an $s$-$t$ path. For each $u \in A$, $v \in B$,
the flow $f(u,v)$ is defined as the sum of flows $f'_i$ along paths
that pass from $s$ to $u$ to $v$ to $t$. Similarly, the flow $f(u, R)$
is defined as the sum of flows $f'_i$ along paths that pass through
both $u$ and $R$. Finally, $f(R,R^*) = f'(R,R^*)$, $f(s, R) = f'(s,
R)$, and $f(R, t) = f'(R,t)$ for each $R \in \net$. This specifies $f$.

To complete the proof of the claim we need to show that the cost of
$f$ is at most the cost of $f'$. Edges between $s$, $t$, and points in
$\net$, as well as for edges between $R^*$ and points in $\net^\circ$
exist both in the network defining $f'$ and the network defining $f$,
and they have identical flow and cost. Then, it remains to show that
for $u, v \in A_C \cup B_C$, the cost of a unit of flow in $f'$ routed
along a path that goes from $u$ to $v$ is at least $\dg(u,v)$. When
$u,v \in C_j$, this follows from the triangle inequality for $\dg$. It
then suffices to show the claim for a path consisting of edges
$(u,r)$, $(r,r')$, and $(r',v)$, where $u \in C_j$, $r =
N_{\ell-1}(u)$, $v \in C_{j'}$ ($j \neq j')$, and $r' =
N_{\ell-1}(v)$; any other path consists of such segments and paths
that are entirely within a cell $C_j$. But, by the definition of
$\dg$, since $\ell-1$ is the highest level at which $u$ and $v$ are
separated, $\dg(u,v) = \rho(r,r') + \dg(u,r) + \dg(v,r)$, and the
right hand side of this identity is exactly the cost of a unit flow of
$f'$ along the path $\{(u,r), (r,r'), (r',v)\}$. An analogous argument
shows that the cost of a unit of flow in $f'$ routed along a path that
goes from $u \in A \cup B$ to $R\in \net$ is at most $\dg(u,R)$, and,
together with the triangle inequality for $\dg$, this completes the
proof that the cost of $f$ is at most the cost of $f'$. Since the cost
of $f'$ is at most $\alpha$, and $f$ is a feasible flow satisfying the
constraints of Definition~\ref{defn:F}, it follows that $F(z) \leq
\alpha$.

\paragraph{Clam II:  $\alpha\le F(z)$.}

To prove that $\alpha \le F(z)$, we will use a flow $f$ of cost
$F(z)$, feasible for Definition~\ref{defn:F}, and construct a solution
$y_*, x_*, \eta_*$ for the LP
 \eqref{eqn:optProblem}-\eqref{eqn:totalBalance}, as well as flows
$f^j$ for $F_j$ for each $j$, so that the objective
\eqref{eqn:optProblem} is at most the cost of $f$. For any $u, v \in
C_j$, we set $f^j(u,v) = f(u,v)$. Then, for $u \in C_j$, $r =
N_{\ell-1}(u)$, and $R = N_\ell(u) = N_\ell(r)$, we set
\[
f^j(u,r) = f(u,R) + \sum_{v \in C_{j'}, j' \neq j}{f(u,v)}.
\]
This defines all $f^j$, and also $x_{j, \net^{\circ}(C_j)}$. We then
set 
\[
x_{j,r,j',r'} = \sum_{u \in A_r, v\in B_r}{f(u,v)}\hspace{3ex}\text{ and }\hspace{3ex}
y_{j,r,R} = \sum_{u \in A_r \cup B_r}{f(u,R)}.
\]
Finally, let $\eta_R =
f(R^*,R)$. The flow conservation inequalities for $f$ imply the
feasibility of $f^j$ for each $j$, as well as
\eqref{eqn:cellBalance}-\eqref{eqn:totalBalance}. 
The total of  \eqref{eqn:optProblem} and the costs of $f^j$ is equal to
the cost of $f$, by the definition of distance function $\dg$.
\end{proof}

In order to satisfy the space and communication requirements of our
model,we cannot represent the function $F_j$ exactly, and instead we
replace each $F_j$ with the sketch $\hat{F}_j$. Thus, we consider the
optimization problem with objective function
\begin{align}
\sum_{j} \hat{F}_j(x_{j,\net^\circ(C_j)}+y_{j,\net^\circ(C_j)})+&\sum_{j,j'\in J,r\in
  \net(C_j),r'\in \net(C_{j'})} |x_{j,r,j',r'}|\cdot e_{j,r,j',r'}/2\notag\\
  + &\sum_{j\in J,r\in \net(C_j), R\in \net} |y_{j,r,R}|\cdot
  e_{j,r,R}%\notag\\
 + \sum_{R \in \net^\circ}{(z_R\Delta_\ell/2 + \eta_R\dg(R, R^*))}
  \label{eqn:optSketchProblem}
\end{align}
subject to \eqref{eqn:cellBalance}-\eqref{eqn:totalBalance}, where
$\hat{F}_j$ is a sketch of $F_j$.

Next we define an inefficient unit step that satisfies the space and
communication constraints of the MPC model. In the following
subsection we also give a polynomial time variant of the unit step. 

\paragraph{Unit Step (information theoretic).} The (inefficient) unit
step $\alg_{i}$, when executed in cell $C$ at level $\ell$, accepts as
input all the sketches $\hat{F}_j$ for all children $C_j$ of $C$ (if
$\ell = 1$, we assume we have $F_j$ exactly). It solves
\eqref{eqn:optSketchProblem} on all inputs $z \in \R^{\net^\circ}$ for
which $z_i = \pm (1+\eps')^{j_i}$ where $j_i$ is an integer in
$[-\log_{1+\eps'}\eps^{-O(1)}\Delta_\lev, \log_{1+\eps'}n]$ and
$\eps'$ is as in Lemma~\ref{lem:emdSketch}. It then outputs the
computed objective values and the cell imbalance $\psi(A_C) -
\psi(B_C)$ as the sketch $\hat{F}$ for $F$.

\begin{theorem}\label{thm:emdinfo}
  The unit step $\alg_i$ has space complexity $s_u(n_u) = O(n_u^2)$
  and output size $p_u = (\log n)^{\eps^{-O(1)}}$. Moreover, the
  Solve-and-Sketch algorithm with unit step $\alg_i$ that outputs
  \eqref{eqn:optSketchProblem} with $z = (0, \ldots, 0)$ in the unique
  cell $C \in \detpart_\lev$ containing $A \cup B$, computes an $1 \pm
  O(\lev\eps)$ approximation to $\emd(A,B,\psi)$.
\end{theorem}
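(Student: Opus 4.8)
The plan is to establish the three assertions of Theorem~\ref{thm:emdinfo} separately: the per-cell space bound, the output (sketch) size, and the end-to-end approximation guarantee, with the last being the only substantial point.

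\textbf{Output size and space.} For the output size I would just invoke Lemma~\ref{lem:emdSketch}: the sketch $\hat F$ produced by $\alg_i$ stores the computed objective value at each grid input $z$ whose coordinates have the form $\pm(1+\eps')^{i}$, together with the imbalance $\psi(A_C)-\psi(B_C)$; since there are $\numargs=1/\delta^2-1=\eps^{-O(1)}$ coordinates and $\eps^{-O(1)}\log n$ admissible values per coordinate (using $\eps'=\eps^{O(1)}$, $\delta=\eps^{O(1)}$ and $\Delta=n^{O(1)}$), we get $p_u=(\log n)^{\eps^{-O(1)}}$. For the space bound, observe that the input to $\alg_i$ at an internal cell $C$ is the collection of $c$ children sketches, so the number of children is $c\le n_u$; the program \eqref{eqn:optSketchProblem} then has $O(c^2/\delta^4)=O(n_u^2)$ variables $x_{j,r,j',r'}$ (and only $O(n_u)$ variables $y_{j,r,R}$), and writing down and solving this program by any method fits in $O(n_u^2)$ space. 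At a level-$1$ cell, $\alg_i$ instead solves the $p_u$ min-cost flow instances on the network $G_C(z)$ of Definition~\ref{defn:F}, which has $O(n_u^2)$ edges. Reusing space across the $p_u$ solves, the total is $O(n_u^2)$.

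\textbf{Reduction of the approximation claim.} Fix $\delta=\eps^{O(1)}$ small enough that $\delta\le\eps$. For the root cell $C\in\detpart_\lev$ we have $\psi(A_C)=\psi(B_C)$, so its imbalance is $0$; evaluating $F_C$ at $z=\mathbf 0$ forces every $x_R$ (and $x_{R^*}$) to be $0$, so in Definition~\ref{defn:F} the edges from $s,t$ to net points and from $R^*$ to $\net^\circ$ carry no flow and the instance collapses to the min-cost transportation from $A_C$ to $B_C$ under $\dg$; hence $F_C(\mathbf 0)=\emd_{\dg}(A,B,\psi)$. Lemma~\ref{lem:distortion} then gives, with probability at least $9/10$,
\[
\emd(A,B,\psi)\ \le\ F_C(\mathbf 0)=\emd_{\dg}(A,B,\psi)\ \le\ (1+O(\delta\lev))\,\emd(A,B,\psi)\ \le\ (1+O(\lev\eps))\,\emd(A,B,\psi).
\]
So it suffices to prove that the algorithm's output $g_L(\mathbf 0)$ — the optimum of \eqref{eqn:optSketchProblem} at $z=\mathbf 0$ in cell $C$ — satisfies $g_L(\mathbf 0)\in[(1-O(\lev\eps))F_C(\mathbf 0),\,(1+O(\lev\eps))F_C(\mathbf 0)]$.

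\textbf{Error propagation up the tree.} Let $\theta_\ell$ be such that every sketch $\hat F_{C'}$ output at a level-$\ell$ cell satisfies $|\hat F_{C'}(x)-F_{C'}(x)|\le\theta_\ell F_{C'}(x)$ for all $x$ (recall $\hat F_{C'}$ is a total function on $\R^{\net^\circ(C')}$, and by Theorem~\ref{thm:unitstep}'s convention we only run $\alg_i$ on nonempty cells, so $F_{C'}\ge 1$ there). At level $1$, $F_{C'}$ is computed exactly and then sketched, so $\theta_1=O(\eps)$ by Lemma~\ref{lem:emdSketch}. For $\ell>1$ and a fixed input $x$: since the objective \eqref{eqn:optProblem} has the form $\sum_j F_j(\cdot)+(\text{nonnegative terms})$ and $\hat F_j(w)\in[(1-\theta_{\ell-1})F_j(w),(1+\theta_{\ell-1})F_j(w)]$ for every $w$, replacing each $F_j$ by $\hat F_j$ changes the objective value at \emph{every} feasible point by a factor in $[1-\theta_{\ell-1},1+\theta_{\ell-1}]$ (the extra nonnegative terms only help); minimizing over the common feasible region and using Lemma~\ref{lm:recursive-char} gives $g_\ell(x)\in[(1-\theta_{\ell-1})F_{C'}(x),(1+\theta_{\ell-1})F_{C'}(x)]$. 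Finally $\alg_i$ stores $g_\ell$ at the grid points and answers by rounding up coordinates exactly as in Lemma~\ref{lem:emdSketch}; that argument applies to $g_\ell$ since $g_\ell$ is, up to the factor $1\pm\theta_{\ell-1}$, the Lipschitz function $F_{C'}$ (Lemma~\ref{lm:round}) and $g_\ell\ge(1-\theta_{\ell-1})\ge\tfrac12$, so $|\hat F_{C'}(x)-g_\ell(x)|\le O(\eps)\,g_\ell(x)$. Combining, $\theta_\ell\le\theta_{\ell-1}+O(\eps)(1+\theta_{\ell-1})$, hence $1+\theta_\ell\le(1+O(\eps))^\ell$ and $\theta_{L-1}=O(\lev\eps)$ as long as $\lev\eps=O(1)$ (which is implicit in the statement). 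The root output $g_L(\mathbf 0)$ is \emph{not} re-sketched, so the level-$L$ instance of the argument above gives $g_L(\mathbf 0)\in[(1-\theta_{L-1})F_C(\mathbf 0),(1+\theta_{L-1})F_C(\mathbf 0)]$; chaining with the displayed bound on $F_C(\mathbf 0)$ yields the claimed $1\pm O(\lev\eps)$ approximation of $\emd(A,B,\psi)$ with probability at least $9/10$.

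\textbf{Main obstacle.} The crux is the inductive step: showing that substituting the child sketches $\hat F_j$ for the exact $F_j$ inside the recursive program \eqref{eqn:optProblem}--\eqref{eqn:optSketchProblem} perturbs its optimum only \emph{multiplicatively} — so that absolute errors never accumulate and blow up against a small instance — and that this multiplicative error compounds only as $(1+O(\eps))^{\lev}$ over the $\lev=O(\log_s n)$ levels. This hinges on three structural facts pulling together: the objective being a \emph{sum} of child costs plus nonnegative terms (so a uniform relative perturbation of the summands is a relative perturbation of the whole), the exact recursive identity of Lemma~\ref{lm:recursive-char}, and the two-sided relative-error sketching of Lemma~\ref{lem:emdSketch} — whose correctness (in particular the use of $F\ge1$ and Lipschitz continuity from Lemma~\ref{lm:round}) must be re-checked for the slightly perturbed functions $g_\ell$ that the algorithm actually stores. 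A secondary thing to verify carefully is the imbalance bookkeeping ($\tau_j$ and $z_{R^*}$, via \eqref{eqn:cellBalance}--\eqref{eqn:totalBalance}) so that the recursion is consistently set up level by level, and that at $z=\mathbf 0$ in the root cell $F$ really does coincide with $\emd_{\dg}$.
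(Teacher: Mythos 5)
Your proposal is correct and follows essentially the same route as the paper: observe $F(\mathbf 0)=\emd_{\dg}(A,B,\psi)$ at the root, invoke Lemma~\ref{lem:distortion}, and induct over levels using Lemma~\ref{lm:recursive-char}, Lemma~\ref{lem:emdSketch}, and the non-negativity of the terms in \eqref{eqn:optProblem} so that relative errors compound only as $1+O(\eps)$ per level. You merely make explicit the error recursion $\theta_\ell\le\theta_{\ell-1}+O(\eps)(1+\theta_{\ell-1})$ and the re-application of the rounding argument to the perturbed functions (plus the space/output-size counts), details the paper's two-line induction leaves implicit.
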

\begin{proof}
  Observe that $F(0,\ldots, 0) = \emd_{\rho_g}(A,B,\psi)$. By
  Lemma~\ref{lem:distortion}, it is then enough to prove that for all
  cells $C$ on level $\ell$, and for each $x \in \R^{\net^\circ}$, the
  value computed by \eqref{eqn:optSketchProblem} is a $(1\pm
  O(\ell\eps))$ approximation of $F(x)$. For $\ell = 1$, this is
  immediate because we compute the exact value of $F$. For $\ell>1$
  the claim follows by induction from Lemmas~\ref{lm:recursive-char},
  \ref{lem:emdSketch}, because each term of \eqref{eqn:optProblem} is
  non-negative.
\end{proof}

\subsection{Computationally Efficient Algorithm}
\label{sec:transportation_efficient_cost}

The goal in this section is to show that (a variant of) the unit step
from the previous section can be implemented in polynomial time. Then
the existence of an efficient MPC algorithm for approximating
transportation cost will follow from Theorem~\ref{thm:emdinfo}.

As before, we fix a cell $C$ on level $\ell$ and focus on
approximating $F$ evaluated in the cell. We first observe that the
function $F$ is convex in its parameters. Then we use this fact to
show that a ``convexification'' of the sketch $\hat{F}$ is also an
accurate approximation to $F$. The two lemmas follow.

\begin{lemma}\label{lm:convex}
For any cell $C$ the associated function $F$ is convex.
\end{lemma}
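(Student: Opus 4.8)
The plan is to recognize $F$ as the optimal value of a parametric linear program in which the argument $x$ enters only through the right-hand side, and only affinely; convexity then follows from the textbook fact that the optimal value of a linear program is a convex (indeed piecewise-linear) function of its right-hand side, composed with an affine map.

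First I would rewrite the flow problem defining $F(x)$ over a network that does \emph{not} depend on $x$. The only $x$-dependence in $G_C(x)$ is which of the edges $(s,r)$, $(r,t)$ (for $r \in \net$) are present: $(s,r)$ when $x_r < 0$ and $(r,t)$ when $x_r > 0$. Using the convention $f(u,v) = -f(v,u)$ and the fact that all edge costs are nonnegative functions of $|f|$, I include both edges $(s,r)$ and $(r,t)$, each of cost $\delta\Delta_\ell/2$, for every $r \in \net$, and replace the source/sink constraints of Definition~\ref{defn:F} at a net point $r$ by the single linear constraint $f(r,s) + f(r,t) = x_r$, where $x_{r^*} = \psi(A_C) - \psi(B_C) - \sum_{r \in \net^\circ} x_r$ is itself affine in $x$. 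A routine argument (at an optimum one may assume no flow is routed $s\to r\to t$ through a net point, so that $|f(r,s)|+|f(r,t)| = |x_r|$) shows this relaxed problem has the same optimal value as $F(x)$. Now $F(x)$ is the minimum of $\sum_e c_e |f_e|$ over flows on a graph whose vertex set, edge set, and costs $c_e \ge 0$ are all independent of $x$, subject to flow conservation at each vertex of $A_C \cup B_C \cup \net$, the fixed conditions $f(s,u) = \psi(u)$ and $f(v,t) = \psi(v)$, and the constraints $f(r,s) + f(r,t) = x_r$ — and $x$ appears only in the right-hand side of this last family, affinely. Splitting each $f_e$ into positive and negative parts linearizes $\sum_e c_e |f_e|$, so $F(x) = \phi(b(x))$, where $\phi(b) = \min\{c^\top w : Mw = b,\ w \ge 0\}$ for a fixed matrix $M$, cost vector $c \ge 0$, and affine map $b(\cdot)$.

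It then remains to recall that $\phi$ is convex on its domain $\{b : \phi(b) < \infty\}$: if $w_1, w_2$ are optimal for $b_1, b_2$ and $\lambda \in [0,1]$, then $\lambda w_1 + (1-\lambda) w_2$ is feasible for $\lambda b_1 + (1-\lambda) b_2$, giving $\phi(\lambda b_1 + (1-\lambda) b_2) \le \lambda \phi(b_1) + (1-\lambda)\phi(b_2)$, and this domain is convex for the same reason. By Lemma~\ref{lemma:feasiblity} the flow problem is feasible for every $x \in \R^{\net^\circ}$, and since $c \ge 0$ the optimum is finite, so $b(x)$ always lies in the domain of $\phi$; hence $F = \phi \circ b$ is convex as the composition of a convex function with an affine map. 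The one step requiring genuine care is the reformulation of the middle paragraph — checking that pulling the $s$/$t$-to-$\net$ edges into the $x$-independent part of the instance does not change the optimal value; everything after that is the standard LP-sensitivity argument, and this same flow LP is exactly what Section~\ref{sec:transportation_efficient_cost} will hand to an off-the-shelf solver.
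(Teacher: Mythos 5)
Your proposal is correct and, at its core, is the same argument as the paper's: the paper simply takes the optimal flows $f^x$ and $f^y$ for $F(x)$ and $F(y)$ and observes that $\tfrac{f^x+f^y}{2}$ is feasible for $F\bigl(\tfrac{x+y}{2}\bigr)$ with cost at most $\tfrac{F(x)+F(y)}{2}$, which is exactly the convex-combination-of-feasible-solutions step hiding inside your ``LP value is convex in the right-hand side'' fact. Your extra reformulation (keeping both edges $(s,r),(r,t)$ for every $r\in\net$ and imposing the single affine constraint $f(r,s)+f(r,t)=x_r$) is a more careful packaging that explicitly handles the fact that the edge set of $G_C(x)$ depends on the signs of $x$ --- a wrinkle the paper's two-line proof glosses over --- but the underlying mechanism is identical.
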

\begin{proof}
  Consider any $x,y\in \R^\numargs$. Consider $F(x)$ and suppose $f^x$ is the
  minimum cost flow for $F(x)$. Similarly, suppose $f^y$ is the
  minimum cost flow for $F(y)$. Then, note that $\frac{f^x+f^y}{2}$ is
  a feasible solution to $F(\frac{x+y}{2})$, and achieves a value of
  $\frac{F(x)+F(y)}{2}$ (by linearity). Since there could be other,
  lower cost solutions to $F(\frac{x+y}{2})$, we conclude that
  $F(\frac{x+y}{2})\le \frac{F(x)+F(y)}{2}$.
\end{proof}

\begin{lemma}\label{lm:convex-apx}
  Let $\eps'$ be as in Lemma~\ref{lem:emdSketch}. Let $X$ be the
  matrix whose columns are all points $x \in [-n, n]^\numargs$ such that for
  all $i \in [d]$, $x_i = \pm (1+\eps')^{j_i}$ for some integer $j_i
  \in [-\log_{1+\eps'}\eps^{-O(1)}d\Delta_\lev,\log_{1+\eps'}n]$. Let
  $f$ be the vector defined by $f_i = F(x^i)$, where $x^i$ is the
  $i$-th column of $X$. Then the function $\tilde{F}$ defined by
  \begin{equation*}
    \tilde{F}(x) = \min\{\sum_i \alpha_i f_i: x = X \alpha,\ \alpha
    \geq 0,\ \sum_i \alpha_i = 1\},
  \end{equation*}
  satisfies
  \begin{equation*}
    |F(x) - \tilde{F}(x)| \leq \eps F(x). 
  \end{equation*}
\end{lemma}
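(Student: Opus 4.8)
The plan is to show $F(x) \le \tilde F(x) \le (1+\eps)F(x)$, which immediately gives the claimed bound. The lower bound is the easy direction and uses only convexity of $F$ (Lemma~\ref{lm:convex}): if $\alpha \ge 0$, $\sum_i \alpha_i = 1$, and $X\alpha = x$, then Jensen's inequality applied to the convex function $F$ gives $\sum_i \alpha_i f_i = \sum_i \alpha_i F(x^i) \ge F\big(\sum_i \alpha_i x^i\big) = F(x)$; minimizing over all feasible $\alpha$ yields $\tilde F(x) \ge F(x)$. (Feasibility of at least one $\alpha$, so that $\tilde F(x) < \infty$, will follow from the construction in the next paragraph.)

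For the upper bound the idea is to exhibit a convex combination of grid points (columns of $X$) equal to $x$ whose $F$-values are all within a $(1+\eps)$ factor of $F(x)$, and then invoke Lemma~\ref{lm:round}. Assume first that every coordinate of $x$ lies, in absolute value, strictly inside the range spanned by the grid magnitudes $(1+\eps')^{j}$. For each coordinate $i$ let $j_i$ be the integer with $(1+\eps')^{j_i} \le |x_i| < (1+\eps')^{j_i+1}$, put $a_i^{-} = \operatorname{sign}(x_i)(1+\eps')^{j_i}$ and $a_i^{+} = \operatorname{sign}(x_i)(1+\eps')^{j_i+1}$, and write $x_i = \mu_i a_i^{-} + (1-\mu_i)a_i^{+}$ for the unique $\mu_i \in [0,1]$. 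Consider the product distribution that, independently in each coordinate $i$, outputs $a_i^{-}$ with probability $\mu_i$ and $a_i^{+}$ with probability $1-\mu_i$; its support is a set of $2^{\numargs}$ grid points (each a column of $X$ by the range assumption), its weights form a convex combination, and its coordinate-$i$ marginal has mean exactly $x_i$, so the weighted average of the support points equals $x$. Every support point $x'$ satisfies $x'_i/x_i \in [(1+\eps')^{-1}, 1+\eps'] \subseteq [1-\eps', 1+\eps']$ in each coordinate, so Lemma~\ref{lm:round} gives $F(x') \le (1+\eps)F(x)$; hence $\tilde F(x)$ is at most the weighted average of these $F(x')$, which is at most $(1+\eps)F(x)$.

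It remains to dispose of coordinates outside the grid range. A coordinate with $|x_i|$ below the smallest grid magnitude $\eps^{O(1)}/(\numargs\Delta_\lev)$ (in particular a zero coordinate) is first rounded up to that magnitude; by the Lipschitz estimate established inside the proof of Lemma~\ref{lm:round} ($|F(x + \beta e_r) - F(x)| = O(\beta\Delta_\ell)$), each such rounding perturbs $F$ by $O(\eps^{O(1)}/\numargs)$, and since there are at most $\numargs$ coordinates and $F(x) \ge 1$ for a nonempty cell (exactly as argued in Lemma~\ref{lem:emdSketch}), the total additive error stays below $\eps F(x)$; coordinates of magnitude larger than $n$ do not occur in our applications since the flows involved are bounded by $n$, and could anyway be accommodated by extending the grid. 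Combining the two inequalities completes the proof. I expect the middle paragraph to be the only genuine obstacle: one must verify that the tensor-product rounding stays within the multiplicative window demanded by Lemma~\ref{lm:round} and uses only grid points that are genuinely columns of $X$; the near-zero-coordinate bookkeeping in the last step is routine once the template of Lemma~\ref{lem:emdSketch} is in hand.
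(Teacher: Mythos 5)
Your proof is correct and follows essentially the same route as the paper's: the lower bound via convexity of $F$ (Lemma~\ref{lm:convex}), and the upper bound by writing $x$ as a convex combination of the $2^{\numargs}$ coordinate-wise roundings to grid points and applying Lemma~\ref{lm:round} to each. Your explicit product-distribution construction and your handling of coordinates below the smallest grid magnitude (via the Lipschitz bound and $F(x)\geq 1$, as in Lemma~\ref{lem:emdSketch}) are just more detailed versions of steps the paper states tersely or leaves implicit.
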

\begin{proof}
  We first use the convexity of $F$ to prove that $\tilde{F}(x) \geq
  F(x)$. For this, it is enough to observe that for any $\alpha \geq
  0$, such that $\sum_i \alpha_i = 1$, we have $F(X\alpha) \leq
  \sum_i{\alpha_i F(x^i)}$ by Lemma~\ref{lm:convex}, and in particular this holds for the
  minimizer $\alpha$ that gives $\tilde{F}(x)$. 

  Next we show that $\tilde{F}(x) \leq (1+\eps)F(x)$. Fix some $x$,
  and let, for each $j \in [d]$, $k_j$ be $k_j = \lfloor
  \log_{1+\eps'}|x_j| \rfloor$, and $s_j$ be $1$ if $x_j > 0$, and
  $-1$ otherwise. Then $x$ is in the convex hull of the points
  $S(x) = \{(s_1(1+\eps')^{k_1 + b_1}, \ldots, s_d(1+\eps')^{k_d + b_d}):
  \forall j: b_j \in \{0,1\}\}$. Therefore, $\tilde{F}(x) \leq
  \max\{F(x'): x' \in S(x)\}$. But, by Lemma~\ref{lm:round},
  $\max\{F(x'): x' \in S(x)\} \leq (1+\eps)F(x)$, and the claim
  follows. 
\end{proof}

The crucial property of $\tilde{F}$ is that it is specified as a
solution to a linear program, and as such it's easy to ``embed''
inside a larger linear program. We do this for the program
\eqref{eqn:optProblem}-\eqref{eqn:totalBalance} next.

Let $X$ be as in Lemma~\ref{lm:convex-apx} and $D$ be the number of
columns of $X$. Observe that $D = (\log n)^{\eps^{-O(1)}}$. Let, for
each $j \in J$, $f^j$ be a vector of dimension $D$, such that  $f^j_i
= F_j(x^i)$, for $x^i$ the $i$-th column of $X$. We consider the
linear program whose objective is to minimize:
\begin{align}
\sum_{j} \langle f^j, \alpha^j \rangle+&\sum_{j,j'\in J,r\in
  \net(C_j),r'\in \net(C_{j'})} |x_{j,r,j',r'}|\cdot e_{j,r,j',r'}/2\notag\\
  + &\sum_{j\in J,r\in \net(C_j), R\in \net} |y_{j,r,R}|\cdot
  e_{j,r,R}%\notag\\
 + \sum_{R \in \net^\circ}{(z_R\Delta_\ell/2 + \eta_R\dg(R, R^*))}
  \label{eqn:optConvProblem}
\end{align}
subject to constraints
\eqref{eqn:cellBalance}-\eqref{eqn:totalBalance} as well as the
additional constraints:
\begin{eqnarray}
  x_{j,\net^\circ(C_j)}+y_{j,\net^\circ(C_j)} = X\alpha^j
  &&\forall j \in J\label{eq:convexcomb}
\end{eqnarray}

\paragraph{Unit Step (efficient).} The (efficient) unit step
$\alg_{e}$, when executed in a cell $C$ (with net $\net$ inside the
cell, and restricted net $\net^\circ = \net \setminus \{R^*\}$) on
level $\ell > 1$, takes as input the vectors $f^j$ for all child cells
$C_j$, as well as the imbalances $\tau_j$. It solves
\eqref{eqn:optConvProblem} on all inputs $z \in \R^{\net^\circ}$ for
which $z_i = \pm (1+\eps')^{j_i}$ where $j_i$ is an integer in $
[-\log_{1+\eps'}\eps^{-O(1)}\Delta_\lev,\log_{1+\eps'}n]$ and $\eps'$
is as in Lemma~\ref{lem:emdSketch} and outputs the computed objective
value for each input to form a vector $f$. It also outputs the cell
imbalance $\psi(A_C) - \psi(B_C)$. At level $\ell = 1$, the function $F$ is
evaluated exactly.

\begin{theorem}\label{thm:emdefficient}
  The unit step $\alg_{eu}$ has polynomial space and time complexity
  (i.e. $s_u(n_u), t_u(n_u) = n_u^{O(1)}$) and output size $p_u =
  (\log n)^{\eps^{-O(1)}}$. Moreover, the solve-and-sketch algorithm
  with unit step $\alg_i$ that outputs \eqref{eqn:optConvProblem} with
  $z = (0, \ldots, 0)$ in the unique cell $C \in \detpart_\lev$
  containing $A \cup B$, computes an $1 \pm O(\lev\eps)$ approximation
  to $\emd(A,B,\psi)$.
\end{theorem}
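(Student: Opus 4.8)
The plan is to prove the two assertions separately: that the efficient unit step $\alg_e$ obeys the stated space, time and output bounds, and that running the Solve-and-Sketch algorithm with $\alg_e$ and reading off \eqref{eqn:optConvProblem} at $z=0$ in the top cell gives a $(1\pm O(\lev\eps))$ approximation to $\emd(A,B,\psi)$. The conceptual heart is the observation that, for a fixed value of the right-hand side of \eqref{eq:convexcomb}, minimizing $\langle f^j,\alpha^j\rangle$ over $\alpha^j\ge 0$ with $\sum_i\alpha^j_i=1$ and $X\alpha^j=x_{j,\net^\circ(C_j)}+y_{j,\net^\circ(C_j)}$ computes exactly the convexification of Lemma~\ref{lm:convex-apx} built from the \emph{input} vector $f^j$ (the value of $\tilde F_j$ where the defining data are the received approximate values, not the exact values $F_j(x^i)$). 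Hence the optimum of \eqref{eqn:optConvProblem} equals the optimum of \eqref{eqn:optProblem} with each $F_j$ replaced by this $\tilde F_j$, so that up to the extra distortion of convexification, $\alg_e$ computes the same quantity as $\alg_i$, and the analysis parallels that of Theorem~\ref{thm:emdinfo}.

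For correctness I would first establish a robust version of the recursive characterization: if functions $g_j$ satisfy $(1-\eta)F_j\le g_j\le(1+\eta')F_j$ pointwise, then the minimum of \eqref{eqn:optProblem} with $F_j$ replaced by $g_j$ lies in $[(1-\eta)F(z),(1+\eta')F(z)]$. Both directions reuse the constructions in the proof of Lemma~\ref{lm:recursive-char}: for the upper bound, extract flows $f^j$ and LP variables from an optimal flow for $F(z)$ (Claim~II there) and bound $g_j(\cdot)\le(1+\eta')F_j(\cdot)\le(1+\eta')\,\mathrm{cost}(f^j)$, using non-negativity of all remaining terms of \eqref{eqn:optProblem}; for the lower bound, take an optimal solution of the perturbed program and substitute an \emph{actual} optimal flow for each $F_j(\cdot)$ (legitimate since $g_j\ge(1-\eta)F_j$ gives $\mathrm{cost}(f^j)=F_j(\cdot)\le(1-\eta)^{-1}g_j(\cdot)$), then apply the Claim~I construction to get a feasible flow for $F(z)$ of cost at most $(1-\eta)^{-1}$ times the perturbed optimum. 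Next, convexification is itself robust: if $f^j_i\in[(1-\eta)F_j(x^i),(1+\eta')F_j(x^i)]$ then every convex combination $\langle f^j,\alpha\rangle$ lies between $(1-\eta)$ and $(1+\eta')$ times $\sum_i\alpha_i F_j(x^i)$, so by Lemma~\ref{lm:convex-apx} (which also uses the convexity of $F_j$, Lemma~\ref{lm:convex}) the built $\tilde F_j$ satisfies $(1-\eta)F_j\le\tilde F_j\le(1+\eta')(1+\eps)F_j$; coordinates of magnitude below $\eps\delta^2/\Delta_\ell$ are rounded up to the smallest grid magnitude exactly as in the proof of Lemma~\ref{lem:emdSketch} (total distortion at most $\eps$), and these rounded points are added as columns of $X$ so \eqref{eq:convexcomb} is always feasible. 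Inducting on the level — level $1$ is evaluated exactly and each level multiplies the relative error envelope by $1+O(\eps)$ — the value $\alg_e$ outputs for $z=0$ in the top cell $C\in\detpart_\lev$ is $(1\pm O(\lev\eps))F(0)=(1\pm O(\lev\eps))\emd_{\dg}(A,B,\psi)$, which is a $(1\pm O(\lev\eps))$ approximation of $\emd(A,B,\psi)$ by Lemma~\ref{lem:distortion} (recall $\delta=\eps^{O(1)}$).

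For the complexity, the output of $\alg_e$ is the length-$D$ vector $f$ plus the scalar imbalance $\psi(A_C)-\psi(B_C)$, where $D$ is the number of columns of $X$. With $\eps'=\eps^{O(1)}$, $\numargs=1/\delta^2-1=\eps^{-O(1)}$ and $\Delta=n^{O(1)}$ there are $O(\eps^{-O(1)}\log n)$ admissible magnitudes per coordinate, so $D=(\log n)^{\eps^{-O(1)}}=p_u$ as claimed (and $D\le\sqrt s$ in the stated range of $s$, meeting the framework's output-size constraint). For each of the $\le D$ inputs $z$, $\alg_e$ solves \eqref{eqn:optConvProblem}; linearizing the absolute-value terms $|x_{j,r,j',r'}|$, $|y_{j,r,R}|$, $|\eta_R|$ in the standard way (the $z_R\Delta_\ell/2$ terms are constants, $z$ being fixed) turns this into a linear program with $\poly(c,\delta^{-1},D)$ variables and constraints (there are $|J|^2\delta^{-4}$ variables $x_{j,r,j',r'}$, $|J|\delta^{-4}$ variables $y_{j,r,R}$, $|J|D$ variables $\alpha^j$, and comparably many constraints \eqref{eqn:cellBalance}--\eqref{eqn:totalBalance},\eqref{eq:convexcomb}), which is $n_u^{O(1)}$ since $n_u=\Theta(c\cdot p_u)\ge c$ and $n_u\ge D$. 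Solving each such LP with any polynomial-time LP algorithm, and evaluating $F$ directly at level $1$ (a transportation/min-cost-flow LP of size $n_u^{O(1)}$), gives $s_u(n_u),t_u(n_u)=n_u^{O(1)}$. Feeding these into Theorem~\ref{thm:unitstep} — whose hypotheses $s_u(p_u(s))\le s^{1/3}$ and $p_u(s)\le s^{1/3}$ hold for $s\ge(\log n)^{(\eps^{-1}\log_s n)^{\Omega(1)}}$ — together with the approximation guarantee above (run with $\eps$ rescaled to $\Theta(\eps/\lev)$) yields the MPC algorithm of Theorem~\ref{thm:emdGeneral}.

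The main obstacle is the approximation analysis: one must check that the recursive characterization of Lemma~\ref{lm:recursive-char} survives replacing the exact child cost functions $F_j$ by the convexified sketches $\tilde F_j$, which are no longer flow costs, and that the three sources of multiplicative error — per-level convexification, rounding of near-zero coordinates, and the grid distance $\dg$ — accumulate only additively across the $O(\log_s n)$ levels. What makes this go through is the non-negativity of every term of \eqref{eqn:optProblem} other than the $F_j$-terms, the convexity of $F_j$ (Lemma~\ref{lm:convex}), and the two-sided sandwich $(1-\eta)F_j\le\tilde F_j\le(1+\eta')(1+\eps)F_j$ — in particular $\tilde F_j\ge(1-\eta)F_j$, which is precisely what lets one reinstate a genuine optimal flow for $F_j$ in the lower-bound direction.
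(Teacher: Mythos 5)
Your proposal is correct and follows essentially the same route as the paper, whose proof simply declares the approximation analysis ``analogous to Theorem~\ref{thm:emdinfo} but using Lemma~\ref{lm:convex-apx} instead of Lemma~\ref{lem:emdSketch}'' and notes that \eqref{eqn:optConvProblem} is a linear program solvable in polynomial time. You merely make explicit the steps the paper leaves implicit --- that the embedded $\alpha^j$-minimization realizes the convexification of the (approximate) child values, the two-sided robust version of Lemma~\ref{lm:recursive-char}, and the per-level error accumulation --- which is a faithful elaboration rather than a different argument.
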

\begin{proof}
  The proof of approximation is analogous to the proof of
  Theorem~\ref{thm:emdinfo}, but using Lemma~\ref{lm:convex-apx}
  rather than Lemma~\ref{lem:emdSketch}. The space and time complexity
  claims follow since\eqref{eqn:optConvProblem} is a equivalent to a
  linear optimization problem and can be solved in polynomial time and space.
\end{proof}

We now deduce Theorem~\ref{thm:emdGeneral}, from
Theorem~\ref{thm:emdefficient} (with $\eps'=O(\eps/L)=O(\eps/\log_s
n)$) and Theorem~\ref{thm:unitstep}.

\subsection{Consequences}

Theorem~\ref{thm:emdefficient} has consequences beyond the MPC model:
it implies a near linear time algorithm and an algorithm in the
streaming with sorting routine model for EMD and the transportation
problem. 

The first theorem follows directly from the simulation of MPC
algorithms by algorithms in the streaming with sorting routine model. 
\begin{theorem}\label{thm:emdStreamSort}
  Let $\eps > 0$, and $s \geq (\log n)^{(\eps^{-1} \log_s
    n)^{O(1)}}$. Then there a space $s$ algorithm that runs in
  $(\log_s n)^{O(1)}$ rounds in the streaming with sorting routine
  model, and, on input sets $A, B \subseteq \R^2$, $|A| + |B| = n$,
  and demand function $\psi: A \cup B \rightarrow \N$ such that
  $\sum_{u\in A} \psi(u)=\sum_{b\in B} \psi(v)$ outputs a $1 + \eps$
  approximation to $\emd(A,B,\psi)$. 
\end{theorem}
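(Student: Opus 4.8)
The plan is to derive Theorem~\ref{thm:emdStreamSort} from Theorem~\ref{thm:emdefficient} (equivalently Theorem~\ref{thm:emdGeneral}) by invoking the generic simulation of MPC algorithms in the streaming-with-sorting model sketched in Section~\ref{sec:model}. By Theorem~\ref{thm:emdefficient} together with Theorem~\ref{thm:unitstep}, for every $s$ in the stated range there is an MPC algorithm with space parameter $s$, running in $R = (\log_s n)^{O(1)}$ rounds with local running time polynomial in $s$, that on input $A, B, \psi$ outputs a $1+\eps$ approximation to $\emd(A,B,\psi)$ (after composing with the bounded-aspect-ratio reduction of Section~\ref{sec:aspectRatio}, which is itself an $O(1)$-round MPC algorithm with the same space budget). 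It therefore suffices to show that any such MPC algorithm can be run in the streaming-with-sorting model with $O(s)$ space and $O(R)$ passes; rescaling the space parameter by the implied constant then gives the theorem as stated, and choosing the internal approximation parameter $\eps' = O(\eps/L) = O(\eps/\log_s n)$ inside Theorem~\ref{thm:emdefficient} as before yields the $1+\eps$ guarantee.

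For the simulation itself, I would maintain the global state of all machines as a single stream in which every data item carries a tag equal to the index $j \in [m]$ of the machine currently holding it, where $m = O(n/s)$ (so a tag fits in $O(1)$ words). One MPC round is simulated in $O(1)$ passes: first sort the stream by the machine-index tag, so that the at most $s$ items held by each machine form a contiguous block; then, in a single linear scan, for each block run the local algorithm of the unit step ($\alg_e$ at levels $\ell > 1$, the exact evaluation of $F$ at level $1$) on that block using $O(s)$ working memory, and emit its outgoing messages, each annotated with its destination-machine index; finally, a sort by destination index re-groups items as the input to the next round. Each sort and each scan is one pass in the model of~\cite{ADRR04-streamSort}, so $R$ rounds cost $O(R) = (\log_s n)^{O(1)}$ passes and $O(s)$ space throughout. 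The initial tagged stream is produced in one pass by assigning input points to machines respecting the $\le s$-per-machine bound.

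There is no substantive new obstacle here — the content is entirely in Theorems~\ref{thm:emdefficient} and~\ref{thm:unitstep} — so the only points that need care are bookkeeping: (i) the routing is well-formed, i.e.\ the total size of messages emitted for any single destination in a round is at most $s$, which is exactly the MPC communication bound and is preserved by the above block-wise emission; (ii) the local step can be carried out with working space polynomial in $s$ — this holds because the efficient unit step solves the linear program \eqref{eqn:optConvProblem}, whose size is $(\log n)^{\eps^{-O(1)}} = \poly(s)$ for $s$ in the stated range, using an off-the-shelf polynomial-space LP solver (the polynomial overhead is absorbed by the constant rescaling of $s$); and (iii) the output of the topmost cell, namely the value of \eqref{eqn:optConvProblem} with $z = (0,\ldots,0)$ in the unique cell $C \in \detpart_\lev$, is read off in the final pass. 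Combining these observations yields a $(\log_s n)^{O(1)}$-pass, $s$-space streaming-with-sorting algorithm computing a $1+\eps$ approximation to $\emd(A,B,\psi)$, as claimed.
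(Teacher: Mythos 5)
Your proposal is correct and follows essentially the same route as the paper: Theorem~\ref{thm:emdStreamSort} is obtained by combining the MPC result (Theorems~\ref{thm:emdefficient} and~\ref{thm:unitstep}) with the generic simulation of an $s$-space, $R$-round MPC algorithm by an $O(s)$-space, $O(R)$-pass streaming-with-sorting algorithm described in Section~\ref{sec:model}. The only difference is that you spell out the tag-sort-scan simulation and the bookkeeping explicitly, whereas the paper simply cites the simulation.
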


The existence of a nearly-linear time algorithm for EMD and the
transportation problem also follows easily from
Theorem~\ref{thm:emdefficient}. We stated this result as
Theorem~\ref{thm:nearLinear}, and we prove it next. 
\begin{proof}[Proof of Theorem~\ref{thm:nearLinear}]
  The algorithm simulates the Solve-And-Sketch algorithm sequentially,
  by applying the unit step $\alg_{eu}$ with approximation parameter
  $\eps' = \eps/\lev$ to each (non-empty) cell at
  level $\ell=1,\ldots,\lev$ of the partition, and finally outputs
  \eqref{eqn:optConvProblem} with $z = (0, \ldots, 0)$. We choose the branching
  factor $c$ of the hierarchical partition to be $(\log
  n)^{\omega(1)}$, so that the size of the output of $\alg_{eu}$ is
  $n^{o(1)}$, and the height $\lev$ of the partition is $\lev = o(\log
  n)$. The running time bound follows.
\end{proof}

\section{Parallel Implementation of the Solve-And-Sketch Framework}
\label{sec:implementation}
\newcommand{\netsz}{p_u}

In this section we describe how to implement in the MPC model algorithms that fit the
Solve-And-Sketch framework. In particular, we prove
Theorem \ref{thm:unitstep} from Section \ref{sec:preliminaries}. In
fact it will follow from a more general Theorem \ref{thm:generalSAS}
below. Both theorems assume the existence of a unit step algorithm
$\alg_u$.

Consider a hierarchical partition $\detpart = (\detpart_0, \ldots,
\detpart_\lev)$ of an input pointset $S$, where $\detpart_{\ell-1}$ is
a subdivision of $\detpart_{\ell}$ for each $\ell$. Recall that the
children of a cell $C \in \detpart_\ell$ are subcells $C' \in
\detpart_{\ell-1}: C' \subseteq C$, and the degree of $\detpart$ is
the maximum number of children any cell $C$ has. (For simplicity,
think of these parameters as $L=O(1)$ and $c=s^{1/3}$.) In order to
implement our algorithms, we require that each point $u$ in the input
is labeled by the sequence $(\mycell_0(u), \ldots, \mycell_\lev(u))$,
where $\mycell_\ell(u) \in \detpart_\ell$ is the unique cell on level
$\ell$ the partition that contains $u$. When the input $S$ is
represented in this way, we say that it is \emph{labeled} by
$\detpart$. In this section we discuss in detail how to label subsets
of Euclidean space; the corresponding construction for arbitrary
metric spaces of bounded doubling dimension is discussed in
Section~\ref{sec:doubling}.

We assume that we have a \emph{total ordering} on all cells with the
following property:
\begin{itemize}
\item (Hierarchical property.) For any $\ell \in \{1, \ldots, \lev\}$,
  and any two cells $C_1, C_2 \in \detpart_\ell$, $C_1 < C_2$ implies
  that for any child $C'_1$ of $C_1$ and any child $C'_2$ of $C'_2$,
  $C'_1 < C'_2$. 
\junk{\item  For two cells $C_1$ and $C_2$, if $C_2$
  is a proper subcell of $C_1$ (i.e. $C_2 \subsetneq C_1$), then $C_1
  < C_2$.
\item For two sibling cells $C_1$ and $C_2$ (i.e. children of the same
  cell $C$) such that $C_1 <
  C_2$, for all subcells $C_3$ of $C_1$, it holds $C_3 < C_2$.}
\end{itemize}
We call an ordering with the above property a \emph{good ordering}.

% Recall that we assume that for each cell $C$, the set of its children is numbered by a function $h$. Such an ordering then induces a total order on all cells of the partition,
% which we use to balance the load between different parallel
% processors.
% 
% \begin{definition}
%   The \emph{induced order} on cells of an $(a,b,c)$-partition is defined by the relation:
%   $\mycell \leq \mycell'$ for $\mycell \in \detpart_\ell$ and
%   $\mycell' \in \detpart_{\ell'}$ if and only
%   $(h(\mycell_{\lev-1}(\mycell)) , \ldots, h(\mycell_\ell(\mycell)))$
%   precedes $(h(\mycell_{\lev-1}(\mycell')) , \ldots,
%   h(\mycell_{\ell'}(\mycell')))$ in lexicographic order.
% \end{definition}

In each round of the algorithm, we sort all cells which have not yet
been processed and we assign an interval of cells to a machine. Using
the bound $\lev$ on the number of levels of the partition and the
degree bound $c$, we show that after each machine recursively applies
the unit step to each of its assigned cells, the output to the next
round of computation is significantly smaller than the input to the
current round. This enables bounding the total number of rounds by
essentially $(\log_s n)^{O(1)}$ (for good choices of $\lev$ and $c$).

%% \begin{comment}
%% The following definitions capture the additional structure we require
%% of the hierarchical partition for the implementation of our algorithm.

%% \begin{definition}
%%   A \emph{numbered partition} $(\detpart, h)$ is an $(a,b,c)$-partition $\detpart = (\detpart_0, \ldots,  \detpart_\lev)$ augmented with a numbering function $h$ that maps each
%%   cell of $\detpart$ to an integer in $\{1, \ldots, c\}$ with the
%%   condition that for each cell $\mycell$, the $c$ children of
%%   $\mycell$ are mapped to distinct integers. 
%% \end{definition}

%% Next we use the numbering of cells to induce an order on all cells of
%% $\detpart$. 
%% \end{comment}

% The next definition specifies the input format in which our algorithm
% takes a dataset. 
% 
% \begin{definition}
%   A point set $S$ is \emph{labeled} by an $(a,b,c)$-partition
%   $\detpart = (\detpart_0,\ldots, \detpart_\lev)$ if every point $u
%   \in S$ is labeled by a sequence $(\langle \mycell_0(u),
%   h(\mycell_0(u))\rangle, \ldots, \langle \mycell_{\lev-1}(u),
%   h(\mycell_{\lev-1}(u))\rangle)$. 
% \end{definition}
% 
% Notice that as long as a machine word can hold $\Omega(\log n)$
% bits, the label of each point in a labeled set can be stored in $O(L)$
% words. We will be interested in the regime where $L =
% O(1)$. Furthermore, notice that given a labeled point $u$ in a cell $\mycell$
% and a labeled point $u'$ in a cell $\mycell'$, we can compute whether
% $\mycell \leq \mycell'$ in time $O(L)$. 

We give the implementation of the Solve-And-Sketch algorithm in the
MPC model as Algorithm~\ref{alg:load-balancing}. 

\begin{algorithm}
  \caption{Implementation of algorithms in the Solve-And-Sketch framework}
  \label{alg:load-balancing}
  \SetKwInOut{Input}{input}\SetKwInOut{Output}{output} \Input{A set
    $S$, labeled by a hierarchical partition $\detpart = (\detpart_0,
    \ldots, \detpart_\lev)$ of degree $c$ such that $\detpart_0$ is a
    partition into
    singletons. %A metric space $(S, \rho)$ labeled by $\detpart$.
  }

  %Let $\netsz$ be the maximum size of the output of a unit step applied to any cell of $\detpart$. 
  
  \For{$r = 1, \ldots R$\nllabel{step:load-balancing:main-loop}}{

%     If $r = 1$, let $\mathcal{\mycell}_r = \detpart_0$.\\
% 
    Let $\mathcal{\mycell}_r$ be the be the set of non-empty cells
    $\mycell$ of $\detpart$ such that\linebreak
    \hbox{\quad\textbf{(1)}} $\mycell \in P_0$ or the unit step $\alg_u$ has already been applied to $\mycell$, and\linebreak
    \hbox{\quad\textbf{(2)}} the unit step has not been applied to the parent of  $\mycell$.\\
    Sort $\mathcal{\mycell}_r$ according to the induced order. Let the
    sorted order be $\mycell_1, \ldots,
    \mycell_{n_r}$.\\

    Let $p_u(C_i)$ be the output size of cell $C_i$. Compute
    $h_i=\sum_{j\le i} p_u(C_j)$ for all $i \leq n_r$ using a prefix sum algorithm.\\

    Consider some cell $i$, and let $j = \lceil \frac{h_i}{s} \rceil$.
    Machine $j$ receives the output of the unit step
    that has been applied in round $r-1$ to $\mycell_i$.\\

    \ForEach{\rm machine $j$}{ 
      \For{$\ell = 1, \ldots, \lev$}{ 
        \While{\rm there exists a cell $\mycell \in \detpart_\ell$
          such that $\mycell \subseteq \bigcup_{i: (j-1)s\le h_i< js}{ C_i}$}{ 

          Apply the unit step $\alg_u$ to $\mycell$, where the inputs are:\linebreak
 \hbox{\quad\bf for $\ell>1$:} the outputs of the unit step for children of $\mycell$\linebreak
 \hbox{\quad\bf for $\ell=1$:} the cells that are children of $\mycell$
\\
          (If the output of $\alg_u$ is larger than its input size, the algorithm instead just outputs the input; appropriately marked to be ``lazy evaluated''.)
        }
      }
    }
  }
\end{algorithm}

For two cells $\mycell_1
  \in \detpart_{\ell_1}$, $\mycell_2\in \detpart_{\ell_2}$, let $[\mycell_1, \mycell_2] = \{\mycell: \mycell_1 \leq \mycell
  \leq \mycell_2\}$. We call a cell $C \in [C_1, C_2]$ a
  \emph{boundary} cell if there exists a sibling $C'$ of $C$ such that
  $C' \not \in [C_1, C_2]$. 

The following lemma bounds the number of boundary cells for any
interval $[C_1, C_2]$, and is key to the analysis of
Algorithm~\ref{alg:load-balancing}. 

\begin{lemma}\label{lm:nodes-interval}
  For a partition  $\detpart = (\detpart_0, \ldots, \detpart_\lev)$
  and a range $[C_1, C_2]$, the number of boundary cells is at most $2(c-1)\lev$.
\end{lemma}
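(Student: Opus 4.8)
The plan is to bound the boundary cells level by level. Recall a cell $C \in [C_1, C_2]$ is a boundary cell if it has a sibling $C'$ with $C' \notin [C_1, C_2]$. First I would observe that, by the hierarchical property of the good ordering, the cells in $[C_1, C_2]$ at a fixed level $\ell$ themselves form a contiguous interval of level-$\ell$ cells (this is essentially immediate: if $A < B < D$ are all at level $\ell$ and $A, D \in [C_1, C_2]$, then since ancestors are ordered consistently with descendants, $B$ lies between $C_1$ and $C_2$ in the global order as well). So fix level $\ell$ and let $D_1 \le \dots \le D_k$ be the level-$\ell$ cells lying in $[C_1, C_2]$.

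**The key counting step.**

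The next step is to argue that among $D_1, \dots, D_k$, only the cells sharing a parent with $D_1$ or with $D_k$ can possibly be boundary cells. Indeed, suppose $D_i$ has parent $Q$ and $Q \ne \mathrm{parent}(D_1)$ and $Q \ne \mathrm{parent}(D_k)$. Then every child of $Q$ lies strictly between $D_1$ and $D_k$ in the level-$\ell$ order (since $Q$'s position among level-$(\ell-1)$ cells is strictly between the parents of $D_1$ and $D_k$, and the hierarchical property propagates this to all children), hence every child of $Q$ lies in $[C_1, C_2]$, so $D_i$ is not a boundary cell. Therefore all boundary cells at level $\ell$ are children of $\mathrm{parent}(D_1)$ or of $\mathrm{parent}(D_k)$. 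Each such parent has at most $c$ children; moreover, at least one child of $\mathrm{parent}(D_1)$ — namely the ancestor of $C_1$ (or $C_1$ itself) — need not be counted as contributing the "missing sibling", and similarly for $\mathrm{parent}(D_k)$. A careful accounting shows at most $2(c-1)$ boundary cells per level: from the parent of $D_1$, at most $c-1$ of its children are in $[C_1, C_2]$ while being boundary (the remaining ones, the ``left'' siblings, are outside), and symmetrically $c-1$ from the parent of $D_k$; if $D_1$ and $D_k$ share a parent the bound is even smaller.

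**Finishing and the main obstacle.**

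Summing over all levels $\ell \in \{1, \dots, \lev\}$ gives at most $2(c-1)\lev$ boundary cells, which is the claim. (Level $0$ contributes nothing new, or one folds it in; either way the stated bound holds.) The main obstacle I anticipate is not the counting itself but pinning down the edge cases cleanly: precisely which children of $\mathrm{parent}(D_1)$ are inside versus outside $[C_1, C_2]$, and handling the degenerate situations where $[C_1, C_2]$ contains no level-$\ell$ cell, or where $C_1$ or $C_2$ is itself at level $\ell$ or is an ancestor/descendant configuration. The hierarchical property of the ordering is exactly the tool that makes all of these go through, since it guarantees that "interval of cells" and "interval of subtrees" are compatible notions; the rest is bookkeeping to squeeze the constant down to $2(c-1)$ rather than, say, $2c$.
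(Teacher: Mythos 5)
Your proof is correct and follows essentially the same route as the paper's: a per-level count showing that any boundary cell at level $\ell$ must share a parent with the first or last level-$\ell$ cell of the interval (the paper phrases this as a contradiction with a hypothetical "middle" parent, whose children would all lie strictly inside $[C_1,C_2]$ by the hierarchical property), giving at most $2(c-1)$ boundary cells per level and $2(c-1)\lev$ overall. Your explicit observation that each extreme parent contributes at most $c-1$ (not $c$) boundary children, since a boundary cell forces a sibling outside the interval, is exactly the accounting the paper's argument implicitly relies on, so aside from an inverted level-indexing convention the two proofs coincide.
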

\begin{proof}
  It suffices to show that there are at most $2c - 2$ boundary cells
  in each level $\ell \in \{0, \ldots, \lev-1\}$. Indeed, suppose for
  the sake of contradiction that there are $n_\ell > 2c-2$ boundary
  cells in $P_\ell$: call the set of such sets $\mathcal{C}$, and
  number them as $C_1 \leq C^\ell_1 < C^\ell_2 < \dots <
  C^\ell_{n_{\ell}}\leq C_2$. Then there must be at least three cells
  $C_1^{\ell+1} < C_2^{\ell+1} < C_3^{\ell+1}$ ($C_1^{\ell+1},
  C_2^{\ell+1}, C_3^{\ell+1} \in \detpart_{\ell+1}$), such that each
  of them has at least one child cell in $\mathcal{C}$. But by the
  hiearchical property of good orderings, this implies that for all
  children $C$ of $C_2$, $C_1 \le C_1^\ell < C < C^\ell_{n_\ell} \le
  C_2$, and therefore none of the children of $C_2^{\ell+1}$ is a
  boundary cell, a contradiction.
\end{proof}

\begin{theorem}
\label{thm:generalSAS}
Let $\detpart$ be a hierarchical partition of the input $S$, and let $S$
be labeled by $\detpart$. Furthermore, let $\alg_u$ be a unit step
algorithm, which, for input of size $n_u$, has time
$t_u=t_u(n_u)$ and space $s_u=s_u(n_u)$, as well as parameter
$p_u=p_u(n_u)$. Assume all functions are non-decreasing, and let
$p_u=p_u(s), t_u=t_u(cp_u)$. Suppose $s_u(cp_u)\le s$, and $p_u(s)\le
\tfrac{\sqrt{s}}{4cL}$.

  Algorithm~\ref{alg:load-balancing} can be
  simulated using $R \cdot O(\log_s n)$ rounds in the MPC
  model. Furthermore, after $R=O(\log_s n)$ rounds,
  Algorithm~\ref{alg:load-balancing} has applied the unit step to all
  cells of the hierarchical partition $\detpart$. 
%  \begin{equation*}
%    $R = O\left(\frac{\log n}{\log \frac{s}{\lev c \netsz}}\right)$.
%  \end{equation*}
  If comparing two cells in some good ordering requires time $\tau$,
  then the local computation time per machine in a round is bounded by
  $O(s\cdot(\tau \cdot \log s \cdot \log_s n + Lt_u))$. All bounds are
  with high probability.
\end{theorem}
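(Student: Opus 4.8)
The plan is to analyze Algorithm~\ref{alg:load-balancing} round by round, showing that in each round the number of unprocessed cells shrinks by a factor of $s^{\Omega(1)}$, so that $O(\log_s n)$ rounds of the outer loop suffice. The key structural input is Lemma~\ref{lm:nodes-interval}: in each round we sort the ``frontier'' cells $\mathcal{C}_r$ and cut them into consecutive blocks, each assigned to one machine with total output size at most $s$. Within a block, a machine can apply the unit step to every cell $C$ all of whose children lie entirely in the block; by Lemma~\ref{lm:nodes-interval} the only cells in the block that \emph{cannot} be fully processed are boundary cells, of which there are at most $2(c-1)L$ per block. Since there are $O(n_r/s)$ blocks in round $r$ (where $n_r = |\mathcal{C}_r|$, measuring total output size), the number of cells surviving to round $r+1$ is at most $O((n_r/s)\cdot cL)$. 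With $c = s^{\Theta(1)}$ (say $c=s^{1/3}$) and $L = O(\log_s n)$, this is $n_r \cdot s^{-\Omega(1)}$, so after $R = O(\log_s n)$ rounds the frontier is empty and the unit step has been applied to every cell, including the root $P_L$, which gives the output.

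First I would make precise the invariant that the ``size'' we track is $\sum_{C \in \mathcal{C}_r} p_u(\mathrm{size}(C))$, and use the hypotheses $p_u(s) \le \sqrt{s}/(4cL)$ and $s_u(cp_u) \le s$ to check two things: (i) each machine's local memory is never exceeded — a block holds total output $\le s$, and applying a unit step to a cell of input size $\le c p_u$ uses space $s_u(cp_u) \le s$; and (ii) the ``lazy evaluation'' fallback (when a unit step would blow up the output) only postpones cells whose output is already small, so it does not interfere with the contraction bound. The ``appropriately marked to be lazy evaluated'' clause needs a short argument that such a cell, once its genuine output would fit, does get processed in a later round — this follows because after lazy marking the cell behaves like a leaf of the remaining computation and re-enters the frontier. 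Then I would argue that each inner round of Algorithm~\ref{alg:load-balancing} is itself implementable in $O(\log_s n)$ MPC rounds: sorting $n_r$ items and computing prefix sums $h_i$ are the classical primitives that run in $O(\log_s (n_r)) = O(\log_s n)$ MPC rounds with space $s$ per machine (cite the sorting results from the intro, e.g.~\cite{GSZ11-sorting}), and the routing step ``machine $j = \lceil h_i/s\rceil$ receives the output of $C_i$'' is a single communication step once the $h_i$ are known. Multiplying, $R \cdot O(\log_s n) = O((\log_s n)^2)$ total MPC rounds, which is $(\log_s n)^{O(1)}$ as claimed.

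For the running-time bound, within a machine's block the work breaks into: sorting/comparison work, which is $O(s \log s)$ comparisons each costing $\tau$, repeated across the $O(\log_s n)$ sorting rounds, giving $O(s\, \tau \log s \log_s n)$; and the unit-step work, which is at most $L$ levels times the cost of running $\alg_u$ on the cells in the block — since each cell has input size $\le c p_u$ and the blocks partition the frontier, the total unit-step time over one machine is $O(L\, t_u(cp_u)) = O(L t_u)$ amortized per unit of space, i.e.~$O(s L t_u)$. Summing gives the stated $O(s(\tau \log s \log_s n + L t_u))$. The high-probability qualifier comes solely from the sorting/prefix-sum primitives, which are randomized; everything else is deterministic given $\detpart$.

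The main obstacle I expect is the bookkeeping around \emph{partially processed} frontier cells and the lazy-evaluation marks: one must argue that the set $\mathcal{C}_r$ defined by conditions (1)–(2) in the algorithm is exactly the set of cells whose children are all ``ready'', that a block assigned to a machine really can make progress (i.e.~always contains at least one fully-interior cell unless it is a singleton block), and that boundary cells are the \emph{only} obstruction — this is where Lemma~\ref{lm:nodes-interval} is invoked, but translating ``boundary cell of the interval $[C_1,C_2]$'' into ``cell the machine cannot finish'' requires the hierarchical property of the good ordering to guarantee that a cell all of whose children are in the block is itself in the block. Getting the induction on levels inside a single round right, so that after processing level $\ell$ the newly created cells at level $\ell+1$ that are fully interior also get processed, is the delicate part; everything downstream is routine.
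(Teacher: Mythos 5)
Your proposal is correct and follows essentially the same route as the paper: simulate each outer iteration with the sorting/prefix-sum primitives of \cite{GSZ11-sorting} in $O(\log_s n)$ MPC rounds w.h.p., bound the surviving frontier per machine by the boundary-cell count of Lemma~\ref{lm:nodes-interval} so that $p_u(s)\le \sqrt{s}/(4cL)$ yields the recurrence $n_{r+1}\le cL\lceil n_r/(s/p_u-1)\rceil\le n_r/\sqrt{s}$ and hence $R=O(\log_s n)$, with space controlled by $s_u(cp_u)\le s$ and lazy evaluation, and time split into $O(s\tau\log s\log_s n)$ for sorting plus $O(sLt_u)$ for unit steps. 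The only caveat is to keep the accounting of $n_r$ (cells versus total output size) consistent when stating the contraction, which your stated invariant $\sum_{C\in\mathcal{C}_r}p_u(\mathrm{size}(C))$ together with the hypothesis on $p_u$ already handles.
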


\begin{proof}
  We first show that each iteration of the loop in
  Step~\ref{step:load-balancing:main-loop} can be computed in
  $O(\log_s n)$ rounds of the MPC model with high probability. To that
  end, we use the sorting algorithm and the prefix-sum algorithm of
  Goodrich et al.~\cite{GSZ11-sorting}, which obtains this bound with
  high probability, because they operate on input of size
  $n_r = |\mathcal{\mycell}_r| \leq n$. The other steps of each iteration
  can easily be simulated in a constant number of rounds.

  Now we bound the number of rounds $R$ before the unit step has been
  applied to all cells of $\detpart$.  In round $r$, machine $j$ takes
  a range $[\mycell_{i_j}, \mycell_{i_{j+1}-1}]$ where $i_j$ are
  determined from the prefix-sum calculation.

  Let $\mathcal{\mycell}_{r+1,j}$ be the cells $\mycell$ such that
  \begin{itemize}
  \item machine $j$ applies the unit step to $\mycell$ in round $r$,
  \item the unit step is not applied to the parent of $\mycell$ in round $r$.
  \end{itemize}
  All such cells are boundary cells, and by Lemma~\ref{lm:nodes-interval},
  $|\mathcal{C}_{r+1, j}| \leq cL$; since
  $\mathcal{\mycell}_{r+1} =\bigcup_j{\mathcal{\mycell}_{r+1,j}}$, we
  have $n_{r+1} \leq cL \lceil \tfrac{n_r}{s/p_u-1 } \rceil$ (in each
  machine, there are at least $s/p_u-1$ distinct outputs that are
  processed, i.e., ``consumed''). Since $p_u\le
  \tfrac{\sqrt{s}}{4cL}$, we have that $n_{r+1} \leq n_r/\sqrt{s}$,
  i.e., there can be only $R=\log_{\sqrt{s}}n=O(\log_s n)$ rounds before
  $\mathcal{\mycell}_r$ fits entirely on a machine. Once
  $\mathcal\mycell_r$ fits on a single machine, the process finishes.

In terms of space and computation, each machine always receives at
most $O(s)$ amount of information, by construction.  Furthermore,
because of lazy evaluation, the output is always no larger than the
input for any fixed machine $j$. Hence, we also never run out of
machines when partitioning $\mycell_i$'s outputs (namely, the total
``information'' in the system remains bounded by $O(n)$, in chunks of
size at most $p_u\le s$). Finally, the space is bounded by the machine
input size, $O(s)$, plus local space required by $\alg_u$, which is
$s_u(cp_u)\le s$ since $cp_u\le s$ is the bound on the input into a
unit step.

Finally, we bound the computation time per machine. The
sorting/prefix-sum parts take time $O(s\tau_1 \cdot \log s \cdot
\log_s n)$. The rest of the computation time is dominated by the
worst-case unit step computation. Overall, no more than $O(sLt_u)$
computation per machine is necessary in a single iteration of the loop
in Step~\ref{step:load-balancing:main-loop}, once we are done with
sorting.
\end{proof}

We remark that the above theorem immediately implies Theorem \ref{thm:unitstep}.
\begin{proof}[Proof of Theorem \ref{thm:unitstep}]
Suppose $s_u(n_u), t_u(n_u)$ are all bounded by the polynomial
$(n_u)^q$, where $q \ge 1$ is constant. Fix $c=s^{1/(2q)}$. Then we get that
$L=\tfrac{\log \Delta}{c^{1/d}}=O(d\log_s n)$.

We verify we can apply Theorem \ref{thm:generalSAS}. Indeed, we have
that $p_u=p_u(s)\le s^{1/3}\le \tfrac{\sqrt{s}}{4cL}$. We also have
$s_u(cp_u(s))\le (cp_u(s))^q\le c^q\cdot p_u(s)^q\le \sqrt{s}\cdot
s^{1/3}<s$. The theorem follows.
\end{proof}

%%%%%%%%%%%%%%%%%%%%%%%%%%%%%%%%%%%%%%%%%%%%%
%%%%%%%%%%%%%%%%%%%%%%%%%%%%%%%%%%%%%%%%%%%%%
%%%%%%%%%%%%%%%%%%%%%%%%%%%%%%%%%%%%%%%%%%%%%

\subsection{Constructing a Partition in the Euclidean Space}
\label{sec:eucl-part}

We now describe how an $(a,b,c)$-partition can be computed for the
Euclidean space in MPC. This is an elaboration on the construction
described in Section~\ref{sec:preliminaries}, and is used for our
Euclidean MST and transportation cost algorithms. 

\begin{lemma}\label{lm:eucl-part}
  Let $a> 1$, $d$, and $L$ be positive integers. Consider a metric
  space $(S, \ell_2^d)$, where $S \subseteq \R^d$, $|S| = n$.  There
  is a $(1/a,d,(a+1)^d)$-distance-preserving partition $\randpart$ of
  $S$ with approximation $\boxapprox \leq \sqrt{d}$ and $L+1$
  levels. Moreover $S$ can be labeled by a hierchical partition
  $\detpart$ sampled from $\randpart$ in $O(\log_s n)$ rounds of MPC,
  and $\detpart$ has a good ordering of subcells such that each pair
  of cells can be compared in $O(dL)$ time.
\end{lemma}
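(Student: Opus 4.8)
The plan is to explicitly construct the randomly-shifted grid hierarchy described in Section~\ref{sec:preliminaries} and verify the three requirements: the distance-preserving properties, the MPC labeling cost, and the existence of a good ordering with cheap comparisons. First I would set up the construction: choose a shift vector $v \in \R^d$ uniformly at random from $[0,\Delta)^d$ (where $\Delta$ bounds the coordinate range), and let the top cell be a cube of side $2\Delta$ anchored at $-v$; then recursively subdivide each level-$\ell$ cube into $(a+1)^d$ equal subcubes by an axis-parallel grid with $a+1$ divisions per axis, so the side-length shrinks by a factor $1/(a+1)$ per level and the degree is exactly $c=(a+1)^d$. Wait — I should be careful to reconcile this with the claimed parameter: the statement asks for a $(1/a, d, (a+1)^d)$-partition, so I take the contraction parameter in Definition~\ref{def:partition} to be $1/(a+1)$... actually the cleanest route is to simply set the branching per axis to be $a$ (so $c = a^d$) or to absorb the off-by-one into the $\boxapprox$ slack; I would follow whichever convention makes $\levdiam_\ell = \boxapprox a^{L-\ell}\levdiam(S)$ hold with $\boxapprox \le \sqrt d$. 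The bounded-diameter property is then immediate: a level-$\ell$ cube has side $s_\ell$ and hence Euclidean diameter $s_\ell\sqrt d$, and since the side-lengths form a geometric sequence with ratio matching $a$ (up to the boundary convention), we get $\levdiam(\detpart_\ell)\le \sqrt d \cdot a^{L-\ell}\levdiam(S)$ as required, so $\boxapprox \le \sqrt d$.

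The heart of the argument is the \emph{probability of cutting an edge}. For fixed $x,y\in S$ and a fixed level $\ell$ with grid side-length $s_\ell$, the event $\mycell_\ell(x)\neq \mycell_\ell(y)$ happens only if some coordinate hyperplane of the level-$\ell$ grid separates $x$ and $y$; by a union bound over the $d$ coordinates, $\Pr[\mycell_\ell(x)\neq\mycell_\ell(y)]\le \sum_{i=1}^d \Pr[\text{coordinate } i \text{ separated}]$. In coordinate $i$, the grid lines are $s_\ell$-periodic and the random shift $v_i$ makes their offset uniform modulo $s_\ell$, so the probability that the interval $[x_i,y_i]$ (of length $|x_i-y_i|$) contains a grid line is at most $|x_i-y_i|/s_\ell \le \rho(x,y)/s_\ell$. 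Summing over $i$ gives $\Pr[\mycell_\ell(x)\neq\mycell_\ell(y)]\le d\,\rho(x,y)/s_\ell$, and since $s_\ell = \levdiam_\ell/\sqrt d \ge \levdiam_\ell/\boxapprox$ differs from $\levdiam_\ell$ only by the $\boxapprox\le\sqrt d$ factor, this yields the bound $b\,\rho(x,y)/\levdiam_\ell$ with $b = O(d)$ (the $\sqrt d$ from converting side-length to diameter is absorbed into the constant, giving $b=d$ as stated, possibly after a slightly more careful accounting). This is the one subtle point — getting the constant exactly right in the conversion between the grid side-length $s_\ell$ and the diameter-based quantity $\levdiam_\ell$, and in handling the independence of the shift coordinates — but it is a routine union-bound calculation.

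For the algorithmic claims: labeling a point $u$ means computing $(\mycell_0(u),\dots,\mycell_L(u))$, which is just, for each level $\ell$, the integer grid coordinates of $u$ after subtracting the shift and dividing by $s_\ell$ and rounding down — an $O(dL)$-time local computation per point, done fully in parallel with no communication beyond distributing the single shift vector $v$ (which fits in $O(d)$ words and can be broadcast in $O(\log_s n)$ rounds, or generated from shared randomness). Each cell is identified by its level together with its integer grid coordinates, so a cell label is an $O(d)$-length integer vector. For the good ordering, I would order cells first by level (coarser cells first, i.e.\ larger $\ell$ first) — actually, to get the hierarchical property stated, I order by the \emph{sequence of child-indices along the root-to-cell path}: label each child cell of a parent by an integer in $[c]$ via lexicographic order on the center coordinates (exactly as described in Section~\ref{sec:preliminaries}), and compare two cells lexicographically on these path-label sequences (padding the shorter one). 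This is a good ordering because if $C_1 < C_2$ at level $\ell$ their path-labels differ at some position $\le$ the depth of level $\ell$, and every descendant inherits that same differing prefix, so all children of $C_1$ precede all children of $C_2$. Comparing two such sequences takes $O(L)$ comparisons, each of $O(d)$-word vectors (to recompute a child-index we compare center coordinates), so $O(dL)$ time total, as claimed. Putting the three pieces together completes the proof.
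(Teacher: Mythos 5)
Your overall route is the same as the paper's (a randomly shifted grid; a per-coordinate union bound giving $b=d$ via $\|x\|_1\le\sqrt d\,\|x\|_2$; local labeling after broadcasting the shift and computing the bounding box; a pre-order/lexicographic good ordering with $O(dL)$-time comparisons), and those parts are sound. The genuine gap is exactly where you explicitly punt: reconciling your construction with the claimed parameter triple $(1/a,\,d,\,(a+1)^d)$. Subdividing each cube into $(a+1)^d$ subcubes ($a+1$ divisions per axis) gives contraction $1/(a+1)$, and this mismatch \emph{cannot} be absorbed into the $\boxapprox$ slack: if the true side lengths shrink like $(a+1)^{-(L-\ell)}$ while $\levdiam_\ell=\boxapprox\, a^{L-\ell}\levdiam(S)$, then the cut probability scales with the true (smaller) side length, so the requirement $\Pr[\mycell_\ell(x)\neq\mycell_\ell(y)]\le b\,\rho(x,y)/\levdiam_\ell$ is violated by a factor $\left((a+1)/a\right)^{L-\ell}$ that grows with depth. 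Conversely, your alternative of branching $a$ per axis gives contraction $1/a$ but only exhibits degree $a^d$, and by itself does not account for the stated degree bound $(a+1)^d$.

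The paper's resolution, which your proposal is missing, is to define the level-$\ell$ cells directly by the floors $\lfloor (u_i-r_i)a^{L-\ell}/\Delta\rfloor$ with $r$ uniform in $[0,\Delta]^d$, so the grid side length at level $\ell$ is $\Delta/a^{L-\ell}$ (contraction exactly $1/a$; since $a$ is an integer, the grids at consecutive levels nest). Every cell other than the root is a grid cube and thus has exactly $a^d$ children; the root, however, is the whole point set $S\subseteq[0,\Delta]^d$, which is not aligned with the shifted grid of side $\Delta/a$, and a region of extent $\Delta$ per axis meets at most $a+1$ shifted slabs per axis --- that is where $(a+1)^d$ comes from. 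With this fixed, the rest of your argument goes through as in the paper: cell diameters are at most $\sqrt d\,\Delta/a^{L-\ell}$ and $\diam(S)\ge\Delta$, giving $\boxapprox\le\sqrt d$; the cut probability is at most $\|u-v\|_1 a^{L-\ell}/\Delta\le d\,\rho(u,v)/\levdiam_\ell$; the labeling requires an $O(\log_s n)$-round MPC computation of the bounding box plus a broadcast of $r$, after which each point is labeled locally; and your root-to-cell path-label ordering is a valid good ordering with $O(dL)$ comparison time.
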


\begin{proof}
  The partition is constructed by applying a randomly shifted grid,
  similarly to Arora's classical construction~\cite{A98-TSP}. Let us
  shift $S$ so that for all $u \in S$, and all coordinates $i \in [d]$,
  $u_i \ge 0$, and also there exists a $v \in S$ and a coordinate $i$
  such that $v_i = 0$. Let, further, $\Delta$ be the smallest real
  number such that $S \subseteq [0, \Delta]^d$. A point $r$ is
  selected uniformly at random from $[0, \Delta]^d$. Two points $u$
  and $v$ belong to the same cell at level $\ell\in\{0,\ldots,L\}$ if
  and only if for all dimensions $i \in [d]$, $\left\lfloor \frac{(u_i-r_i)
    a^{L-\ell}}{\Delta}\right\rfloor = \left\lfloor
  \frac{(v_i-r_i)a^{L-\ell}}{\Delta}\right\rfloor$.

Let us state the desired properties of the partition.
\begin{enumerate}
 \item The diameter of a cell at level $\ell$ is bounded by $\Delta_\ell =
   \sqrt{d}\Delta/a^{L-\ell}$; moreover, by the choice of $\Delta$,
   $\diam(S) = \max_{u, v \in S}{\|u - v\|_2} \geq \max_{u, v \in
     S}{\|u - v\|_\infty} = \Delta$. 

 \item Consider two points $u,v \in H$. The probability that $\left\lfloor
   \frac{(u_i-r_i)a^{L-\ell}}{\Delta}\right\rfloor \ne \left\lfloor
   \frac{(v_i-r_i)a^{L-\ell}}{\Delta}\right\rfloor$ for a given $i$ is at
   most $\frac{|u_i - v_i|a^{L-\ell}}{\Delta}$. Therefore, by the
   union bound, the probability that $u$ and $v$ belong to different
   cells is bounded by 
   \[
   \frac{\|u - v\|_1a^{L-\ell}}{\Delta} =
   \sqrt{d} \cdot \frac{\|u - v\|_1}{\sqrt{d}\Delta/a^{L-\ell}}
   \le d \cdot \frac{\|u - v\|_2}{\Delta_\ell},
   \]
   where the last inequality follows from the inequality $\|x\|_1 \leq
   \sqrt{d}\|x\|_2$, which holds for any $x$. 

 \item The degree of the cell containing all points is bounded by $(a+1)^d$.
   The degree of subcells is bounded by $a^d$. 
\end{enumerate}

To label the set $S$ by a hierarchical partition, sampled as above, we
need to first shift $S$ and compute $\Delta$ so that $S \subseteq [0,
\Delta]^d$ as above. For this, we just need to compute the smallest
and largest coordinate of any point in $S$, which can be done in
$O(\log_s n)$ rounds of MPC. Then, to label each $u \in S$, we let an
arbitrary processor sample $r$ and broadcast it to all other
processors; for each $\ell$, $u$ is labelled by the sequence $\left(\left\lfloor
   \frac{(u_i-r_i)a^{L-\ell}}{\Delta}\right\rfloor\right)_{i = 1}^d$,
which uniquely identifies $\mycell_\ell(u)$. 

Next we define one notion of a good ordering. To compare two cells
$\mycell_1, \mycell_2$, we find the lowest $\ell$ for which they
belong to the same cell $\mycell \in \detpart_\ell$. If $\mycell_1 \in
\detpart_\ell$, then $\mycell_1 < \mycell_2$, and vice
versa. Otherwise if $\mycell_1, \mycell_2 \not \in \detpart_\ell$, we
consider the two cells $\mycell', \mycell'' \in \detpart_{\ell-1}$
such that $\mycell_1 \subseteq \mycell'$ and $\mycell_2 \subseteq
\mycell''$. If the centroid of $\mycell'$ precedes the centroid of
$\mycell''$ lexigraphically, then $\mycell_1 < \mycell_2$, and vice
versa. This ordering can be seen as a pre-order on the natural tree
structure associated with $\detpart$, where the children of each node
are ordered lexicographically. It is straightforward to verify that
the hierarchical property is satisfied. Two cells, represented by
their centroids and level $\ell$ can be compared in $O(d\lev)$ time.
\end{proof}

\subsection{Bounded Ratio for MST and EMD}
\label{sec:aspectRatio}

We now show why we can assume that our dataset has a bounded ratio in
the MST and EMD applications. In particular, we show how to reduce an
arbitrary dataset to one where the aspect ratio is bounded by a
polynomial in $n$, while incurring a multiplicative error of $1+\eps$
only. To be precise, we note that we assume that original points are
from a set $[0,\Delta]^d$, where the machine word size is $O(\log
\Delta)$.

We employ standard reductions, see, e.g.,~\cite{A98-TSP, I07}. However, we
need to make sure that the reductions can be executed in the MPC
framework. Both run in $O(\log_s n)$ parallel time.

\paragraph{Euclidean MST problem.} 
The reduction first computes an approximation of the diameter of $S$
under the metric $\rho$. To compute the approximation, we pick an arbitrary
point $w \in S$ and compute $D = 2\max_{v \in S}{\rho(w,v)}$. Now
discretize all coordinates of all points to multiples of $\eps
D/(8\sqrt{d}n)$. Since, by the triangle inequality, $\max_{u, v\in
  S}\rho(u,v) \leq \max_v \rho(u, w) + \rho(w, v) \leq dD$, the aspect
ratio becomes at most $8\sqrt{d}n/\epsilon$. 

Consider the MST $T'$ computed on the modified input; we can construct a
tree $T$ for the original input by connecting points rounded to the
same point using an arbitrary tree, and then connecting these trees
as in $T'$. The cost of $T$ is at most the cost of $T'$ plus $\eps D/4$,
since the distance between any two points rounding to the same edge is
at most $\eps D/(4n)$. By an analogous argument, $\cost{T'} \le
\cost{\topt} + \eps D/4$, where $\topt$ is the MST for the original
input. Therefore, $\cost{T'} \leq \cost{\topt} + \eps D/2$. Notice that
$\cost{\topt} \geq \max_{u, v\in S}{\rho(u,v)} \geq D/2$, and it
follows that $\cost{T'} \leq (1+\eps) \cost{\topt}$.

All these operations are straight-forward to implement in the MPC model.

\paragraph{EMD and transportation problems.}
Our reduction works for the case when the maximal demand is
$U=n^{O(1)}$ (and is lower bounded by 1). We use the reduction from
\cite{I07}. For this, we first compute a value $M$ which is a
$A=O(\log n)$ approximation to the cost of EMD/transportation (suppose
$M$ is an overestimate). Note that we can accomplish this by running
the linear sketch of \cite{IT}, which we can implement in MPC model in
a straight-forward way. Next we round each coordinate of each point to
the nearest multiple of $\eps M/(AU\sqrt{d}n)$; this incurs an error
of at most $\eps M/A$ overall, which is at most an $\eps$ fraction of
the transportation cost.

Finally, we impose a randomly shifted grid, with side-length of
$10M$. Each cell defines an (independent) instance of the problem,
with aspect ratio $10M/(\eps M)\cdot
AU\sqrt{d}n=10AU\sqrt{d}n/\eps=n^{O(1)}$. \cite{I07} shows that, with probability at
least $0.9$, the cost of overall solution is equal to the sum of the
costs of each instance. 

It just remains to show how to solve all the instances in
parallel. First, we solve all instances of size less than $s$: just
distribute all the instances onto the machines, noting that each
instance can be just solved locally. Second, for instances of size
more than $s$, we assign them to machines in order. Formally, we just
assign to each point its cell number, and then sort all points by cell
number (using \cite{GSZ11-sorting}). This means that each instance is
assigned to a contiguous range of machines. Now solve each instance in
parallel. Note that each machine solves at most two instances at the
same time.

%% While it would be tempting to use the classic reduction from
%% \cite{I07}, we will proceed differently. One issue with applying
%% \cite{I07} directly is that one has to compute the a rough
%% approximation to EMD/transportation cost. While one could use a linear
%% sketch algorithm, say \cite{IT}, this is seems harder to do, without
%% further assumptions, when the points do not have bounded aspect ratio
%% {\em a priori} (e.g., if the points are in the floating-point format).

%% In our reduction, we first run the following algorithm to
%% estimate $\Delta$. Let $h$ be a random hash function from the entire
%% space $R^d$ into $\pm 1$ (constant-wise independence will
%% suffice). Now compute $D=\sum_{x\in A} h(x)\cdot x - \sum_{y\in B}
%% h(y) \cdot y$.

%% \begin{claim}
%% $\|D\|$ is a $n^{O(1)}$ approximation to the EMD cost.
%% \end{claim}

%% After we've determined $D$, we will take all points modulo $D\cdot
%% n^{O(1)}$ after a random shift. Then round each point to multiple of
%% $D/poly(n)$. Thus we've reduced to a situation of an integer grid
%% $[0,D\poly(n)]^2$.

%% In fact, we can just run the linear streaming algorithm (via $\ell_1$
%% embedding) of Indyk to get a $O(\log n)$ approximation.

\section{Algorithms for Bounded Doubling Dimension}
\label{sec:doubling}

In this section we show an efficient algorithm for the minimum spanning tree problem for metrics with bounded doubling dimension. Before we prove the main result, we introduce useful tools such as an algorithm for sampling nets, and an algorithm for constructing hierarchical distance-preserving partitions.

\subsection{Constructing a $(\delta,\delta/4)$-net}

We now show how to sample a $(\delta,\delta/4)$-net. Let us state an auxiliary lemma about uniform sampling from a collection of sets in search for their representatives.

%\snnote{In most the the section $\eps$ is a probability of failure  parameter, and then it goes back to approximation in the final  theorem. I think that's confusing: change to $\gamma$?}
%\todo{$\epsilon$ is also the used in the number of machines $m = n^{1 - \epsilon}$}

\begin{lemma}\label{lem:sampling_sets}
Let $S$ be a set of size $n$. Let sets $S_1,\ldots,S_k \subset S$ be a
partition of $S$, i.e., each element in $S$ belongs to exactly one
$S_i$.  Let $X$ be a subset of $S$ created by independently selecting
each element in $S$ with probability $p\ge \min\{\frac{k^2}{n} \cdot
\ln(2k/\gamma),1\}$, where $\gamma > 0$.  With probability $1-\gamma$, the
following events hold.
\begin{itemize}
\item The total size of sets $S_i$ that are not hit is bounded:
$$\sum_{i:S_i \cap X = \emptyset} |S_i| \le n / k.$$
\item The size of $X$ is bounded: $|X| \le \min\{2pn,n\}$
% = \min\{2k^2\cdot\ln(2k/\eps),n\}$.
\end{itemize}
\end{lemma}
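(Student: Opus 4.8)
The plan is to handle the two failure events separately, each with a failure budget of $\gamma/2$: (i) that the total size of the unhit sets exceeds $n/k$, and (ii) that $|X|$ exceeds $2pn$. The case $p = 1$ is trivial, since then $X = S$ hits every nonempty $S_i$ and $|X| = n \le 2pn$, so assume $p = \frac{k^2}{n}\ln(2k/\gamma)$.

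First I would classify each set $S_i$ as \emph{large} if $|S_i| > n/k^2$ and \emph{small} otherwise. Since the $S_i$ partition $S$, there are at most $k$ of them, so the small sets have total size at most $k\cdot n/k^2 = n/k$. Hence, \emph{regardless} of the outcome of the sampling, the contribution of small sets to $\sum_{i : S_i \cap X = \emptyset}|S_i|$ is at most $n/k$, and it suffices to bound the probability that some \emph{large} set is missed. For a fixed large set $S_i$ we have $\Pr[S_i \cap X = \emptyset] = (1-p)^{|S_i|} \le e^{-p|S_i|} < e^{-pn/k^2} = e^{-\ln(2k/\gamma)} = \gamma/(2k)$, using $1-p\le e^{-p}$, the definition of ``large'', and $p = \frac{k^2}{n}\ln(2k/\gamma)$. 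A union bound over the at most $k$ large sets then shows that, except with probability at most $\gamma/2$, every large set is hit, which gives the first conclusion.

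For the second conclusion, $|X| \le n$ holds deterministically, and $\E|X| = pn$, so it remains to bound $\Pr[|X| > 2pn]$. Here I would invoke the multiplicative Chernoff bound for the sum of independent indicators, which gives $\Pr[|X| > 2pn] \le e^{-pn/3}$; since $pn = k^2\ln(2k/\gamma) \ge \ln(2k/\gamma)$, this is at most $\gamma/2$ for the stated sampling rate (when $k=1$ the first event holds deterministically, as $n/k = n = |S_1|$, so the whole budget $\gamma$ is available for the size bound, and in any case an absolute-constant adjustment of the leading factor in $p$ suffices). A final union bound over the two events then yields both conclusions simultaneously with probability at least $1 - \gamma$.

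I do not anticipate a genuine obstacle: the argument is a routine union bound plus Chernoff bound. The only point requiring mild care is the choice of the size threshold $n/k^2$ --- small enough that the at-most-$k$ small sets together contribute a negligible $n/k$, yet large enough that a large set is hit except with probability $\gamma/(2k)$ at the prescribed sampling rate $p$ --- together with the bookkeeping of the $\gamma/2 + \gamma/2$ probability budget between the two events.
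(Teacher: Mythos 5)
Your proposal is correct and follows essentially the same route as the paper's proof: the same size threshold $n/k^2$ with a union bound over the at most $k$ large sets (each missed with probability at most $\gamma/(2k)$), and the same multiplicative Chernoff bound $e^{-pn/3}$ for $|X| > 2pn$, with the degenerate cases $p=1$ and $k=1$ set aside exactly as the paper does. The only cosmetic difference is bookkeeping: the paper charges $\gamma/(2k)$ to the size event rather than $\gamma/2$, which changes nothing.
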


\begin{proof}
If $p = 1$ or $k=1$, the lemma is trivial. Assume therefore that $p = \frac{k^2}{n} \ln \frac{2k}{\gamma} <
1$ and $k \ge 2$. For each $S_i$, the probability that no element of
$S_i$ is selected is bounded by $(1-p)^{|S_i|} \le
e^{-k^2|S_i|\ln (2k / \gamma)/n}$. If $|S_i| \ge n/k^2$, then the
probability that $S_i \cap X = \emptyset$ is bounded by
$e^{-\ln(2k/\gamma)} = \gamma/(2k)$. By the union bound, with probability
$1-\gamma/2$, the only sets $S_i$ that do not intersect with $X$ have
size bounded by $n/k^2$, and therefore their union is bounded by
$k\cdot n/k^2 = n/k$.

By the Chernoff bound, the probability that $|X| > 2pn$ is bounded by $e^{-pn/3} \le e^{-k^2\cdot \ln(2k/\gamma) / 3} \le e^{-\ln(2k/\gamma)} = \gamma/(2k)$. Thus by the union bound both desired events hold with probability at least $1-\gamma$.
\end{proof}

For time and space efficiency, our algorithms for bounded doubling
dimension make extensive use of a data structure for the dynamic
approximate nearest neighbor search (ANNS) problem due to Cole and
Gottlieb~\cite{CG-NN}. We state its guarantees next.

\begin{theorem}[\cite{CG-NN}]\label{thm:dd-anns}
  Let $M = (S, \ell_2^d)$ be a metric space with doubling dimension
  $d$, and let $|S| = n$. There exists a data structure for the
  dynamic $\epsilon$-ANNS problem (see Definition~\ref{def:dyn-anns})
  with $O(n)$ space, and $2^{O(d)}\log n$ update time, and
  $2^{O(d)}\log n + \eps^{-O(d)}$ query time.
\end{theorem}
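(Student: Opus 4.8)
The plan is to establish Theorem~\ref{thm:dd-anns} via a dynamic hierarchical net decomposition, in the spirit of navigating nets and net-trees, but engineered so that neither the space nor the running time depends on the aspect ratio (spread) of $S$. First I would maintain, for each scale $2^i$, a $2^i$-net $Y_i \subseteq S$ with the nesting $Y_i \supseteq Y_{i+1}$ and with every point of $Y_{i-1}$ assigned a parent in $Y_i$ at distance $O(2^i)$. The doubling dimension enters through the standard packing bound: inside any ball of radius $r$, a $\rho$-packing has size $(r/\rho)^{O(d)}$, so each net point at scale $2^i$ has only $2^{O(d)}$ ``near'' net points at the same scale and $2^{O(d)}$ children at scale $2^{i-1}$. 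Storing these neighbor and child pointers yields a tree-like structure in which a query $q$ is localized by descending from coarse to fine scales, at each level keeping only the $2^{O(d)}$ relevant net points; the additive $\eps^{-O(d)}$ term in the query bound comes from finally enumerating an $\eps$-net of the surviving candidate ball to select a $(1+\eps)$-approximate nearest neighbor (cf.\ Definition~\ref{def:dyn-anns}).

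The delicate part --- and this is exactly Cole and Gottlieb's contribution --- is that the naive net-tree has $\Theta(\log \Phi)$ levels, where $\Phi$ is the aspect ratio of $S$, which can be arbitrarily large relative to $n$, ruining both the $O(n)$ space bound and the $\log n$ time bounds. The fix is to compress maximal chains of the hierarchy (consecutive scales at which the net membership and parent structure do not change) into single edges, so that the combinatorial tree has $O(n)$ nodes and, after rebalancing, depth $O(\log n)$. I would maintain this compressed tree together with an auxiliary order-maintenance / dynamic-tree structure so that ``jumping'' across a compressed chain during a search or an update costs $O(\log n)$. Insertions and deletions are then handled by locating the affected scales with a search, splitting or merging the relevant compressed chains, and rebuilding the $2^{O(d)}$-sized local neighbor lists; a charging argument --- each structural change touches only $2^{O(d)}$ net points per affected level, and the number of affected levels telescopes --- yields the claimed $2^{O(d)}\log n$ amortized update time. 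I would also check that total storage over all compressed levels is $O(n)$ words, since each point contributes $O(1)$ amortized net-membership records plus $2^{O(d)}$ pointers, and $d = O(1)$ here.

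The main obstacle I anticipate is precisely the aspect-ratio-free maintenance of the compressed hierarchy: ensuring that a single insertion or deletion does not trigger a cascade of chain splits and merges exceeding $2^{O(d)}\log n$ work, and that the depth remains $O(\log n)$ under an adversarial update sequence. This is the balancing scheme, and its interaction with the order-maintenance structure, that forms the technical heart of~\cite{CG-NN}. Since the statement of Theorem~\ref{thm:dd-anns} is quoted verbatim from that work, for our purposes it suffices to invoke it as a black box; the sketch above merely indicates why the stated parameters ($O(n)$ space, $2^{O(d)}\log n$ update, $2^{O(d)}\log n + \eps^{-O(d)}$ query) are the natural ones for a metric of doubling dimension $d$, and how we use it --- via Eppstein's reduction (Theorem~\ref{thm:ccp}) --- inside the doubling-dimension MST algorithm of Section~\ref{sec:doubling}.
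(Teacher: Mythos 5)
The paper gives no proof of this statement: it is imported verbatim from Cole and Gottlieb \cite{CG-NN} and used as a black box (via Eppstein's reduction in Theorem~\ref{thm:ccp}) in the doubling-dimension MST algorithm, which is exactly how you treat it. Your sketch of the compressed, aspect-ratio-free net hierarchy is a fair account of the cited construction, so your proposal matches the paper's approach.
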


\begin{lemma}\label{lemma:find_net}
Let $M = (S,\rho)$ be a metric space with doubling dimension $d$,
where $|S| = n > 1$, which can be covered with a ball of
radius $R$. The set $S$ is given as input and the distance
$\rho(\cdot,\cdot)$ between each pair of points can be computed in
constant time. Let $m$ be the number of machines and $\gamma\in(0,1)$ is a parameter such that $n/m \ge C\cdot\log^3 (n/\gamma)$, where $C$ is a sufficiently large constant.
Let $t =
\left\lfloor \frac{\log (n/m)}{3d} \right\rfloor$ and $\delta = 8R/2^{t}$.
There is a parallel algorithm that computes a $(\delta,\delta/4)$-net of size at
most $\sqrt[3]{n/m}$ using $m$ machines. With probability $1-\gamma$, the following properties hold:
\begin{itemize}
\item each machine uses at most $O(n/m)$ space,
\item the number of rounds is $O(\log_{(n/m)}
  n \cdot (1+\log_{(n/m)}m))$,
\item the computation time of a machine per round is bounded by $2^{O(d)}\cdot(n/m)\cdot \log (n/m)$.
\end{itemize}
\end{lemma}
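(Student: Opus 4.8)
The plan is to build a $\delta/4$-packing of $S$ that is simultaneously a $\delta$-covering (hence a $(\delta,\delta/4)$-net), computed ``in bulk'' by repeated subsampling rather than level by level (the number of levels $t$ can be as large as $\log(n/m)$, so a level-by-level refinement would use too many rounds). Write $s=n/m$ and $k=\lceil s^{1/3}\rceil$. Throughout I maintain a set $N$, always a $\delta/4$-packing, together with the set $U=\{u\in S:\rho(u,N)>\delta\}$ of still-uncovered points, starting from $N=\emptyset$, $U=S$. By the doubling packing bound and the choice $\delta=8R/2^{t}$, every $\delta/4$-packing inside the radius-$R$ ball covering $S$ has at most $\sqrt[3]{s}$ points, so $|N|\le\sqrt[3]{n/m}$ at all times; this gives the claimed size bound, and it remains only to terminate quickly.

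In each iteration, if $|U|$ is ``small'' --- $|U|=O(s^{2/3}\log(k/\gamma'))$ with $\gamma'=\gamma/\Theta(\log_s n)$, which is at most $s$ under the hypothesis $s\ge C\log^3(n/\gamma)$ --- I gather all of $U$ and $N$ onto one machine, greedily compute a maximal $\delta/4$-packing $P'$ of $U$ using the dynamic ANNS structure of Theorem~\ref{thm:dd-anns} (with a constant approximation parameter), output $N\cup P'$, and stop. This is a valid net: $P'$ maximal in $U$ means every point of $U$ lies within $\delta/4$ of $P'$, and every other point of $S$ lies within $\delta$ of $N$ by definition of $U$, so $N\cup P'$ is a $\delta$-covering; and since every point of $U$ is at distance $>\delta>\delta/4$ from $N$, the set $N\cup P'$ is still a $\delta/4$-packing. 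If $|U|$ is ``large'', I compute $|U|$ by a prefix sum over the machines, set $p=\Theta\!\big(\tfrac{s^{2/3}}{|U|}\log(k/\gamma')\big)<1$ (chosen to exceed the threshold of Lemma~\ref{lem:sampling_sets} used below), have each machine sample each of its points in $U$ independently with probability $p$ to form $X$, gather $X$ and $N$ onto one machine, compute a maximal $\delta/4$-packing $P'$ of $X$ as above, broadcast $N\cup P'$, delete from $U$ every point now within $\delta$ of $N\cup P'$, and iterate; here too $N\cup P'$ remains a $\delta/4$-packing because $X\subseteq U$ is $>\delta$ from $N$.

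The crux is that the ``large'' step shrinks $|U|$ by a factor $s^{1/3}$ with probability $1-\gamma'$, which I would prove by applying Lemma~\ref{lem:sampling_sets} to an \emph{analysis-only} partition of $U$: fix any maximal $\delta/4$-packing $Q$ of $U$ and cluster $U$ by nearest point of $Q$; this has $|Q|\le\sqrt[3]{s}\le k$ parts, each inside an open ball of radius $\delta/4$, and since $|U|>s\ge k$ these can be subdivided arbitrarily into exactly $k$ nonempty parts $S_1,\dots,S_k$, still of radius $<\delta/4$. The choice of $p$ exceeds $\tfrac{k^2}{|U|}\ln\tfrac{2k}{\gamma'}$, so Lemma~\ref{lem:sampling_sets} gives, with probability $1-\gamma'$, that the parts missed by $X$ have total size $\le|U|/k\le|U|/s^{1/3}$ and that $|X|\le 2p|U|=O(s^{2/3}\log(k/\gamma'))\le s$ --- the last inequality being exactly what $s\ge C\log^3(n/\gamma)$ buys, so $X$ (and the final $U$) fit in one machine's $O(s)$ space. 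If a part $S_i$ is hit, $x\in S_i\cap X$, then every $u\in S_i$ has $\rho(u,x)<\delta/2$ (triangle inequality through the center of the radius-$\delta/4$ ball containing $S_i$), while maximality of $P'$ in $X$ gives $\rho(x,P')<\delta/4$, so $\rho(u,N\cup P')<\tfrac{3}{4}\delta<\delta$ and $u$ leaves $U$. Hence the new $U$ is contained in the union of the missed parts and has size $\le|U|/s^{1/3}$, so after $O(\log_s n)$ iterations $|U|$ enters the ``small'' regime; a union bound over the $\Theta(\log_s n)$ iterations (this is why $\gamma'=\gamma/\Theta(\log_s n)$) makes all these events hold simultaneously with probability $1-\gamma$.

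For the complexity: each iteration costs $O(1)$ rounds for the sampling and the local packing, plus $O(\log_s m)$ rounds for the prefix sum, the gathering of $X$, and the broadcast of $N\cup P'$, all carried out along an $s$-ary aggregation tree over the $m$ machines; with $O(\log_s n)$ iterations this is $O(\log_s n\,(1+\log_s m))$ rounds. Each machine holds $O(s)=O(n/m)$ words (its input share, the net of size $\le\sqrt[3]{s}$, and on the designated machine the sample of size $\le s$). The local time is dominated by the $O(s)$ ANNS updates/queries used to construct $P'$ and, on each machine, the $O(s)$ ANNS queries against $N\cup P'$ needed to recompute its uncovered points; by Theorem~\ref{thm:dd-anns} --- noting that $(X,\rho)$ and $(N\cup P',\rho)$ have doubling dimension $O(d)$ by Lemma~\ref{lem:ddim_restricted} --- each costs $2^{O(d)}\log s$, for $2^{O(d)}(n/m)\log(n/m)$ per machine per round. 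The main obstacle is the shrinkage analysis of the previous paragraph: one must choose the analysis partition so that it has both $\approx s^{1/3}$ parts (so a single iteration shrinks $|U|$ by $s^{1/3}$, giving only $O(\log_s n)$ iterations) and parts of radius $<\delta/4$ (so a hit part is \emph{entirely} $\delta$-covered by the locally computed maximal packing), and then propagate the per-iteration failure probability and the $|X|\le s$ bound through all iterations using $s\ge C\log^3(n/\gamma)$.
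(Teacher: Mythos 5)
Your plan is correct and follows essentially the same route as the paper's proof: repeatedly subsample the still-uncovered points, build a greedy packing of the sample on a single machine, broadcast it and prune the covered points, with the per-iteration shrinkage by a factor $(n/m)^{1/3}$ obtained from Lemma~\ref{lem:sampling_sets} applied to a partition into at most $(n/m)^{1/3}$ small-diameter parts, and a union bound over the $O(\log_{n/m} n)$ iterations (the paper simply samples at rate $(n/m)/|S'|$ instead of your minimal rate, and lets the probability saturate at $1$ rather than adding an explicit small-$U$ terminal step). The only substantive deviation is your analysis partition (clusters around a maximal $\delta/4$-packing of the current $U$, hence diameter $<\delta/2$) versus the paper's fixed covering of $S$ by $2^{dt}\le (n/m)^{1/3}$ balls of radius $\delta/8$ (diameter $\le\delta/4$); the paper's smaller parts provide exactly the slack it spends on its $2$- and $4/3$-approximate nearest-neighbor subroutines, so with your partition you must pin down the ANNS approximation factors close enough to $1$ (which Theorem~\ref{thm:dd-anns} allows at constant cost) for the chain $\delta/2+\delta/4<\delta$ to survive the approximate packing construction and the approximate pruning test while still returning a genuine $\delta/4$-packing and $\delta$-covering.
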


\begin{algorithm}[t]
\caption{An idealized algorithm for constructing a $(\delta,\delta/2)$-net for a metric $M=(S,\rho)$.}\label{alg:net_ideal}
$S':=S$\\
$U := \emptyset$\\
\While{$S' \ne \emptyset$}{
$U':= \mbox{sample each point in $S'$ independently with probability $\min\{1,n/(m\cdot|S'|)\}$}$\\
$U'' := \mbox{maximal subset of $U'$ such that no two points
  are at distance less than $\delta/2$}$.%
\nllabel{step_ideal:maximal}\\
Remove from $S'$ all points that are at distance at most $\delta$
from a point in $U''$.%
\nllabel{step_ideal:remove}
\\
$U := U \cup U''$\\
}
\Return{U}
\end{algorithm}

\begin{algorithm}[t]
\caption{An efficient algorithm for constructing a $(\delta,\delta/4)$-net for a metric $M=(S,\rho)$.}\label{alg:net_efficient}
$S':=S$\\
$U := \emptyset$\\
\While{$S' \ne \emptyset$\nllabel{step_efficient:main_loop}}{
$U':= \mbox{sample each point in $S'$ independently with probability $\min\{1,n/(m\cdot|S'|)\}$}$\\
$U'':=\emptyset$\\
Create a 2-ANNS data structure $\mathcal D_1$ containing points in $U'$.%
\nllabel{step_efficient:maximal_NN}\\
\While{$U' \ne \emptyset$\nllabel{step_efficient:maximal}}{
$u:=\mbox{any point in $U'$}$\\
As long the approximate nearest neighbor $v$ for $u$ returned by $\mathcal D_1$ is at distance at most $\delta/2$ from $u$, remove $v$ from both $\mathcal D_1$ and $U'$\\
$U'' := U'' \cup \{u\}$}
Create a $4/3$-ANNS data structure $\mathcal D_2$ containing points in $U''$.%
\nllabel{step_efficient:remove_NN}\\
\ForEach{$v\in S'$\nllabel{step_efficient:remove}}{If the approximate nearest neighbor for $v$ given by $\mathcal D_2$ is at distance at most $\delta$ from $v$, remove~$v$ from $S'$.}
$U := U \cup U''$\\
}
\Return{U}
\end{algorithm}

\begin{proof}Our starting point is 
Algorithm~\ref{alg:net_ideal}. In this algorithm, a point is removed from
the set $S'$, which is initially a copy of $S$, if and only if it is
at distance at most $\delta$ from a point in $U''$. Since the final
$U$ is a union of all such $U''$, every point in $S$ is at distance at
most $\delta$ from a point in the final $U$. This implies that the
algorithm returns a $\delta$-covering. Additionally, no two points in $U$ can be at distance less than $\delta/2$, so the final set is a $\delta/2$ packing as well.

A parallel implementation of Algorithm~\ref{alg:net_ideal} can be used
to obtain all desired properties with the exception of the bound on
the running time. In order to obtain an efficient algorithm for
creating nets, we slightly weaken the quality of the net that we may
obtain. In Steps~\ref{step_ideal:maximal} and~\ref{step_ideal:remove}
of Algorithm~\ref{alg:net_ideal}, we remove points up to a specific
distance. A direct implementation of these steps, by comparing all
pairs of distances would result in an algorithm with total work at
least $\sigma|S|$, where $\sigma$ is the size of the final $U$. If
$|S| = n$, and the number of machines $m$ is at most
$n^{1-\Omega(1)}$, the total work could be as large as
$n^{1+\Omega(1)}$, i.e., significantly superlinear. We therefore
replace these two steps with weaker versions. 
In particular, we use approximate nearest neighbor search to find points at a particular distance; we show that the relaxation resulting from the approximate nearest neighbor is still sufficient in our case. %\aanote{updated here}
%% Instead of removing
%% exactly points at distance up to a specific threshold $T$, we remove
%% all points at distance at most $(1-\eps)T$ and no points at distance
%% more than $T$ \todo{Grisha: Fix $\eps$ to a constant.}. For points at
%% distance between $(1-\eps)T$ and $T$, the behavior is unspecified. 
The modified algorithm is presented as
Algorithm~\ref{alg:net_efficient} and uses the data structure from
Theorem~\ref{thm:dd-anns} to implement the ANNS data structure.

In Algorithm~\ref{alg:net_efficient}, each pair of different points in $U''$ is at distance more than $\delta/2$. Moreover, once a point is included in $U''$, and therefore also in $U$, all points at distance at most $\frac{3}{4}\delta$ are removed from $S'$. Hence, the final set $U$ is a $\delta/4$-packing. It is also a $\delta$-covering, because no point in $S$ is removed from $S'$, unless there is a point in $U''$ (and hence in $U$) at distance at most $\delta$. Summarizing, the algorithm produces a $(\delta,\delta/4)$-net.

Furthermore, observe that by definition, $M$ can be covered with at
most $k$ balls $B_1$, \ldots, $B_{k}$ of radius $\delta/8=R/2^t$, where $k
\le (2^d)^t \le \sqrt[3]{(n/m)}$. Because in the final $U$, two different points are at distance more than $\delta/4$, only at most one point belongs to each ball $B_i$.
Hence, the size of the net is bounded by $k \le \sqrt[3]{(n/m)}$.

One of the main challenges is to show that the algorithm performs a small
number of iterations. Let $C$ be a constant such that $C \cdot \log^3
n \ge (\log(2n^2/\gamma))^3$. Every iteration of the loop in Line~3 is
likely to decrease the number of points in $S'$ by a factor of at
least $\sqrt[3]{n/m}$. Consider the points remaining in $S'$ at the
beginning of the loop evaluation. They can be partitioned into at most
$k$ sets of diameter at most $\delta/4$ each. The sets correspond to
the covering by balls $B_i$, where each point is arbitrarily assigned
to one of the balls covering it. If a given set has a point $v$
selected for $U'$ in the loop in Step~\ref{step_efficient:maximal}, then the set is completely removed
in this iteration of the loop from $S'$. This is obvious to see if $v$ or any
other point in the set is selected for $U''$. Otherwise, there must be
a point $u \in U''$ at distance at most $\delta/2$ from $v$. Because
the diameter of the set is bounded by $\delta/4$, all points in the
set are at distance at most $\frac{3}{4} \delta$ from $u$ and will be removed from
$S'$ in the loop in Step~\ref{step_efficient:remove}.  We apply Lemma~\ref{lem:sampling_sets} on $S'$
with $k=\sqrt[3]{n/m}$ and $p=k^3/|S'|$. With probability at least $1-\gamma/n$, in a given iteration of the loop in Step~\ref{step_efficient:main_loop}, both 
the size of $U'$ is
bounded by $2n/m$ and the size of $S'$ decreases by a factor of at
least $\sqrt[3]{n/m}$. Let $\mathcal E$ be the event that this happens in all iterations of the loop. By the union bound, $\mathcal E$ occurs with probability at least 
$1-\gamma$. If $\mathcal E$ occurs, the algorithm finishes in $\log_{\sqrt[3]{n/m}} n
= 3\log_{(n/m)} n$ iterations of the main loop.

It remains to describe a parallel implementation of Algorithm~\ref{alg:net_efficient}. We assume that the input to the algorithm is the set $S'$
of points which is distributed across all $m$ machines with each having no more than $O(n/m)$ of them. In order to generate the set $U'$ collectively, each machine selects each point independently at random with the desired probability, which
takes $O(n/m)$ time. All points sampled in the process are sent to a single machine. If the event $\mathcal E$ occurs, the total number of them is $O(n/m)$, i.e., they can be stored on a single machine. The machine constructs $U''$ from $U'$ as in Steps~\ref{step_efficient:maximal_NN} and~\ref{step_efficient:maximal} of Algorithm~\ref{alg:net_efficient}. This takes at most $2^{O(d)}\cdot|U'|\cdot \log|U'|$ time, where the running time is dominated by the nearest neighbor data structure. If the event $\mathcal E$ occurs, this takes at most $2^{O(d)}\cdot (n/m)\cdot \log(n/m)$ time.
Once $U''$ is computed, it is broadcasted to all machines. Recall that $|U''| \le \sqrt[3]{n/m}$. This implies that all machines can receive $U''$ in at most $1+\log_{(n/m)/\sqrt[3]{n/m}} m = O(1+\log_{(n/m)}m)$ rounds of communication with no machine sending more than $O(n/m)$ information per round and with at most linear running time per round. After that each machine first constructs a nearest neighbor data structure as in Step~\ref{step_efficient:remove_NN} of Algorithm~\ref{alg:net_efficient} and runs the {\bf foreach} loop in Step~\ref{step_efficient:remove} on its share of points remaining in $S'$. This takes at most $2^{O(d)} \cdot (n/m)\cdot \log(n/m)$ time.
\end{proof}

\subsection{Constructing a distance-preserving partition}

\begin{algorithm}[t]
\caption{Partitioning algorithm ($S'\subseteq S$ is a $(\delta,\delta/4)$-net)}\label{alg:partition}
Let $v_1,v_2,\ldots, v_{|S'|}$ be a random permutation of points in $S'$\\
Let $r$ be selected uniformly at random from $[\delta,2\delta]$\\
$T:=S$\\
\For{$i=1,\ldots,|S'|$}{
$C_i := T \cap B(v_i,r)$\\
$T := T \setminus C_i$\\
}
\Return{$C_1,C_2,\ldots, C_{|S'|}$}
\end{algorithm}

In order to partition cells into subcells, we use
Algorithm~\ref{alg:partition}. We now prove that it provides a
partition with the desired properties. The proof uses an argument borrowed from \cite{Talwar04} and earlier works (see the references in \cite{Talwar04}).  

\begin{lemma}\label{lem:partition}
Let $(S,\rho)$ be a metric space with bounded doubling dimension $d$. Let $S' \subset S$ be a $(\delta,\delta/4)$-net. Algorithm~\ref{alg:partition} produces a partition of $S$ such that
the diameter of each cluster in the partition is bounded by $4\delta$ and for any two points $x$ and $y$, the probability that they are split is bounded by $O(d)\cdot \rho(x,y)/\delta$.
\end{lemma}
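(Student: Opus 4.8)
The plan is to follow the classical random ball-carving analysis of Calinescu--Karloff--Rabani and Fakcharoenphol--Rao--Talwar, in the net-based variant of Talwar, using the doubling dimension to keep the bound \emph{linear} in $d$.

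First I would dispatch the easy parts. Each output cluster $C_i$ is contained in $B(v_i,r)$ with $r\le 2\delta$, so $\diam(C_i)\le 2r\le 4\delta$; the $C_i$ are disjoint by construction, and they cover $S$ because $S'$ is a $\delta$-covering and $r\ge\delta$, so every $x\in S$ lies within $\delta\le r$ of some $v_j$ and is therefore removed from $T$ by step $j$.

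For the separation bound, fix $x,y\in S$; we may assume $\rho(x,y)\le\delta$, since otherwise $O(d)\,\rho(x,y)/\delta\ge 1$ and there is nothing to prove. Say a center $v_i$ \emph{captures} a point $z$ if $\rho(v_i,z)\le r$. Then $x$ is placed in the cluster of the first center (in the random permutation order) that captures it, and similarly for $y$, so $x$ and $y$ are separated precisely when the first center to capture at least one of them captures exactly one of them; I will call this center the one that \emph{settles and cuts} the pair. The key combinatorial observation is that only net points $u$ with $\min(\rho(u,x),\rho(u,y))\le 2\delta$ can ever cut the pair, hence only $u\in S'\cap B(x,3\delta)$ (using $\rho(x,y)\le\delta$ and the triangle inequality); since $B(x,3\delta)$ is covered by $2^{O(d)}$ balls of radius less than $\delta/8$ and $S'$ is a $\delta/4$-packing, there are at most $t=2^{O(d)}$ such \emph{relevant} net points. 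Next I would split on which of $x,y$ the settling center captures (the two cases are symmetric), and in the ``captures $x$'' case, order the relevant net points $u_1,\dots,u_t$ by increasing distance to $x$. If $u_k$ settles and cuts while capturing $x$, then $\rho(u_k,x)\le r<\rho(u_k,y)$; every $u_j$ with $j<k$ satisfies $\rho(u_j,x)\le\rho(u_k,x)\le r$ and hence also captures $x$, so $u_k$ must come before $u_1,\dots,u_{k-1}$ in the permutation, an event of probability $1/k$. Independently, $r$ must fall in the interval $[\rho(u_k,x),\rho(u_k,y))$, whose length is at most $\rho(x,y)$ by the triangle inequality, an event of probability at most $\rho(x,y)/\delta$. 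Multiplying (the permutation and $r$ involve disjoint randomness) and summing over $k\le t$ and over the two cases gives
\[
\Pr[x,y\text{ separated}]\;\le\;2\sum_{k=1}^{t}\frac1k\cdot\frac{\rho(x,y)}{\delta}\;=\;O(\log t)\cdot\frac{\rho(x,y)}{\delta}\;=\;O(d)\cdot\frac{\rho(x,y)}{\delta}.
\]

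The main obstacle is exactly this last step: a naive union bound over the $2^{O(d)}$ relevant centers would give only a $2^{O(d)}\rho(x,y)/\delta$ bound, and getting the linear dependence on $d$ requires the distance-ordering argument, which replaces the union bound by the harmonic sum $\sum_{k\le t}1/k=O(\log t)=O(d)$. The other point to state carefully is that the permutation-dependent event and the $r$-dependent event really do depend on disjoint sources of randomness, so their probabilities multiply.
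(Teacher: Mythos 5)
Your proof is correct and follows essentially the same route as the paper's: the same easy diameter/covering observations, the same bound of $2^{O(d)}$ on the relevant net points in $B(x,3\delta)$ via doubling plus the $\delta/4$-packing property, and the same CKR/Talwar-style argument combining the uniform choice of $r$ (probability $\rho(x,y)/\delta$) with a first-in-permutation event (probability $1/k$) to get a harmonic sum $O(\log 2^{O(d)})=O(d)$. The only cosmetic difference is that you split into two symmetric cases ordered by distance to $x$ (resp.\ $y$), incurring a factor $2$, whereas the paper orders the candidates once by $\min\{\rho(q_i,x),\rho(q_i,y)\}$; both are valid.
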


\begin{proof}
Observe that each point $u \in S$ is assigned to the first $C_i$ such that $\rho(u,v_i) \le r$. Moreover, each $x$ is assigned to at least one $C_i$, because $S'$ is a $\delta$-covering and $r \ge \delta$. Hence $C_1$, \ldots, $C_{|S'|}$ is a partition.

To prove that the diameter of each $C_i$ is bounded by $4\delta$, observe first that the distance of each point in $C_i$ to the corresponding $v_i$ is bounded by $r\le 2\delta$. Therefore, it follows from the triangle inequality that the distance between any pair of points in $C_i$ is at most $4\delta$. 

It remains to prove that close points $u$ and $w$ are separated in the
partition with bounded probability. Without loss of generality, we
assume that $\rho(u,w) \le \delta$, because otherwise the bound on
probability is trivial.  We say that $q \in S'$ {\em separates} $u$
and $w$ if when $w$ is considered in the sequence $v_1$, \ldots,
$v_{|S'|}$, both $u$ and $w$ are still in $T$ and exactly one of them
is at distance at most $r$ from $q$.  Clearly, $u$ and $w$ are
assigned to different clusters if and only if a point in $S'$
separates them.  Let $S'' = S' \cap B(u,3\delta)$. Note that, since
$S'$ is a $\delta$-covering, if $q\not\in S''$, then it cannot
separate $u$ and $w$, because both $u$ and $w$ are at distance greater
than $r\le 2\delta$ from such a $q$. We now bound the size of
$S''$. By the definition of doubling dimension, $B(u,3\delta)$ can be
covered with at most $(2^d)^5$ balls of radius $(3/32) \delta <
\delta/8$. Let $X\subseteq S$ be the centers of those balls. For each
point $q$ in $S''$, at least one point in $X$ is at distance less than
$\delta/8$ from $q$. Furthermore, no point in $X$ can be at distance
less than $\delta/8$ from two distinct points in $S''$ simultaneously,
because $S''$ is a subset of a $\delta/4$-packing, and therefore each
pair of points in $S''$ is at distance at least $\delta/4$ from each
other. Therefore, there exists an injective mapping that maps each
point $q$ in $S''$ to the unique point in $X$ that at distance at most
$\delta/8$ from $q$; the existence of such a mapping shows that $|S''|
\le |X| \le 32^d$.

Let $q_1$, \ldots, $q_{|S''|}$ be the points in $S''$ in
non-decreasing order of $\min\{\rho(q_i,u),\rho(q_i,w)\}$. We now
bound the probability that each of them separates $u$ and $w$. First
the probability that the ball $B(q_i,r)$ contains exactly one of them
is bounded as follows: 
\begin{align*}
\Pr[|B(q_i,r) \cap \{u, w\}| = 1] &= \Pr[r \in
(\min\{\rho(q_i,u),\rho(q_i,w)\}, \max\{\rho(q_i,u),\rho(q_i,w)\})] \\
&\leq Pr[(\min\{\rho(q_i,u),\rho(q_i,w)\}, \rho(u,w) +
\min\{\rho(q_i,u),\rho(q_i,w)\})]  = \frac{1}{\delta}\rho(u,w),
\end{align*}
where the second inequality follows because, by the triangle
inequality, 
\[
\max\{\rho(q_i,u),\rho(q_i,w)\}\leq \rho(u,w) +
\min\{\rho(q_i,u),\rho(q_i,w)\}.
\]
Conditioned on this event, the probability that $q_i$ separates $u$
and $w$ is at most $1/i$, because if any $q_j$ with $j < i$ appears
before $q_i$ in the random permutation $v_1$, \ldots, $v_{|S'|}$, then
it also removes at least one of the points from $T$, so $q_i$ cannot
separate them. Therefore, $u$ and $w$ are assigned to different
clusters with probability at most
$$\sum_{i=1}^{|S''|}\frac{\rho(u,w)}{\delta}\cdot\frac{1}{i} = 
\frac{\rho(u,w)}{\delta}\cdot H_{|S''|}
= \frac{\rho(u,w)}{\delta}\cdot O(\log |S''|)
= \frac{\rho(u,w)}{\delta}\cdot O(d),$$
where $H_k$ is the $k$-th harmonic number. This finishes the proof.
\end{proof}

The proof of the following lemma describes how to apply Lemma~\ref{lemma:find_net} and Lemma~\ref{lem:partition} in order to show that there is an efficient algorithm for computing hierarchical partitions.

\begin{lemma}\label{lem:doubling_hierarchy}
  Let $M = (S,\rho)$ be a metric space with bounded doubling dimension
  $d$, where $|S| = n > 1$, such that the entire space can be covered
  with one ball of radius $R$.  Assume the set $S$ is given as input, and the
  distance $\rho(\cdot,\cdot)$ between each pair of points can be
  computed in constant time. Let $m$ be the number of machines, $\gamma
  \in (0,1)$, and assume that $n/m \ge \max\{K\cdot \log^3
  (n/\gamma),2^{36\cdot d}\}$, where $K$ is a sufficiently large
  constant. Let $t = \left\lfloor
    \frac{\log(n/m)}{6d}\right\rfloor$. There is an algorithm that
  computes a $(2^{5-t},O(d),\sqrt[3]{n/m})$-distance-preserving
  hierarchical partition $P=(P_0,\ldots,P_L)$ with $L$ levels and
  approximation $2$. With probability $1-\gamma n L$, the following
  properties hold:
\begin{itemize}
\item no machine uses more than $O(L \cdot n/m)$ space,
\item the total number of rounds is bounded by $O(L \cdot \log_{(n/m)} n \cdot (1+\log_{(n/m)} m))$,
\item the computation time of each machine per round is bounded by $2^{O(d)} \cdot (n/m)\cdot\log(n/m)$.
\end{itemize}
\end{lemma}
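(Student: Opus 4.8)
The plan is to build the partition top-down, one level at a time, refining each cell independently by running the net algorithm of Lemma~\ref{lemma:find_net} inside it and then carving it with Algorithm~\ref{alg:partition} (whose guarantees are Lemma~\ref{lem:partition}). Concretely, start with $P_L=\{S\}$, and for $\ell=L,L-1,\ldots,1$ process every non-empty cell $C\in P_\ell$ as follows. We maintain inductively radii $R_L=R$ and $R_{\ell-1}=32R_\ell/2^t=2^{5-t}R_\ell$, so that every cell of $P_\ell$ is contained in a ball of radius $R_\ell$ and hence has diameter at most $2R_\ell$. The cell $C$, viewed as a metric subspace, has doubling dimension at most $2d$ by Lemma~\ref{lem:ddim_restricted}; applying Lemma~\ref{lemma:find_net} to $C$ with dimension $2d$ therefore uses net parameter $\lfloor\log(|C|/m_C)/(6d)\rfloor$, which we arrange (see below) to equal $t=\lfloor\log(n/m)/(6d)\rfloor$ by handing $C$ a number $m_C$ of machines with $|C|/m_C=\Theta(n/m)$. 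This yields a $(\delta_\ell,\delta_\ell/4)$-net of $C$ with $\delta_\ell=8R_\ell/2^t$ and of size at most $\sqrt[3]{n/m}$. Running Algorithm~\ref{alg:partition} on $C$ with this net and $r\in[\delta_\ell,2\delta_\ell]$ then gives, by Lemma~\ref{lem:partition}, subcells of diameter at most $4\delta_\ell=32R_\ell/2^t=2R_{\ell-1}$; hence the bounded-diameter property of Definition~\ref{def:partition} holds with $\levdiam_\ell:=2R_\ell=2R\,(2^{5-t})^{L-\ell}$, i.e. with $a=2^{5-t}$ and approximation factor $2$ (taking $\levdiam_L:=2R$, twice the covering-ball radius). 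The number of subcells of $C$ is at most the net size $\sqrt[3]{n/m}$, the claimed degree.

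For the cut probability, fix $x,y\in S$ and note that during the construction they are separated at exactly one step, say going from some level $\ell'$ to $\ell'-1$. Conditioned on $x$ and $y$ still lying in a common cell $C\in P_{\ell'}$, Lemma~\ref{lem:partition} applied to $C$ (which has doubling dimension at most $2d$ and carving scale $\delta_{\ell'}=\levdiam_{\ell'-1}/4$) bounds the probability they are placed in different subcells by $O(d)\,\rho(x,y)/\delta_{\ell'}=O(d)\,\rho(x,y)/\levdiam_{\ell'-1}$. Summing over all $\ell'$ with $\ell'-1\ge\ell$ and using $\levdiam_j\ge 2^{\,j-\ell}\levdiam_\ell$ for $j\ge\ell$ (valid since $a\le 1/2$, which follows from $t\ge 6$, which follows from $n/m\ge 2^{36d}$), the geometric series gives $\Pr[\mycell_\ell(x)\ne\mycell_\ell(y)]\le O(d)\,\rho(x,y)/\levdiam_\ell$, so $b=O(d)$. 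Thus $P$ is a $(2^{5-t},O(d),\sqrt[3]{n/m})$-distance-preserving partition with approximation $2$.

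It remains to implement this in the MPC model and verify the resources. In each of the $L$ phases we first sort the points by their current cell identifier and compute prefix sums (via \cite{GSZ11-sorting}), which lets us assign to each cell $C$ a contiguous block of $m_C=\max\{1,\Theta(m|C|/n)\}$ machines; cells with $|C|\le n/m$ are solved on a single machine. Since the blocks partition the $m$ machines, every machine holds $O(n/m)$ points, and for every large cell $|C|/m_C=\Theta(n/m)$, so the hypothesis $|C|/m_C\ge C\log^3(|C|/\gamma)$ of Lemma~\ref{lemma:find_net} follows from $n/m\ge K\log^3(n/\gamma)$ with $K$ large enough. Lemma~\ref{lemma:find_net} then runs inside $C$ in $O(\log_{n/m}n\,(1+\log_{n/m}m))$ rounds, $O(n/m)$ space per machine, and $2^{O(d)}(n/m)\log(n/m)$ time per machine per round. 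The carving step (Algorithm~\ref{alg:partition}) amounts to broadcasting the net of $C$, of size at most $(n/m)^{1/3}$, to all $m_C$ machines --- which takes $O(1+\log_{n/m}m)$ rounds since one machine may relay an $(n/m)^{1/3}$-sized chunk to $\Theta((n/m)^{2/3})$ machines per round --- and then letting each machine assign its points to the first net ball of radius $r$ containing them, which an approximate-nearest-neighbor structure over the net does in $2^{O(d)}(n/m)\log(n/m)$ time. Each point accumulates $O(L)$ cell labels, so the per-machine space is $O(Ln/m)$; summing the per-phase round and time costs over the $L$ phases gives the stated bounds. The only randomized events that can violate the resource guarantees are the at most $nL$ invocations of Lemma~\ref{lemma:find_net} (the diameter bound of Lemma~\ref{lem:partition} and all the communication are deterministic), so a union bound, each invocation failing with probability at most $\gamma$, yields success probability at least $1-\gamma nL$.

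The step I expect to be the main obstacle is the load balancing and its interaction with the hypotheses of Lemma~\ref{lemma:find_net}: the cells of one level can differ enormously in size, and one must split the $m$ machines among them so that every large cell keeps a points-per-machine ratio $\Theta(n/m)$ --- both to satisfy Lemma~\ref{lemma:find_net}'s preconditions and to make its net parameter land at the single value $t$ --- while simultaneously packing the many small cells onto machines without breaking the $O(n/m)$ space and $2^{O(d)}(n/m)\log(n/m)$ time budgets; matching constants exactly (e.g.\ the $2^{36d}$ threshold that forces $a<1$) needs some care in the choice of $m_C$. The purely geometric parts --- the recursion on diameters and the geometric-series bound on the cut probability --- are then routine given Lemmas~\ref{lemma:find_net}, \ref{lem:partition} and \ref{lem:ddim_restricted}.
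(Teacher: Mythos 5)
Your proposal is correct and follows essentially the same route as the paper's proof: top-down refinement with per-cell nets from Lemma~\ref{lemma:find_net} (using doubling dimension $2d$ via Lemma~\ref{lem:ddim_restricted}), carving via Algorithm~\ref{alg:partition}/Lemma~\ref{lem:partition}, the diameter recursion giving $a=2^{5-t}$ and degree $\sqrt[3]{n/m}$, a geometric-series bound for the cut probability, and the same MPC implementation (net broadcast in $O(1+\log_{n/m}m)$ rounds, ANN-based point assignment, union bound over the $nL$ net invocations). The load-balancing issue you flag is real but is treated even more briefly in the paper ("limited reshuffling of points between machines"), so your treatment is if anything slightly more explicit.
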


\begin{proof}
  We first sketch our algorithm without discussing the details of its
  parallel implementation. The algorithm computes consecutive
  partitions $P_i$ in $L$ phases. Initially, $P_L = \{S\}$, where the
  diameter of $S$ is bounded by $2R$ by assumption. In the $i$-th
  phase, we compute $P_{L-i}$. Let $a = 2^{5-t}$ and let $\Delta_\ell
  = 2a^{L-\ell}R$. By induction, we show that each partition $P_\ell$
  is such that all cells in $P_\ell$ have diameter bounded by
  $\Delta_\ell$. In the $i$-th phase, we partition cells in
  $P_{L-i+1}$ into cells that constitute a partition
  $P_{L-i}$. Consider a cell $\mycell \in P_{L-i+1}$. $\mycell$ can
  be covered by a ball of radius $\Delta_{L-i+1}$. Moreover, by
  Lemma~\ref{lem:ddim_restricted}, the doubling dimension of the
  metric space restricted to $\mycell$ is bounded by $2d$. We start by
  generating a $(\Delta_{L-i+1} \cdot 2^{3-t},\Delta_{L-i+1} \cdot
  2^{1-t})$-net, i.e., a $(\Delta_{L-i}/4,\Delta_{L-i}/16)$-net, for
  each such cell $\mycell$ in $P_{L-i+1}$ in parallel. To achieve
  this goal, we use the algorithm of Lemma~\ref{lemma:find_net}. Then
  we partition each cell, using the corresponding net. For each cell
  $\mycell$, we generate a random permutation of the net and a random
  parameter $r \in [\Delta_{L-i}/4,\Delta_{L-i}/2]$ as in
  Algorithm~\ref{alg:partition}.  Each cell is then partitioned into
  subcells as in Algorithm~\ref{alg:partition} for this choice of
  $r$. Note that each subcell is a subset of a ball of radius
  $\Delta_{L-i}/2$. Hence, each subcell has diameter bounded by
  $\Delta_{L-i}$ as desired. Moreover,  the partition has size
  bounded by the size of the net, which, by Lemma~\ref{lemma:find_net}
  has size at most $\sqrt[3]{n/m}$. 

  Observe that $a\ge 2$, because $n/m \ge 2^{36\cdot d}$. Therefore,
  the probability that for $\Delta_\ell \ge \rho(u,v)$, two points
  $u,v \in S$ are separated is bounded by $\sum_{i=\ell}^{L}
  \frac{O(d)\cdot\rho(u,v)}{\Delta_i\cdot a^{i-\ell}} = O(d) \cdot
  \rho(u,v)/\Delta_i$ due to the properties of the partition generated
  by Algorithm~\ref{alg:partition}, which were shown in
  Lemma~\ref{lem:partition}.  This shows that the algorithm produces a
  $(2^{5-t},O(d),\sqrt[3]{n/m})$-distance-preserving hierarchical partition.

\junk{It remains to describe implementation details. We start by discussing additional information that we keep for cells. It allows for efficiently assigning subcells to subsets of machines. In each phase, each cell in the current $P_\ell$ is assigned an exclusive range of consecutive integers in $[n]$. Initially, in $P_L$, there is only one cell and it is assigned the entire range $\{1,2,\ldots,n\}$. When we recursively partition this cell into subcells, we assign a subrange to each subcell, where the length of the subrange equals the size of the subcell. Therefore, each cell $\mycell$ in each $P_\ell$ corresponds to a range $\{x,\ldots,y\}\subseteq[n]$ that is disjoint from ranges of other cells in the same $P_\ell$, and additionally, $x-y+1 = |\mycell|$.
The ranges are used to assign points to machines with at most $\lceil n/m \rceil$ of them being assigned to a single machine. For a cell with a range $\{x,\ldots,y\}$, the goal is to assign a unique number from the range to each point in the cell (with subcells being assigned subranges), so that we know where to send the point for the next round of computation. The $i$-th machine is sent $n_i = \lfloor n/m \rfloor + b_i$ points, where
$b_i = 1$ for $i \le n - m\cdot \lfloor n/m \rfloor$, and $b_i=0$ for larger $i$. The numbers of points the $i$-th machine obtains are in the range $\{1+\sum_{j=1}^{i-1}n_j,\ldots,\sum_{j=1}^{i}n_j\}$. This allows to easily compute where a point should be sent at the end of the specific round after a consecutive partition $P_\ell$ was computed. Observe that each subcell is assigned to consecutive machines, because points in it are labeled with consecutive numbers.}

In each phase we first construct a net for each cell. For small cells
that fit in the memory of a single machine, this can be done
locally.\footnote{If points in a small cell happen to be shared by two
  machines, we can send them all to one of them.} For larger cells, we
run the algorithm described in Lemma~\ref{lemma:find_net}, where we
can still guarantee that the ratio $n/m$ is sufficiently large, which
in the worst case may require limited reshuffling of points between
machines. Once we are done, for each cell, the machine holding the
net, generates its random permutation and a random parameter $r$ as in
Algorithm~\ref{alg:partition}. These objects are next broadcasted to
all machines that hold points from the cell. Since the size of the net
is bounded by $\sqrt[3]{n/m}$, this requires at most $O(1+\log_{n/m}
m)$ rounds of communication with linear work and communication per
machine per round. After that each machine computes the partition of
points it holds as in Algorithm~\ref{alg:partition}. An efficient
implementation uses the $\epsilon$-ANNS data structure from
Theorem~\ref{thm:dd-anns} with approximation factor of $1+\epsilon =
2$. 
%dynamic data structure of Cole and Gottlieb~\cite{CG-NN} with an approximation factor of 2. 
We first insert into the data structure all
points in the net. Then for each point $w$ in the cell, we want to
find all points at distance at most $r$. This can be done by finding
approximate nearest neighbors of $w$ in the data structure and
removing them as long as the distance of the neighbor is bounded by
$2r$. All points at distance at most $r$ from $w$ have to be on the
list of the neighbors we have found. At the end, we insert the
neighbors back into the data structure. To bound the number of
approximate nearest neighbors we consider, recall that the net is a
$\delta/4$-packing and the doubling dimension of the cell is bounded
by $2d$. Therefore, the ball of radius $2r$ contains at most
$2^{O(d)}$ net points via the standard argument. The computation time
for this step may be superlinear, but is at most $2^{O(d)}\cdot
(n/m)\cdot\log (n/m)$ per machine. \junk{After each point is assigned
  to a subcell, in the next $O(1+\log_{n/m} m)$ rounds we compute the
  number of points in each subcell and broadcast an assignment of
  exclusive subranges of $[n]$ to subcells and points in
  machines. This allows for sending all points to new machines in one
  round.} After this, we have computed an assignment of subcells to
points and we can proceed to the next lower level of $\detpart$.

The running time of the algorithm per machine per round is dominated by the rounds that require using the nearest neighbor data structures, which is at most $2^{O(d)}(n/m) \log (n/m)$. The communication complexity is dominated by reshuffling of points at the end of each phase, which requires sending 
$O(L\cdot n/m)$ information. Each point is sent with a list of cells it belongs to at each level. The bounds on space and hold with probability at least $1-\gamma n L$ via the union bound.
\end{proof}

\subsection{MST for Bounded Doubling Dimension}

\begin{theorem}
Let $M = (S,\rho)$ be a metric space with bounded doubling dimension $d\ge 1$, where $|S| = n > 1$. Let $\eps \in (0,1/3)$.
Assume the set $S$ is given as input, and the distance
$\rho(\cdot,\cdot)$ between each pair of points can be computed in
constant time. Let $m$ be the number of machines, and assume that $$n/m \ge \max\left\{K\cdot \log^3 n,2^{36\cdot d},\left(\left(1+\frac{\log(n/\eps)}{\log(n/m)}\right)\cdot \frac{2^d}{\eps}
\right)^{C \cdot d}\right\},$$ where $K$ is a sufficiently large constant. Let $t = \left\lfloor \frac{\log(n/m)}{6d}\right\rfloor$. There is an algorithm that computes a spanning tree of the set of the points such that:
\begin{itemize}
\item the expected cost of the tree is $1+\eps$ times the cost of the minimum spanning tree,
\item with probability $1-1/n^3$,
\begin{itemize}
\item the total number of rounds is bounded by $O\left(\log_{n/m} n \cdot L \cdot  \left(1+\log_{(n/m)}m\right)\right)$, where $L = 1+O\left(
d\cdot\frac{\log(n/\eps)}{\log(n/m)}\right)$,
\item the computation time per machine per round is bounded by $O(2^{O(d)} \cdot (n/m)\cdot\log^{O(1)}(n/m)
$,
\item the total space per machine is bounded by $O(L \cdot (n/m))$,
\item the total communication per machine per round is bounded by $O(L \cdot (n/m))$.
\end{itemize}
\end{itemize}
\end{theorem}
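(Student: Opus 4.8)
The plan is to derive the statement by combining three components that are already in place: the distance-preserving hierarchical partition of Lemma~\ref{lem:doubling_hierarchy}, the approximation analysis of the Solve-and-Sketch MST algorithm in Theorem~\ref{thm:mst}, and the generic MPC implementation of Theorem~\ref{thm:generalSAS}. First I would reduce to the bounded-aspect-ratio case by a reduction analogous to the Euclidean one in Section~\ref{sec:aspectRatio}: compute a $2$-approximation $D$ of $\diam(S)$ (as $D = 2\max_{v\in S}\rho(w,v)$ for an arbitrary $w$), identify any two input points at distance below $\Theta(\eps D/(dn))$ and reconnect each resulting cluster by an arbitrary subtree at the very end; since $\cost{\topt}\ge D/2$, this perturbs the optimal MST cost by a $1+O(\eps)$ factor while leaving a point set of aspect ratio $\mathrm{poly}(n/\eps)$, and it is implementable within the round and time budget via the net-construction machinery of Lemma~\ref{lemma:find_net}. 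Then I would invoke Lemma~\ref{lem:doubling_hierarchy} with failure parameter $\gamma = n^{-\Theta(1)}/L$ to obtain a $(2^{5-t},O(d),\sqrt[3]{n/m})$-distance-preserving partition $\randpart$ with approximation $2$; the hypothesis $n/m\ge 2^{36d}$ forces $t\ge 6$ and hence $a = 2^{5-t}\le\tfrac12$, and since each level contracts diameters by the factor $a = 32\,(n/m)^{-1/(6d)}$, taking $L = 1 + O(d\log(n/\eps)/\log(n/m))$ levels is exactly enough for $\detpart_0$ to be a partition into singletons on the bounded-aspect-ratio instance (this is precisely the $L$ in the statement). Finally I would run Algorithm~\ref{alg:load-balancing} with the MST unit step Algorithm~\ref{alg:unit-step} on a sample $\detpart\sim\randpart$, with the unit-step accuracy parameter set to $\eps_0 = \Theta(\eps/(Ld))$.

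For the approximation guarantee the verification is short: since $\eps < 1/3$ and $Ld\ge 1$ we have $\eps_0\le\tfrac14$, and together with $a\le\tfrac12$ Theorem~\ref{thm:mst} applies and gives $\E_{\detpart\sim\randpart}[\cost{\tour}]\le(1 + \eps_0\,O(Lb))\cost{\topt}\le(1+\eps/2)\cost{\topt}$ for the snapped instance, where $b = O(d)$. Composing with the $1+O(\eps)$ distortion of the aspect-ratio reduction and rescaling $\eps$ by a constant yields expected cost at most $(1+\eps)$ times the MST cost of the original input.

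For the resource bounds I would first prove the bounded-doubling-dimension analogue of Lemma~\ref{lm:unit-step-compl}: substituting the Cole--Gottlieb dynamic ANNS structure (Theorem~\ref{thm:dd-anns}) into Eppstein's chromatic-closest-pair reduction (Theorem~\ref{thm:ccp}), exactly as in the proof of Lemma~\ref{lm:unit-step-compl}, the unit step on input of size $n_u$ runs in time $n_u(2^{O(d)}+\eps_0^{-O(d)})\log^{O(1)}n_u$ and space $n_u\log^{O(1)}n_u$, with output size $p_u = \eps_0^{-O(d)} = (Ld/\eps)^{O(d)}$ (a doubling-metric cell of diameter $\Delta_\ell$ admits an $\eps_0^2\Delta_\ell$-net of that size, greedily computable via the same ANNS structure). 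Setting $s = \Theta(n/m)$ and branching factor $c = \sqrt[3]{n/m}$, I would then check that the hypotheses $s_u(cp_u)\le s$ and $p_u\le\tfrac{\sqrt{s}}{4cL}$ of Theorem~\ref{thm:generalSAS} both reduce to $n/m\ge (Ld\,2^d/\eps)^{\Theta(d)}$, which is exactly the third term of the assumed lower bound on $n/m$; the required good ordering and comparison time $\tau = O(dL)$ are supplied by Lemma~\ref{lem:doubling_hierarchy}. Applying Theorem~\ref{thm:generalSAS} then bounds the Solve-and-Sketch phase by $O(\log_{n/m}^2 n)$ rounds, $O(L\,n/m)$ space per machine, and a per-round local time governed by the unit-step cost; adding the partition-construction bounds of Lemma~\ref{lem:doubling_hierarchy} (namely $O(L\log_{n/m}n\,(1+\log_{n/m}m))$ rounds, $2^{O(d)}(n/m)\log(n/m)$ time per round, and $O(L\,n/m)$ space and communication per machine) and using $L = \Omega(\log_{n/m}n)$ to absorb the $O(\log_{n/m}^2 n)$ term recovers the stated round, time, space, and communication bounds. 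A union bound over the $n^{-\Theta(1)}/L$ failure probability of the partition construction and the high-probability guarantees of the sorting and prefix-sum primitives of~\cite{GSZ11-sorting} used inside Algorithm~\ref{alg:load-balancing} leaves overall success probability $1-1/n^3$.

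The main obstacle I expect is not any single estimate but the bookkeeping that reconciles the two layers of parallelism: the hierarchical partition is not free input but is produced by the multi-round, communication-heavy subroutine of Lemma~\ref{lem:doubling_hierarchy}, whose degree $c = \sqrt[3]{n/m}$, net sizes, and per-phase reshuffling must be threaded consistently through the size thresholds of the Solve-and-Sketch implementation so that no machine ever exceeds $O(L\,n/m)$ space or communication. One must verify simultaneously that $c = \sqrt[3]{n/m}$ is small enough that $L = O(d\log(n/\eps)/\log(n/m))$, and hence that all the $O(L\cdot)$ overheads as well as the $1+\eps_0 O(Lb)$ approximation are as claimed, and at the same time large enough that $c\,p_u\le s$, i.e.\ that each internal unit step's input still fits in local memory; the hypothesis $n/m\ge((1+\log(n/\eps)/\log(n/m))\cdot 2^d/\eps)^{Cd}$ is precisely what makes these competing demands compatible, and pinning down these invariants is where the care is needed.
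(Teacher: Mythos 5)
Your proposal is correct and follows essentially the same route as the paper's proof: a $2$-approximate diameter computation, a reduction ensuring only $L = 1+O(d\log(n/\eps)/\log(n/m))$ levels are needed (the paper truncates the hierarchy at cells of diameter $\eps R/(6n)$ and connects points inside them arbitrarily, which is equivalent to your snapping step), the partition of Lemma~\ref{lem:doubling_hierarchy}, Theorem~\ref{thm:mst} with accuracy rescaled by $\Theta(1/(Lb))$, the unit step implemented via Eppstein's reduction (Theorem~\ref{thm:ccp}) with the Cole--Gottlieb ANNS structure (Theorem~\ref{thm:dd-anns}) plus a greedy $(1/\eps')^{O(d)}$-size covering, and the generic Solve-And-Sketch implementation theorem, with the third term of the $n/m$ hypothesis verified to make $p_u$, $c$, and $s$ compatible. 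The only differences are cosmetic (e.g.\ invoking Theorem~\ref{thm:generalSAS} directly rather than Theorem~\ref{thm:unitstep}, and $\eps_0=\Theta(\eps/(Ld))$ versus the paper's more conservative $\eps/(L\cdot 2^{O(d)})$), so your argument matches the paper's.
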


\begin{proof}
We first use the algorithm of Indyk~\cite{Indyk99} to compute a $2$-approximation for the diameter of the set of points. We pick a single point $u$, and send it to all machines. Each machine computes in linear time the maximum distance of its points to $u$. The maximum of these values is sent back to a single machine. This takes at most $1+O(\log_{n/m}m)$ rounds with at most linear computation and communication per round per machine. Let $R$ be the computed value. Clearly, the diameter of the set of points lies in $[R,2R]$.

Let $\rho(\topt)$ be the cost of the minimum spanning tree. Consider
now the following procedure. Pick any $\eps R/(2n)$-covering $S'$ from
$S$ and then remove $S' \setminus S$ from the input.  Observe that
this changes the cost of any tree by at most $(n-1) \cdot \eps R/(2n)
\le \eps R/2$, because each removed point can be connected to an
arbitrary point at distance at most $\eps R/(2n)$. This implies that
if we compute the minimum spanning tree in the modified point set, the
corresponding tree for the original point set cannot cost more than
$\rho(\topt) + \eps R$. Furthermore, since $\rho(\topt)\ge R$, it
suffices to compute a minimum spanning tree for the modified point set
to obtain a $(1+\eps)$-approximation for the original problem.

We now wish to apply Lemma~\ref{lem:doubling_hierarchy} in order to
compute a hierarchical partition in which the diameter of cells at the
lowest level is bounded by $\eps R/(6n)$, at which point we can
connects points arbitrarily as observed above. The number of
levels of the partition we need can be bounded by
$\left\lceil\frac{\log(6n/\eps)}{t}\right\rceil$, where $t =
\left\lfloor\frac{\log(n/m)}{6d}\right\rfloor$. Due to our
requirements on the $n/m$ ratio, one can show that $L$, the number of
levels, is of order at most
$1+O\left(d\cdot\frac{\log(n/\eps)}{\log(n/m)}\right)$. We run the
algorithm of Lemma~\ref{lem:doubling_hierarchy} to compute the desired
$(2^{5-t},O(d),2)$-distance-preserving hierarchical partition with $L$
levels, with that algorithm's failure probability parameter $\gamma$ set to
$1/n^{5}$. The algorithm obtains the desired runtime bounds with
probability $1-n^3$. In particular, this means that running the
algorithm requires at most $O\left(\log_{n/m} n \cdot L \cdot
  \left(1+\log_{(n/m)}m\right)\right)$ rounds of communication.

By Theorem~\ref{thm:mst}, the expected
cost of the tree output by the Solve-and-Sketch algorithm with unit
step Algorithm~\ref{alg:unit-step} is bounded by $(1+\eps' \cdot O(L)
\cdot 2^{O(d)}) \cost{\topt}$ for some parameter $\eps'$. By setting $\eps'=
\eps/(L\cdot 2^{K_1 \cdot d})$ for some large constant $K_1$, the
expected cost of the tree becomes at most $(1+\eps)\cost{\topt}$. 

Next we want to discuss how to execute Algorithm~\ref{alg:unit-step}
in the cells of the resulting partition. For the sake of time and
space efficiency, we use the dynamic $\epsilon$-ANNS data structure
from Theorem~\ref{thm:dd-anns}. Algorithm~\ref{alg:unit-step} is
implemented exactly as in the proof of
Lemma~\ref{lm:unit-step-compl}, but substituting the data structure
fro Theorem~\ref{thm:dd-anns} for the Euclidean space $\epsilon$-ANNS
data structure. I.e.~we use Eppstein's reduction in
Theorem~\ref{thm:ccp} to construct an data structure for the dynamic
$\epsilon$-CCP problem (Definition~\ref{def:ccp}). Then we simulate
Algorithm~\ref{alg:unit-step} using the $\epsilon$-CCP data structure,
and each time we need to merge two connected components, we recolor
the points of the smaller connected component to the color of the
larger one. Each point is recolored at most $O(\log (n/m))$ times, and
the overall time complexity is $2^{O(d)}(n/m)\log^{O(1)}(n/m)$ by
Theorems~\ref{thm:ccp} and \ref{thm:dd-anns}. The space complexity is
$O((n/m)\log (n/m))$ by 

Each execution of the unit step (Algorithm~\ref{alg:unit-step}) needs
to return an $\eps'^2\Delta_\ell$-covering (recall that $a=2^{5-t}$,
$\Delta_\ell = 2Ra^{L-\ell}$). We prove the bound on output size by
induction. The coverings for cells in $P_0$ are simply single points,
so the desired bound holds. Consider $\ell > 0$. After connecting
components has been finished, we combine the
$\eps'^2\Delta_{\ell-1}$-coverings for subcells together. Let $U$ be
their union. $U$ is a $\eps'^2\Delta_{\ell-1}$-covering for the
current cell.
% We create an approximate nearest neighbor data structure $\mathcal D$ for this set of points with an approximation factor of 2 (see~\cite{CG-NN}). 
We sparsify the set of points, creating a covering $V'$, as follows. Initially, $V' = \emptyset$. We pick an arbitrary point $v \in U$. % present in $\mathcal D$.
We add $v$ to $V'$ and remove all points at distance at most $\eps'^2\Delta_{\ell}/2$
from $U$. %it from $\mathcal D$.
%Now as long as $\mathcal D$ is not empty and the approximate nearest neighbor $u$ of $v$ that $\mathcal D$ returns is at distance at most $\eps'^2\Delta_{\ell}/2$, we remove $u$ from $\mathcal D$.
Once we are done, if $U$ is non-empty, we repeat the procedure for a new $v$. The resulting set $V'$ is a $\eps'^2\Delta_{\ell}$-covering of the current cell, because each point in the cell is at distance at most $\eps'^2\Delta_{\ell-1}+\eps'^2\Delta_{\ell}/2 \le \eps'^2\Delta_{\ell-1} (a + 1/2) \le \eps'^2\Delta_{\ell-1}$ from some point in $V'$,
which finishes the induction.

Observe that due to how $V'$ was constructed, the minimum distance
between points is greater than $\eps'^2\Delta_{\ell}/2$. We can use
this property to bound the size of each $V'$ constructed in the
process. $V'$ can be covered with a ball of radius $\Delta_\ell$ in
the original metric space, which implies it can be covered with at
most $2^{d\cdot\lceil\log (4/\eps'^2)\rceil}$ balls of radius
$\Delta_\ell/4$. Each of these balls can contain at most one point in
$V'$, which means that the size of the covering is bounded by
\[(1/\eps')^{O(d)} = (L\cdot 2^d/\eps)^{O(d)} =
\left(\left(1+O\left(d\cdot\frac{\log(n/\eps)}{\log(n/m)}\right)\right)\cdot
  2^d/\eps \right)^{O(d)}=
\left(\left(1+\frac{\log(n/\eps)}{\log(n/m)}\right)\cdot 2^d/\eps
\right)^{O(d)}.\] 
In order to satisfy the output size requirements of
Theorem~\ref{thm:unitstep}, it suffices
that \[(\sqrt[3]{n/m})\log(n/m) \ge
\left(\left(1+\frac{\log(n/\eps)}{\log(n/m)}\right)\cdot 2^d/\eps
\right)^{K_2 \cdot d}\] for some large constant $K_2$.

The existence of an efficient MPC implementation with the promised
time, space, communication, and round complexity then follows from
Lemma~\ref{lem:doubling_hierarchy} and Theorem~\ref{thm:unitstep}. 
\end{proof}

\section{Lower Bounds}
\label{sec:lb}
\subsection{Conditional Lower Bound for MST}

Computing connectivity for sparse graphs appears to be a hard problem
for the MPC/MapReduce model with a constant number of rounds
\cite{BKS13-communication}. Assuming the hardness of this problem
(i.e., that it cannot be solved in a constant number rounds), we show
that computing the exact cost of the minimum spanning tree in a
$O(\log n)$-dimensional space requires a super-constant number of
rounds as well.

\begin{theorem}
If we can exactly compute the cost of the minimum spanning tree
in $\ell_\infty^d$ for $d=100\log n$ in a constant number of rounds in the MPC
model, then we can also decide whether a general graph $G$ with $O(n)$
edges is connected or not in a constant number of rounds.
\end{theorem}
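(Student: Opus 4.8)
The plan is to give an $O(1)$-round MPC reduction that turns an $n$-vertex graph $G$ with $O(n)$ edges into a point set in $\ell_\infty^{100\log n}$ whose \emph{exact} minimum spanning tree cost reveals the number of connected components of $G$, and in particular whether $G$ is connected; since the reduction itself never computes connected components, an $O(1)$-round exact-MST-cost algorithm would then solve sparse connectivity in $O(1)$ rounds, contradicting \cite{BKS13-communication}. First I would reduce to \emph{bounded degree}: replace $G$ by a graph $\tilde G$ of maximum degree $3$ on $N=O(n)$ vertices with the same number of components, by a purely local transformation — give each edge a global identifier, and for a vertex $v$ with incident edges $e_1,\dots,e_{d_v}$ (sorted by identifier) create copies $v^{(1)},\dots,v^{(d_v)}$ joined in a path, rerouting an edge $\{u,v\}$ to $\{u^{(i)},v^{(j)}\}$ where $i,j$ are the ranks of that edge in the sorted edge lists of $u$ and $v$ (isolated vertices get a single isolated copy). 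Each gadget is a path, so contracting the gadgets recovers $G$ and hence $\tilde G$ has exactly the components of $G$; this step uses only sorting and local work, i.e.\ $O(\log_s n)=O(1)$ rounds. Now put the truncated metric $\rho$ on $V(\tilde G)$ with $\rho(u,v)=1$ on edges and $\rho(u,v)=2$ otherwise; all positive values lie in $[1,2]$, so this is a metric, and Kruskal's algorithm takes a spanning forest of $\tilde G$ ($N-c$ weight-$1$ edges, where $c$ is the number of components) and then $c-1$ weight-$2$ edges, giving MST cost exactly $N-2+c$. Thus the exact MST cost determines $c$ and hence connectivity — and exactness is essential, since the connected and disconnected cases differ only by a $1+\Theta(1/N)$ factor, which a $(1+\eps)$-approximation could not separate.

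The crux is to realize $\rho$ isometrically in $\ell_\infty^{O(\log N)}$. The Fréchet embedding whose $w$-th coordinate is $v\mapsto\rho(v,w)$ is an isometry into $\ell_\infty^{N}$ but has too many coordinates, so instead I would use $D=100\log N$ coordinates indexed by independent random subsets $T_1,\dots,T_D\subseteq V(\tilde G)$ (each vertex placed in $T_k$ with probability $1/2$), setting the $k$-th coordinate of $v$ to $y^{(k)}_v=\min_{w\in T_k}\rho(v,w)\in\{0,1,2\}$. Since $\min$ is $1$-Lipschitz in $\ell_\infty$ and $\rho$ obeys the triangle inequality, the map is automatically non-expansive, $\|y_u-y_v\|_\infty\le\rho(u,v)$, so the only thing to prove is that with high probability every pair of distinct vertices has a coordinate attaining equality. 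For an edge $\{u,v\}$ it is enough that some $T_k$ contains exactly one of $u,v$: that coordinate then reads $0$ on the endpoint in $T_k$ and $1$ on the other (at most $1$ because an adjacent member of $T_k$ is present, at least $1$ because the vertex is not in $T_k$). For a non-edge $\{u,v\}$ it is enough that some $T_k$ contains $u$ but is disjoint from the closed neighborhood of $v$ (or symmetrically): that coordinate reads $0$ on $u$ and $2$ on $v$. Here the bounded degree of $\tilde G$ is exactly what makes this last event have probability $2^{-O(1)}$ per coordinate, so with $D=100\log N$ independent coordinates each of the $O(N^2)$ pairs is unwitnessed with probability $N^{-\Omega(1)}$, the constant $100$ leaving slack for the union bound. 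To respect the model's communication limits I would draw the $D$ membership vectors from an $O(\log N)$-wise independent hash of the vertex names seeded by $O(\log^2 N)$ shared bits; since each witnessing event involves only $O(\log N)$ vertices, this much independence suffices.

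Putting it together: broadcast the short seed in one round; each machine then computes the $\ell_\infty^{O(\log N)}$ coordinates of its vertices from their bounded adjacency lists locally; feed the resulting $N$-point instance to the assumed $O(1)$-round exact-MST-cost algorithm; and output ``connected'' iff the returned cost equals $N-1$. Every stage runs in $O(1)$ rounds and none of them solves connectivity, which is the contradiction we want. I expect the main obstacle to be exactly the embedding step — proving that the aggregated Fréchet map is with high probability an isometry of $\rho$, in particular that the non-edge-witnessing event is not negligible, which is precisely where the bounded-degree preprocessing is indispensable; by contrast the degree reduction, the MST-cost formula, and the broadcast/assembly are routine uses of the $O(\log_s n)$-round sorting and prefix-sum primitives already available in the model.
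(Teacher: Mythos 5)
Your proposal is correct, but it takes a genuinely different route from the paper. The paper's reduction works at the level of \emph{edges}: after adding a shadow neighbor to each vertex to kill isolated vertices, it assigns each vertex $i$ an i.i.d.\ random sign vector $v_i\in\{-1,+1\}^{100\log n}$ and maps each edge $e=(i,j)$ to the point $p_e=v_i+v_j$; a two-line Chernoff argument shows that with high probability incident edges land at $\ell_\infty$ distance exactly $2$ and non-incident edges at exactly $4$, so the MST cost of the point set is $2(E-1)$ iff $G$ is connected. You instead work at the level of \emph{vertices}: you first do a degree-$3$ reduction via path gadgets, put the truncated metric ($1$ on edges, $2$ on non-edges) on the new vertex set, observe that its exact MST cost $N+c-2$ encodes the number of components, and then realize this metric isometrically in $\ell_\infty^{O(\log N)}$ by randomized Fr\'echet coordinates $v\mapsto\min_{w\in T_k}\rho(v,w)$, with the bounded degree being exactly what makes the non-edge witnessing event have constant probability per coordinate, and with limited independence to keep the shared seed small. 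Both reductions are sound; the paper's buys simplicity --- no degree reduction, a purely local per-edge construction, and a one-step concentration bound --- while yours buys a reusable isometric-embedding recipe for truncated metrics of bounded-degree graphs and a more explicit treatment of the MPC implementation (seed broadcast, limited independence, sorting for the gadget construction), details the paper leaves implicit. Two small points to tidy up in your write-up: define the coordinate as $\min\{2,\min_{w\in T_k}\rho(v,w)\}$ so that an empty $T_k$ causes no trouble (truncation preserves $1$-Lipschitzness), and note that your dimension is $100\log N$ with $N=O(n)$ rather than $100\log n$, a constant-factor slack that is absorbed by the per-coordinate success probability $2^{-5}$ and the union bound over $O(n^2)$ pairs.
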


\begin{proof}
Assuming we have a constant-round algorithm for MST, we show how to
solve the connectivity problem. The input is a graph $G$ on $n$
vertices with $E=O(n)$ edges. Without loss of generality, no vertex in the graph is isolated.\footnote{To achieve this property, we can always add a shadow vertex $i'$ for each vertex $i$, and connect them.} This way the connectivity of edges and the connectivity of vertices become the same.

For each $i\in[n]$, pick a random vector $v_i\in\{-1,+1\}^d$. For each
edge $e=(i,j)$, we generate a point $p_e=v_i+v_j$. Then we solve MST
for the set of points $p_e$, where $e$ ranges over all edges $e$.

We claim that, with high probability, if the graph is connected, then MST cost is
$2(E-1)$. Otherwise, MST cost is $\ge 2(E-1)+2$.

\begin{claim}
With high probability, for each distinct $i,j, k,l$, we have:
\begin{itemize}
\item
$\|v_i-v_j\|_\infty=2$
\item
$\|(v_i+v_j)-(v_k+v_l)\|_\infty=4$.
\end{itemize}
\end{claim}
\begin{proof}
The claims follow immediately from an application of the Chernoff
bound. In the second case, note that each coordinate of
$(v_i+v_j)-(v_k+v_l)$ has a constant probability of being equal to
$\pm 4$.
\end{proof}

Now, if two edges $e,e'$ are incident, then we have that
$\|p_e-p_{e'}\|_\infty =2$, and otherwise, we have that
$\|p_e-p_{e'}\|_\infty=4$. Hence, if the graph $G$ is connected, then
MST cost is $2(E-1)$. Otherwise, the MST of all points $p_e$ must
connect two non-incident edges, increasing the cost to at least $2(E-1)+2$.
\end{proof}

\subsection{Query Lower Bound for Exact MST in Constant Doubling Dimension}

The lower bounds we present here and in Section~\ref{sec:lb_approx_general}
concern the complexity of the black-box distance query model. More specifically,
we assume that there is a distance oracle that each machine can query
in parallel in $O(1)$ time. In order to make a distance query, the machine
needs to have two identifiers, but it can hold only a limited number
of them at any time.

We use this model in our algorithm for MST for spaces with bounded
doubling dimension in Section~\ref{sec:doubling}. The goal of the
current section is to show that in order to obtain efficient
algorithms in this model for bounded doubling dimension one needs to allow for approximate solutions. 

Let $n$ be the total number of points. Let $s \ge 2$ be the number of
identifiers a machine can store and let $m$ be the number of
machines. Clearly, the total number of distances that can be queried
in a single round is $m\cdot \binom{s}{2} \le m\cdot s^2$. By showing
that the total number of queries has to be large, we show that
multiple rounds of computation are necessary.

We start with an auxiliary lemma.

\begin{lemma}\label{lem:find_in_array}
Consider a function $f:[n] \to \{0,1\}$, where $n \ge 5$, such that the value of $f$ equals 1 for exactly one $x \in [n]$. Correctly guessing $x$ such that $f(x)=1$ with probability greater than $1/2$ requires more than $n/4$ queries to $f$.
\end{lemma}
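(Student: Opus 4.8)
The plan is to argue by a standard adversary / averaging argument. First I would fix any (possibly randomized) algorithm that makes at most $q \le n/4$ queries to $f$ and outputs a guess, and show it cannot be correct with probability exceeding $1/2$ over the uniformly random choice of the special element $x^\star \in [n]$ (the one with $f(x^\star)=1$). By Yao's principle it suffices to consider a \emph{deterministic} algorithm against the uniform input distribution: the special location $x^\star$ is chosen uniformly from $[n]$, and we bound the success probability of the best deterministic decision tree of query-depth $q$.

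The key steps, in order: (1) Run the deterministic algorithm; since all its $q \le n/4$ queries return $0$ unless it happens to hit $x^\star$, condition on the event $E$ that none of the queried points is $x^\star$. On $E$ the algorithm's transcript is identical for every choice of $x^\star$ outside the queried set, so its final guess is some fixed element $g \in [n]$ independent of which of the remaining $n - q$ candidates is special. (2) Bound $\Pr[\neg E] = q/n \le 1/4$: this is the probability the algorithm stumbles onto $x^\star$ during querying. (3) On $E$, the guess $g$ is correct only if $x^\star = g$, which (conditioned on $E$, i.e. conditioned on $x^\star$ being uniform among the $\ge n - q \ge 3n/4$ unqueried points, assuming $g$ is even among them) happens with probability at most $1/(n-q) \le 1/(3n/4) = 4/(3n)$. (4) Combine: the overall success probability is at most $\Pr[\neg E] \cdot 1 + \Pr[E]\cdot \frac{1}{n-q} \le \frac{q}{n} + \frac{1}{n-q}$. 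With $q \le n/4$ this is at most $\frac14 + \frac{1}{3n/4} = \frac14 + \frac{4}{3n}$, which is $< \frac12$ whenever $\frac{4}{3n} < \frac14$, i.e.\ $n > 16/3$, i.e.\ $n \ge 6$; the case $n = 5$ with $q \le \lfloor 5/4\rfloor = 1$ query is checked directly ($\Pr[\text{success}] \le \frac15 + \frac14 = \frac{9}{20} < \frac12$). Finally invoke Yao to transfer the bound to randomized algorithms.

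The only mildly delicate point — the main obstacle, such as it is — is handling the boundary cases and the exact constants: one has to be slightly careful that ``more than $n/4$ queries'' means $q \le \lfloor n/4 \rfloor$ suffices to force failure, and that the crude bound $\frac{q}{n} + \frac{1}{n-q} < \frac12$ actually holds for all integers $n \ge 5$ (it is cleanest to treat $n = 5$ separately and use monotonicity for $n \ge 6$). Everything else is the routine decision-tree / fooling-set reasoning, so I would state it compactly rather than belabor it.
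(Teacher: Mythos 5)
Your proof is correct and takes essentially the same route as the paper's: place the uniform distribution on the special element, bound the probability of querying it by $q/n \le 1/4$, and bound the conditional success probability given a miss by one over the number of unqueried candidates. You are merely a bit more explicit about Yao's principle, the fixed transcript on the miss event, and the $n=5$ boundary case, where the paper instead uses $n-\lfloor n/4\rfloor \ge 4$ to get a uniform $1/4$ bound.
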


\begin{proof}
We write $x_\star$ to denote the argument for which $f(x_\star) = 1$. We consider the uniform distribution on possible inputs, i.e., $x_\star$ is selected uniformly at random from $[n]$. Consider any algorithm that makes at most $n/4$ queries. We show that it succeeds with probability at most $1/2$. The probability that the algorithm queries $f(x_\star)$ is at most $(n/4)/n = 1/4$. If it does not query $f(x_\star)$, there are at least $n-\lfloor n/4 \rfloor \ge 4$ unqueried arguments for which the function may equal 1. The probability that the algorithm correctly guess it is therefore bounded by $1/4$. The probability that the algorithm outputs the correct $x_\star$ is at most
$\Pr[\mbox{$f(x_\star)$ is queried}] + \Pr[\mbox{$f(x_\star)$ is not queried}] \cdot
\Pr[\mbox{A}|\mbox{$f(x_\star)$ is not queried}]
\le 1/4 + 1 \cdot 1/4 = 1/2.$\end{proof}

\begin{theorem}
Any algorithm computing the minimum spanning tree for a set of more than $6$ points of doubling dimension ${\le}\log 3$ with probability greater than $1/2$ requires $n^2/16$ distance queries and at least $n^2/(16ms^2)$ rounds.
\end{theorem}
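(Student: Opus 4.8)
The plan is to construct a family of metric spaces of doubling dimension $1$ on which reconstructing the minimum spanning tree is equivalent to finding a single hidden entry in a table of $\Theta(n^2)$ cells, and then to invoke Lemma~\ref{lem:find_in_array}. For concreteness assume $n$ is even (the odd case is analogous). Split the $n$ points into $A=\{a_0,\ldots,a_{n/2-1}\}$ and $B=\{b_0,\ldots,b_{n/2-1}\}$. Inside each group use the line metric, $\rho(a_i,a_j)=|i-j|$ and $\rho(b_i,b_j)=|i-j|$, so each group is an isometric copy of $n/2$ consecutive integers on the real line. Fix a large integer, say $L=10n$, and set $\rho(a_i,b_j)=L$ for \emph{every} cross pair except one secretly chosen pair $(a_{i^\star},b_{j^\star})$, for which $\rho(a_{i^\star},b_{j^\star})=L-1$. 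A short check of the triangle inequality (the only near-tight inequalities are of the form $\rho(a_k,b_{j^\star})\le\rho(a_k,a_{i^\star})+\rho(a_{i^\star},b_{j^\star})$, i.e.\ $L\le 1+(L-1)$) confirms this is a metric.

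I would then establish two structural facts. First, the doubling dimension is $1$: any ball of radius $r<1$ is a single point; for $1\le r<L-1$ a ball is contained in one of the two groups, hence is a sub-interval of the line and is covered by two balls of radius $r/2$; and for $r\ge L-1$ a single ball of radius $r/2$ already covers the whole group of its centre (since $r/2\ge (L-1)/2> n/2>\diam(A)$), while one additional radius-$r/2$ ball absorbs whatever else lies in the ball — at most the one special partner when $L-1\le r<L$, or the entire other group when $r\ge L$. Thus every ball of radius $r$ is covered by two balls of radius $r/2$, so the doubling constant is $2$ and the doubling dimension is $1\le\log 3$. Second, the MST is unique and must contain the edge $(a_{i^\star},b_{j^\star})$: running Kruskal's algorithm, the $n-2$ unit-weight edges forming the two internal paths of $A$ and $B$ are all forced in, every heavier within-group edge would close a cycle, and the unique cheapest cross edge — the special pair, of weight $L-1$ — is the last edge added; hence any algorithm that outputs a spanning tree of minimum cost must in particular output the edge $(a_{i^\star},b_{j^\star})$.

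Finally I would run the reduction. We may hand the algorithm, for free, the partition $A\cup B$ together with all within-group distances, so that within-group queries reveal nothing about $(i^\star,j^\star)$, whereas a query $\rho(a_i,b_j)$ returns $L-1$ precisely when $(i,j)=(i^\star,j^\star)$ and $L$ otherwise — that is, it is exactly one probe into the indicator of the marked cell of an $(n/2)\times(n/2)$ table. Choosing $(i^\star,j^\star)$ uniformly at random and applying Lemma~\ref{lem:find_in_array} with its ``$n$'' set to $(n/2)^2=n^2/4\ge 5$, any algorithm that locates the marked cell — and hence any algorithm that correctly outputs the MST — with probability greater than $1/2$ must make more than $(n^2/4)/4=n^2/16$ distance queries in the worst case. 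Since in a single round the $m$ machines, each holding at most $s$ identifiers, can jointly make at most $m\binom{s}{2}\le ms^2$ queries, at least $n^2/(16ms^2)$ rounds are required.

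The only delicate point is the doubling-dimension verification: one must choose the within-group geometry so that each group (and therefore every ball) stays doubling with small constant — using a line rather than, say, an equilateral ``cloud'' of points, which would blow the doubling constant up to $\Theta(n)$ — while still leaving $\Theta(n^2)$ mutually indistinguishable cross pairs in which to hide the information. Once the metric is arranged so that both structural facts hold, the lower bound is an immediate consequence of Lemma~\ref{lem:find_in_array}.
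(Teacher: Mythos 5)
Your construction and reduction are essentially the paper's own: split the points into two halves carrying the integer-line metric, make all cross distances equal except one slightly smaller hidden pair, observe that the MST must contain that special edge, apply Lemma~\ref{lem:find_in_array} to the $(n/2)\times(n/2)$ table of cross pairs to get the query bound $n^2/16$, and divide by the at most $m\binom{s}{2}\le ms^2$ queries available per round. The only differences are cosmetic (cross distances $10n$ and $10n-1$ instead of the paper's $n+1$ and $n$), and your extra remarks about revealing the within-group distances for free and checking the triangle inequality are fine.

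However, the step you yourself flag as delicate --- the doubling-dimension verification --- is wrong as stated. In this metric space the covering balls must be centered at points of the set, and a within-group ball of radius $r\in[1,2)$ centered at $a_i$ contains the three points $a_{i-1},a_i,a_{i+1}$, while every ball of radius $r/2<1$ is a singleton; likewise a radius-$3$ ball contains $7$ consecutive points and radius-$1.5$ balls contain at most $3$ each. So intervals of the integer line need \emph{three} half-radius balls, the doubling constant of your space is $3$, not $2$, and its doubling dimension is $\log 3$, not $1$. This does not sink the proof, since the theorem only assumes dimension ${\le}\log 3$ --- which is exactly what the paper establishes for its essentially identical construction --- but you should replace ``covered by two balls'' with a three-ball covering of line intervals and drop the claim that the dimension is $1$.
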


\begin{proof}
Consider two sets of points: $A=\{a_1,\ldots,a_{n/2}\}$ and $B=\{b_1,\ldots,b_{n/2}\}$, each of size $n/2$. We now define a distance function $\delta$. For all $i,j \in [n/2]$, we have $\delta(a_i,a_j) = \delta(b_i,b_j) = |i-j|$. Furthermore, there is a pair of indices $i_\star,j_\star \in [n/2]$ such that $\delta(a_{i_\star},b_{j_\star}) = n$, and for all $i,j\in [n/2]$ such that $i \ne i_\star$ or $j \ne j_\star$, $\delta(a_i,b_j) = n+1$. It is easy to verify that $\delta$ is a proper distance function.

Consider now an arbitrary ball in the metric that we just defined. Let $r$ be its radius.
If the ball covers only points in $A$ (or only points in $B$) it is easy to show that it can be covered with at most 3 balls of radius $r/2$. If the ball covers at least one point in both $A$ and $B$, then $r \ge n$, and any ball centered at a point in $A$ of radius $r/2\ge n/2$ covers all points in $A$ and any ball of such radius with center in $B$  covers all points in $B$. Therefore, the doubling dimension of the set equals $\log 3$.

The minimum spanning tree for $A \cup B$ consists of edges $(a_i,a_{i+1})$ and $(b_i,b_{i+1})$ for $i\in[n/2 - 1]$ and of the edge $(a_{i_\star},b_{j_\star})$. Computing the minimum spanning tree requires outputting the edge $(a_{i_\star},b_{j_\star})$. By Lemma~\ref{lem:find_in_array}, the algorithm has to make at least $(n/2)^2/4$ queries to the distance oracle to correctly output the edge with probability greater than $1/2$. This in turn requires $\lceil n^2/(16mT^2) \rceil$ rounds of computation as observed above.
\end{proof}

We assume that $n = \Theta(m\cdot s)$, in which case the number of communication lower bounds has to be $\Omega(n/s)$. In particular, under a common assumption that $s = n^c$, for some constant $c \in (0,1)$, the number of rounds becomes $n^{\Omega(1)}$.

\subsection{Query Lower Bound for Approximate MST for General Metrics}
\label{sec:lb_approx_general}

A similar lower bound holds for general metrics in the black-box distance oracle model, even if we allow for an arbitrarily large constant approximation factor. This explains why allowing just for approximation is not enough and also some additional assumptions are necessary to construct efficient algorithms. In this paper, we assume that the (doubling) dimension of the point set is bounded in order to create efficient algorithms. The result follows directly from the following result of Indyk and the earlier upper bound on the number of queries that can be performed in each round of computation.

\begin{lemma}[\cite{Indyk99}, Section 9]
Finding a $B=O(1)$ approximation to the minimum spanning tree requires $\Omega(n^2/B)$ queries to the distance oracle.
\end{lemma}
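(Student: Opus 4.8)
The plan is to derive the bound by reduction from the array-search problem of Lemma~\ref{lem:find_in_array}, which shows that locating the unique $1$ of a function $f\colon[m]\to\{0,1\}$ takes more than $m/4$ queries to $f$. Given such an instance with $m=\Theta(n^2/B)$, I would construct a metric $\rho_f$ on an $n$-point set so that (i) each distance $\rho_f(x,y)$ can be evaluated with $O(1)$ queries to $f$, and (ii) from any $B$-approximate minimum spanning tree of $\rho_f$ (a spanning tree of cost at most $B\cdot\cost{\topt}$, and/or its edge set) one can read off the location of the $1$ of $f$. A $q$-query, $B$-approximation algorithm for MST then yields an $O(q)$-query array-search algorithm, forcing $q=\Omega(m)=\Omega(n^2/B)$.

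For the metric I would reuse the template already developed in this section for the exact case (see Section~\ref{sec:lb}): split the $n$ points into two halves $U$ and $V$, impose a line metric $\rho(u_i,u_j)=\rho(v_i,v_j)=|i-j|$ inside each half — so each half has a cheap Hamiltonian-path spanning subtree and a $\Theta(n)$ diameter — and let the $\Theta(n^2)$ cross pairs take one of two nearby values, the smaller one marking a ``special'' set of cross pairs that every optimal tree must use to join the two halves. The factor $1/B$ enters by bundling the cross pairs into $m\approx n^2/(4B)$ blocks of $B$ pairs each, declaring the block indexed by the $1$ of $f$ to be the special one, and arranging the distances so that the cheapest spanning tree must route exactly one edge through the special block; hence recovering an (approximately) optimal tree pins down the special block, i.e.\ the $1$.

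The main obstacle is exactly the triangle inequality: with within-half diameter $\Theta(n)$, a special cross edge can be made only an additive $\Theta(n)$ cheaper than its neighbours, so a single-scale gadget produces merely an $O(1)$-factor gap in MST cost, not the factor $B$ the algorithm is permitted to lose — and, relatedly, a $B$-approximation of the raw cost reveals too little. The plan to fix this is a multi-scale construction: nest several scaled, mutually well-separated copies of the gadget, with scales increasing by more than the diameter ratio so that all triangle inequalities (within and across scales) remain valid, so that beating the approximation factor $B$ forces resolving the special block at every scale. Choosing the distances so that the cumulative gap reaches $B$ while the ``active'' region stays of size $\Theta(n^2/B)$ and every triple satisfies the triangle inequality is the only delicate bookkeeping; this is the content of \cite[Section~9]{Indyk99}, which we invoke directly.
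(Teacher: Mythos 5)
The paper does not prove this lemma at all: it is imported verbatim from \cite{Indyk99} (Section~9) and then combined with the bound $m\cdot\binom{s}{2}\le ms^2$ on queries per round to obtain the corollary. Your proposal ends in exactly the same place—you explicitly invoke \cite{Indyk99} for the multi-scale construction—so it matches the paper's treatment; just note that the sketch preceding that invocation is not itself a proof, since the one nontrivial issue (creating a factor-$B$ cost gap consistent with the triangle inequality, rather than the $O(1)$ gap your two-halves gadget yields) is precisely the part you defer to the citation.
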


\begin{corollary}
Any parallel algorithm computing a $B$-approximation to the minimum spanning tree, where $B=O(1)$, requires $\Omega(n^2/(Bms^2))$ rounds of computation.
\end{corollary}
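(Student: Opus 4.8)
The plan is to combine the query lower bound of Indyk (the Lemma stated immediately above) with the per-round query budget of the black-box distance oracle model, exactly as was done for the two preceding theorems in this subsection. First I would recall the model: in any single round each machine holds at most $s$ identifiers, so it can form at most $\binom{s}{2} \le s^2$ pairs and hence issue at most $s^2$ distance queries in that round. Summing over all $m$ machines, the whole system makes at most $m s^2$ queries per round, and therefore at most $R \cdot m s^2$ distance queries over $R$ rounds. The key point is that query counts accumulate \emph{additively} across rounds: a machine cannot amortize queries by remembering results, so every distance value ever used by the algorithm must have been the answer to some query issued in some round.

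Next I would invoke the Lemma of Indyk~\cite{Indyk99}: any (randomized) algorithm that outputs a $B$-approximation to the MST cost with success probability greater than $1/2$ must make $\Omega(n^2/B)$ distance queries, for $B = O(1)$. Since a correct $R$-round parallel algorithm for this task makes at most $R m s^2$ queries in total, we must have $R m s^2 = \Omega(n^2/B)$, and therefore $R = \Omega(n^2/(B m s^2))$, which is the claimed bound.

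The only point requiring a small amount of care is matching the success probability in Indyk's lower bound with the guarantee expected of the parallel algorithm, and confirming that the lower bound is stated for the model where the algorithm only ever learns distances through explicit queries — which is precisely the black-box distance oracle model used in Section~\ref{sec:doubling} and throughout this subsection. There is no real obstacle here: the statement is a direct corollary of the cited lemma together with the trivial counting bound $m s^2$ on the number of queries per round, so the genuinely hard part — proving the $\Omega(n^2/B)$ query lower bound in the first place — has already been outsourced to~\cite{Indyk99}. This makes precise the remark preceding the corollary, namely that allowing an arbitrarily large constant approximation factor still does not suffice for a constant number of rounds without an additional structural assumption such as bounded doubling dimension.
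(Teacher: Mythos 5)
Your proposal is correct and matches the paper's argument exactly: the paper derives the corollary directly from Indyk's $\Omega(n^2/B)$ query lower bound together with the per-round query budget of $m\binom{s}{2}\le ms^2$ established earlier in the section, giving $R\cdot ms^2=\Omega(n^2/B)$ and hence the stated round bound.
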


\section{Acknowledgments}

We thank Mihai Budiu, Frank McSherry, and Omer Reingold for early
discussions about parallel models, including systems such as Dryad and
MapReduce. In particular, we thank Frank McSherry for pointing out the
connection to the streaming with sorting model. We also thank Piotr
Indyk for useful comments on an early draft of the paper.

\bibliographystyle{alpha}
\bibliography{bibliography}

\end{document}